
\documentclass[draft,journal,a4paper,oneside,onecolumn]{IEEEtran}
\usepackage{amsmath,amsfonts, amssymb}

\newtheorem{thm}{Theorem}

\newtheorem{lem}{Lemma}
\newtheorem{df}{Definition}
\newtheorem{rem}{Remark}

\newcommand{\lrsb}[1]{\left({#1}\right)}
\newcommand{\lrb}[1]{\left\{{#1}\right\}}
\newcommand{\lrB}[1]{\left[{#1}\right]}
\newcommand{\lrceil}[1]{\left\lceil{#1}\right\rceil}
\newcommand{\lrfloor}[1]{\left\lfloor{#1}\right\rfloor}
\newcommand{\A}{\mathcal{A}}
\newcommand{\B}{\mathcal{B}}
\newcommand{\C}{\mathcal{C}}
\newcommand{\E}{\mathcal{E}}
\newcommand{\G}{\mathcal{G}}
\newcommand{\cH}{\mathcal{H}}
\newcommand{\cS}{\mathcal{S}}
\newcommand{\T}{\mathcal{T}}
\newcommand{\U}{\mathcal{U}}
\newcommand{\bU}{\overline{\mathcal{U}}}
\newcommand{\V}{\mathcal{V}}
\newcommand{\X}{\mathcal{X}}
\newcommand{\Y}{\mathcal{Y}}
\newcommand{\Z}{\mathcal{Z}}
\newcommand{\W}{\mathcal{W}}
\newcommand{\ba}{\boldsymbol{a}}
\newcommand{\bb}{\boldsymbol{b}}
\newcommand{\cc}{\boldsymbol{c}}
\newcommand{\hcc}{\widehat{\boldsymbol{c}}}
\newcommand{\mm}{\boldsymbol{m}}
\newcommand{\bt}{\boldsymbol{t}}
\newcommand{\uu}{\boldsymbol{u}}
\newcommand{\vv}{\boldsymbol{v}}
\newcommand{\ww}{\boldsymbol{w}}
\newcommand{\xx}{\boldsymbol{x}}
\newcommand{\yy}{\boldsymbol{y}}
\newcommand{\zz}{\boldsymbol{z}}
\newcommand{\aalpha}{\boldsymbol{\alpha}}
\newcommand{\bbeta}{\boldsymbol{\beta}}
\newcommand{\kkappa}{\boldsymbol{\kappa}}
\newcommand{\zero}{\boldsymbol{0}}
\newcommand{\one}{\boldsymbol{1}}
\newcommand{\hg}{\widehat{g}}
\newcommand{\hA}{\widehat{A}}
\newcommand{\hB}{\widehat{B}}
\newcommand{\hC}{\widehat{C}}
\newcommand{\hcA}{\widehat{\mathcal{A}}}
\newcommand{\hcB}{\widehat{\mathcal{B}}}
\newcommand{\hcC}{\widehat{\mathcal{C}}}
\newcommand{\hcH}{\widehat{\mathcal{H}}}
\newcommand{\e}{\varepsilon}
\newcommand{\eA}{\e_{\A}}
\newcommand{\eB}{\e_{\B}}
\newcommand{\ehA}{\e_{\hcA}}
\newcommand{\ehB}{\e_{\hcB}}
\newcommand{\limn}{\lim_{n\to\infty}}
\newcommand{\GF}{\mathrm{GF}}
\newcommand{\GFq}{\mathrm{GF}(q)}
\newcommand{\GFql}{\mathrm{GF}(q^l)}
\newcommand{\dft}{\mathfrak{F}}
\newcommand{\dfti}{\mathfrak{F}^{-1}}
\newcommand{\Encoder}{\varphi}
\newcommand{\Decoder}{\varphi^{-1}}
\newcommand{\Capacity}{\mathrm{Capacity}}
\newcommand{\Rate}{R}
\newcommand{\Error}{\mathrm{Error}}
\newcommand{\im}{\mathrm{Im}}

\allowdisplaybreaks

\title{
Hash Property
and Coding Theorems
for Sparse Matrices
and
Maximum-Likelihood Coding
}
\author{
Jun~Muramatsu
and~Shigeki Miyake
 \thanks{J.~Muramatsu is with
        NTT Communication Science Laboratories, NTT Corporation,
        2-4, Hikaridai, Seika-cho, Soraku-gun, Kyoto 619-0237, Japan
        (E-mail: pure@cslab.kecl.ntt.co.jp).
        S.~Miyake is with
        NTT Network Innovation Laboratories, NTT Corporation,
        1-1, Hikarinooka, Yokosuka-shi, Kanagawa 239-0847, Japan
        (E-mail: miyake.shigeki@lab.ntt.co.jp).
  }
	 \thanks{This paper is submitted to {\em IEEE Transactions on
			 Information Theory} and a part of this paper is submitted to
			 {\em IEEE International Symposium on Information Theory
			 (ISIT2008, ISIT2009)}.
			 }
}
  
\begin{document}
\maketitle

\begin{abstract}
The aim of this paper is to prove the
achievability of several coding problems
by using sparse matrices
(the maximum column weight grows logarithmically in the block length)
and maximal-likelihood (ML) coding.
These problems are the Slepian-Wolf problem,
the Gel'fand-Pinsker problem, the Wyner-Ziv problem, and
the One-helps-one
problem (source coding with partial side information at the decoder).
To this end, the notion of a hash property for an ensemble of
functions is introduced and it is proved that
an ensemble of $q$-ary sparse matrices satisfies the hash property.
Based on this property,
it is proved that the rate of codes using sparse matrices and
maximal-likelihood (ML) coding can achieve the optimal rate.
\end{abstract}
\begin{keywords}
Shannon theory, hash functions, linear codes,
sparse matrix, maximum-likelihood eoncoding/decoding,
the Slepian-Wolf problem,
the Gel'fand-Pinsker problem, the Wyner-Ziv problem,
the One-helps-one problem
\end{keywords}

\section{Introduction}
The aim of this paper is to prove the
achievability of several coding problems
by using sparse matrices
(the maximum column weight grows logarithmically in the block length)
and maximal-likelihood (ML) coding
\footnote{This operation is usually called an {\em ML decoding}.
We use the word `coding' because this operation is also used in the
construction of an encoder.},
namely the
Slepian-Wolf problem \cite{SW73} (Fig. \ref{fig:sw}),
the Gel'fand-Pinsker problem \cite{GP83} (Fig. \ref{fig:gp}),
the Wyner-Ziv problem \cite{WZ76} (Fig. \ref{fig:wz}), and
the One-helps-one
problem
(source coding with partial side information at the decoder)
\cite{W73}\cite{WZ73} (Fig. \ref{fig:psi}).
To prove these theorems,
we first introduce the notion of a hash property for an ensemble of
functions, where functions are not assumed to be linear.
This notion is a sufficient condition for the achievability
of coding theorems.
Next, we prove that
an ensemble of $q$-ary sparse matrices,
which is an extension of \cite{Mac99}, satisfies the hash property.
Finally, based on the hash property, we prove that
the rate of codes can achieve the optimal rate.
This implies that the rate of codes using sparse matrices and ML 
coding can achieve the optimal rate.
It should be noted here that
there is a practical approximation method
of ML coding by using sparse matrices and 
the linear programing technique
introduced by \cite{FWK05}.

\begin{figure}[b]
\begin{center}
\unitlength 0.55mm
\begin{picture}(153,40)(0,0)
\put(5,25){\makebox(0,0){$X$}}
\put(10,25){\vector(1,0){10}}
\put(20,18){\framebox(14,14){$\Encoder_{X}$}}
\put(34,25){\vector(1,0){80}}
\put(74,29){\makebox(0,0){$R_{X}> H(X|Y)$}}
\put(5,7){\makebox(0,0){$Y$}}
\put(10,7){\vector(1,0){10}}
\put(20,0){\framebox(14,14){$\Encoder_{Y}$}}
\put(34,7){\vector(1,0){80}}
\put(74,11){\makebox(0,0){$R_{Y}> H(Y|X)$}}
\put(114,0){\framebox(14,32){$\Decoder$}}
\put(128,16){\vector(1,0){10}}
\put(148,16){\makebox(0,0){$XY$}}
\put(77,-8){\makebox(0,0){$R_{X}+R_{Y}> H(X,Y)$}}
\end{picture}
\end{center}
\caption{Slepian-Wolf Problem}
\label{fig:sw}
\end{figure}
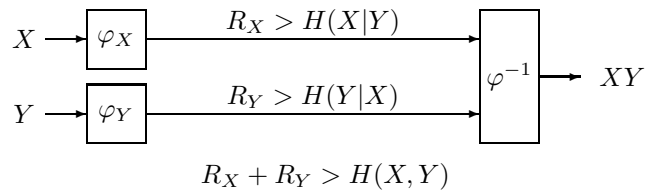

\begin{figure}[t]
\begin{center}
\unitlength 0.55mm
\begin{picture}(150,30)(0,-20)
\put(4,7){\makebox(0,0){$M$}}
\put(10,7){\vector(1,0){10}}
\put(20,0){\framebox(14,14){$\Encoder$}}
\put(34,7){\vector(1,0){10}}
\put(50,7){\makebox(0,0){$X$}}
\put(56,7){\vector(1,0){10}}
\put(66,0){\framebox(24,14){$\mu_{Y|XZ}$}}
\put(90,7){\vector(1,0){10}}
\put(106,7){\makebox(0,0){$Y$}}
\put(112,7){\vector(1,0){10}}
\put(122,0){\framebox(14,14){$\Decoder$}}
\put(136,7){\vector(1,0){10}}
\put(150,7){\makebox(0,0){$M$}}
\put(4,-10){\makebox(0,0){$Z$}}
\put(10,-10){\line(1,0){68}}
\put(27,-10){\vector(0,1){10}}
\put(78,-10){\vector(0,1){10}}
\put(75,-20){\makebox(0,0){$R<I(W;Y)-I(W;Z)$}}
\end{picture}
\end{center}
\caption{Gel'fand-Pinsker Problem}
\label{fig:gp}
\end{figure}
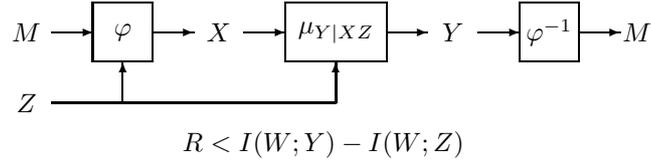

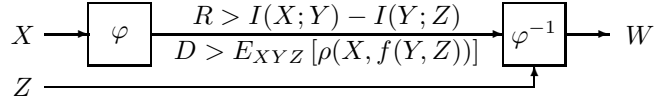
\begin{figure}[t]
\begin{center}
\unitlength 0.58mm
\begin{picture}(150,14)(0,0)
\put(5,7){\makebox(0,0){$X$}}
\put(10,7){\vector(1,0){10}}
\put(20,0){\framebox(14,14){$\Encoder$}}
\put(34,7){\vector(1,0){80}}
\put(74,11){\makebox(0,0){$R>I(X;Y)-I(Y;Z)$}}
\put(114,0){\framebox(14,14){$\Decoder$}}
\put(128,7){\vector(1,0){10}}
\put(145,7){\makebox(0,0){$W$}}
\put(74,3){\makebox(0,0){$D>E_{XYZ}\lrB{\rho(X,f(Y,Z))}$}}
\put(5,-5){\makebox(0,0){$Z$}}
\put(10,-5){\line(1,0){111}}
\put(121,-5){\vector(0,1){5}}
\end{picture}
\end{center}
\caption{Wyner-Ziv Problem}
\label{fig:wz}
\end{figure}

\begin{figure}[t]
\begin{center}
\unitlength 0.55mm
\begin{picture}(153,30)(0,0)
\put(5,25){\makebox(0,0){$X$}}
\put(10,25){\vector(1,0){10}}
\put(20,18){\framebox(14,14){$\Encoder_{X}$}}
\put(34,25){\vector(1,0){80}}
\put(74,29){\makebox(0,0){$R_X>H(X|Z)$}}
\put(5,7){\makebox(0,0){$Y$}}
\put(10,7){\vector(1,0){10}}
\put(20,0){\framebox(14,14){$\Encoder_{Y}$}}
\put(34,7){\vector(1,0){80}}
\put(74,11){\makebox(0,0){$R_Y>I(Y;Z)$}}
\put(114,0){\framebox(14,32){$\Decoder$}}
\put(128,16){\vector(1,0){10}}
\put(148,16){\makebox(0,0){$X$}}
\end{picture}
\end{center}
\caption{One-helps-one Problem}
\label{fig:psi}
\end{figure}
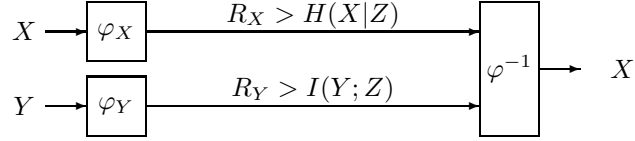

The contributions of this paper are summarized in the following.
\begin{itemize}
 \item The notion of a hash property is introduced.
			 It is an extension of the notion of
			 a universal class of
			 hash functions introduced in \cite{CW}.
			 The single source coding problem is studied in
			 \cite[Section 14.2]{MacKay}\cite{K04}
			 by using the hash function.
			 We prove that
			 an ensemble of $q$-ary sparse matrices has a hash property,
			 while a weak version of hash property
			 is proved in
			 \cite{MB01}\cite{BB}\cite{EM05}\cite{SWLDPC}\cite{CRYPTLDPC}
			 implicitly.
			 It should be noted that 
			 our definition of hash property is 
			 also an extension of the definition of random bin coding
			 introduced in \cite{C75},
			 where the set of all sequences is partitioned at random.
			 On the other hand,
			 the random codebook (a set of codewords/representations)
			 generation is introduced for
			 the proof of the original
			 channel coding theorem \cite{SHANNON}
			 and lossy source coding theorem \cite{SHA59}.
			 Here it is proved that random bin coding
			 and partitioning determined by randomly generated matrix
			 can be applied to
			 the original
			 channel coding theorem and lossy source coding theorem.
 \item 
			 The proof of the achievability of the Slepian-Wolf problem
			 is demonstrated based on the hash property.
			 It is the extension of \cite[Section 14.2]{MacKay}\cite{K04}
			 and provides a new proof of \cite{C75}\cite{CSI82}\cite{SWLDPC}.
			 By applying the theorem to 
			 the coding theorem of channel with additive (symmetric) noise,
			 it also provides a new proof of
			 \cite{MB01}\cite{BB}\cite{EM05}\cite{SWLDPC}.
 \item
			 The optimality of a code is proved by using sparse matrices and ML
			 coding for the Gel'fand-Pinsker problem.
			 We prove the $q$-ary and asymmetric version of the theorem,
			 while a binary and symmetric version is studied in \cite{MW06b}.
			 It should be noted here that
			 the column/row weight of matrices used in \cite{MW06b}
			 is constant with respect to the block length,
			 while it grows logarithmically in our construction.
			 The detailed difference from \cite{MW06b} is stated in
			 Section~\ref{sec:gp}.
			 As a corollary, we have the optimality
			 of codes using sparse matrices
			 for the coding problem of an arbitrary ($q$-ary and asymmetric) 
			 channel, while a symmetric channel is assumed in
			 many of the channel coding theorems by using sparse matrices.
			 The construction is based on the coset code presented in
			 \cite{SWLDPC}\cite{MM08},
			 which is different from
			 that presented in \cite{GA68}\cite{BB}.
			 When our theorem is applied to the ensemble
			 of sparse matrices,
			 our proof is simpler than that in \cite{MM08}.
 \item 
			 The optimality of a code is proved
			 by using sparse matrices and ML coding
			 for the Wyner-Ziv problem.
			 We prove the $q$-ary and biased source, and the non-additive side
			 information version of the theorem,
			 while a binary and unbiased source and additive side information
			 are assumed in \cite{MW06b}.
			 As a corollary, we have the optimality
			 of codes using sparse matrices
			 for the lossy coding problem of an arbitrary ($q$-ary
			 and biased) source and a distortion measure.
			 In \cite{MY03}\cite{M04}\cite{MW06a}\cite{M06}\cite{GV07},
			 a lossy code is proposed
			 by using sparse matrices called low density
			 generator matrices (LDGM)
			 by assuming an unbiased source and Hamming distortion.
			 The column/row weight of matrices used in \cite{MW06b}
			 is constant with respect to the block length,
			 while it grows logarithmically in our construction.
			 The lower bounds on the rate-distortion function
			 is discussed in \cite{DWR07}\cite{KU08}.
			 It should be noted that the construction of the codes
			 is different from those presented in
			 \cite{MY03}\cite{M04}\cite{MW06a}\cite{MW06b}
			 \cite{M06}\cite{GV07}.
			 The detailed difference is stated in Section~\ref{sec:wz}.
			 Our construction is based on the code presented in
			 \cite{MM07}\cite{SWLOSSY}
			 and similar to the code presented in
			 \cite{WAD03aew}\cite{WAD03ieice}\cite{ZSE02}.
			 When our theorem is applied to the ensemble
			 of sparse matrices,
			 our
			 proof is simpler than that in \cite{MM07}.
 \item The achievability of the One-helps-one problem
			 is proved by using sparse matrices and ML coding.
\end{itemize}

\section{Definitions and Notations}
Throughout this paper, we use the following definitions and notations.

Column vectors and sequences are denoted in boldface.
Let $A\uu$ denote a value taken by a function $A:\U^n\to\bU$ at $\uu\in\U^n$
where $\U^n$ is a domain of the function.
It should be noted that $A$ may be nonlinear.
When $A$ is a linear function
expressed by an $l\times n$ matrix,
we assume that $\U\equiv\GFq$
is a finite field and the range of functions is defined by $\bU\equiv\U^{l}$.
It should be noted that this assumption is not essential
for general (nonlinear) functions
because discussion is not changed if $l\log|\U|$ is replaced by $\log|\bU|$.
For a set $\A$ of functions,
let $\im \A$ be defined as
\begin{align*}
 \im\A &\equiv \bigcup_{A\in\A}\{A\uu: \uu\in\U^n\}.
\end{align*}
The cardinality of a set $\U$ is denoted by $|\U|$,
$\U^c$ denotes the compliment of $\U$,
and 
$\U\setminus\V\equiv\U\cap\V^c$ denotes the
set difference.
We define sets $\C_A(\cc)$ and $\C_{AB}(\cc,\bb)$ as
\begin{align*}
 \C_A(\cc)
 &\equiv\{\uu: A\uu = \cc\}
 \\
 \C_{AB}(\cc,\bb)
 &\equiv\{\uu: A\uu = \cc, B\uu = \bb\}.
\end{align*}
In the context of linear codes,
$\C_A(\cc)$ is called a coset determined by $\cc$.

Let $p$ and $p'$ be probability distributions
and let $q$ and $q'$ be conditional probability distributions.
Then entropy $H(p)$, conditional entropy $H(q|p)$,
divergence $D(p\|p')$, and conditional divergence $D(q\|q'|p)$
are defined as
\begin{align*}
 H(p)
 &\equiv\sum_{u}p(u)\log\frac 1{p(u)}
 \\
 H(q|p)
 &\equiv\sum_{u,v}q(u|v)p(v)\log\frac 1{q(u|v)}
 \\
 D(p\parallel p')
 &\equiv
 \sum_{u}p(u)
 \log\frac{p(u)}{p'(u)}
 \\
 D(q\parallel q' | p)
 &\equiv
 \sum_{v} p(v)\sum_{u}q(u|v)
 \log\frac{q(u|v)}{q'(u|v)},
\end{align*}
where we assume the base $2$ of the logarithm when the subscript of $\log$
is omitted.

Let $\mu_{UV}$ be the joint probability distribution of random variables
$U$ and $V$.
Let  $\mu_{U}$ and $\mu_{V}$ be the respective marginal distributions
and $\mu_{U|V}$ be the conditional probability distribution.
Then the entropy $H(U)$, the conditional entropy $H(U|V)$, and the mutual
information $I(U;V)$ of random variables are defined as
\begin{align*}
 H(U)&\equiv H(\mu_U)
 \\
 H(U|V)&\equiv H(\mu_{U|V}|\mu_{V})
 \\
 I(U;V)&\equiv H(U)-H(U|V).
\end{align*}

A set of typical sequences $\T_{U,\gamma}$
and a set of conditionally typical sequences $\T_{U|V,\gamma}(\vv)$
are defined as
\begin{align*}
 \T_{U,\gamma}
 &\equiv
 \lrb{\uu:
 D(\nu_{\uu}\|\mu_{U})<\gamma
 }
 \\
 \T_{U|V,\gamma}(\vv)
 &\equiv
 \lrb{\uu:
 D(\nu_{\uu|\vv}\|\mu_{U|V}|\nu_{\vv})<\gamma
 },
\end{align*}
respectively,
where $\nu_{\uu}$ and $\nu_{\uu|\vv}$ are defined as
\begin{align*}
 &\nu_{\uu}(u)
 \equiv
 \frac {|\{1\leq i\leq n : u_{i}=u\}|}n
 \\
 &\nu_{\uu|\vv}(u|v)
 \equiv \frac{\nu_{\uu\vv}(u,v)}{\nu_{\vv}(v)}.
\end{align*}

We define $\chi(\cdot)$ as
\begin{align*}
 \chi(a = b)
	&\equiv
 \begin{cases}
	1,&\text{if}\ a = b
	\\
	0,&\text{if}\ a\neq b
	\end{cases}
	\\
	\chi(a \neq b)
	&\equiv
 \begin{cases}
	1,&\text{if}\ a \neq b
	\\
	0,&\text{if}\ a = b.
	\end{cases}
\end{align*}

Finally, for $\gamma,\gamma'>0$, we define
\begin{align}
 \lambda_{\U}
 &\equiv \frac{|\U|\log[n+1]}n
 \label{eq:lambda}
 \\
 \zeta_{\U}(\gamma)
 &\equiv
 \gamma-\sqrt{2\gamma}\log\frac{\sqrt{2\gamma}}{|\U|}
 \label{eq:zeta}
 \\
 \zeta_{\U|\V}(\gamma'|\gamma)
 &\equiv
 \gamma'-\sqrt{2\gamma'}\log\frac{\sqrt{2\gamma'}}{|\U||\V|}
 +\sqrt{2\gamma}\log|\U|
 \label{eq:zetac}
 \\
 \eta_{\U}(\gamma)
 &\equiv
 -\sqrt{2\gamma}\log\frac{\sqrt{2\gamma}}{|\U|}
 +\frac{|\U|\log[n+1]}n
 \label{eq:def-eta}
 \\
 \eta_{\U|\V}(\gamma'|\gamma)
 &\equiv
 -\sqrt{2\gamma'}\log\frac{\sqrt{2\gamma'}}{|\U||\V|}
 +\sqrt{2\gamma}\log|\U|
 +\frac{|\U||\V|\log[n+1]}n.
 \label{eq:def-etac}
\end{align}
It should be noted here that
the product set $\U\times\V$ is denoted by $\U\V$
when it appears in the subscript of these functions.

\section{$(\aalpha,\bbeta)$-hash Property}

In this section, we introduce the notion of the
$(\aalpha,\bbeta)$-hash property which is a sufficient condition
for coding theorems, where the linearity of functions is not assumed.
The $(\aalpha,\bbeta)$-hash property of an ensemble of linear (sparse)
matrices will be discussed in Section \ref{sec:linear}.
In Section~\ref{sec:coding-theorems},
we provide coding theorems
for the Slepian-Wolf problem, the Gel'fand-Pinsker problem, the
Wyner-Ziv problem,
and the One-helps-one problem.

Before stating the formal definition,
we explain the random coding arguments
and two implications
which introduce the intuition of the hash property.

\subsection{Two types of random coding}

We review the random coding argument
introduced by \cite{SHANNON}.
Most of coding theorems
are proved by using the combination of the following two types of random
coding.

\noindent{\bf Random codebook generation:}
In the proof of the original channel coding theorem	\cite{SHANNON}
and lossy source coding theorem	\cite{SHA59},
a codebook (a set of codewords/representations)
is randomly generated and shared by the encoder and the decoder.
It should be noted that
the randomly generated  codebook represents
the list of typical sequences.
In the encoding step of channel coding,
a message is mapped to a member of randomly generated
codewords as a channel input.
In the decoding step,
we use the maximum-likelihood decoder,
which guess the most probable channel input	from the channel output.
In the encoding step of lossy source coding,
we find a member of randomly generated representations
to satisfy the fidelity criterion compared to a message,
and then we let the index of this member as a codeword.
In the decoding step,
the codeword (index) is mapped to the reproduction.
It should be noted that
the encoder and the decoder have to share
the large size (exponentially in the block length)
of table which indicates the correspondence
between a index and a member of randomly generated
codewords/representations.
The time complexity of 
the encoding and decoding step of channel coding
and the encoding step of lossy source coding
is exponentially in the block	length.
They are obstacles for the implementation.

\noindent{\bf Random partitioning (random bin coding):}
In the proof of Slepian-Wolf problem \cite{C75},
the set of all sequences is partitioned at random
and shared by the encoder and the decoder.
In the encoding step,
a pair of messages are mapped independently to
the index of bin which contains the message.
In the decoding step,
we use the maximum-likelihood decoder,
which guess the most probable pair of messages.
Random partitioning
by cosets determined by randomly generated matrix
can be considered as a kind of random bin coding,
where the syndrome corresponds the index of bin.
This approach is introduced in \cite{E55}
for the coding of symmetric channel
and applied to the ensemble of sparse matrices
in \cite{MB01}\cite{BB}\cite{EM05}.
This argument is also applied to
the coding theorem for Slepian-Wolf problem
in \cite{W74}\cite{CSI82}\cite{SWLDPC}.
It should be noted that
the time complexity of the decoding step
is exponentially in the block length,
but there are practical approximation methods
by using sparse matrices and 
the techniques introduced by
\cite{GDL}\cite{KFL01}\cite{FWK05}.
By using the randomly generated matrix,
the size of tables shared by the encoder and the decoder
has at most square order with respect to the block length.

One of the aim of introducing hash property
is to replace the random codebook generation by the random
partitioning.
In other words, it is a unification of these two random coding arguments.
It is expected that the space and time complexity
can be reduced compared to the random codebook generation.

\subsection{Two implications of hash property}

We introduce the following two implications of hash property,
which connect the number of bins and messages (items)
and is essential for the coding by using the random partitioning.
In Section~\ref{sec:basic}, these two property are derived from the
hash property by adjusting the number of bins
taking account of the number of sequences.

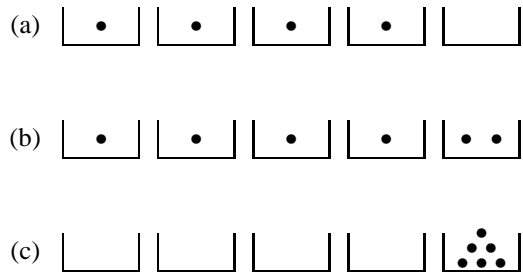
\begin{figure}[t]
\begin{center}
\unitlength = 0.5mm
\begin{picture}(140,30)(-20,5)
\put(-10,15){\makebox(0,0){(a)}}
\put(0,10){\line(1,0){20}}
\put(0,10){\line(0,1){10}}
\put(20,10){\line(0,1){10}}
\put(25,10){\line(1,0){20}}
\put(25,10){\line(0,1){10}}
\put(45,10){\line(0,1){10}}
\put(50,10){\line(1,0){20}}
\put(50,10){\line(0,1){10}}
\put(70,10){\line(0,1){10}}
\put(75,10){\line(1,0){20}}
\put(75,10){\line(0,1){10}}
\put(95,10){\line(0,1){10}}
\put(100,10){\line(1,0){20}}
\put(100,10){\line(0,1){10}}
\put(120,10){\line(0,1){10}}
\put(10,15){\makebox(0,0){{\small $\bullet$}}}
\put(35,15){\makebox(0,0){{\small $\bullet$}}}
\put(60,15){\makebox(0,0){{\small $\bullet$}}}
\put(85,15){\makebox(0,0){{\small $\bullet$}}}
\end{picture}
\\
\begin{picture}(140,30)(-20,5)
\put(-10,15){\makebox(0,0){(b)}}
\put(0,10){\line(1,0){20}}
\put(0,10){\line(0,1){10}}
\put(20,10){\line(0,1){10}}
\put(25,10){\line(1,0){20}}
\put(25,10){\line(0,1){10}}
\put(45,10){\line(0,1){10}}
\put(50,10){\line(1,0){20}}
\put(50,10){\line(0,1){10}}
\put(70,10){\line(0,1){10}}
\put(75,10){\line(1,0){20}}
\put(75,10){\line(0,1){10}}
\put(95,10){\line(0,1){10}}
\put(100,10){\line(1,0){20}}
\put(100,10){\line(0,1){10}}
\put(120,10){\line(0,1){10}}
\put(10,15){\makebox(0,0){{\small $\bullet$}}}
\put(35,15){\makebox(0,0){{\small $\bullet$}}}
\put(60,15){\makebox(0,0){{\small $\bullet$}}}
\put(85,15){\makebox(0,0){{\small $\bullet$}}}
\put(106,15){\makebox(0,0){{\small $\bullet$}}}
\put(114,15){\makebox(0,0){{\small $\bullet$}}}
\end{picture}
\\
\begin{picture}(140,30)(-20,5)
\put(-10,15){\makebox(0,0){(c)}}
\put(0,10){\line(1,0){20}}
\put(0,10){\line(0,1){10}}
\put(20,10){\line(0,1){10}}
\put(25,10){\line(1,0){20}}
\put(25,10){\line(0,1){10}}
\put(45,10){\line(0,1){10}}
\put(50,10){\line(1,0){20}}
\put(50,10){\line(0,1){10}}
\put(70,10){\line(0,1){10}}
\put(75,10){\line(1,0){20}}
\put(75,10){\line(0,1){10}}
\put(95,10){\line(0,1){10}}
\put(100,10){\line(1,0){20}}
\put(100,10){\line(0,1){10}}
\put(120,10){\line(0,1){10}}
\put(105,12){\makebox(0,0){{\small $\bullet$}}}
\put(110,12){\makebox(0,0){{\small $\bullet$}}}
\put(115,12){\makebox(0,0){{\small $\bullet$}}}
\put(112.5,16){\makebox(0,0){{\small $\bullet$}}}
\put(107.5,16){\makebox(0,0){{\small $\bullet$}}}
\put(110,20){\makebox(0,0){{\small $\bullet$}}}
\end{picture}
\end{center}
\caption{
Properties connecting the number of bins and items (black dots, messages).
(a) Collision-resistant property: every bin contains at most one item.
(b) Saturating property: every bin contains at least one item.
(c) Pigeonhole principle: there is at least one bin which contains two or more items.
}
\label{fig:principle}
\end{figure}

\noindent{\bf Collision-resistant property:}
The good code assigns a message to a codeword
which is different from the codewords of other messages,
where the loss (error probability) is as small as possible.
The collision-resistant property is the nature of the hash property.
Figure \ref{fig:principle} (a) represents the ideal situation
of this property, where
the black dots represent messages we want to distinguish.
When the number of bins is greater than the number of black dots,
we can find a good function
that allocates the black dots to the different bins.
It is because the hash property tends to avoid the collision.
It should be noted that it is enough for coding problems
to satisfy this property for `almost all (close to probability one)'
black dots by letting
the ratio
 $[\text{the number of black dots}]/[\text{the number of bins}]$
close to zero.
This property is used for the estimation of decoding error probability
of lossless source coding by using maximum-likelihood decoder.
In this situation, the black dots correspond to the typical sequences.

\noindent{\bf Saturating property:}
To replace the random codebook generation by the random partitioning,
we prepare the method finding a typical sequence for each bin.
The saturating property is the another nature of the hash property.
Figure \ref{fig:principle} (b) represents the ideal situation
of this property.
When the number of bins is smaller than the number of black dots,
we can find a good function
so that every bins has at least one black dot.
It is because the hash property tends to avoid the collision.
It should be noted that this property is different
from the pigeonhole principle: there is at least one
bin which includes two or more black dots.
Figure \ref{fig:principle} (c) represents an unusual situation,
which does not contradict by the pigeonhole principle
while the hash property tends to avoid this situation.
It should be noted that it is enough for coding problems
to satisfy this property for `almost all (close to probability one)' bins
by letting the ratio
 $[\text{the number of bins}]/[\text{the number of black dots}]$
close to zero.
To find a typical sequence from each bin,
we use the maximum-likelihood/minimum-divergence coding
introduced in Section~\ref{sec:basic}.
In this situation, the black dots correspond to the typical sequences.

\subsection{Formal definition of $(\aalpha,\bbeta)$-hash property}

In this section, we introduce the formal definition of the hash property.

In the proof of the fixed-rate source coding theorem given in
\cite{C75}\cite{CSI82}\cite{K04},
it is proved implicitly that there is a probability distribution $p_A$
on a set of functions $A:\U^n\to\U^{l}$ such that
\begin{equation}
 p_A\lrsb{\lrb{
 A: \exists\uu'\in\G\setminus\{\uu\}, A\uu'= A\uu
 }}
 \leq
 \frac{|\G|}{|\U|^{l}}
 \label{eq:bin}
\end{equation}
for any $\uu\in\U^n$, where
\begin{equation}
 \G\equiv\lrb{\uu': \mu(\uu')\geq \mu(\uu), \uu\neq\uu'}
	\label{eq:G}
\end{equation}
and $\mu$ is the probability distribution of a source
or the probability distribution of the additive noise of a channel.
In the proof of coding theorems for spare matrices
given in \cite{BB}\cite{EM05}\cite{MB01}\cite{SWLDPC}\cite{CRYPTLDPC},
it is proved implicitly that there are
a probability distribution on a set of $l\times n$ spare matrices
and
sequences
$\aalpha\equiv\{\alpha(n)\}_{n=1}^{\infty}$
and $\bbeta\equiv\{\beta(n)\}_{n=1}^{\infty}$
satisfying
\begin{gather*}
 \limn \frac{\log\alpha(n)}n=0
 \\
 \limn \beta(n)=0
\end{gather*}
such that
\begin{equation}
 p_A\lrsb{\lrb{
 A: \exists\uu'\in\G\setminus\{\uu\}, A\uu'= A\uu
 }}
 \leq
 \frac{|\G|\alpha(n)}{|\U|^{l}}+\beta(n)
 \label{eq:ldpc}
\end{equation}
for any $\uu\in\U^n$, where
$\alpha(n)$ measures how the ensemble of $l\times n$ sparse matrices
differs from the ensemble of all $l\times n$ matrices
and $\beta(n)$ measures the probability
that the code determined by an
$l\times n$ sparse matrix has low-weight codewords.
It should be noted that the collision-resistant property can be
derived from (\ref{eq:bin}) and (\ref{eq:ldpc}).
It is shown in Section~\ref{sec:basic}.

The aim of this paper is not only unifying the above results,
but also providing several coding theorems under the general settings
such as an asymmetric channel for channel coding and an unbiased source
for lossy source coding.
To this end, we define an $(\aalpha,\bbeta)$-hash property in the following.

\begin{df}
Let $\A$ be a set of functions $A:\U^n\to\bU_{\A}$
and we assume that
 \begin{equation}
	\limn \frac{\log\frac{|\bU_{\A}|}{|\im\A|}}n=0.
	 \tag{H1}
	 \label{eq:imA}
 \end{equation}
For 
a probability distribution
\footnote{
It should be noted that $p_A$ does not depend on a particular function $A$.
Strictly speaking, the subscript $A$ of $p$ represents
the random variable of a function.
We use this ambiguous notation when $A$ aperars in the subscript of
$p$ because random variables are always denoted in Roman letter.
}
$p_A$ on $\A$, we
call a pair $(\A,p_A)$ an {\em ensemble}
 \footnote{In the standard definition, an ensemble is defined as a set
 of functions and a uniform distribution is assumed for this set.
 It should be noted that an ensemble is defined as
 the probability distribution on a set of functions in this paper.
 }.
Then, an ensemble $(\A,p_A)$ has an
{\em $(\aalpha_A,\bbeta_A)$-hash property} if
there are two sequences
\footnote{
It should be noted that 
$\aalpha_A$ and $\bbeta_A$ do not depend on a particular function $A$
but may depend on an ensemble $(\A,p_A)$.
Strictly speaking, the subscript $A$ represents the random variable.
}
$\aalpha_A\equiv\{\alpha_A(n)\}_{n=1}^{\infty}$ and
$\bbeta_A\equiv\{\beta_A(n)\}_{n=1}^{\infty}$ such that
 \begin{align}
	&\limn\alpha_A(n)=1
	\tag{H2}
	\label{eq:alpha}
	\\
	&\limn\beta_A(n)=0,
	\tag{H3}
	\label{eq:beta}
 \end{align}
and
\begin{align}
 \sum_{\substack{
 \uu\in\T
 \\
 \uu'\in\T'
 }}
 p_A
 \lrsb{\lrb{A: A\uu = A\uu'}}
 &\leq
 |\T\cap\T'|
 +\frac{|\T||\T'|\alpha_A(n)}{|\im\A|}
 +\min\{|\T|,|\T'|\}\beta_A(n)
 \tag{H4}
 \label{eq:hash}
\end{align}
for any $\T,\T'\subset\U^n$.
Throughout this paper,
we omit dependence on $n$ of $\alpha_A$ and $\beta_A$
when $n$ is fixed.
\end{df}

It should be noted that
we have  (\ref{eq:ldpc}) from (\ref{eq:hash})
by letting $\T\equiv\{\uu\}$ and $\T'\equiv\G$, and
(\ref{eq:bin}) is the case when $\alpha_A(n)\equiv 1$ and
$\beta_A(n)\equiv 0$.
In the right hand side of the inequality (H4),
the first term corresponds to the sum of $p_A(\{A:A\uu=A\uu\})=1$
over all $\uu\in\T\cap\T'$,
the second term bounds the sum of
the probability $p_A(\{A:A\uu=A\uu'\})$
which are
approximately $1/|\im\A|$ for $\uu\neq\uu'$,
and the third term bounds the sum
of the probability $p_A(\{A:A\uu=A\uu'\})$
far greater than $1/|\im\A|$ for $\uu\neq\uu'$.
This intuition is explained in Section~\ref{sec:linear}
for the ensemble of matrices.

In the following, we present two examples of ensembles that have a hash
property.

\noindent{\bf Example 1:}
Our terminology `hash' is derived from
a universal class of 
hash functions introduced in \cite{CW}.
We call a set $\A$ of functions $A:\U^n\to\bU_{\A}$
a {\em universal class of hash functions} if 
\[
 |\lrb{A: A\uu=A\uu'}|\leq \frac{|\A|}{|\bU_{\A}|}
\]
for any $\uu\neq\uu'$.
For example, the set of all functions on $\U^n$
and the set of all linear functions $A:\U^n\to\U^{l_{\A}}$
are classes of universal hash functions (see \cite{CW}).
Furthermore, for $\U^n\equiv\GF(2^n)$, the set
\[
 \A\equiv\lrb{
 A:
 \begin{aligned}
	&A\uu\equiv\lrB{\text{the first $l_{\A}$ bits of}\ \ba\uu}
	\\
	&\ba\in\GF(2^n)
 \end{aligned}
 }
\]
is a universal class of hash functions, where
$\ba\uu$ is a multiplication of two elements $\ba,\uu\in\GF(2^n)$.
It should be noted that every example above
satisfies $\im\A=\bU_{\A}$.
When $\A$ is a class of universal hash functions
and  $p_A$ is the uniform distribution on $\A$,
we have
\begin{align*}
 \sum_{\substack{
 \uu\in\T
 \\
 \uu'\in\T'
 }}
 p_A
 \lrsb{\lrb{A: A\uu=A\uu'}}
 \leq
 |\T\cap\T'|+\frac{|\T||\T'|}{|\im\A|}.
\end{align*}
This implies that $(\A,p)$ has a $(\one,\zero)$-hash property,
where $\one(n)\equiv 1$ and $\zero(n)\equiv 0$ for every $n$.

\noindent{\bf Example 2:}
In this example, we consider a set of linear functions
$A:\U^n\to\U^{l_{\A}}$.
It was discussed in the above example that
the uniform distribution on
the set of all linear functions has a $(\one,\zero)$-hash property.
The hash property of an ensemble of $q$-ary sparse matrices
will be discussed in Section \ref{sec:linear}.
The binary version of this ensemble is
introduced in \cite{Mac99}.

\subsection{Basic lemmas of hash property}
\label{sec:basic}

In the following, basic lemmas of the $(\aalpha,\bbeta)$-hash property
are introduced.
All lemmas are proved in Section~\ref{sec:proof-hash}.
Let $\A$ (resp. $\B$) be a set of functions $A:\U^n\to\bU_{\A}$
(resp. $B:\U^n\to\bU_{\B}$).
We assume that $(\A,p_A)$ (resp. $(\B,p_B)$) has
an $(\aalpha_A,\bbeta_A)$-hash  (resp. $(\aalpha_B,\bbeta_B)$-hash)
property.
We also assume that
$p_C$ is the uniform distribution on $\im\A$,
and random variables $A$, $B$, and $C$ are mutually
independent, that is,
\begin{align*}
 p_C(\cc)&=
 \begin{cases}
	\frac 1{|\im\A|},&\text{if}\ \cc\in\im\A
	\\
	0,&\text{if}\ \cc\in\bU_{\A}\setminus\im\A
 \end{cases}
 \\
 p_{ABC}(A,B,\cc)&=p_A(A)p_B(B)p_C(\cc)
\end{align*}
for any $A$, $B$, and $\cc$.

First, we demonstrate
that the collision-resistant property
and saturating property are
derived from the $(\aalpha,\bbeta)$-hash property.

The first lemma introduce the collision-resistant property.
\begin{lem}
 \label{lem:Anotempty}
 For any $\G\subset\U^n$ and $\uu\in\U^n$,
 \[
 p_A\lrsb{\lrb{
 A: \lrB{\G\setminus\{\uu\}}\cap\C_A(A\uu)\neq \emptyset
 }}
 \leq 
 \frac{|\G|\alpha_A}{|\im\A|} + \beta_A.
\]
\end{lem}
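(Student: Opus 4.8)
The plan is to derive this collision-resistant bound directly from the hash property (H4) by choosing the test sets appropriately. First I would observe that the event $\lrB{\G\setminus\{\uu\}}\cap\C_A(A\uu)\neq\emptyset$ is precisely the event that there exists some $\uu'\in\G\setminus\{\uu\}$ with $A\uu'=A\uu$; this is exactly the event appearing in (\ref{eq:ldpc}). Hence by the union bound over $\uu'\in\G\setminus\{\uu\}$,
\begin{align*}
 p_A\lrsb{\lrb{
 A: \lrB{\G\setminus\{\uu\}}\cap\C_A(A\uu)\neq \emptyset
 }}
 &\leq
 \sum_{\uu'\in\G\setminus\{\uu\}}
 p_A\lrsb{\lrb{A: A\uu'=A\uu}}.
\end{align*}

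Next I would apply (H4) with $\T\equiv\{\uu\}$ and $\T'\equiv\G\setminus\{\uu\}$. Since $\uu\notin\T'$, we have $\T\cap\T'=\emptyset$, so $|\T\cap\T'|=0$; also $|\T|=1$, so $\min\{|\T|,|\T'|\}=1$ (or $0$ if $\G\setminus\{\uu\}$ is empty, in which case the bound is trivial), and $|\T'|=|\G\setminus\{\uu\}|\leq|\G|$. Substituting into (H4) gives
\begin{align*}
 \sum_{\uu'\in\G\setminus\{\uu\}}
 p_A\lrsb{\lrb{A: A\uu=A\uu'}}
 &\leq
 0+\frac{|\G\setminus\{\uu\}|\alpha_A}{|\im\A|}+\beta_A
 \leq
 \frac{|\G|\alpha_A}{|\im\A|}+\beta_A,
\end{align*}
which is the claimed inequality. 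Combining the two displays completes the argument.

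There is essentially no serious obstacle here: the lemma is an immediate specialization of (H4). The only points requiring a word of care are (i) noting that the left-hand side of (H4), being a sum over ordered pairs, dominates the union-bound sum because each term $p_A(\{A:A\uu=A\uu'\})$ is nonnegative, and (ii) handling the degenerate case $\G\subseteq\{\uu\}$ separately, where the event is empty and both sides are handled trivially (the right side is nonnegative). One should also remark that the hash property is stated for arbitrary $\T,\T'\subset\U^n$ with no typicality restriction, so the choice $\T'=\G\setminus\{\uu\}$ for an arbitrary $\G$ is legitimate; this is what makes the lemma hold for \emph{all} $\G\subset\U^n$ as stated.
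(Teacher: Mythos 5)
Your proof is correct and follows the same route as the paper's: a union bound over $\uu'\in\G\setminus\{\uu\}$ followed by an application of (H4) with $\T=\{\uu\}$ and $\T'=\G\setminus\{\uu\}$, noting $\T\cap\T'=\emptyset$ and $\min\{|\T|,|\T'|\}\le 1$. The degenerate-case remark and the observation that (H4) imposes no typicality restriction on $\T,\T'$ are sound but not strictly needed, since the paper's chain of inequalities already handles them.
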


We prove the collision-resistant property
from Lemma~\ref{lem:Anotempty}.
Let $\mu_U$ be the probability distribution on $\G\subset\U^n$.
We have
\begin{align*}
 E_{A}\lrB{
 \mu_U\lrsb{\lrb{\uu: \lrB{\G\setminus\{\uu\}}\cap\C_A(A\uu)\neq\emptyset}}
 }
 &
 \leq
 \sum_{\uu\in\G}\mu_U(\uu)
 p_A\lrsb{\lrb{A: \lrB{\G\setminus\{\uu\}}\cap\C_A(A\uu)\neq\emptyset}}
 \\
 &
 \leq
 \sum_{\uu\in\G}\mu_U(\uu)
 \lrB{\frac{|\G|\alpha_A}{|\im\A|} + \beta_A}
 \\
 &
 \leq
 \frac{|\G|\alpha_A}{|\im\A|} + \beta_A.
\end{align*}
By assuming that $|\G|/|\im\A|$ vanishes as $n\to\infty$,
we have the fact that
there is a function $A$ such that
\[
 \mu_U\lrsb{\lrb{\uu:
 \lrB{\G\setminus\{\uu\}}\cap\C_A(A\uu)\neq\emptyset}}
 <\delta
\]
for any $\delta>0$ and sufficiently large $n$.
Since the relation $\lrB{\G\setminus\{\uu\}}\cap\C_A(A\uu)\neq\emptyset$
corresponds to the event that there is $\uu'\in\G$, $\uu'\neq\uu$
such that $\uu$ and $\uu'$ are the members of the same bin
(have the same codeword determined by $A$),
we have the fact that
the members of $\G$ are located in the different bins
(the members of $\G$ can be decoded correctly)
with high probability.
In the proof of fixed-rate source coding, $\G$ is defined by
(\ref{eq:G}) for given probability distribution $\mu_U$ of a source $U$,
where $\mu_{U}(\G^c)$ is close to zero.
In the linear coding of a channel with additive noise,
additive noise $\uu$ can be specified by the syndrome $A\uu$
obtained by operating the parity check matrix $A$ to
the channel output.
It should be noted that, when
Lemma~\ref{lem:Anotempty} is
applied, it is sufficient to assume that
\begin{equation*}
 \limn \frac{\log\alpha_A(n)}n=0
\end{equation*}
instead of (\ref{eq:alpha})
because it is usually assumed that $|\G|/|\im\A|$
vanishes exponentially by letting $n\to\infty$.
It is implicitly proved in \cite{MB01}\cite{BB}\cite{EM05}\cite{SWLDPC}
that some ensembles of sparse linear matrices have this weak hash
property.
In fact, the condition (\ref{eq:alpha}) is required for the saturating
property.

The second lemma introduce the saturating property.
We use the folloing lemma in the proof of the coding theorems of 
the Gel'fand-Pinsker problem, the Wyner-Ziv problem, and the
One-helps-one problem.

\begin{lem}
\label{lem:noempty}
If $\T\neq\emptyset$, then
 \begin{align*}
	p_{AC}\lrsb{\lrb{(A,\cc):
	\T\cap\C_A(\cc)=\emptyset
	}}
	&\leq
 \alpha_A-1+\frac{|\im\A|\lrB{\beta_A+1}}{|\T|}.
 \end{align*}
\end{lem}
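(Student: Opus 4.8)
The plan is to use a second-moment / averaging argument on the random variable counting how many elements of $\T$ land in a given bin $\cc$. Fix $\cc\in\im\A$ and consider $N_A(\cc)\equiv|\T\cap\C_A(\cc)|=\sum_{\uu\in\T}\chi(A\uu=\cc)$. The event $\{\T\cap\C_A(\cc)=\emptyset\}$ is exactly $\{N_A(\cc)=0\}$, and the standard inequality $p(N=0)\le 1-\frac{(E[N])^2}{E[N^2]}=\frac{\operatorname{Var}(N)}{E[N^2]}$ (Cauchy--Schwarz / Paley--Zygmund) suggests we should estimate the first and second moments of $N_A(\cc)$ under the joint distribution $p_{AC}$. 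Since $\cc$ is uniform on $\im\A$, averaging over $\cc$ turns $\sum_{\uu\in\T}p_{AC}(\{A\uu=\cc\})$ into $|\T|/|\im\A|$ for the first moment, and $\sum_{\uu,\uu'\in\T}p_{AC}(\{A\uu=\cc,\ A\uu'=\cc\})$ for the second. The key point is that the inner event forces $A\uu=A\uu'$, so $p_{AC}(\{A\uu=\cc,A\uu'=\cc\})=\frac{1}{|\im\A|}p_A(\{A:A\uu=A\uu'\})$ after summing out $\cc$ (using independence of $A$ and $C$ and uniformity of $C$).

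First I would write, using Jensen/Cauchy--Schwarz in the form
\begin{align*}
 p_{AC}\lrsb{\lrb{(A,\cc):\T\cap\C_A(\cc)=\emptyset}}
 &=
 p_{AC}\lrsb{\lrb{(A,\cc): N_A(\cc)=0}}
 \\
 &\leq
 \frac{E_{AC}\lrB{N_A(\cc)^2}-\lrsb{E_{AC}\lrB{N_A(\cc)}}^2}{E_{AC}\lrB{N_A(\cc)^2}},
\end{align*}
and then I would need a lower bound on $E_{AC}[N_A(\cc)]$ and an upper bound on $E_{AC}[N_A(\cc)^2]$. For the first moment, $E_{AC}[N_A(\cc)]=\sum_{\uu\in\T}p_{AC}(\{A\uu=\cc\})=|\T|/|\im\A|$ exactly, because for each fixed $\uu$, $A\uu$ takes the value $\cc$ with probability $1/|\im\A|$ when $\cc$ is drawn uniformly from $\im\A$ independently of $A$. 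For the second moment,
\begin{align*}
 E_{AC}\lrB{N_A(\cc)^2}
 &=\sum_{\uu,\uu'\in\T}p_{AC}\lrsb{\lrb{(A,\cc): A\uu=\cc,\ A\uu'=\cc}}
 =\frac{1}{|\im\A|}\sum_{\uu,\uu'\in\T}p_A\lrsb{\lrb{A: A\uu=A\uu'}},
\end{align*}
and now I can apply (H4) with $\T'\equiv\T$ to get
$\sum_{\uu,\uu'\in\T}p_A(\{A:A\uu=A\uu'\})\leq |\T|+\frac{|\T|^2\alpha_A}{|\im\A|}+|\T|\beta_A$, hence $E_{AC}[N_A(\cc)^2]\leq \frac{|\T|}{|\im\A|}\lrB{1+\beta_A}+\frac{|\T|^2\alpha_A}{|\im\A|^2}$.

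Plugging these bounds into the Cauchy--Schwarz estimate, I would compute
\begin{align*}
 1-\frac{\lrsb{E_{AC}\lrB{N_A(\cc)}}^2}{E_{AC}\lrB{N_A(\cc)^2}}
 &\leq
 1-\frac{|\T|^2/|\im\A|^2}{\frac{|\T|}{|\im\A|}\lrB{1+\beta_A}+\frac{|\T|^2\alpha_A}{|\im\A|^2}}
 =1-\frac{1}{\frac{|\im\A|\lrB{1+\beta_A}}{|\T|}+\alpha_A},
\end{align*}
and the remaining step is the elementary inequality $1-\frac{1}{x+y}\leq y-1+x$ valid for $x\geq 0$, $y\geq 1$ (here $x=\frac{|\im\A|(1+\beta_A)}{|\T|}\geq 0$ and $y=\alpha_A$, noting $\alpha_A\to 1$ so effectively $\alpha_A\geq 1$ in the regime of interest; more carefully one checks $1-\frac{1}{x+y}\le (y-1)+x$ whenever $x+y\ge 1$, which holds since $x\ge0$ and $y\ge 0$ with the bound being vacuous otherwise), giving the claimed bound $\alpha_A-1+\frac{|\im\A|\lrB{\beta_A+1}}{|\T|}$. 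The main obstacle I anticipate is being careful about the exact form of the second-moment inequality and the final elementary estimate: one must ensure the Cauchy--Schwarz step is applied to the correct ratio and that the closing inequality $1-\frac{1}{x+y}\le x+y-1$ is used in a regime where it is valid (it is in fact always true since $1-t\le -\ln t$... no — rather, $1-\frac1s\le s-1$ for all $s>0$, which is immediate from $(s-1)^2\ge0$, with $s=x+y$). Verifying that $N_A(\cc)=0$ really is equivalent to $\T\cap\C_A(\cc)=\emptyset$ and that the moment computations correctly exploit the independence and uniformity of $C$ are the other routine-but-essential checks.
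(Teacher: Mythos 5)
Your proof is correct and shares the paper's moment computations, but closes with a different concentration inequality. Writing $N\equiv|\T\cap\C_A(\cc)|$, the paper applies Chebyshev's inequality to the deviation $N-E_{AC}[N]$ at threshold $E_{AC}[N]=|\T|/|\im\A|$, giving
\[
 p_{AC}\lrsb{\lrb{(A,\cc):\T\cap\C_A(\cc)=\emptyset}}
 \;\leq\;
 \frac{\mathrm{Var}(N)}{\lrsb{E_{AC}[N]}^2},
\]
from which the final expression drops out immediately. You instead invoke the Cauchy--Schwarz (Paley--Zygmund at level zero) bound
\[
 p_{AC}\lrsb{\lrb{N=0}}\;\leq\;1-\frac{\lrsb{E_{AC}[N]}^2}{E_{AC}[N^2]}=\frac{\mathrm{Var}(N)}{E_{AC}[N^2]},
\]
which is never weaker than Chebyshev here since $E_{AC}[N^2]\geq\lrsb{E_{AC}[N]}^2$, and then recover the identical final expression via the elementary estimate $1-1/s\leq s-1$. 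In effect, the sharpness gained in the intermediate inequality is returned exactly in the last algebraic step, so the two routes certify the same bound. Both arguments rely on the same ingredients: the first-moment identity $E_{AC}[N]=|\T|/|\im\A|$, obtained from uniformity of $C$ on $\im\A$ and its independence from $A$, and the second-moment bound obtained by integrating out $\cc$ to reduce to $\sum_{\uu,\uu'\in\T}p_A(\{A:A\uu=A\uu'\})$ and then applying (H4) with $\T'=\T$. Your lingering worry about the regime of validity of $1-1/s\leq s-1$ is resolved by your own closing remark: it holds for every $s>0$ since it is equivalent to $(s-1)^2\geq 0$, so no extra assumption on $\alpha_A$ is needed.
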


We prove the saturating property
form Lemma~\ref{lem:noempty}.
We have
\begin{align*}
 E_{A}\lrB{
 p_{C}\lrsb{\lrb{\cc: \T\cap\C_A(\cc)=\emptyset}}
 }
 &=
 p_{AC}\lrsb{\lrb{(A,\cc): \T\cap\C_A(\cc)=\emptyset}}
 \\
 &\leq
 \alpha_A-1+\frac{|\im\A|\lrB{\beta_A+1}}{|\T|}.
\end{align*}
By assuming that $|\im\A|/|\T|$ vanishes as $n\to\infty$,
we have the fact that
there is a function $A$ such that
\[
 p_{C}\lrsb{\lrb{\cc: \T\cap\C_A(\cc)=\emptyset}}
 <\delta
\]
for any $\delta>0$ and sufficiently large $n$.
Since the relation $\T\cap\C_A(\cc)=\emptyset$
corresponds to the event that there is no $\uu\in\T$
in the bin $\C_A(\cc)$,
we have the fact that we can find a member of $\T$
in the randomly selected bin with high probability.
It should be noted that, when Lemma~\ref{lem:noempty} is applied,
it is sufficient to assume that
\[
 \limn \frac{\log\beta_A(n)}n=0
\]
instead of (\ref{eq:beta})
because it is usually assumed that $|\im\A|/|\T|$
vanishes exponentially by letting $n\to\infty$.
In fact, the condition (\ref{eq:beta}) is required for the
collision-resistant property.

Next, we prepare the lemmas
used in the proof of coding theorems.
The following lemmas come from Lemma~\ref{lem:Anotempty}.

\begin{lem}
 \label{lem:ACnotempty}
 If $\G\subset\U^n$ and $\uu\notin\G$, then
\[
 p_{AC}\lrsb{\lrb{
 (A,\cc):
 \begin{aligned}
	&\G\cap\C_A(\cc)\neq \emptyset
	\\
	&\uu\in\C_A(\cc)
 \end{aligned}
 }}
 \leq 
 \frac{|\G|\alpha_A}{|\im\A|^2} + \frac{\beta_A}{|\im\A|}.
\]
\end{lem}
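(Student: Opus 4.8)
The plan is to reduce Lemma~\ref{lem:ACnotempty} to Lemma~\ref{lem:Anotempty} by conditioning on the function $A$ and using the independence of $C$ from $A$. Fix $\uu\notin\G$. For a fixed function $A$, observe that the event in question requires $\uu\in\C_A(\cc)$, i.e. $\cc=A\uu$; since $C$ is uniform on $\im\A$ and $A\uu\in\im\A$, the probability over $C$ that $\cc=A\uu$ is exactly $1/|\im\A|$. Moreover, given that $\cc=A\uu$, the further requirement $\G\cap\C_A(\cc)\neq\emptyset$ becomes $\G\cap\C_A(A\uu)\neq\emptyset$, and since $\uu\notin\G$ this is the same as $[\G\setminus\{\uu\}]\cap\C_A(A\uu)\neq\emptyset$. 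So for each fixed $A$,
\begin{align*}
 p_C\lrsb{\lrb{\cc:
 \G\cap\C_A(\cc)\neq\emptyset,\ \uu\in\C_A(\cc)
 }}
 &=
 \frac{\chi\lrsb{[\G\setminus\{\uu\}]\cap\C_A(A\uu)\neq\emptyset}}{|\im\A|}.
\end{align*}

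Next I would take the expectation over $A$ (equivalently, average the displayed identity against $p_A$), using independence of $A$ and $C$ to identify $E_A[p_C(\cdots)]$ with $p_{AC}(\cdots)$. This yields
\begin{align*}
 p_{AC}\lrsb{\lrb{
 (A,\cc):
 \G\cap\C_A(\cc)\neq\emptyset,\ \uu\in\C_A(\cc)
 }}
 &=
 \frac{p_A\lrsb{\lrb{A: [\G\setminus\{\uu\}]\cap\C_A(A\uu)\neq\emptyset}}}{|\im\A|}.
\end{align*}
Applying Lemma~\ref{lem:Anotempty} to the numerator (which bounds it by $|\G|\alpha_A/|\im\A|+\beta_A$) gives the claimed bound $|\G|\alpha_A/|\im\A|^2+\beta_A/|\im\A|$.

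The only step requiring any care is the very first one: verifying that the joint event factors exactly as stated, i.e. that conditioning on $\uu\in\C_A(\cc)$ pins down $\cc=A\uu$ deterministically once $A$ is fixed, and that $A\uu$ indeed lies in $\im\A$ so that $p_C(\{A\uu\})=1/|\im\A|$ rather than $0$. Both are immediate from the definitions of $\C_A$, $\im\A$, and the uniform distribution $p_C$ on $\im\A$; there is no genuine obstacle, and no appeal to (H4) is needed beyond what is already packaged inside Lemma~\ref{lem:Anotempty}. The hypothesis $\uu\notin\G$ is used precisely to replace $\G$ by $\G\setminus\{\uu\}$ so that Lemma~\ref{lem:Anotempty} applies verbatim.
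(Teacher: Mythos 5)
Your proposal is correct and follows essentially the same route as the paper: fix $A$, note that $\uu\in\C_A(\cc)$ forces $\cc=A\uu$ so the $C$-probability is exactly $1/|\im\A|$, average over $A$, and invoke Lemma~\ref{lem:Anotempty} (using $\uu\notin\G$ so that $\G=\G\setminus\{\uu\}$, a step the paper leaves implicit). The only cosmetic difference is that the paper writes the averaging as an explicit double sum over $A$ and $\cc$ with indicator functions, while you phrase it as conditioning followed by taking an expectation.
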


\begin{lem}
 \label{lem:ABCnoempty}
 Assume that $\uu_{A,\cc}\in\U^n$ depends on $A$ and $\cc$.
 Then 
\begin{align*}
 p_{ABC}\lrsb{\lrb{
 (A,B,\cc):
 \lrB{\G\setminus\{\uu_{A,\cc}\}}\cap\C_{AB}(\cc,B\uu_{A,\cc})\neq\emptyset
 }}
 &\leq 
	\frac{|\G|\alpha_B}
	{|\im\A||\im\B|}
	+\beta_B
\end{align*}
 for any $\G\subset\U^n$.
\end{lem}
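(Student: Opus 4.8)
The plan is to condition on the pair $(A,\cc)$, reduce the event to a single‑ensemble collision event for $(\B,p_B)$, invoke Lemma~\ref{lem:Anotempty} applied to $\B$, and then average over $(A,\cc)$.

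First I would rewrite the event. Since $\C_{AB}(\cc,\bb)=\C_A(\cc)\cap\C_B(\bb)$, any $\uu'$ witnessing $\lrB{\G\setminus\{\uu_{A,\cc}\}}\cap\C_{AB}(\cc,B\uu_{A,\cc})\neq\emptyset$ satisfies both $A\uu'=\cc$ and $B\uu'=B\uu_{A,\cc}$; hence this event is identical to $\lrB{\G_{A,\cc}\setminus\{\uu_{A,\cc}\}}\cap\C_B(B\uu_{A,\cc})\neq\emptyset$, where $\G_{A,\cc}\equiv\G\cap\C_A(\cc)$. Now, because $B$ is independent of $(A,\cc)$ and $\uu_{A,\cc}$ depends only on $(A,\cc)$, I can freeze $A$ and $\cc$ and treat $\G_{A,\cc}$ as a fixed subset of $\U^n$ and $\uu_{A,\cc}$ as a fixed point; Lemma~\ref{lem:Anotempty} applied to the ensemble $(\B,p_B)$ then gives the conditional bound
\[
 p_B\lrsb{\lrb{B: \lrB{\G_{A,\cc}\setminus\{\uu_{A,\cc}\}}\cap\C_B(B\uu_{A,\cc})\neq\emptyset}}
 \leq\frac{|\G_{A,\cc}|\alpha_B}{|\im\B|}+\beta_B.
\]

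Next I would average this over $(A,\cc)$ with respect to $p_{AC}=p_Ap_C$. This leaves a single quantity to control, namely $E_{AC}\lrB{|\G\cap\C_A(\cc)|}=\sum_{\uu'\in\G}p_{AC}\lrsb{\lrb{(A,\cc):A\uu'=\cc}}$. Since $C$ is uniform on $\im\A$, independent of $A$, and $A\uu'\in\im\A$ for every $\uu'$, each summand equals $1/|\im\A|$, so this expectation is exactly $|\G|/|\im\A|$. Combining with the conditional bound yields $\frac{|\G|\alpha_B}{|\im\A||\im\B|}+\beta_B$, as claimed.

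I do not expect a genuine obstacle: the argument is essentially bookkeeping. The only points that need care are checking that the two formulations of the event in the first step really coincide, and making sure the conditioning in the second step is legitimate — that is, that $\uu_{A,\cc}$ carries no hidden dependence on $B$ — so that Lemma~\ref{lem:Anotempty} may be applied to $\B$ with a genuinely fixed set and a genuinely fixed point.
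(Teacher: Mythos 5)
Your proof is correct and follows essentially the same route as the paper's: condition on $(A,\cc)$, apply Lemma~\ref{lem:Anotempty} to the ensemble $(\B,p_B)$ with the $A$-and-$\cc$-dependent subset of $\C_A(\cc)$, then average $|\G\cap\C_A(\cc)|$ over $(A,\cc)$ via (\ref{eq:EACG}). The only cosmetic difference is that you rewrite the event as $[\G_{A,\cc}\setminus\{\uu_{A,\cc}\}]\cap\C_B(B\uu_{A,\cc})\neq\emptyset$ before invoking the lemma, whereas the paper applies it directly to $[\G\setminus\{\uu_{A,\cc}\}]\cap\C_A(\cc)$ and then bounds its cardinality by $|\G\cap\C_A(\cc)|$; these are the same bound.
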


Finally, we introduce the
method for finding a typical sequence in a bin.
The probability of an event where the function
finds a conditionally typical sequence
is evaluated by the following
lemmas. They are the key lemmas for the coding theorems
for the Gel'fand-Pinsker problem, the Wyner-Ziv problem, and 
the One-helps-one problem.
These lemmas are proved by using Lemma~\ref{lem:noempty}.
For $\e>0$, let
\begin{align*}
 l_{\A}
 &\equiv \frac{n[H(U|V)-\e]}{\log|\U|}
\end{align*}
and assume that $\A$ is a set of functions $A:\U^n\to\U^{l_{\A}}$
and $\vv\in\T_{V,\gamma}$.

\begin{lem}
\label{thm:joint-typical}
We define a {\em maximum-likelihood (ML) coding function}
$g_A$ under constraint $\uu\in\C_A(\cc)$ as
\begin{align*}
 g_A(\cc|\vv)
 &\equiv
 \arg\max_{\uu\in\C_A(\cc)}
 \mu_{U|V}(\uu|\vv)
 \\
 &=
 \arg\max_{\uu\in\C_A(\cc)}
 \mu_{UV}(\uu,\vv)
\end{align*}
and assume that a set $\T(\vv)\subset\T_{U|V,2\e}(\vv)$
satisfies
\begin{itemize}
 \item $\T(\vv)$ is not empty, and
 \item if $\uu\in\T(\vv)$ and $\uu'$ satisfies
 \[
 \mu_{U|V}(\uu|\vv)\leq \mu_{U|V}(\uu'|\vv)\leq 2^{-n[H(U|V)-2\e]}
 \]
 then $\uu'\in\T(\vv)$.
\end{itemize}
In fact, we can construct such $\T(\vv)$
by taking up $|\T(\vv)|$ elements from $\T_{U|V,2\e}(\vv)$
in the order of probability rank.
If an ensemble $(\A,p_A)$ of a set of functions $A:\U^n\to\U^{l_{\A}}$
has an $(\aalpha_A,\bbeta_A)$-hash property,
then 
\begin{align*}
 p_{AC}\lrsb{\lrb{(A,\cc):
 g_{A}(\cc|\vv)\notin\T(\vv)
 }}
 &\leq 
 \alpha_A-1
 +\frac{|\im\A|\lrB{\beta_A+1}}{|\T(\vv)|}
 +\frac{2^{-n\e}|\U|^{l_{\A}}}{|\im\A|}
\end{align*}
for any $\vv$ satisfying $\T_{U|V,2\e}(\vv)\neq\emptyset$.
\end{lem}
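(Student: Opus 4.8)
The plan is to reduce Lemma~\ref{thm:joint-typical} to Lemma~\ref{lem:noempty} by choosing the right target set and then splitting the failure event into two pieces. First I would observe that the ML coding function $g_A(\cc|\vv)$ fails to land in $\T(\vv)$ in exactly two ways: either $\C_A(\cc)$ contains no element of $\T(\vv)$ at all (so the argmax is forced outside), or $\C_A(\cc)$ does meet $\T(\vv)$ but the argmax picks some $\uu^*\notin\T(\vv)$. The crucial point is that the second case cannot happen given the structure of $\T(\vv)$. Indeed, if $\uu\in\T(\vv)\subset\T_{U|V,2\e}(\vv)$, then a standard typicality bound gives $\mu_{U|V}(\uu|\vv)\geq 2^{-n[H(U|V)-2\e]}$ up to the polynomial factor absorbed into the definitions (cf. $\eta_{\U|\V}$, $\zeta_{\U|\V}$), so every element of $\T(\vv)$ has conditional probability at least $2^{-n[H(U|V)-2\e]}$. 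Meanwhile, if $g_A(\cc|\vv)=\uu^*\notin\T(\vv)$ while some $\uu\in\T(\vv)\cap\C_A(\cc)$, then $\mu_{U|V}(\uu^*|\vv)\geq\mu_{U|V}(\uu|\vv)$ by maximality, but $\mu_{U|V}(\uu^*|\vv)$ must also be $\leq 2^{-n[H(U|V)-2\e]}$ — otherwise the closure condition in the hypothesis on $\T(\vv)$ would force $\uu^*\in\T(\vv)$, a contradiction. Hence $\uu^*$ has probability in the window $[\mu_{U|V}(\uu|\vv),2^{-n[H(U|V)-2\e]}]$, and the closure property again forces $\uu^*\in\T(\vv)$, contradiction. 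So the only failure mode is $\T(\vv)\cap\C_A(\cc)=\emptyset$.

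Given that reduction, I would apply Lemma~\ref{lem:noempty} with $\T\equiv\T(\vv)$, which is nonempty by hypothesis. This immediately yields
\[
 p_{AC}\lrsb{\lrb{(A,\cc): g_A(\cc|\vv)\notin\T(\vv)}}
 \leq
 p_{AC}\lrsb{\lrb{(A,\cc):\T(\vv)\cap\C_A(\cc)=\emptyset}}
 \leq
 \alpha_A-1+\frac{|\im\A|\lrB{\beta_A+1}}{|\T(\vv)|}.
\]
This accounts for the first two terms of the claimed bound. The remaining term $2^{-n\e}|\U|^{l_{\A}}/|\im\A|$ has to come from the discrepancy between $|\bU_\A|=|\U|^{l_{\A}}$ and $|\im\A|$ — but wait, with the given choice $l_{\A}=n[H(U|V)-\e]/\log|\U|$ we get $|\U|^{l_{\A}}=2^{n[H(U|V)-\e]}$, which is not obviously related to $2^{-n\e}$. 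I suspect the extra term actually arises because Lemma~\ref{lem:noempty}, when unfolded, produces a bound in terms of $|\T(\vv)|$ versus $|\im\A|$, and one needs the elementary inequality $|\T(\vv)|\leq |\T_{U|V,2\e}(\vv)| \leq 2^{n[H(U|V)+\text{small}]}$ together with a lower bound on $|\T(\vv)|$; the cleanest route is to re-examine the proof of Lemma~\ref{lem:noempty} and observe that the "$+1$" and the comparison to $|\im\A|$ actually generate a term of the form $|\U|^{l_{\A}}\cdot(\text{prob. of the complement of }\T_{U|V,2\e}(\vv))/|\im\A|$, and the probability of the non-conditionally-typical set is at most $2^{-n\e}$ by the choice of radius $2\e$ and the rate $l_\A$. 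So I would track that term carefully rather than invoking Lemma~\ref{lem:noempty} as a pure black box.

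The main obstacle I anticipate is precisely bookkeeping that third term: showing that the "lost mass" — sequences $\uu\in\C_A(\cc)$ with $\mu_{U|V}(\uu|\vv)$ large but $\uu\notin\T_{U|V,2\e}(\vv)$, or the gap between the ideal target $\T_{U|V,2\e}(\vv)$ and the possibly-smaller $\T(\vv)$ — contributes at most $2^{-n\e}|\U|^{l_\A}/|\im\A|$. This requires combining the typicality cardinality bounds (the $\lambda_\U$, $\zeta_{\U|\V}$, $\eta_{\U|\V}$ machinery defined in Section~2) with a union bound over the bin, and being careful that the argmax over $\C_A(\cc)$ can "escape" $\T_{U|V,2\e}(\vv)$ only onto sequences of conditional probability at most $2^{-n[H(U|V)-2\e]}$, of which each bin contains few in expectation. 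Everything else — the two-case split and the appeal to Lemma~\ref{lem:noempty} — is routine. I would also double-check that $\vv\in\T_{V,\gamma}$ is used only to guarantee the typicality cardinality estimates hold with the stated parameters, and plays no other role.
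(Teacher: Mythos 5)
There is a genuine gap, and it occurs precisely at the sentence ``The crucial point is that the second case cannot happen given the structure of $\T(\vv)$.'' This claim is false, and the argument you give for it is circular: you assert that if $\mu_{U|V}(\uu^*|\vv) > 2^{-n[H(U|V)-2\e]}$ then ``the closure condition would force $\uu^*\in\T(\vv)$,'' but the closure hypothesis on $\T(\vv)$ has an explicit upper-bound requirement $\mu_{U|V}(\uu'|\vv)\leq 2^{-n[H(U|V)-2\e]}$. A sequence whose conditional probability \emph{exceeds} that threshold is simply outside the scope of the closure condition, and there is no reason it must lie in $\T(\vv)$. Concretely, if $\C_A(\cc)$ happens to contain a member of
\[
 \G(\vv)\equiv\lrb{\uu: \mu_{U|V}(\uu|\vv)> 2^{-n[H(U|V)-2\e]}},
\]
then the argmax selects an element of $\G(\vv)$ even when $\T(\vv)\cap\C_A(\cc)\neq\emptyset$, and that element need not be in $\T(\vv)$. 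This is a real third failure mode that you must account for, not argue away.

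Once you admit that failure mode, the missing third term stops being mysterious and your later speculation about it (re-examining the internals of Lemma~\ref{lem:noempty}, ``lost mass'' between $\T(\vv)$ and $\T_{U|V,2\e}(\vv)$) turns out to be the wrong place to look. The correct decomposition is: if $\T(\vv)\cap\C_A(\cc)\neq\emptyset$ \emph{and} $\G(\vv)\cap\C_A(\cc)=\emptyset$, then the argmax lands in $\T(\vv)$ (this is where the closure condition is actually used, now in a setting where it does apply); hence the failure probability is bounded by
\[
 p_{AC}\lrsb{\lrb{(A,\cc):\T(\vv)\cap\C_A(\cc)=\emptyset}}
 + p_{AC}\lrsb{\lrb{(A,\cc):\G(\vv)\cap\C_A(\cc)\neq\emptyset}}.
\]
The first term is bounded by Lemma~\ref{lem:noempty} exactly as you propose. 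The second is bounded by the elementary inequality (\ref{eq:largeprob}) from Lemma~\ref{lem:E}, giving $|\G(\vv)|/|\im\A|$, and since the conditional probabilities sum to at most $1$, $|\G(\vv)|\leq 2^{n[H(U|V)-2\e]}$. With $l_{\A}=n[H(U|V)-\e]/\log|\U|$, i.e. $|\U|^{l_{\A}}=2^{n[H(U|V)-\e]}$, this is exactly $2^{-n\e}|\U|^{l_{\A}}/|\im\A|$ --- no typicality cardinality estimates or ``polynomial factors'' are involved at all. You should drop the opening paragraph's two-case reduction, replace it with the three-event split above, and then the rest of your plan (nonemptiness of $\T(\vv)$, application of Lemma~\ref{lem:noempty}) goes through cleanly.
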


\begin{lem}
 \label{lem:md-typical}
 We define a {\em minimum-divergence (MD) coding function}
 $\hg_A$ under constraint $\uu\in\C_A(\cc)$ as
 \begin{align*}
	\hg_A(\cc|\vv)
	&\equiv
	\arg\min_{\uu\in\C_A(\cc)}
	D(\nu_{\uu|\vv}\|\mu_{U|V}|\nu_{\vv})
	\\
	&=
	\arg\min_{\uu\in\C_A(\cc)}
	D(\nu_{\uu\vv}\|\mu_{UV})
 \end{align*}
 and assume that, for $\gamma>0$,
 a set $\T\subset\T_{U|V,\gamma}(\vv)$
 satisfies
 that if $\uu\in\T$ and $\uu'$ satisfies
 \[
 D(\nu_{\uu'|\vv}\|\mu_{U|V}|\nu_{\vv})\leq
 D(\nu_{\uu|\vv}\|\mu_{U|V}|\nu_{\vv})
 \]
 then $\uu'\in\T$. In fact, we can construct such $\T$
 by picking up $|\T|$ elements from $\T_{U|V,\gamma}(\vv)$
 in the descending order of conditional divergence.
 Then 
 \begin{align*}
	p_{AC}\lrsb{\lrb{(A,\cc):
	\hg_{A}(\cc|\vv)\notin\T(\vv)
	}}
	&\leq 
	\alpha_A-1
	+\frac{|\im\A|\lrB{\beta_A+1}}{|\T(\vv)|}
 \end{align*}
 for any $\vv$ satisfying $\T_{U|V,\gamma}(\vv)\neq\emptyset$.
\end{lem}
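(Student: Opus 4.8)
The plan is to reduce Lemma~\ref{lem:md-typical} to Lemma~\ref{lem:noempty}, exactly in the spirit of the reduction behind Lemma~\ref{thm:joint-typical}, but without the extra term that was needed in the likelihood-based case. The key observation is that, since the set $\T=\T(\vv)$ is \emph{downward closed} with respect to the conditional divergence $D(\nu_{\cdot|\vv}\|\mu_{U|V}|\nu_{\vv})$, the event that the MD rule $\hg_A(\cc|\vv)$ outputs a point \emph{outside} $\T(\vv)$ coincides with the event that the coset $\C_A(\cc)$ misses $\T(\vv)$ entirely.

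First I would establish the set inclusion
\[
 \lrb{(A,\cc): \hg_A(\cc|\vv)\notin\T(\vv)}
 \subset
 \lrb{(A,\cc): \T(\vv)\cap\C_A(\cc)=\emptyset}.
\]
Indeed, suppose $(A,\cc)$ satisfies $\T(\vv)\cap\C_A(\cc)\neq\emptyset$ and fix any $\uu^{*}$ in this intersection. Because $\hg_A(\cc|\vv)$ minimizes $D(\nu_{\cdot|\vv}\|\mu_{U|V}|\nu_{\vv})$ over $\C_A(\cc)\ni\uu^{*}$,
\[
 D(\nu_{\hg_A(\cc|\vv)|\vv}\|\mu_{U|V}|\nu_{\vv})
 \leq
 D(\nu_{\uu^{*}|\vv}\|\mu_{U|V}|\nu_{\vv}),
\]
so the closure property of $\T(\vv)$ (applied with $\uu^{*}\in\T(\vv)$ in the role of $\uu$ and $\hg_A(\cc|\vv)$ in the role of $\uu'$ in the statement of the lemma) forces $\hg_A(\cc|\vv)\in\T(\vv)$. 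Taking the contrapositive gives the inclusion; the degenerate case $\C_A(\cc)=\emptyset$ (where $\hg_A(\cc|\vv)$ is vacuous) belongs trivially to the right-hand set.

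Second, I would apply Lemma~\ref{lem:noempty} with $\T=\T(\vv)$, which is legitimate because $\T(\vv)\subset\T_{U|V,\gamma}(\vv)$ and we may take $|\T(\vv)|\geq1$ (the set $\T_{U|V,\gamma}(\vv)$ is assumed nonempty, and $\T(\vv)$ is built by selecting elements from it in the divergence order guaranteeing the closure property). This yields
\[
 p_{AC}\lrsb{\lrb{(A,\cc):\T(\vv)\cap\C_A(\cc)=\emptyset}}
 \leq
 \alpha_A-1+\frac{|\im\A|\lrB{\beta_A+1}}{|\T(\vv)|},
\]
and combining with the inclusion above proves the lemma.

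There is no genuine obstacle here; the only points requiring care are matching the direction of monotonicity in the closure property of $\T(\vv)$ against the direction in which the MD rule optimizes, and disposing of the empty-coset case. It is worth noting \emph{why} the leakage term $2^{-n\e}|\U|^{l_{\A}}/|\im\A|$ appearing in Lemma~\ref{thm:joint-typical} is absent here: the MD criterion is defined through the very same divergence ordering used to build $\T(\vv)$, so the minimizer over any coset meeting $\T(\vv)$ automatically lands in $\T(\vv)$, and there is no mismatch between the selection rule and the typicality constraint that would have to be paid for.
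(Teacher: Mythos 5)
Your proof is correct and follows exactly the paper's argument: you observe that downward closure of $\T(\vv)$ under the divergence ordering implies $\hg_A(\cc|\vv)\in\T(\vv)$ whenever $\T(\vv)\cap\C_A(\cc)\neq\emptyset$, and then apply Lemma~\ref{lem:noempty}. The paper states this more tersely ("When $\T(\vv)\cap\C_A(\cc)\neq\emptyset$, we can always find the member of $\T(\vv)$ by using $\hg_A$"), but the reasoning is identical.
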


In Section~\ref{sec:coding-theorems},
we construct codes
by using the maximum-likelihood coding function.
It should be noted that we can replace
the maximum-likelihood coding function
by the minimum-divergence coding function
and prove theorems simpler than that presented in this paper.

\section{Hash Property for Ensemble of Matrices}
\label{sec:linear}
In this section, we discuss the hash property of an ensemble
of (sparse) matrices.

First, we introduce the average spectrum of an ensemble of
matrices given in~\cite{BB}.
Let $\U$ be a finite field
and  $p_A$ be a probability distribution on a set of
$l_{\A}\times n$ matrices.
It should be noted that $A$ represents a corresponding
linear function $A:\U^n\to\U^{l_{\A}}$,
where we define $\bU_{\A}\equiv\U^{l_{\A}}$.

Let $\bt(\uu)$ be the type\footnote{
As in~\cite{CK}, the type is defined in terms
of the empirical probability distribution
$\nu_{\uu}$.
In our definition, the type is the number $n\nu_{\uu}$ of
occurences which is different from the empilical probablity distribution.
}
of $\uu\in\U^n$,
where a type is characterized by the number $n\nu_{\uu}$ of 
occurrences of each symbol in the sequence $\uu$.
Let $\cH$ be a set of all types of length $n$ except $\bt(\zero)$,
where $\zero$ is the zero vector.
For the probability distribution
$p_A$ on a set of $l_{\A}\times n$ matrices, let $S(p_A,\bt)$ be defined as
\begin{gather*}
 S(p_A,\bt)
 \equiv
 \sum_{A}p_A(A)|\{\uu\in\U^n: A\uu=\zero, \bt(\uu)=\bt\}|.
\end{gather*}
For $\hcH\subset\cH$, we define $\alpha_A(n)$ and $\beta_A(n)$ as
\begin{align}
 \alpha_A(n)
 &\equiv
 \frac{|\im\A|}{|\U|^{l_{\A}}}\cdot\max_{\bt\in \hcH}
 \frac {S(p_A,\bt)}{S(u_A,\bt)}
 \label{eq:alpha-linear}
 \\
 \beta_A(n)
 &\equiv
 \sum_{\bt\in \cH\setminus\hcH}S(p_A,\bt),
 \label{eq:beta-linear}
\end{align}
where $u_{A}$ denotes the uniform distribution
on the set of all $l_{\A}\times n$ matrices.
When $\U\equiv\GF(2)$ and $\hcH$ is a set of high-weight types,
$\alpha_A$ measures how the ensemble $(\A,p_A)$
differs from the ensemble of all $l_{\A}\times n$ matrices
with respect to high-weight part of average spectrum
and $\beta_A$ provides the upper bound of the probability
that the code $\{\uu\in\U^n: A\uu=\zero\}$ has low-weight codewords.
It should be noted that
\begin{align}
 \widetilde{\alpha}_A(n)
 &\equiv
 \max_{\bt\in \hcH}
 \frac {S(p_A,\bt)}{S(u_A,\bt)}
 \label{eq:alpha-linear-org}
\end{align}
is introduced
in~\cite{MB01}\cite{BB}\cite{EM05}\cite{SWLDPC}\cite{CRYPTLDPC}
instead of $\alpha_A(n)$.
We multiply the coefficient $|\im\A|/|\U|^{l_{\A}}$
so that $\aalpha_A$ satisfies (\ref{eq:alpha}).

We have the following theorem.
\begin{thm}
\label{thm:hash-linA}
Let $(\A,p_A)$ be an ensemble of matrices
and assume that
$p_A\lrsb{\lrb{A: A\uu=\zero}}$
depends on $\uu$ only through the type $\bt(\uu)$.
If $|\bU_{\A}|/|\im\A|$ satisfies (\ref{eq:imA})
and $(\alpha_A(n),\beta_A(n))$, defined by
(\ref{eq:alpha-linear}) and (\ref{eq:beta-linear}),
satisfies (\ref{eq:alpha}) and (\ref{eq:beta}),
then $(\A,p_A)$ has an $(\aalpha_A,\bbeta_A)$-hash property.
\end{thm}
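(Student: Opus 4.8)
The plan is to verify the four conditions (H1)--(H4) in the definition of the $(\aalpha_A,\bbeta_A)$-hash property for the ensemble $(\A,p_A)$, using the hypothesis that $p_A(\{A:A\uu=\zero\})$ depends on $\uu$ only through its type $\bt(\uu)$. Conditions (H1), (H2), (H3) are given directly as hypotheses, so the entire content is establishing the key inequality (H4), namely
\[
 \sum_{\uu\in\T,\uu'\in\T'}
 p_A\lrsb{\lrb{A:A\uu=A\uu'}}
 \leq
 |\T\cap\T'|+\frac{|\T||\T'|\alpha_A(n)}{|\im\A|}+\min\{|\T|,|\T'|\}\beta_A(n)
\]
for all $\T,\T'\subset\U^n$.

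First I would reduce the left-hand side. Since $A$ is linear, $A\uu=A\uu'$ is equivalent to $A(\uu-\uu')=\zero$, so $p_A(\{A:A\uu=A\uu'\})$ depends only on $\bt(\uu-\uu')$; in particular it equals $1$ exactly when $\uu=\uu'$ (contributing $|\T\cap\T'|$ after we split off the diagonal) and otherwise equals $S(p_A,\bt)/|\{\ww:\bt(\ww)=\bt\}|$ averaged appropriately — more precisely, summing $\chi(A(\uu-\uu')=\zero)$ over all $\uu'$ of fixed difference-type and taking the $p_A$-expectation relates directly to $S(p_A,\bt)$. So the off-diagonal part of the sum becomes, grouping pairs $(\uu,\uu')$ by the type $\bt=\bt(\uu-\uu')\in\cH$,
\[
 \sum_{\bt\in\cH} N(\T,\T',\bt)\,\frac{p_A(\{A:A\ww_{\bt}=\zero\})}{1},
\]
where $\ww_{\bt}$ is any representative of type $\bt$ and $N(\T,\T',\bt)$ counts ordered pairs $(\uu,\uu')\in\T\times\T'$ with $\uu-\uu'$ of type $\bt$. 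The crude bound $N(\T,\T',\bt)\le\min\{|\T|,|\T'|\}\cdot|\{\ww:\bt(\ww)=\bt\}|$ and also $N(\T,\T',\bt)\le|\T||\T'|$ will both be used. Then I would split $\cH=\hcH\cup(\cH\setminus\hcH)$: on $\cH\setminus\hcH$ use $N\le\min\{|\T|,|\T'|\}|\{\ww:\bt(\ww)=\bt\}|$ together with the identity $S(p_A,\bt)=|\{\ww:\bt(\ww)=\bt\}|\cdot p_A(\{A:A\ww_{\bt}=\zero\})$ to bound that contribution by $\min\{|\T|,|\T'|\}\sum_{\bt\in\cH\setminus\hcH}S(p_A,\bt)=\min\{|\T|,|\T'|\}\beta_A(n)$; on $\hcH$ use $N\le|\T||\T'|$ and the definition of $\alpha_A(n)$, noting $S(u_A,\bt)/|\{\ww:\bt(\ww)=\bt\}|=|\U|^{l_\A}/|\U|^{l_\A}\cdot$(probability a random matrix kills a fixed nonzero vector)$=1/|\U|^{l_\A}$, so that $p_A(\{A:A\ww_\bt=\zero\})\le\alpha_A(n)|\im\A|^{-1}$ for $\bt\in\hcH$ by the definition of $\alpha_A(n)$ in \eqref{eq:alpha-linear}, giving the middle term $|\T||\T'|\alpha_A(n)/|\im\A|$.

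The main obstacle I anticipate is bookkeeping the counting argument for $N(\T,\T',\bt)$ cleanly — in particular making sure the two different crude bounds on $N$ are each applied on the correct piece ($\min$-bound on $\cH\setminus\hcH$ to produce the $\beta$ term, product-bound on $\hcH$ to produce the $\alpha$ term), and correctly handling whether $\bt(\zero)$ is included (it is excluded from $\cH$ by definition, which is exactly why the diagonal $\uu=\uu'$ is handled separately as $|\T\cap\T'|$). One must also be careful that the hypothesis ``$p_A(\{A:A\uu=\zero\})$ depends on $\uu$ only through $\bt(\uu)$'' is genuinely what licenses replacing $p_A(\{A:A(\uu-\uu')=\zero\})$ by a quantity depending only on $\bt(\uu-\uu')$, and that $S(p_A,\bt)$ is then exactly $|\{\ww:\bt(\ww)=\bt\}|$ times this common value. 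Once these identifications are in place the inequality (H4) falls out term by term, and combined with the hypothesized (H1)--(H3) this establishes the $(\aalpha_A,\bbeta_A)$-hash property.
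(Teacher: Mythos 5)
Your proposal is correct and follows essentially the same route as the paper's proof: split off the diagonal to get $|\T\cap\T'|$, group off-diagonal pairs by the type $\bt(\uu-\uu')$, bound $N(\T,\T',\bt)$ by $\min\{|\T|,|\T'|\}\,|\C_\bt|$ on $\cH\setminus\hcH$ to produce the $\beta_A$ term, and bound $\sum_{\bt\in\hcH}N(\T,\T',\bt)\le|\T||\T'|$ together with $p_{A,\bt}\le\alpha_A(n)/|\im\A|$ (via $S(u_A,\bt)/|\C_\bt|=|\U|^{-l_\A}$) to produce the $\alpha_A$ term. The only cosmetic imprecision is that the product bound on $N$ is really applied to $\sum_{\bt\in\hcH}N(\T,\T',\bt)$, not termwise, but your intent is clear and matches the paper.
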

The proof is given in Section \ref{sec:proof-linear}.

Next, we consider the independent combination of two ensembles $(\A,p_A)$ and
$(\B,p_B)$, of $l_{\A}\times n$ and $l_{\B}\times n$ matrices, respectively.
We assume that $(\A,p_A)$ has
an $(\aalpha_A,\bbeta_A)$-hash property,
where 
$(\alpha_A(n),\beta_A(n))$ is defined as (\ref{eq:alpha-linear})
and  (\ref{eq:beta-linear}).
Similarly we define $(\alpha_B(n),\beta_B(n))$ for an ensemble $(\B,p_B)$
and assume that $(\B,p_B)$ has an $(\aalpha_B,\bbeta_B)$-hash
property.
Let $p_{AB}$ be the joint distribution
defined as
\[
 p_{AB}(A,B)\equiv p_A(A)p_B(B).
\]
We have the following two lemmas.
The proof is given in Section \ref{sec:proof-linear}.
\begin{lem}
\label{thm:hash-linApB}
Let $(\alpha_{AB}(n),\beta_{AB}(n))$ defined as
\begin{align*}
 \alpha_{AB}(n)&\equiv \alpha_A(n)\alpha_B(n)
 \\
 \beta_{AB}(n)&\equiv \min\{\beta_A(n),\beta_B(n)\}.
\end{align*}
Then
the
ensemble $(\A\times\B,p_{AB})$ 
of functions $A\oplus B:\U^n\to\U^{l_{\A}+l_{\B}}$ defined as
\[
	A\oplus B(\uu)\equiv(A\uu,B\uu)
\]
has an $(\aalpha_{AB},\bbeta_{AB})$-hash property.
\end{lem}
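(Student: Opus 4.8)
The plan is to verify the three hash-property conditions (H1)–(H4) directly for the combined ensemble $(\A\times\B, p_{AB})$. Condition (H1) is the easiest: since $A\oplus B$ maps into $\U^{l_{\A}+l_{\B}}$ and, because $A$ and $B$ each hit all of $\U^{l_{\A}}$ and $\U^{l_{\B}}$ on the zero-type-free portion (more precisely $\im(\A\times\B)\supseteq \{(A\uu,B\uu)\}$ with the relevant cardinalities matching up to subexponential factors governed by the individual (H1)'s), one checks $\log(|\bU_{\A\times\B}|/|\im(\A\times\B)|)/n \to 0$ from the two hypotheses $\log(|\bU_{\A}|/|\im\A|)/n\to 0$ and $\log(|\bU_{\B}|/|\im\B|)/n\to 0$. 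Conditions (H2) and (H3) are immediate from the definitions $\alpha_{AB}(n)=\alpha_A(n)\alpha_B(n)$ and $\beta_{AB}(n)=\min\{\beta_A(n),\beta_B(n)\}$ together with $\limn\alpha_A(n)=\limn\alpha_B(n)=1$ and $\limn\beta_A(n)=\limn\beta_B(n)=0$.

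The heart of the proof is (H4). Because $A$ and $B$ are independent, for $\uu\neq\uu'$ we have
\[
 p_{AB}\lrsb{\lrb{(A,B): A\uu=A\uu', B\uu=B\uu'}}
 = p_A\lrsb{\lrb{A: A\uu=A\uu'}}\cdot p_B\lrsb{\lrb{B: B\uu=B\uu'}},
\]
and for $\uu=\uu'$ the probability is $1$. The plan is to split the double sum over $\uu\in\T$, $\uu'\in\T'$ into the diagonal part $\uu=\uu'$, contributing $|\T\cap\T'|$, and the off-diagonal part. For the off-diagonal part I would not expand via the general (H4) for $\A$ and $\B$ separately (the cross terms are awkward), but instead exploit the type-based structure underlying Theorem~\ref{thm:hash-linA}: by hypothesis $p_A(\{A:A\uu=\zero\})$ depends only on $\bt(\uu)$, and since $A$ is linear, $p_A(\{A:A\uu=A\uu'\})=p_A(\{A:A(\uu-\uu')=\zero\})$ depends only on $\bt(\uu-\uu')$; likewise for $B$. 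So the joint probability for the off-diagonal pair factors as $S$-type quantities and one can bound it using $\alpha_A,\beta_A,\alpha_B,\beta_B$ via the spectrum estimates $S(p_A,\bt)\le \alpha_A |\U|^{l_\A} S(u_A,\bt)/|\im\A|$ on $\hcH$ and $\sum_{\bt\notin\hcH}S(p_A,\bt)=\beta_A$.

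Concretely, writing $\vv=\uu-\uu'$ and summing over the difference, one gets a bound of the shape $\sum_{\vv\neq\zero}[\text{number of }(\uu,\uu')\text{ pairs with }\uu-\uu'=\vv]\cdot p_A(A\vv=\zero)p_B(B\vv=\zero)$. Splitting the $\vv$-sum at whether $\bt(\vv)\in\hcH_A\cap\hcH_B$ or not, the "good" types yield a term bounded by $|\T||\T'|\alpha_A\alpha_B|\U|^{l_\A+l_\B}/(|\im\A|^2|\im\B|^2)\cdot(\text{spectrum of uniform})$, which collapses to $|\T||\T'|\alpha_{AB}(n)/|\im(\A\times\B)|$ after using that the uniform combined ensemble has $p(A\vv=\zero,B\vv=\zero)\approx 1/(|\U|^{l_\A+l_\B})$ and $|\im(\A\times\B)|\approx|\im\A||\im\B|$; the "bad" types, where at least one of $\bt(\vv)$ escapes $\hcH_A$ or $\hcH_B$, are absorbed into $\min\{|\T|,|\T'|\}\beta_{AB}(n)=\min\{|\T|,|\T'|\}\min\{\beta_A,\beta_B\}$ by bounding the other ensemble's probability by $1$ and using one of the $\beta$ estimates (choosing whichever of $\beta_A,\beta_B$ is smaller, and noting the count of pairs sharing a fixed difference is at most $\min\{|\T|,|\T'|\}$ after summing the difference back out).

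The main obstacle is the bookkeeping in this off-diagonal estimate: one must be careful that the factor $|\im\A|/|\U|^{l_\A}$ built into the definition (\ref{eq:alpha-linear}) of $\alpha_A$ interacts correctly with the analogous factor for $\B$ so that the product $\alpha_A\alpha_B$ comes out with exactly the right normalization $1/|\im(\A\times\B)|$, and that the min in $\beta_{AB}$ (rather than a sum) genuinely suffices — this works because in any term where one ensemble contributes a $\beta$, the other contributes a probability at most $1$, so we never need both $\beta$'s simultaneously and are free to take the smaller. A secondary subtlety is handling the case $|\T\cap\T'|$ versus the pair-count on the diagonal and making sure the three-term structure of (H4) is reproduced verbatim rather than with extra constants. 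Once the type-factorization is set up cleanly, the rest is the routine spectrum algebra already used in the proof of Theorem~\ref{thm:hash-linA}.
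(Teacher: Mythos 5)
Your proposal follows the paper's route: decompose into the diagonal term $|\T\cap\T'|$ and the off-diagonal sum, use that $p_{AB}\lrsb{\lrb{(A,B):A\uu=A\uu',B\uu=B\uu'}}=p_{A,\bt}\,p_{B,\bt}$ depends only on the type $\bt=\bt(\uu-\uu')$, split the type sum into a "good" part bounded via the $\alpha$'s and a "bad" part bounded via a single $\beta$ while crushing the other ensemble's probability to $1$, exactly as in the proof of Theorem~\ref{thm:hash-linA}. So the strategy is essentially the same as the paper's.

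The one concrete gap is in the choice of good set. Splitting at $\hcH_A\cap\hcH_B$ partitions the bad types into $\cH\setminus\hcH_A$ and $\hcH_A\setminus\hcH_B$; summing $S(p_A,\bt)$ over the first (with $p_{B,\bt}\le 1$) gives a $\min\{|\T|,|\T'|\}\beta_A$ contribution, and summing $S(p_B,\bt)$ over the second (with $p_{A,\bt}\le 1$) gives a $\min\{|\T|,|\T'|\}\beta_B$ contribution, so you land on $\beta_A+\beta_B$, not on $\min\{\beta_A,\beta_B\}$. The step "we are free to take the smaller" is where this breaks: once a bad type $\bt$ is fixed, which ensemble must pay the $\beta$ is dictated by which $\hcH$ it escaped, not by your preference. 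The paper avoids this by assuming WLOG $|\T|\le|\T'|$ \emph{and} $\beta_A(n)\le\beta_B(n)$, and performing the entire good/bad split with a single $\hcH$: for $\bt\in\cH\setminus\hcH$ it bounds $p_{B,\bt}\le 1$ and uses $\sum_{\bt\in\cH\setminus\hcH}S(p_A,\bt)=\beta_A(n)$ to get $|\T|\beta_A(n)$; for $\bt\in\hcH$ it bounds $p_{A,\bt}\,p_{B,\bt}\le\alpha_A(n)\alpha_B(n)/(|\im\A|\,|\im\B|)$. To make your argument land on the asserted $\min$, commit to the single $\hcH$ corresponding to the smaller $\beta$ \emph{before} the type split instead of intersecting the two.
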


\begin{lem}
\label{thm:hash-linAB}
Let $(\alpha_{AB}(n),\beta'_{AB}(n))$ be defined as
\begin{align*}
 \alpha_{AB}(n)&\equiv \alpha_A(n)\alpha_B(n)
 \\
 \beta'_{AB}(n)&\equiv 
 \frac{\alpha_A(n)\beta_B(n)}{|\im\A|}+
 \frac{\alpha_B(n)\beta_A(n)}{|\im\B|}
 +\beta_A(n)\beta_B(n).
\end{align*}
Then the ensemble $(\A\times\B,p_{AB})$ of functions
$A\otimes B:\U^n\times\V^n\to\U^{l_{\A}}\times\V^{l_{\B}}$
defined as
\[
 A\otimes B(\uu,\vv)\equiv (A\uu,B\vv)
\]
has an $(\aalpha_{AB},\bbeta'_{AB})$-hash property.
\end{lem}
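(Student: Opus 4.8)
The plan is to verify the three conditions (H2), (H3), (H4) directly for the ensemble $(\A\times\B,p_{AB})$ of functions $A\otimes B$. Condition (H1) for this ensemble follows because $|\im(\A\otimes\B)| = |\im\A||\im\B|$ and $|\bU_{\A\otimes\B}| = |\U|^{l_{\A}}|\V|^{l_{\B}}$, so $\log(|\bU_{\A\otimes\B}|/|\im(\A\otimes\B)|)/n$ is the sum of two terms each vanishing by the assumption that $(\A,p_A)$ and $(\B,p_B)$ individually satisfy (H1). Conditions (H2) and (H3) are routine: since $\alpha_A(n),\alpha_B(n)\to 1$ we get $\alpha_{AB}(n)=\alpha_A(n)\alpha_B(n)\to 1$, and since $\beta_A(n),\beta_B(n)\to 0$ while $|\im\A|,|\im\B|\geq 1$ (assuming the trivial case is excluded, or using $\alpha_A,\alpha_B$ bounded) each of the three summands in $\beta'_{AB}(n)$ tends to $0$; the middle terms even vanish faster because of the $1/|\im\A|$ and $1/|\im\B|$ factors.

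The substance is condition (H4). Fix $\S,\S'\subset\U^n\times\V^n$. For $\uu\in\U^n$ write $\S_{\uu}\equiv\{\vv:(\uu,\vv)\in\S\}$ and similarly $\S'_{\uu}$; then $|\S|=\sum_{\uu}|\S_{\uu}|$ and $|\S'|=\sum_{\uu}|\S'_{\uu}|$. The key observation is that $A\otimes B(\uu,\vv)=A\otimes B(\uu',\vv')$ holds iff $A\uu=A\uu'$ and $B\vv=B\vv'$, and since $A$ and $B$ are independent, $p_{AB}(\{(A,B):A\uu=A\uu',B\vv=B\vv'\})=p_A(\{A:A\uu=A\uu'\})\cdot p_B(\{B:B\vv=B\vv'\})$. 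So the left-hand side of (H4) factors as
\begin{align*}
 \sum_{\substack{(\uu,\vv)\in\S\\(\uu',\vv')\in\S'}}
 p_A(\{A:A\uu=A\uu'\})\,p_B(\{B:B\vv=B\vv'\})
 =\sum_{\uu,\uu'}p_A(\{A:A\uu=A\uu'\})
 \sum_{\substack{\vv\in\S_{\uu}\\\vv'\in\S'_{\uu'}}}p_B(\{B:B\vv=B\vv'\}).
\end{align*}
First I would apply the hash property of $(\B,p_B)$ to the inner sum with $\T\equiv\S_{\uu}$ and $\T'\equiv\S'_{\uu'}$, obtaining the bound $|\S_{\uu}\cap\S'_{\uu'}|+|\S_{\uu}||\S'_{\uu'}|\alpha_B/|\im\B|+\min\{|\S_{\uu}|,|\S'_{\uu'}|\}\beta_B$ (note $\S_{\uu}\cap\S'_{\uu'}$ can be nonempty only when $\uu\neq\uu'$ need not hold — no issue, the bound is valid regardless). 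Substituting and splitting the outer sum into three pieces, I would handle each piece by applying the hash property of $(\A,p_A)$. For the first piece, $\sum_{\uu,\uu'}p_A(\{A:A\uu=A\uu'\})|\S_{\uu}\cap\S'_{\uu'}|$; here I would further split according to whether $\uu=\uu'$ or not — when $\uu=\uu'$ the probability is $1$ and this contributes $\sum_{\uu}|\S_{\uu}\cap\S'_{\uu}|=|\S\cap\S'|$, and the $\uu\neq\uu'$ terms are absorbed into the $\alpha$ term. Bounding $|\S_{\uu}\cap\S'_{\uu'}|\leq\min\{|\S_{\uu}|,|\S'_{\uu'}|\}$ and treating this like a product lets me apply (H4) for $\A$ with $\T,\T'$ chosen as suitable "indicator" sets — or more directly, bound $|\S_{\uu}\cap\S'_{\uu'}|\leq$ both $|\S_{\uu}|$ and $|\S'_{\uu'}|$ and use the hash inequality on sets counting multiplicities. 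The second and third pieces are genuinely of product form in $\uu,\uu'$ (e.g. $|\S_{\uu}||\S'_{\uu'}|$ separates), so applying (H4) for $\A$ to the families with multiplicities $|\S_{\uu}|$ and $|\S'_{\uu'}|$ gives $|\S\cap\S'|$-type, $|\S||\S'|\alpha_A/|\im\A|$-type, and $\min\{|\S|,|\S'|\}\beta_A$-type terms, which after multiplying by the $\alpha_B/|\im\B|$ and $\beta_B$ factors assemble exactly into $|\S\cap\S'|+|\S||\S'|\alpha_{AB}/(|\im\A||\im\B|)+\min\{|\S|,|\S'|\}\beta'_{AB}$.

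The main obstacle is the bookkeeping in combining the three-by-three cross terms and checking that every cross term is dominated by one of the three target terms, in particular verifying that the "mixed" contributions (e.g. the $\alpha_A$ term times the $\beta_B$ term, or the $\beta_A$ term times the $\alpha_B/|\im\B|$ term) land inside $\min\{|\S|,|\S'|\}\beta'_{AB}$ with the stated coefficients $\alpha_A\beta_B/|\im\A|$, $\alpha_B\beta_A/|\im\B|$, $\beta_A\beta_B$. The inequalities $\min\{|\S|,|\S'|\}\leq\sqrt{|\S||\S'|}$ and $\min\{a,b\}\min\{c,d\}\geq$ or $\leq\min\{ac,bd\}$ style manipulations, together with $\sum_{\uu}\min\{|\S_{\uu}|,|\S'_{\uu'}|\}$ bounds, need to be applied carefully so that no term is lost and the coefficient of $\min\{|\S|,|\S'|\}$ matches $\beta'_{AB}$ exactly rather than up to a constant. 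I expect this to be a careful but elementary estimation once the factorization and the two applications of (H4) are set up; the structure is entirely parallel to the proof of Lemma \ref{thm:hash-linApB}, the only new feature being that the domain is a product and the collision event decouples coordinate-wise.
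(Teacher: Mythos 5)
Your plan decouples the collision event coordinate-wise, applies (H4) for $\B$ fiber by fiber (with $\T_{\uu}\equiv\{\vv:(\uu,\vv)\in\T\}$ and $\T'_{\uu'}$ analogous), and then wants to apply (H4) for $\A$ a second time to what remains. This is a genuinely different route from the paper's, which stays inside the linear/matrix framework of Section~\ref{sec:linear}: there $p_A(\{A:A\uu=\zero\})$ depends on $\uu$ only through $\bt(\uu)$, and the proof repeats the type decomposition of Theorem~\ref{thm:hash-linA}, using the pointwise bounds $p_{A,\bt}\leq\alpha_A(n)/|\im\A|$ for $\bt\in\hcH$ and $\beta_A(n)=\sum_{\bt\in\cH\setminus\hcH}|\C_\bt|\,p_{A,\bt}$. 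These per-type bounds carry more information than the aggregate inequality (H4) supplies.

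The gap is exactly the step you flagged. After applying (H4) for $\B$, the three remaining pieces are sums over $(\uu,\uu')$ weighted by $|\T_{\uu}\cap\T'_{\uu'}|$, $|\T_{\uu}||\T'_{\uu'}|$, and $\min\{|\T_{\uu}|,|\T'_{\uu'}|\}$, and ``applying (H4) for families with multiplicities'' is not a consequence of (H4). The natural level-set reduction recovers the first two target terms but inflates the $\beta$ term to $(\max_\uu|\T'_\uu|)\,|\T|$ rather than $\min\{|\T|,|\T'|\}$, and the diagonal ($\uu=\uu'$) part of the $\alpha_B$-piece, namely $\frac{\alpha_B}{|\im\B|}\sum_\uu|\T_\uu||\T'_\uu|$, already overshoots the stated budget. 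A concrete obstruction: take $\T=\T'=\{\uu_0\}\times\V^n$ and let $(\A,p_A)$, $(\B,p_B)$ be the uniform all-matrix ensembles, so $\alpha_A=\alpha_B=1$ and $\beta_A=\beta_B=0$. Then the left-hand side of (H4) for $\A\otimes\B$ equals $|\V|^n+|\V|^n(|\V|^n-1)/|\im\B|$, while the right-hand side with the stated $(\aalpha_{AB},\bbeta'_{AB})$ equals $|\V|^n+|\V|^{2n}/(|\im\A||\im\B|)$, strictly smaller whenever $|\im\A|\geq 2$. So the stated coefficients are out of reach for any argument that accesses $(\A,p_A)$ and $(\B,p_B)$ only through (H4); pairs agreeing in exactly one coordinate contribute collision probability of order $1/|\im\B|$ or $1/|\im\A|$ per pair, and the damping by $|\im\A|$ or $|\im\B|$ built into $\beta'_{AB}$ does not match that. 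The estimation you describe therefore cannot close as planned.
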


Finally, we introduce
an ensemble of
$q$-ary sparse matrices, where the binary version of this ensemble is
proposed in \cite{Mac99}.
In the following, let $\U\equiv\GFq$ and $l_{\A}\equiv nR$.
We generate an $l_{\A}\times n$ matrix $A$ with the following
procedure,
where at most $\tau$ random nonzero elements are introduced in every row.
\begin{enumerate}
\item Start from an all-zero matrix.
\item For each $i\in\{1,\ldots,n\}$, repeat the following
			procedure $\tau$ times:
			\begin{enumerate}
			 \item Choose
						 $(j,a)\in\{1,\ldots,l_{\A}\}\times[\GFq\setminus\{0\}]$
						 uniformly at random.
			 \item Add  $a$ to the  $(j,i)$ component of $A$.
			\end{enumerate}
\end{enumerate}
Let $(\A,p_A)$ be an ensemble corresponding to the above procedure.
It is proved in 
Section \ref{sec:alphabeta}
that $p_A\lrsb{\lrb{A: A\uu=\zero}}$
depends on $\uu$ only through the type
$\bt(\uu)$.
Let $(\alpha_A(n),\beta_A(n))$ be defined by
(\ref{eq:alpha-linear}) and (\ref{eq:beta-linear})
for this ensemble.
We assume that column weight $\tau=O(\log n)$ is even.
Let $w(\bt)$ be the weight of type $\bt=(t(0),\ldots,t(q-1))$ defined as
\[
 w(\bt)\equiv\sum_{i=1}^{q-1}t(i)
\]
and $w(\uu)$ be defined as
\[
 w(\uu)\equiv w(\bt(\uu)).
\]
We define
\begin{equation}
 \hcH\equiv\{\bt: w(\bt)>\xi l_{\A}\}.
 \label{eq:hcH}
\end{equation}
Then it is also proved in
Section~\ref{sec:alphabeta}
that
\begin{align*}
 \im\A
 &=
 \begin{cases}
	\{\uu\in\U^{l_{\A}}: w(\uu)\ \text{is even}\},&\text{if}\ q=2
	\\
	\U^l,&\text{if}\ q>2
 \end{cases}
 \\
 \frac{|\im\A|}{|\U|^{l_{\A}}}
 &=
 \begin{cases}
	2,&\text{if}\ q=2
	\\
	1,&\text{if}\ q>2
 \end{cases}
\end{align*}
and
there is $\xi>0$ such that
$(\aalpha_A,\bbeta_A)$ satisfies (\ref{eq:alpha}) and (\ref{eq:beta}).
From
Theorem~\ref{thm:hash-linA}, we have the following theorem.
\begin{thm}
\label{thm:qMacKay}
For the above ensemble $(\A,p_A)$ of sparse matrices,
let $(\alpha_A(n),\beta_A(n))$ be defined by
(\ref{eq:alpha-linear}),
(\ref{eq:beta-linear}),
(\ref{eq:hcH}), and suitable $\{\tau(n)\}_{n=1}^{\infty}$ and $\xi>0$.
Then $(\A,p_A)$ has an $(\aalpha_A,\bbeta_A)$-hash property.
\end{thm}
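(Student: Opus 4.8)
\emph{Proof plan.} The strategy is to deduce Theorem~\ref{thm:qMacKay} from Theorem~\ref{thm:hash-linA}. Thus, for the $q$-ary sparse ensemble $(\A,p_A)$ with even column weight $\tau(n)=\Theta(\log n)$ and the set $\hcH$ of high-weight types given by (\ref{eq:hcH}) with a constant $\xi>0$, it suffices to check the three hypotheses of that theorem: (i) $p_A(\{A:A\uu=\zero\})$ depends on $\uu$ only through the type $\bt(\uu)$; (ii) $|\bU_{\A}|/|\im\A|$ satisfies (\ref{eq:imA}); and (iii) $(\alpha_A(n),\beta_A(n))$ defined by (\ref{eq:alpha-linear})--(\ref{eq:beta-linear}) satisfies (\ref{eq:alpha}) and (\ref{eq:beta}) for a suitable $\xi$ and $\tau(n)$. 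Items (i) and (ii) are short; item (iii), carried out in Section~\ref{sec:alphabeta}, is the substance.

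For (i) I would use the symmetry of the construction: rescaling column $i$ of $A$ by a unit $u_i\in\GFq\setminus\{0\}$ does not change the law of that column, since the scalar attached to each of the $\tau$ darts is uniform on $\GFq\setminus\{0\}$. As the columns are independent and $A\uu=\sum_{i:\,u_i\neq 0}u_iA_{\cdot i}$, the vector $A\uu$ is distributed as a sum of $w(\uu)$ independent columns; hence $p_A(\{A:A\uu=\zero\})$ depends on $\uu$ only through $w(\uu)$, and in particular only through $\bt(\uu)$. The same bookkeeping identifies $\im\A$: for $q=2$ every column has even Hamming weight (this is where $\tau$ even enters), so $\im\A$ is the even-weight subspace and $|\bU_{\A}|/|\im\A|=2$; for $q>2$ one checks $\im\A=\U^{l_{\A}}$, so $|\bU_{\A}|/|\im\A|=1$. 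In either case $\frac1n\log(|\bU_{\A}|/|\im\A|)\to 0$, which is (ii).

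The core is the average spectrum $S(p_A,\bt)$. Viewing $A\uu$ as a sum of $w\tau$ i.i.d.\ ``darts'' $ae_j$ with $j$ uniform on $\{1,\dots,l_{\A}\}$ and $a$ uniform on $\GFq\setminus\{0\}$, a computation with the additive characters of $\GFq^{l_{\A}}$ yields, for $\uu$ of type $\bt$ with $w=w(\bt)$,
\[
 p_A(\{A:A\uu=\zero\})=\frac{1}{q^{l_{\A}}}\sum_{m=0}^{l_{\A}}\binom{l_{\A}}{m}(q-1)^m\left(1-\frac{qm}{(q-1)l_{\A}}\right)^{w\tau},
\]
so that, since $S(u_A,\bt)$ equals the number of sequences of type $\bt$ times $q^{-l_{\A}}$, the ratio $S(p_A,\bt)/S(u_A,\bt)$ is exactly the above character sum with the prefactor $q^{-l_{\A}}$ removed. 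Equivalently, conditioning on which rows the darts hit, $A\uu=\zero$ forces every hit row to receive at least two darts and its weighted dart-sum to vanish. For a high-weight type ($w>\xi l_{\A}$) the $w\tau$ darts are dense, so every hit row is hit far more than twice, each contributes a factor close to $1/q$, and $S(p_A,\bt)/S(u_A,\bt)\to|\U|^{l_{\A}}/|\im\A|$ uniformly over $\hcH$; with the prefactor in (\ref{eq:alpha-linear}) this gives $\alpha_A(n)\to 1$. Quantitatively the slowest off-leading term ($m=1$, and also $m=l_{\A}-1$ when $q=2$) is of order $l_{\A}\exp(-\Theta(\xi\tau))$, so one needs $\xi\tau\gg\log l_{\A}$, i.e.\ $\tau(n)=\Theta(\log n)$ with a sufficiently large constant. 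For a low-weight type ($0<w\le\xi l_{\A}$) the reverse holds: with only $w\tau$ darts over $l_{\A}$ rows a hit row is typically hit exactly once, so $p_A(\{A:A\uu=\zero\})$ is tiny; a union bound of the type $(ew\tau/l_{\A})^{w\tau/2}$ when $w\tau\ll l_{\A}$, together with the estimate $p_A(\{A:A\uu=\zero\})=q^{-l_{\A}}(1+o(1))$ once $w\tau\gtrsim l_{\A}$, lets one bound $\beta_A(n)=\sum_{0<w(\bt)\le\xi l_{\A}}S(p_A,\bt)$ by the number of weight-$w$ sequences times these probabilities and conclude $\beta_A(n)\to 0$, provided $\xi$ is small enough that the exponential count of weight-$\le\xi l_{\A}$ sequences is beaten by the factor $q^{-l_{\A}}$.

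The principal obstacle is the joint calibration of $\xi$ and $\tau(n)=\Theta(\log n)$ together with the uniformity of the spectrum estimate across \emph{all} weight regimes. One would first fix $\xi>0$ small enough for the low-weight bound, then take the constant in $\tau=\Theta(\log n)$ large enough (in terms of $\xi$ and $q$) for the high-weight bound. The delicate point in between is the range $w\tau\approx l_{\A}$, where neither the ``dense darts'' picture nor the crude union bound is by itself sharp, so the two estimates must be made to overlap. Once a compatible pair $(\xi,\tau)$ is produced --- the content announced for Section~\ref{sec:alphabeta} --- Theorem~\ref{thm:hash-linA} applies directly, completing the proof.
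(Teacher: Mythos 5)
Your plan takes essentially the same route as the paper: reduce to Theorem~\ref{thm:hash-linA}, show $p_A(\{A:A\uu=\zero\})$ depends only on $w(\uu)$ by recognizing $A\uu$ as a sum of $w(\uu)\tau$ i.i.d.\ darts (the paper's Lemma~\ref{lem:weight}, obtained by a one-to-one correspondence with $\{A:A\uu^*=\zero\}$), identify $\im\A$ and the ratio $|\bU_{\A}|/|\im\A|$ (Lemma~\ref{lem:imApUl}), derive the character/DFT expression $p_{A,\bt}=\frac1{q^{l_{\A}}}\sum_{k=0}^{l_{\A}}\binom{l_{\A}}{k}(q-1)^k\left(1-\frac{qk}{(q-1)l_{\A}}\right)^{w\tau}$ (Lemma~\ref{lem:random-walk}), and then treat high- and low-weight types separately to get $\alpha_A\to 1$ and $\beta_A\to 0$ (Lemmas~\ref{lem:alpha-lim} and~\ref{lem:beta-lim}). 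Your high-weight analysis --- the dominant off-leading terms $k=1$ (and $k=l_{\A}-1$ for $q=2$) contribute $O\!\left(l_{\A}e^{-\Theta(\xi\tau)}\right)$, so one needs $\xi\tau\gg\log l_{\A}$ --- is exactly the paper's.

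One estimate in your low-weight sketch would fail if carried out literally: you assert $p_A(\{A:A\uu=\zero\})=q^{-l_{\A}}(1+o(1))$ once $w\tau\gtrsim l_{\A}$. That is false in the regime $l_{\A}/(2\tau)\lesssim w\leq\xi l_{\A}$: at $w\tau\sim l_{\A}$ the random walk is far from mixed, and $p_{A,\bt}/q^{-l_{\A}}$ is exponentially large in $l_{\A}$. What the paper actually uses there is the much cruder but adequate bound $p_{A,\bt}\leq\left[\frac{1+(q-1)e^{-q/(2(q-1))}}{q}\right]^{l_{\A}}\leq e^{-l_{\A}/3}$, compensated by the type count $|\C_w|\leq\exp\!\left(nh(w/n)+w\log_e(q-1)\right)$ and the constraint (\ref{eq:eta}) on $\xi$. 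You correctly flagged this overlap region as delicate; the resolution is not to sharpen $p_{A,\bt}$ toward $q^{-l_{\A}}$ (which is unattainable there) but to accept a weak exponential bound on $p_{A,\bt}$ and choose $\xi$ small enough that the entropy count of low-weight sequences still loses against it.
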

We prove the theorem in Section~\ref{sec:alphabeta}.
It should be noted here that,
as we can see in the proof of the theorem,
the asymptotic behavior
(convergence speed) of $(\aalpha_A,\bbeta_A)$ 
depends on the weight $\tau$.

\begin{rem}
It is proved in \cite{MB01}\cite{SWLDPC} that
$(\widetilde{\alpha}_A(n),\beta_A(n))$,
defined by
(\ref{eq:alpha-linear-org}) and (\ref{eq:beta-linear}),
satisfies weaker properties
 \begin{equation}
	\limn \frac {\log\widetilde{\alpha}_A(n)}n =0
	 \label{eq:talpha}
 \end{equation} 
and (\ref{eq:beta}) when $q=2$.
It is proved in \cite[Section III, Eq.~(23),(82)]{BB} that
$(\widetilde{\alpha}_A(n),\beta_A(n))$
of another ensemble of Modulo-$q$ LDPC matrices satisfies
weaker properties (\ref{eq:talpha})
and (\ref{eq:beta}).
\end{rem}

\section{Coding Theorems}
\label{sec:coding-theorems}
In this section,
we present several coding theorems.
We prove these theorems in Section \ref{sec:proof} based on the hash property.

Throughout this section,
the encoder and decoder are denoted by $\Encoder$ and $\Decoder$,
respectively.
We assume that the dimension of vectors $\xx$, $\yy$, $\zz$, and $\ww$
is $n$.

\subsection{Slepian-Wolf Problem}
In this section, we consider the Slepian-Wolf problem
illustrated in Fig.\ \ref{fig:sw}.
The achievable rate region for this problem is given by
the set of encoding rate pair $(\Rate_X,\Rate_Y)$ satisfying
\begin{gather*}
 \Rate_X \geq H(X|Y)
 \\
 \Rate_Y \geq H(Y|X)
 \\
 \Rate_X+\Rate_Y \geq H(X,Y).
\end{gather*}

The achievability of the Slepian-Wolf problem
is proved in \cite{C75} and \cite{CSI82}
for the ensemble of bin-coding and all $q$-ary linear
matrices, respectively.
The constructions of encoders using sparse matrices is studied
in \cite{M02}\cite{SPR02}\cite{CRYPTLDPC}
and the achievability is proved
in \cite{SWLDPC} by using ML decoding.
The aim of this section is to demonstrate the proof of
the coding theorem based on the hash property.
The proof is given in Section \ref{sec:proof-sw}.

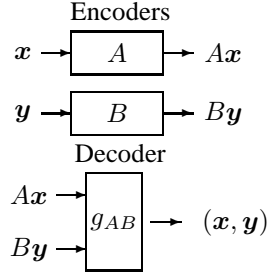
\begin{figure}[t]
\begin{center}
\unitlength 0.4mm
\begin{picture}(176,47)(0,0)
\put(82,41){\makebox(0,0){Encoders}}
\put(50,27){\makebox(0,0){$\xx$}}
\put(56,27){\vector(1,0){10}}
\put(66,20){\framebox(30,14){$A$}}
\put(96,27){\vector(1,0){10}}
\put(116,27){\makebox(0,0){$A\xx$}}
\put(50,7){\makebox(0,0){$\yy$}}
\put(56,7){\vector(1,0){10}}
\put(66,0){\framebox(30,14){$B$}}
\put(96,7){\vector(1,0){10}}
\put(116,7){\makebox(0,0){$B\yy$}}
\end{picture}
\\
\begin{picture}(176,45)(0,0)
\put(82,39){\makebox(0,0){Decoder}}
\put(52,25){\makebox(0,0){$A\xx$}}
\put(61,25){\vector(1,0){10}}
\put(52,7){\makebox(0,0){$B\yy$}}
\put(61,7){\vector(1,0){10}}
\put(71,0){\framebox(18,32){$g_{AB}$}}
\put(92,16){\vector(1,0){10}}
\put(120,16){\makebox(0,0){$(\xx,\yy)$}}
\end{picture}
\end{center}
\caption{Construction of Slepian-Wolf Source Code}
\label{fig:sw-code}
\end{figure}

We fix functions
\begin{align*}
 A&:\X^n\to\X^{l_{\A}}
 \\
 B&:\Y^n\to\Y^{l_{\B}}
\end{align*}
which are available to construct encoders and a decoder.
We define the encoders and the decoder (illustrated in Fig.\ \ref{fig:sw-code})
\begin{align*}
 \Encoder_X&:\X^n\to\X^{l_{\A}}
 \\
 \Encoder_Y&:\Y^n\to\Y^{l_{\B}}
 \\
 \Decoder&:\X^{l_{\A}}\times\Y^{l_{\B}}\to\X^n\times\Y^n
\end{align*}
as
\begin{align*}
 \Encoder_X(\xx)&\equiv A\xx
 \\
 \Encoder_Y(\yy)&\equiv B\yy
 \\
 \Decoder(\bb_X,\bb_Y)
 &\equiv g_{AB}(\bb_X,\bb_Y),
\end{align*}
where
\begin{align*}
 g_{AB}(\bb_X,\bb_Y)
 &\equiv
 \arg\max_{(\xx',\yy')\in\C_{A}(\bb_X)\times\C_B(\bb_Y)}\mu_{XY}(\xx',\yy').
\end{align*}

The encoding rate pair $(\Rate_X,\Rate_Y)$ is given by
\begin{align*}
 \Rate_X\equiv\frac{l_{\A}\log|\X|}n
 \\
 \Rate_Y\equiv\frac{l_{\B}\log|\Y|}n
\end{align*}
and the error probability $\Error_{XY}(A,B)$ is given by
\begin{align*}
 \Error_{XY}(A,B)
 &
 \equiv
 \mu_{XY}\lrsb{\lrb{
 (\xx,\yy): \Decoder(\Encoder_X(\xx),\Encoder_Y(\yy))\neq (\xx,\yy)
 }}.
\end{align*}

We have the following theorem.
It should be noted that
and alphabets $\X$ and $\Y$ may not be binary
and the correlation of the two sources may not be symmetric.
\begin{thm}
 \label{thm:sw}
 Assume that $(\A,p_A)$, $(\B,p_B)$, and $(\A\times\B,p_A\times p_B)$
 have hash property.
 Let  $(X,Y)$ be a pair of stationary memoryless sources.
 If $(\Rate_X,\Rate_Y)$ satisfies
 \begin{gather}
	\Rate_X > H(X|Y)
	\label{eq:swx}
	\\
	\Rate_Y > H(Y|X)
	\label{eq:swy}
	\\
	\Rate_X+\Rate_Y > H(X,Y),
	\label{eq:swxy}
 \end{gather}
 then for any $\delta>0$
 and all sufficiently large $n$
 there are functions (sparse matrices) $A\in\A$ and $B\in\B$ such that
 \begin{align*}
	&\Error_{XY}(A,B)
	\leq
	\delta.
 \end{align*}
\end{thm}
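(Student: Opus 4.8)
The plan is to establish the achievability of the Slepian-Wolf region by bounding the expected error probability over the random choice of $(A,B)$ and invoking the collision-resistant property (Lemma~\ref{lem:Anotempty}) for the three regimes corresponding to inequalities (\ref{eq:swx})--(\ref{eq:swxy}). First I would restrict attention to a high-probability typical set: define $\T\equiv\T_{XY,\gamma}$ for a small $\gamma>0$, so that $\mu_{XY}(\T^c)$ is below $\delta/2$ for large $n$ by the law of large numbers. A decoding error for a typical pair $(\xx,\yy)\in\T$ occurs only if there exists another pair $(\xx',\yy')\neq(\xx,\yy)$ with $A\xx'=A\xx$, $B\yy'=B\yy$, and $\mu_{XY}(\xx',\yy')\geq\mu_{XY}(\xx,\yy)$; such competitors break into three families according to whether $\xx'=\xx$ (so $\yy'\neq\yy$), $\yy'=\yy$ (so $\xx'\neq\xx$), or both differ.

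**Next I would** handle each family with the appropriate ensemble. For the family with $\xx'=\xx$, $\yy'\neq\yy$, the competitors lie in $\G_Y(\xx,\yy)\equiv\{\yy': \mu_{XY}(\xx,\yy')\geq\mu_{XY}(\xx,\yy), \yy'\neq\yy\}$, whose cardinality is at most $2^{n[H(Y|X)+\gamma']}$ for $(\xx,\yy)$ typical (by the standard type-counting bound, with $\gamma'\to 0$ as $\gamma\to 0$); applying Lemma~\ref{lem:Anotempty} to the ensemble $(\B,p_B)$ with $\G=\G_Y$ bounds this contribution by $|\G_Y|\alpha_B/|\im\B| + \beta_B$, which vanishes because $\Rate_Y>H(Y|X)$ forces $|\im\B|=|\Y|^{l_\B}/[\text{const}]$ to exponentially dominate $|\G_Y|$, and $\alpha_B\to 1$, $\beta_B\to 0$. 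Symmetrically, the family with $\yy'=\yy$, $\xx'\neq\xx$ is controlled by Lemma~\ref{lem:Anotempty} applied to $(\A,p_A)$ and the set $\G_X$ of size $\leq 2^{n[H(X|Y)+\gamma']}$, using $\Rate_X>H(X|Y)$. For the family with both $\xx'\neq\xx$ and $\yy'\neq\yy$, I would apply Lemma~\ref{lem:Anotempty} to the combined ensemble $(\A\times\B,p_A\times p_B)$ viewed as a function $\X^n\times\Y^n\to\X^{l_\A}\times\Y^{l_\B}$ — this is exactly why the hypothesis assumes this product ensemble has the hash property — with $\G$ the set of joint competitors, of size $\leq 2^{n[H(X,Y)+\gamma']}$, killed by $\Rate_X+\Rate_Y>H(X,Y)$.

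**Assembling the pieces,** the expected error $E_{AB}[\Error_{XY}(A,B)]$ is at most $\mu_{XY}(\T^c)$ plus the three vanishing terms, hence below $\delta$ for all sufficiently large $n$; since the expectation is below $\delta$, there exist particular $A\in\A$ and $B\in\B$ achieving $\Error_{XY}(A,B)\leq\delta$. One technical point to be careful about is that the competitor count for the "both differ" family is over joint conditional type classes; I would bound $|\G|$ by $\sum$ over joint types with divergence close to that of $\mu_{XY}$, giving the clean exponent $H(X,Y)$ up to $o(1)$, and absorb the polynomial type-counting factor $[n+1]^{|\X||\Y|}$ into the exponent slack provided by the strict inequalities.

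**The main obstacle** I anticipate is the bookkeeping in splitting the error event cleanly into the three competitor families and verifying that each $|\G|$ bound has the right first-order exponent — in particular, handling the conditioning correctly so that the $\xx'=\xx$ family genuinely sees only an $H(Y|X)$-sized set rather than an $H(Y)$-sized one, and dually for the other cases. Everything after the correct decomposition is a routine application of Lemma~\ref{lem:Anotempty} three times together with the exponential margin from the strict rate inequalities; the hash-property hypotheses on all three ensembles are precisely what make each application go through.
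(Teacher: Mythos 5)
Your proposal is correct and follows essentially the same route as the paper's proof: restrict to a typical set, split the error event according to which coordinates of the competitor differ, and apply Lemma~\ref{lem:Anotempty} once per family with the matching ensemble ($\B$, $\A$, and $\A\times\B$), letting the strict rate inequalities kill each contribution. The only cosmetic difference is that you make the three competitor families disjoint, whereas the paper's event (SW4) formally subsumes (SW2) and (SW3) and it simply tolerates the double-counting in the union bound; both decompositions yield the same estimates.
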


\begin{rem}
In \cite{C75}\cite{CSI82},
random (linear) bin-coding is used to prove the achievability
of the above theorem.
In fact, random bin-coding
is equivalent to a uniform ensemble on a set of all (linear) functions
and it has a $(\one,\zero)$-hash property.
\end{rem}

\begin{rem}
The above theorem
includes the fixed-rate coding of a single source $X$
as a special case of the Slepian-Wolf problem with $|\Y|\equiv 1$.
This implies that the encoding rate can achieve
the entropy of a source.
It should be noted that source coding using a class
of hash functions is studied in \cite[Section 14.2]{MacKay}\cite{K04}.
\end{rem}

\begin{rem}
Assuming $|\Y|\equiv 1$,
we can prove the coding theorem for 
a channel with additive noise $X$
by letting $A$ and $\{\xx: A\xx=\zero\}$ 
be a parity check matrix and a set of codewords
(channel inputs), respectively.
This implies that
the encoding rate of the channel can achieve the channel capacity.
The coding theorem for a channel with additive noise
is proved
by using a low density parity check (LDPC) matrix
in \cite{MB01}\cite{BB}\cite{EM05}.
\end{rem}

\subsection{Gel'fand-Pinsker Problem}
\label{sec:gp}

In this section we consider 
the Gel'fand-Pinsker problem illustrated in Fig.\ \ref{fig:gp}.

First, we construct a code for the
standard channel coding problem
illustrated in Fig.\ \ref{fig:channel},
which is a special case of Gel'fand-Pinsker problem.

A channel is given by
the conditional probability distribution $\mu_{Y|X}$,
where $X$ and $Y$ are random variables corresponding to
the channel input and channel output, respectively.
The capacity of a channel is given by
\[
 \Capacity\equiv\max_{\mu_{X}}I(X;Y),
\]
where the maximum is taken over all probability
distributions $\mu_{X}$
and the joint distribution of random variable $(X,Y)$
is given by
\[
 \mu_{XY}(x,y)
 \equiv \mu_{Y|X}(y|x)\mu_{X}(x).
\]

\begin{figure}[t]
\begin{center}
\unitlength 0.47mm
\begin{picture}(150,20)(0,0)
\put(4,7){\makebox(0,0){$M$}}
\put(10,7){\vector(1,0){10}}
\put(20,0){\framebox(14,14){$\Encoder$}}
\put(34,7){\vector(1,0){10}}
\put(50,7){\makebox(0,0){$X$}}
\put(56,7){\vector(1,0){10}}
\put(66,0){\framebox(24,14){$\mu_{Y|X}$}}
\put(90,7){\vector(1,0){10}}
\put(106,7){\makebox(0,0){$Y$}}
\put(112,7){\vector(1,0){10}}
\put(122,0){\framebox(14,14){$\Decoder$}}
\put(136,7){\vector(1,0){10}}
\put(150,7){\makebox(0,0){$M$}}
\put(75,-10){\makebox(0,0){$R<I(X;Y)$}}
\end{picture}
\end{center}
\caption{Channel Coding}
\label{fig:channel}
\end{figure}
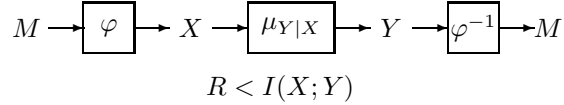

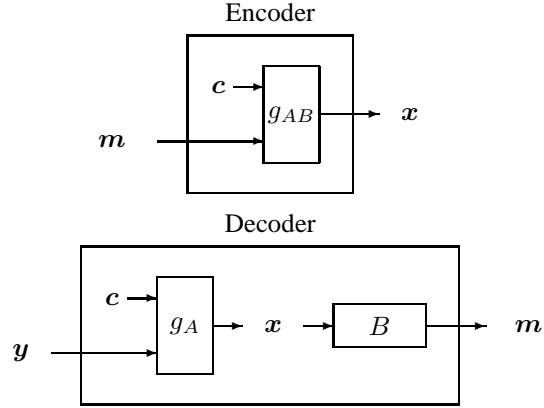
\begin{figure}[t]
\begin{center}
\unitlength 0.4mm
\begin{picture}(176,70)(0,0)
\put(82,60){\makebox(0,0){Encoder}}
\put(65,35){\makebox(0,0){$\cc$}}
\put(70,35){\vector(1,0){10}}
\put(30,17){\makebox(0,0){$\mm$}}
\put(45,17){\vector(1,0){35}}
\put(80,10){\framebox(18,32){$g_{AB}$}}
\put(98,26){\vector(1,0){20}}
\put(128,26){\makebox(0,0){$\xx$}}
\put(55,0){\framebox(54,52){}}
\end{picture}
\\
\begin{picture}(176,70)(0,0)
\put(82,60){\makebox(0,0){Decoder}}
\put(30,35){\makebox(0,0){$\cc$}}
\put(35,35){\vector(1,0){10}}
\put(0,17){\makebox(0,0){$\yy$}}
\put(10,17){\vector(1,0){35}}
\put(45,10){\framebox(18,32){$g_A$}}
\put(63,26){\vector(1,0){10}}
\put(83,26){\makebox(0,0){$\xx$}}
\put(93,26){\vector(1,0){10}}
\put(103,19){\framebox(30,14){$B$}}
\put(133,26){\vector(1,0){20}}
\put(167,26){\makebox(0,0){$\mm$}}
\put(20,0){\framebox(124,52){}}
\end{picture}
\end{center}
\caption{Construction of Channel Code}
\label{fig:channel-code}
\end{figure}

The code for this problem is given below
(illustrated in Fig.\ \ref{fig:channel}).
We fix functions
\begin{align*}
 A&:\X^n\to\X^{l_{\A}}
 \\
 B&:\X^n\to\X^{l_{\B}}
\end{align*}
and a vector $\cc\in\X^{l_{\A}}$
available to construct an encoder and a decoder, where
\begin{align*}
 l_{\A}
 &\equiv \frac{n[H(X|Y)+\eA]}{\log|\X|}
 \\
 l_{\B}
 &\equiv \frac{n[I(X;Y)-\eB]}{\log|\X|}.
\end{align*}
We define the encoder and the decoder
\begin{align*}
 \Encoder&:\X^{l_{\B}}\to\X^n
 \\
 \Decoder&:\Y^n\to\X^{l_{\B}}
\end{align*}
as
\begin{align*}
 \Encoder(\mm)
 &\equiv g_{AB}(\cc,\mm)
 \\
 \Decoder(\mm)
 &\equiv Bg_A(\cc|\yy),
\end{align*}
where
\begin{align*}
 g_{AB}(\cc,\mm)
 &\equiv\arg\max_{\xx'\in\C_{AB}(\cc,\mm)}\mu_{X}(\xx')
 \\
 g_{A}(\cc|\yy)
 &\equiv\arg\max_{\xx'\in\C_A(\cc)}\mu_{XY}(\xx'|\yy).
\end{align*}

Let $M$ be the random variable corresponding to
the message $\mm$, where the probability $p_M(\mm)$
is given by
\begin{align*}
 p_M(\mm)&\equiv
 \begin{cases}
	\frac 1{|\im\B|},
	&\text{if}\ \mm\in\im\B
	\\
	0,
	&\text{if}\ \mm\notin\im\B.
 \end{cases}
\end{align*}
The rate $\Rate(B)$ of this code is given by
\begin{align*}
 \Rate(B)&\equiv\frac{\log|\im\B|}n
 \\
 &=\frac{l_B\log|\W|}n-\frac{\log\frac{|\W|^{l_B}}{|\im\B|}}n
\end{align*}
and the decoding error probability $\Error_{Y|X}(A,B,\cc)$
is given by
\begin{align*}
 \Error_{Y|X}(A,B,\cc)
 &\equiv
 \sum_{\mm,\yy}
 p_M(\mm)\mu_{Y|X}(\yy|\Encoder(\mm))
 \chi(\Decoder(\yy)\neq\mm).
\end{align*}

In the following, we provide an
intuitive interpretation of the construction of the code,
which is illustrated in Fig.~\ref{fig:channel-code}.
Assume that $\cc$ is shared by the encoder and the decoder.
For $\cc$ and a message $\mm$,
the function $g_{AB}$ generates a typical sequence
$\xx\in\T_{X,\gamma}$ as a channel input.
The decoder 
reproduces the channel input $\xx$
by using $g_{A}$ from $\cc$ and a channel output $\yy$.
Since $(\xx,\yy)$ is jointly typical and $B\xx=\mm$,
the decoding succeed if
the amount of information of $\cc$ is greater than $H(X|Y)$
to satisfy the collision-resistant property.
On the other hand,
the total rate of $\cc$ and $\mm$
should be less than $H(X)$ to satisfy the saturating property.
Then we can set the encoding rate of $\mm$ close to
$H(X)-H(X|Y)=I(X;Y)$.

We have the following theorem
It should be noted that
alphabets $\X$ and $\Y$ are allowed to be non-binary,
and the channel 
is allowed to be asymmetric.
\begin{thm}
 \label{thm:channel}
 For given $\eA,\eB>0$ satisfying
 \begin{gather*}
	\eB-\eA\leq\sqrt{6[\eB-\eA]}\log|\X|<\eA,
 \end{gather*}
 assume
 that $(\A,p_A)$ and $(\A\times\B,p_A\times p_B)$
 have hash property.
 Let $\mu_{Y|X}$ be the conditional probability distribution
 of a stationary memoryless channel.
 Then, for all $\delta>0$ and sufficiently large $n$
 there are functions (sparse matrices) $A\in\A$, $B\in\B$,
 and a vector $\cc\in\im A$ such that
\begin{gather*}
 \Rate(B) \geq I(X;Y)-\eB-\delta
 \\
 \Error_{Y|X}(A,B,\cc)<\delta.
\end{gather*}
 By assuming that $\mu_{X}$ attains the channel capacity
 and $\delta\to 0$, $\eB\to 0$, 
 the rate of the proposed code is close to the capaticy.
\end{thm}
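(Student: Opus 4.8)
The plan is to run a random-coding argument over the triple $(A,B,C)$ with $A,B$ drawn from their respective ensembles and $C$ uniform on $\im\A$, independently, and then to show that the expectation over $(A,B,\cc)$ of the decoding error probability $\Error_{Y|X}(A,B,\cc)$ is small, so that a good fixed choice of $(A,B,\cc)$ exists. The error event decomposes into two parts: the \emph{encoding failure}, where the encoder $g_{AB}(\cc,\mm)$ fails to produce a jointly-useful typical channel input, and the \emph{decoding failure}, where the decoder's ML-coding function $g_A(\cc|\yy)$ reproduces a wrong $\xx$ even though the encoding succeeded. For the rate claim one also needs $\log|\im\B|/n \geq I(X;Y)-\eB-\delta$, which follows from $l_\B\log|\X| = n[I(X;Y)-\eB]$ together with (H1) for the ensemble $(\B,p_B)$, which forces $\log(|\X|^{l_\B}/|\im\B|)/n \to 0$.

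First I would handle the encoding step using Lemma~\ref{thm:joint-typical} applied to the ensemble $(\A\times\B, p_A\times p_B)$ (which has a hash property by hypothesis, via Lemma~\ref{thm:hash-linApB}) with the conditioning variable being trivial, i.e.\ with $\U^n\equiv\X^n$, $V$ degenerate, and the target set $\T\subset\T_{X,2\gamma}$ of high-probability typical sequences. Since $l_\A+l_\B$ corresponds to a rate $H(X|Y)+\eA+I(X;Y)-\eB = H(X)+\eA-\eB < H(X)$ (the hypothesis $\sqrt{6[\eB-\eA]}\log|\X|<\eA$ guarantees $\eA>\eB-\e_{\text{slack}}$, and in particular $\eA-\eB$ is small enough that $H(X)+\eA-\eB$ stays below $H(X)$ by the chosen typicality margins), the size $|\im(A\oplus B)|$ is exponentially smaller than $|\T|$, so the saturating-type bound of Lemma~\ref{thm:joint-typical} shows that for all but a vanishing fraction of $(A,B,\cc)$ (equivalently, of $(A,B,\cc,\mm)$ since $\mm$ is uniform on $\im\B$ and $\cc$ uniform on $\im\A$) the encoder output $g_{AB}(\cc,\mm)$ lies in $\T\subset\T_{X,2\gamma}$.

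Next I would handle the decoding step. Conditioned on $\xx\equiv g_{AB}(\cc,\mm)\in\T_{X,2\gamma}$ and on the channel output $\yy$ drawn from $\mu_{Y|X}(\cdot|\xx)$, with high probability $(\xx,\yy)$ is jointly typical, so $\xx\in\T_{X|Y,\gamma'}(\yy)$ for an appropriate $\gamma'$. The decoder computes $g_A(\cc|\yy)=\arg\max_{\xx'\in\C_A(\cc)}\mu_{XY}(\xx'|\yy)$; it errs only if some $\xx'\in\C_A(\cc)$ with $\xx'\neq\xx$ and $\mu_{XY}(\xx'|\yy)\geq\mu_{XY}(\xx|\yy)$ exists, i.e.\ if $\bigl[\G(\xx,\yy)\setminus\{\xx\}\bigr]\cap\C_A(\cc)\neq\emptyset$ where $\G(\xx,\yy)\equiv\{\xx':\mu_{XY}(\xx'|\yy)\geq\mu_{XY}(\xx|\yy)\}$. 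Since $\xx$ is conditionally typical, $|\G(\xx,\yy)|$ is at most roughly $2^{n[H(X|Y)+\text{small}]}$, while $|\im\A|$ is about $2^{nl_\A\log|\X|/n}=2^{n[H(X|Y)+\eA]}$; the ratio $|\G|\alpha_A/|\im\A|$ is then exponentially small by (H2), so the collision-resistant bound (Lemma~\ref{lem:Anotempty}, suitably combined with the independence of $C$ as in Lemma~\ref{lem:ACnotempty}) gives that the decoding failure probability, averaged over $(A,\cc)$ and over the source/channel randomness, vanishes.

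Putting the pieces together by a union bound over the (finitely many) bad events and using Markov's inequality to extract a single $(A,B,\cc)$ achieving error below $\delta$ completes the argument; the final sentence of the theorem then follows by choosing $\mu_X$ to attain $\Capacity$ and letting $\eA,\eB,\delta\to0$ in the admissible regime $\eB-\eA\leq\sqrt{6[\eB-\eA]}\log|\X|<\eA$. The main obstacle I anticipate is the bookkeeping of the typicality parameters: one must choose the margins ($\gamma$ for $\T_{X,\cdot}$, $2\e$ and $\gamma'$ for the conditionally typical sets, and the gap $\eA-\eB$) so that simultaneously (i) $l_\A+l_\B$ stays strictly below the rate $H(X)$ needed for saturation in Lemma~\ref{thm:joint-typical}, (ii) $l_\A$ stays strictly above $H(X|Y)$ needed for collision-resistance in Lemma~\ref{lem:Anotempty}, and (iii) the residual term $2^{-n\e}|\X|^{l_\A}/|\im\A|$ appearing in Lemma~\ref{thm:joint-typical} is controlled — it is precisely this juggling that the somewhat opaque hypothesis $\eB-\eA\leq\sqrt{6[\eB-\eA]}\log|\X|<\eA$ is designed to make consistent, and verifying that these inequalities indeed hold under that hypothesis (together with the $\sqrt{2\gamma}\log|\X|$-type slacks from $\zeta$ and $\eta$) is the delicate part.
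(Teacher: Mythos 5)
Your proposal is on the right track and invokes the same lemmas the paper uses, but the paper does not give an independent proof of Theorem~\ref{thm:channel}: it explicitly derives it as the special case $|\Z|\equiv 1$, $W\equiv X$ of Theorem~\ref{thm:gp}, whose proof in Section~\ref{sec:proof-gp} is exactly the argument you sketch once you specialize (drop the $\hA$ step, take $\zz$ trivial). So there is no "different route" here; you are rediscovering the specialization.

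There is, however, one step where your plan, as written, would not survive being written out, and it is not merely bookkeeping. In the decoding step you treat $\xx\equiv g_{AB}(\cc,\mm)$ as if it were a fixed sequence, define $\G(\xx,\yy)\equiv\lrb{\xx':\mu_{XY}(\xx'|\yy)\geq\mu_{XY}(\xx|\yy)}$, and then propose to apply Lemma~\ref{lem:Anotempty}/Lemma~\ref{lem:ACnotempty} and average over $(A,\cc)$. But $\xx$ is a function of the very $(A,B,\cc)$ you are averaging over, so neither $\uu=\xx$ nor $\G=\G(\xx,\yy)$ is a legitimate argument for those lemmas, which require the target element and the comparison set to be fixed. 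The paper avoids this by three coupled moves: (i) replace $\G(\xx,\yy)$ with the fixed set $\G(\yy)\equiv\lrb{\xx':\mu_{X|Y}(\xx'|\yy)\geq 2^{-n[H(X|Y)+2\zeta_{\Y\X}(\cdot)]}}$, which contains $\G(\xx,\yy)$ whenever $\xx$ is conditionally typical; (ii) sum over all candidates $\ww$ ranging over a deterministic truncated typical set $\T(\cdot)$ and relax $\chi(g_{AB}(\cc,\mm)=\ww)\leq\chi(A\ww=\cc)\chi(B\ww=\mm)$, which decouples the encoding event into a constraint $\ww\in\C_A(\cc)$ (the second condition of Lemma~\ref{lem:ACnotempty}) and a $B,\mm$-event that factors out as $1/|\im\B|$; and (iii) introduce a sub-exponential sequence $\kkappa$ and pre-truncate $\T(\cdot)$ to size $\approx\sqrt{\kappa}\,|\im\A||\im\B|$ so that the saturating bound of Lemma~\ref{thm:joint-typical} gives a $1/\sqrt{\kappa}$ term while the collision bound, after summing over $\ww\in\T(\cdot)$, gains only a factor $\kappa$. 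Without (iii) the two bounds fight each other and you cannot make both vanish; this is where the hypothesis $\eB-\eA\leq\sqrt{6[\eB-\eA]}\log|\X|<\eA$ is actually consumed, and your plan defers exactly this to "the delicate part" without identifying the mechanism. Your rate claim and the encoding step via Lemma~\ref{thm:joint-typical} are fine (one small note: the hash property of $(\A\times\B,p_A\times p_B)$ is a hypothesis of the theorem, so you need not route through Lemma~\ref{thm:hash-linApB}).
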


Next, we consider 
the Gel'fand-Pinsker problem illustrated in Fig.\ \ref{fig:gp}.
A channel with side information is given by
the conditional probability distribution $\mu_{Y|XZ}$,
where $X$, $Y$ and $Z$ are random variables corresponding to
the channel input, channel output, and channel side information, respectively.
The capacity of a channel with side information
is given by
\[
 \Capacity\equiv\max_{\mu_{XW|Z}}\lrB{I(W;Y)-I(W;Z)},
\]
where the maximum is taken over all conditional probability
distributions $\mu_{XW|Z}$
and the joint distribution of random variable $(X,Y,Z,W)$
is given by
\begin{align}
 \begin{split}
 \mu_{XYZW}(x,y,z,w)
 &
 \equiv \mu_{XW|Z}(x,w|z)\mu_{Y|XZ}(y|x,z)\mu_Z(z).
 \end{split}
 \label{eq:markov-gp}
\end{align}

\begin{figure}[t]
\begin{center}
\unitlength 0.4mm
\begin{picture}(156,99)(-6,6)
\put(76,89){\makebox(0,0){Encoder}}
\put(30,55){\makebox(0,0){$\cc$}}
\put(35,55){\vector(1,0){10}}
\put(0,37){\makebox(0,0){$\mm$}}
\put(10,37){\vector(1,0){35}}
\put(45,30){\framebox(18,32){$g_{AB}$}}
\put(63,46){\vector(1,0){10}}
\put(83,46){\makebox(0,0){$\ww$}}
\put(93,46){\vector(1,0){10}}
\put(83,64){\makebox(0,0){$\hcc$}}
\put(93,64){\vector(1,0){10}}
\put(103,39){\framebox(18,32){$g_{\hA}$}}
\put(121,55){\vector(1,0){20}}
\put(0,19){\makebox(0,0){$\zz$}}
\put(10,19){\line(1,0){102}}
\put(54,19){\vector(0,1){11}}
\put(112,19){\vector(0,1){20}}
\put(147,55){\makebox(0,0){$\xx$}}
\put(20,6){\framebox(112,75){}}
\end{picture}
\\
\begin{picture}(156,70)(0,0)
\put(82,60){\makebox(0,0){Decoder}}
\put(30,35){\makebox(0,0){$\cc$}}
\put(35,35){\vector(1,0){10}}
\put(0,17){\makebox(0,0){$\yy$}}
\put(10,17){\vector(1,0){35}}
\put(45,10){\framebox(18,32){$g_A$}}
\put(63,26){\vector(1,0){10}}
\put(83,26){\makebox(0,0){$\ww$}}
\put(93,26){\vector(1,0){10}}
\put(103,19){\framebox(30,14){$B$}}
\put(133,26){\vector(1,0){20}}
\put(167,26){\makebox(0,0){$\mm$}}
\put(20,0){\framebox(124,52){}}
\end{picture}
\end{center}
\caption{Construction of Gel'fand-Pinsker Channel Code}
\label{fig:gp-code}
\end{figure}
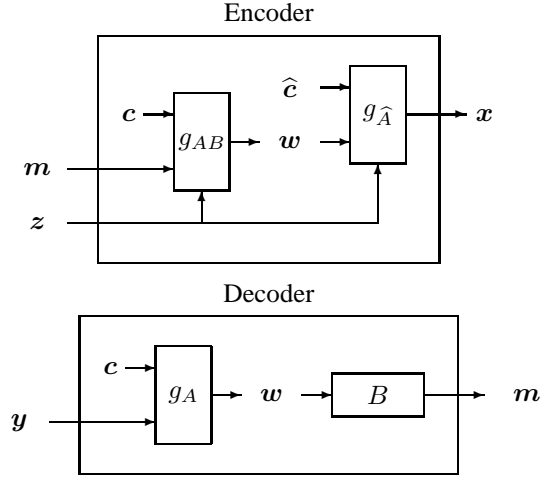

In the following, we assume that $\mu_{XW|Z}$ is fixed.
We fix functions
\begin{align*}
 A&:\W^n\to\W^{l_{\A}}
 \\
 B&:\W^n\to\W^{l_{\B}}
 \\
 \hA&:\X^n\to\X^{l_{\hcA}}
\end{align*}
and vectors $\cc\in\W^{l_{\A}}$ and $\hcc\in\X^{l_{\hA}}$
available to construct an encoder and a decoder, where
\begin{align*}
 l_{\A}
 &\equiv \frac{n[H(W|Y)+\eA]}{\log|\W|}
 \\
 l_{\B}
 &\equiv \frac{n[H(W|Z)-H(W|Y)-\eB]}{\log|\W|}
 \\
 &=\frac{n[I(W;Y)-I(W;Z)-\eB]}{\log|\W|}
 \\
 l_{\hcA}&\equiv \frac{n[H(X|Z,W)-\ehA]}{\log|\X|}.
\end{align*}
We define the encoder and the decoder
\begin{align*}
 \Encoder&:\W^{l_{\B}}\times\Z^n\to\X^n
 \\
 \Decoder&:\Y^n\to\W^{l_{\B}}
\end{align*}
as
\begin{align*}
 \Encoder(\mm|\zz)
 &\equiv g_{\hA}(\hcc|\zz,g_{AB}(\cc,\mm|\zz))
 \\
 \Decoder(\mm)
 &\equiv Bg_A(\cc|\yy),
\end{align*}
where
\begin{align*}
 g_{AB}(\cc,\mm|\zz)
 &\equiv\arg\max_{\ww'\in\C_{AB}(\cc,\mm)}\mu_{W|Z}(\ww'|\zz)
 \\
 g_{\hA}(\hcc|\zz,\ww)
 &\equiv\arg\max_{\xx'\in\C_{\hA}(\hcc)}\mu_{X|ZW}(\xx|\zz,\ww)
 \\
 g_{A}(\cc|\yy)
 &\equiv\arg\max_{\ww'\in\C_A(\cc)}\mu_{W|Y}(\ww'|\yy).
\end{align*}

Let $M$ be the random variable corresponding to
the message $\mm$, where the probability $p_M(\mm)$ and
$p_{MZ}(\mm,\zz)$ are given by
\begin{align*}
 p_M(\mm)&\equiv
 \begin{cases}
	\frac 1{|\im\B|},
	&\text{if}\ \mm\in\im\B
	\\
	0,
	&\text{if}\ \mm\notin\im\B
 \end{cases}
 \\
 p_{MZ}(\mm,\zz)
 &\equiv p_M(\mm)\mu_Z(\zz).
\end{align*}
The rate $\Rate(B)$ of this code is given by
\begin{align*}
 \Rate(B)&\equiv\frac{\log|\im\B|}n
 \\
 &=\frac{l_B\log|\W|}n-\frac{\log\frac{|\W|^{l_B}}{|\im\B|}}n
\end{align*}
and the decoding error probability $\Error_{Y|XZ}(A,B,\hA,\cc,\hcc)$
is given by
\begin{align*}
 \Error_{Y|XZ}(A,B,\hA,\cc,\hcc)
 &\equiv
 \sum_{\mm,\yy,\zz}
 p_M(\mm)\mu_Z(\zz)\mu_{Y|XZ}(\yy|\Encoder(\mm,\zz),\zz)
 \chi(\Decoder(\yy)\neq\mm).
\end{align*}

In the following, we provide an
intuitive interpretation of the construction of the code,
which is illustrated in Fig.~\ref{fig:gp-code}.
Assume that $\cc$ is shared by the encoder and the decoder.
For $\cc$, a message $\mm$, and a side information $\zz$,
the function $g_{AB}$ generates a typical sequence
$\ww\in\T_{Z,\gamma}(\zz)$
and the function $g_{\hA}$ generates a typical sequence
$\xx\in\T_{X|WZ,\gamma}(\ww,\zz)$
as a channel input.
The decoder 
reproduces the channel input $\ww$
by using $g_{A}$ from $\cc$ and a channel output $\yy$.
Since $(\ww,\yy)$ is jointly typical and $B\ww=\mm$,
the decoding succeed if
the rate of $\cc$ is greater than $H(W|Y)$
to satisfy the collision-resistant property.
On the other hand,
the rate of $\hcc$ should be less than $H(X|Z,W)$ and
the total rate of $\cc$ and $\mm$
should be less than $H(W|Z)$ to satisfy the saturating property.
Then we can set the encoding rate of $\mm$ close to
$H(W|Z)-H(W|Y)=I(W;Y)-I(W;Z)$.

We have the following theorem.
It should be noted that
alphabets $\X$, $\Y$, $\W$, and $\Z$
are allowed to be non-binary,
and the channel 
is allowed to be asymmetric.
\begin{thm}
 \label{thm:gp}
 For given $\eA,\eB,\ehA>0$ satisfying
 \begin{gather}
	\eB-\eA\leq\sqrt{6[\eB-\eA]}\log|\Z||\W|<\eA
	\label{eq:gp-e1}
	\\
	2\zeta_{\Y\W}(6\ehA)<\eA,
	\label{eq:gp-e2}
 \end{gather}
 assume
 that $(\A,p_A)$, $(\A\times\B,p_A\times p_B)$ and $(\hcA,p_{\hA})$
 have hash property.
 Let $\mu_{Y|XZ}$ be the conditional probability distribution
 of a stationary memoryless channel.
 Then, for all $\delta>0$ and sufficiently large $n$
 there are functions (sparse matrices) $A\in\A$, $B\in\B$, $\hA\in\hcA$,
 and vectors $\cc\in\im A$, $\hcc\in\im\hA$ such that
\begin{gather*}
 \Rate(B) \geq I(W;Y)-I(W;Z)-\eB-\delta
 \\
 \Error_{Y|XZ}(A,B,\hA,\cc,\hcc)<\delta.
\end{gather*}
 By assuming that $\mu_{XW|Z}$ attains the Gel'fand-Pinsker bound,
 and $\delta\to 0$, $\eB\to 0$, 
 the rate of the proposed code is close to this bound.
\end{thm}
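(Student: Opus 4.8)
The plan is to combine the saturating property (via the ML-coding lemma, Lemma~\ref{thm:joint-typical}) at the encoder with the collision-resistant property (Lemmas~\ref{lem:Anotempty}, \ref{lem:ACnotempty}, \ref{lem:ABCnoempty}) at the decoder, following the intuitive picture already sketched: the encoder uses $g_{AB}$ to pick $\ww\in\T_{W|Z,\gamma}(\zz)$ with $B\ww=\mm$, then $g_{\hA}$ to pick $\xx\in\T_{X|ZW,\gamma}(\zz,\ww)$, while the decoder recovers $\ww$ from $\yy$ via $g_A$ and outputs $B\ww$. First I would set up the typical-set machinery: fix $\gamma$ small (in terms of $\eA,\eB,\ehA$) and define the relevant conditionally typical sets $\T_{W|Z,\cdot}(\zz)$, $\T_{X|ZW,\cdot}(\zz,\ww)$, $\T_{W|Y,\cdot}(\yy)$. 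Using the standard cardinality bounds $|\T_{U|V,\gamma}(\vv)|\geq 2^{n[H(U|V)-\zeta_{\U|\V}(\gamma|\cdot)]}$ and the rate choices for $l_{\A},l_{\B},l_{\hcA}$, one checks that $|\im\A|/|\T(\vv)|$ and $2^{-n\e}|\X|^{l_{\hcA}}/|\im\hcA|$ type quantities vanish, so the three error events (encoder fails to find $\ww$; encoder fails to find $\xx$; decoder's $g_A$ returns the wrong $\ww$) each have small $p$-averaged probability.

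The key decomposition of the error probability is: $\Error_{Y|XZ}(A,B,\hA,\cc,\hcc)$ is bounded by the sum of (i) $p_{AC}$-probability that $\T_{W|Z}(\zz)\cap\C_{AB}(\cc,\mm)=\emptyset$ for the random $\mm$ (encoder step 1 — handled by applying Lemma~\ref{lem:noempty}/Lemma~\ref{thm:joint-typical} to the ensemble $(\A\times\B,p_A\times p_B)$ with $\cc$ playing the role of the random syndrome and $\mm$ uniform on $\im\B$, noting $p_{MZ}$ makes $(\cc,\mm)$ effectively uniform on $\im\A\times\im\B$); (ii) $p_{\hA}$-probability that $\T_{X|ZW}(\zz,\ww)\cap\C_{\hA}(\hcc)=\emptyset$ (encoder step 2 — Lemma~\ref{thm:joint-typical} applied to $(\hcA,p_{\hA})$, using $l_{\hcA}=n[H(X|Z,W)-\ehA]/\log|\X|$ and condition~(\ref{eq:gp-e2}) to absorb the $\zeta_{\Y\W}(6\ehA)$ slack); (iii) the event that the generated $\xx$ does not lie in $\T_{X|ZW,2\ehA}(\zz,\ww)$ or that $(\xx,\yy,\zz,\ww)$ fails to be jointly typical under $\mu_{Y|XZ}$ — this uses conditional typicality of the channel output and a Markov-lemma-type argument from (\ref{eq:markov-gp}); and (iv) $p_{AC}$-probability that $g_A(\cc|\yy)\neq\ww$ even though $(\ww,\yy)$ is jointly typical — this is the collision-resistant step, where $\G=\T_{W,\cdot}\setminus\{\ww\}$ restricted to $\mu_{W|Y}(\cdot|\yy)$-competitors, and one applies Lemma~\ref{lem:Anotempty} with $|\G|/|\im\A|\leq 2^{-n[\eA-\text{slack}]}\to 0$ thanks to $l_{\A}=n[H(W|Y)+\eA]/\log|\W|$. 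Since $B\ww=\mm$ by construction of $g_{AB}$, recovering $\ww$ gives $\mm$, so these four events exhaust the ways decoding can fail. Averaging over $(A,B,\hA,\cc,\hcc)$ and invoking (\ref{eq:alpha})--(\ref{eq:beta}) shows the expected error is below any $\delta$ for large $n$; then one extracts a single good tuple $(A,B,\hA,\cc,\hcc)$ by a union/averaging argument, noting the rate bound $\Rate(B)=\log|\im\B|/n\geq l_{\B}\log|\W|/n - o(1)=I(W;Y)-I(W;Z)-\eB-o(1)$ follows from (\ref{eq:imA}) for $\B$.

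The main obstacle I expect is step (iii): carefully controlling the joint typicality of $(\xx,\zz,\ww,\yy)$ after the two-stage encoding. The subtlety is that $\ww$ is the output of $g_{AB}$ (hence only guaranteed to lie in $\T_{W|Z,2\eB}(\zz)$ by Lemma~\ref{thm:joint-typical} applied with appropriate $\e$), and then $\xx$ is the output of $g_{\hA}$ conditioned on that particular $\ww$, so one must propagate typicality through the composition and then through the memoryless channel $\mu_{Y|XZ}$ while respecting the Markov structure $W\!-\!(X,Z)\!-\!Y$ implicit in (\ref{eq:markov-gp}) — or rather the structure that makes $\mu_{W|Y}$ meaningful. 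This requires a conditional-typicality/Markov lemma quantifying that if $(\zz,\ww)$ is jointly typical, $\xx$ is conditionally typical given $(\zz,\ww)$, and $\yy$ is generated by the channel from $(\xx,\zz)$, then $(\zz,\ww,\yy)$ is jointly typical with $\mu_{ZWY}$, so that $\ww\in\T_{W|Y,\gamma'}(\yy)$ and the $\G$ in step (iv) has the right exponential size. Tracking all the $\gamma$'s, $\e$'s, and the $\zeta$, $\eta$ corrections so that the inequalities $\eB-\eA\leq\sqrt{6[\eB-\eA]}\log|\Z||\W|<\eA$ and $2\zeta_{\Y\W}(6\ehA)<\eA$ are exactly what is needed is the bookkeeping-heavy heart of the proof; conceptually it is the same argument as Theorem~\ref{thm:channel} with the extra $\hA$-layer inserted to handle the non-additive side information.
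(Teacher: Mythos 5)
Your proposal follows the paper's proof architecture and identifies the right ingredients: the $(GP1)$--$(GP5)$-style event decomposition, the ML-coding lemma (Lemma~\ref{thm:joint-typical}) at the two encoder stages with $(\cc,\mm)$ uniform on $\im\A\times\im\B$, propagation of typicality via Lemma~\ref{lem:typical-trans} through the composed encoder and the memoryless channel, the bound $|\G(\yy)|\lesssim 2^{n[H(W|Y)+2\zeta_{\Y\W}(6\ehA)]}$ for the decoder's competitors, and the rate computation $\Rate(B)=\log|\im\B|/n$ using condition~(\ref{eq:imA}).

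Two points deserve sharpening. First, in your step~(iv) you invoke Lemma~\ref{lem:Anotempty}, but applying it naively would fail: you must eventually sum the per-$\ww$ decoder error over $\ww\in\T_{W|Z}(\zz)$, a set whose cardinality is roughly $\sqrt{\kappa}\,|\im\A||\im\B|$, and Lemma~\ref{lem:Anotempty} alone gives a bound of order $|\G(\yy)|\alpha_A/|\im\A|+\beta_A$ per $\ww$, which blows up exponentially when summed. The crucial point is that the decoder error event is automatically \emph{joined} with the event that the encoder actually produced this $\ww$, i.e.\ with the factor $\chi(A\ww=\cc)$ coming from $g_{AB}$'s constraint. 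This joint event is exactly what Lemma~\ref{lem:ACnotempty} handles, giving the improved bound $|\G(\yy)|\alpha_A/|\im\A|^2+\beta_A/|\im\A|$; the extra $1/|\im\A|$ factor (together with $\chi(B\ww=\mm)$ and $\chi(\hA\xx=\hcc)$ contributing $1/(|\im\B||\im\hcA|)$) is precisely what cancels the sums over $\ww$ and $\xx$. You do list Lemma~\ref{lem:ACnotempty} in your opening, so you likely had this in mind, but as written step~(iv) glosses over the mechanism that makes the bookkeeping close. Second, condition~(\ref{eq:gp-e2}) is not what absorbs slack in encoder step~(ii) (there the relevant comparison is against $\ehA$ and $H(X|ZW)$ via~(\ref{eq:gp-etaxzw})); rather, $2\zeta_{\Y\W}(6\ehA)<\eA$ is used in step~(iv) to ensure $|\G(\yy)|/|\im\A|\approx 2^{-n[\eA-2\zeta_{\Y\W}(6\ehA)]}\to 0$. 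You also omit the $\kappa(n)$ device the paper uses to pick $\T_{W|Z}(\zz)$ and $\T_{X|ZW}(\zz,\ww)$ of a calibrated size (neither too small for saturation nor too large to overwhelm the $1/(|\im\B||\im\hcA|)$ factors), which is a technical but genuinely necessary piece of the bookkeeping.
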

The proof is given in Section \ref{sec:proof-gp}.
It should be noted that Theorem~\ref{thm:channel}
is a special case of Gel'fand-Pinsker problem
with  $|\Z|\equiv 1$ and $W\equiv X$.

\begin{rem}
In \cite{MW06b}, the code for the Gel'fand-Pinsker problem
is proposed by using a combination of two sparse matrices
when all the alphabets are binary
and the channel side information and noise are additive.
In their constructed encoder,
they obtain a vector called the `middle layer'
by using one of two matrices
and obtain a channel input
by operating another matrix on the middle layer and adding
the side information.
In our construction,
we obtain $\ww$ by using two matrices $A$, $B$, and $g_{AB}$,
where the dimension of $\ww$ differs from that of the middle layer.
We obtain channel input $\xx$ by
using $\hA$ and $g_{\hA}$ instead of adding the side information.
It should be noted that
our approach is based on the construction
of the channel code presented in \cite{SWLDPC}\cite{MM08},
which is also different from the construction presented
in \cite{GA68}\cite{BB}.
\end{rem}

\subsection{Wyner-Ziv Problem}
\label{sec:wz}

In this section we consider 
the Wyner-Ziv problem introduced in \cite{WZ76}
(illustrated in Fig.\ \ref{fig:wz}).

First, we construct a code for the
standard lossy source coding problem
illustrated in Fig.\ \ref{fig:rd},
which is a special case of the Wyner-Ziv problem.

Let $\rho:\X\times\Y\to[0,\infty)$ be the distortion measure
satisfying
\[
\rho_{\max}\equiv \max_{x,y}\rho(x,y)<\infty.
\]
We define $\rho_n(\xx,\yy)$ as
\begin{align*}
 \rho_n(\xx,\yy)&\equiv\sum_{i=1}^n\rho(x_i,y_i)
\end{align*}
for each $\xx\equiv(x_1,\ldots,x_n)$ and $\yy\equiv(y_1,\ldots,y_n)$.
For a probability distribution $\mu_{X}$,
the rate-distortion function $R_X(D)$ is given by
\[
 R_{X}(D) =
 \min_{\substack{
 \mu_{Y|X}:
 \\
 E_{XY}\lrB{\rho(X,Y)}\leq D
 }}
 I(X;Y),
\]
where the minimum is taken over
all conditional probability distributions $\mu_{Y|X}$
and the joint distribution
$\mu_{XY}$ of $(X,Y)$ is given by
\[
 \mu_{XY}(\xx,\yy)\equiv \mu_{X}(\xx)\mu_{Y|X}(\yy|\xx).
\]

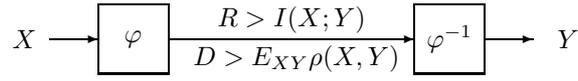
\begin{figure}[t]
\begin{center}
\unitlength 0.65mm
\begin{picture}(120,20)(0,5)
\put(5,7){\makebox(0,0){$X$}}
\put(10,7){\vector(1,0){10}}
\put(20,0){\framebox(14,14){$\Encoder$}}
\put(34,7){\vector(1,0){50}}
\put(59,11){\makebox(0,0){$R>I(X;Y)$}}
\put(84,0){\framebox(14,14){$\Decoder$}}
\put(98,7){\vector(1,0){10}}
\put(115,7){\makebox(0,0){$Y$}}
\put(60,3){\makebox(0,0){$D>E_{XY}\rho(X,Y)$}}
\end{picture}
\end{center}
\caption{Lossy Source Coding}
\label{fig:rd}
\end{figure}

\begin{figure}[t]
\begin{center}
\unitlength 0.4mm
\begin{picture}(176,70)(0,0)
\put(82,60){\makebox(0,0){Encoder}}
\put(30,35){\makebox(0,0){$\cc$}}
\put(35,35){\vector(1,0){10}}
\put(0,17){\makebox(0,0){$\xx$}}
\put(10,17){\vector(1,0){35}}
\put(45,10){\framebox(18,32){$g_A$}}
\put(63,26){\vector(1,0){10}}
\put(83,26){\makebox(0,0){$\yy$}}
\put(93,26){\vector(1,0){10}}
\put(103,19){\framebox(30,14){$B$}}
\put(133,26){\vector(1,0){20}}
\put(167,26){\makebox(0,0){$B\yy$}}
\put(20,0){\framebox(124,52){}}
\end{picture}
\\
\begin{picture}(176,70)(0,0)
\put(82,60){\makebox(0,0){Decoder}}
\put(65,35){\makebox(0,0){$\cc$}}
\put(70,35){\vector(1,0){10}}
\put(30,17){\makebox(0,0){$B\yy$}}
\put(45,17){\vector(1,0){35}}
\put(80,10){\framebox(18,32){$g_{AB}$}}
\put(98,26){\vector(1,0){20}}
\put(128,26){\makebox(0,0){$\yy$}}
\put(55,0){\framebox(54,52){}}
\end{picture}
\end{center}
\caption{Construction of Lossy Source Code}
\label{fig:lossy-code}
\end{figure}
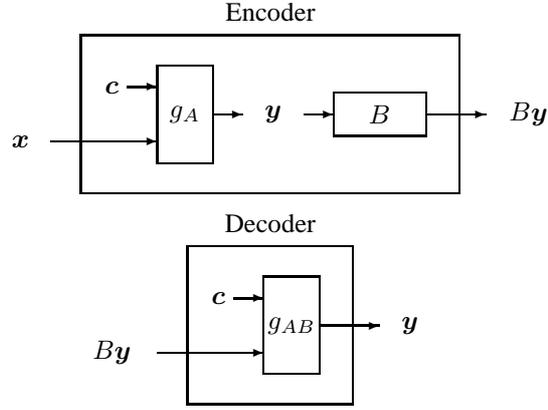

The code for this problem is given in the following
(illustrated in Fig.\ \ref{fig:lossy-code}).
We fix functions
\begin{align*}
 A&:\Y^n\to\Y^{l_{\A}}
 \\
 B&:\Y^n\to\Y^{l_{\B}}
\end{align*}
and a vector $\cc\in\Y^{l_{\A}}$
available to construct an encoder and a decoder, where
\begin{align*}
 l_{\A}
 &\equiv \frac{n[H(Y|X)-\eA]}{\log|\Y|}
 \\
 l_{\B}
 &\equiv \frac{n[I(Y;X)+\eB]}{\log|\Y|}.
\end{align*}
We define the encoder and the decoder
\begin{align*}
 \Encoder&:\X^{n}\to\Y^{l_{\B}}
 \\
 \Decoder&:\Y^{l_{\B}}\to\Y^{n}
\end{align*}
as
\begin{align*}
 \Encoder(\xx)
 &\equiv Bg_A(\cc|\xx)
 \\
 \Decoder(\bb)
 &\equiv g_{AB}(\cc,\bb),
\end{align*}
where
\begin{align*}
 g_{A}(\cc|\xx)
 &\equiv\arg\max_{\yy'\in\C_A(\cc)}\mu_{Y|X}(\yy'|\xx)
 \\
 g_{AB}(\cc,\bb)
 &\equiv\arg\max_{\yy'\in\C_{AB}(\cc,\bb)}\mu_{Y}(\yy').
\end{align*}

In the following, we provide an
intuitive interpretation of the construction of the code,
which is illustrated in Fig.~\ref{fig:lossy-code}.
Assume that $\cc$ is shared by the encoder and the decoder.
For $\cc$ and $\xx$,
the function $g_A$ generates
$\yy$ such that $A\yy=\cc$ and $(\xx,\yy)$ is a jointly typical sequence.
The rate of $\cc$
should be less than $H(Y|X)$ to satisfy the saturation property.
Then the encoder obtains the codeword $B\yy$.
The decoder obtains the reproduction $\yy$
by using $g_{AB}$ from $\cc$ and the codeword $B\yy$
if the rate of $\cc$ and $B\yy$ is greater than $H(Y)$
to satisfy the collision-resistant property.
Then we can set the encoding rate close to $H(Y)-H(Y|X)=I(X;Y)$.
Since $(\xx,\yy)$ is jointly typical,
$\rho_n(\xx,\yy)$ is close to the distortion criterion.

We have the following theorem.
It should be noted that
a source is allowed to be non-binary and unbiased
and the distortion measure $\rho$ is arbitrary.
\begin{thm}
 \label{thm:lossy}
 For given $\eA>,\eB>0$ satisfying
 \begin{equation*}
	\eA+2\zeta_{\Y}(3\eA)<\eB,
 \end{equation*}
 assume
 that $(\A,p_A)$ and $(\B,p_B)$ have hash property.
 Let $X$ be a stationary memoryless source.
 Then for all sufficiently large $n$
 there are functions (sparse matrices) $A\in\A$, $B\in\B$, and a vector
 $\cc\in\im\A$ such that
 \begin{align*}
	\Rate(B)
	&=I(X;Y)+\eB
	\\
	\frac{E_{X}\lrB{\rho_n(X^n,\Decoder(\Encoder(X^n)))}}n
	&\leq
	E_{XY}\lrB{\rho(X,Y)}
	+3|\X||\Y|\rho_{\max}\sqrt{\eA}.
 \end{align*}
 By  assuming that  $\mu_{Y|X}$ attain
 the rate-distortion bound
 and by letting $\eA,\eB\to 0$,
 the rate-distortion pair of the proposed code
 is close to this bound.
\end{thm}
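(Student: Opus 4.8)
\emph{Proof idea.} The plan is to apply the probabilistic method to the random triple $(A,B,\cc)$, with $A\sim p_A$, $B\sim p_B$, and $\cc$ uniform on $\im\A$, mutually independent, and to show that
\[
 E_{ABC}\lrB{\frac{E_X\lrB{\rho_n(X^n,\Decoder(\Encoder(X^n)))}}{n}}
 \leq
 E_{XY}\lrB{\rho(X,Y)}+3|\X||\Y|\rho_{\max}\sqrt{\eA}
\]
for all large $n$; then some fixed triple with $\cc\in\im\A$ does at least as well, while $\Rate(B)=\tfrac1n\log|\im\B|=I(X;Y)+\eB$ by the choice of $l_{\B}$. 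Throughout, $\mu_{Y|X}$ is the fixed conditional law used in the construction. For a source realization $\xx$ and a realization $(A,B,\cc)$ I would split into three events: (i) $\xx\notin\T_{X,\eA}$; (ii) $\xx\in\T_{X,\eA}$ but the encoder search fails, $g_A(\cc|\xx)\notin\T(\xx)$, where $\T(\xx)\subseteq\T_{Y|X,2\eA}(\xx)$ is the nonempty set furnished by Lemma~\ref{thm:joint-typical} with $U=Y$, $V=X$, $\e=\eA$ (so that $l_{\A}=n[H(Y|X)-\eA]/\log|\Y|$ is exactly the dimension there); (iii) $\xx\in\T_{X,\eA}$ and $\yy\equiv g_A(\cc|\xx)\in\T(\xx)$, but the decoder errs, $g_{AB}(\cc,B\yy)\neq\yy$. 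Off events (i)--(iii) the reproduction equals $\yy$ with $\xx\in\T_{X,\eA}$ and $\yy\in\T_{Y|X,2\eA}(\xx)$; on (i)--(iii) the distortion is at most $n\rho_{\max}$.

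Event (ii): Lemma~\ref{thm:joint-typical} gives, uniformly over $\xx\in\T_{X,\eA}$ (for which $\T_{Y|X,2\eA}(\xx)\neq\emptyset$ once $n$ is large),
\[
 p_{AC}\lrsb{\lrb{(A,\cc): g_A(\cc|\xx)\notin\T(\xx)}}
 \leq
 \alpha_A-1+\frac{|\im\A|\lrB{\beta_A+1}}{|\T(\xx)|}
 +\frac{2^{-n\eA}|\Y|^{l_{\A}}}{|\im\A|}.
\]
The first term tends to $0$ by~(\ref{eq:alpha}), the last by~(\ref{eq:imA}) since $|\Y|^{l_{\A}}/|\im\A|=2^{o(n)}$; for the middle term I would lower-bound $|\T(\xx)|$ by the count of sequences whose conditional type on $\xx$ is (nearest to) $\mu_{Y|X}$, namely $|\T(\xx)|\geq 2^{n[H(Y|X)-\sqrt{2\eA}\log|\Y|-\lambda_{\X\Y}]}$, adjusting slightly if the closure requirement on $\T(\xx)$ in Lemma~\ref{thm:joint-typical} forces a smaller set; then $|\im\A|/|\T(\xx)|\to0$ under the standing hypothesis on $\eA$. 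Hence event (ii) has probability $o(1)$, uniformly in $\xx\in\T_{X,\eA}$.

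Event (iii): take $\uu_{A,\cc}\equiv g_A(\cc|\xx)$ (defined for all $(A,\cc)$) and apply Lemma~\ref{lem:ABCnoempty} with the \emph{fixed} set $\G^*\equiv\lrb{\yy':\mu_Y(\yy')>2^{-n[H(Y)+\zeta_{\Y}(3\eA)]}}$. The key point is that $\yy\in\T(\xx)\subseteq\T_{Y|X,2\eA}(\xx)$ and $\xx\in\T_{X,\eA}$ force
\[
 D(\nu_{\xx\yy}\|\mu_{XY})=D(\nu_{\xx}\|\mu_X)+D(\nu_{\yy|\xx}\|\mu_{Y|X}|\nu_{\xx})<3\eA,
\]
hence $D(\nu_{\yy}\|\mu_Y)<3\eA$ by the data-processing inequality for divergence; Pinsker's inequality and the continuity of entropy then give $-\tfrac1n\log\mu_Y(\yy)=H(\nu_{\yy})+D(\nu_{\yy}\|\mu_Y)<H(Y)+\zeta_{\Y}(3\eA)$, so $\G(\yy)\equiv\lrb{\yy':\mu_Y(\yy')\geq\mu_Y(\yy),\ \yy'\neq\yy}\subseteq\G^*\setminus\{\yy\}$ and $|\G^*|<2^{n[H(Y)+\zeta_{\Y}(3\eA)]}$. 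Since, off event (ii), $g_{AB}(\cc,B\yy)\neq\yy$ forces $\lrB{\G^*\setminus\{\uu_{A,\cc}\}}\cap\C_{AB}(\cc,B\uu_{A,\cc})\neq\emptyset$, Lemma~\ref{lem:ABCnoempty} bounds event (iii) by $|\G^*|\alpha_B/(|\im\A||\im\B|)+\beta_B$, with $\beta_B\to0$ by~(\ref{eq:beta}) and $\alpha_B$ bounded by~(\ref{eq:alpha}); as $|\im\A||\im\B|=2^{n[H(Y|X)-\eA+I(Y;X)+\eB-o(1)]}=2^{n[H(Y)-\eA+\eB-o(1)]}$, the exponent of $|\G^*|/(|\im\A||\im\B|)$ equals $\zeta_{\Y}(3\eA)+\eA-\eB+o(1)$, which is negative for large $n$ by the hypothesis $\eA+2\zeta_{\Y}(3\eA)<\eB$ (the second copy of $\zeta_{\Y}(3\eA)$ absorbing the $o(1)$ and the $\sqrt{2\eA}\log|\Y|$-type slack of the encoder step), so event (iii) also has probability $o(1)$.

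Finally, off events (i)--(iii),
\[
 \frac{\rho_n(\xx,\yy)}{n}=\sum_{x,y}\nu_{\xx\yy}(x,y)\rho(x,y)
 \leq E_{XY}\lrB{\rho(X,Y)}+|\X||\Y|\rho_{\max}\sqrt{6\eA},
\]
using $|\nu_{\xx\yy}(x,y)-\mu_{XY}(x,y)|\leq\sqrt{2D(\nu_{\xx\yy}\|\mu_{XY})}<\sqrt{6\eA}$ for each $(x,y)$; taking expectations, the $n\rho_{\max}\cdot o(1)$ contributions of (i)--(iii) are swallowed by the slack $3-\sqrt6>0$ once $n$ is large, giving the displayed bound on $E_{ABC}$, and the probabilistic method supplies the desired $A\in\A$, $B\in\B$, $\cc\in\im\A$. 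I expect the main obstacle to be the quantitative bookkeeping: securing the lower bound on $|\T(\xx)|$ consistent with the closure requirement of Lemma~\ref{thm:joint-typical}, and carrying the Pinsker / entropy-continuity corrections so that all three exponents are simultaneously negative exactly under $\eA+2\zeta_{\Y}(3\eA)<\eB$, together with the routine steps of taking suprema over $\xx\in\T_{X,\eA}$ and interchanging the $X^n$- and $(A,B,\cc)$-expectations.
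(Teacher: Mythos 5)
Your three-event decomposition, the use of Lemma~\ref{thm:joint-typical} for the encoder search and Lemma~\ref{lem:ABCnoempty} (via the fixed set $\G^*$) for the decoder failure, and the final typicality-based distortion estimate are exactly how the paper proves this; formally the paper just invokes Theorem~\ref{thm:wz} with $|\Z|\equiv1$, and your proposal is the $|\Z|\equiv1$ specialization of the proof in Section~\ref{sec:proof-wz}. The event (iii) calculation is sound, and the final distortion step (off the three events, $D(\nu_{\xx\yy}\|\mu_{XY})<3\eA$ hence per-letter distortion within $|\X||\Y|\rho_{\max}\sqrt{6\eA}$, with the $\delta\rho_{\max}$ remainder absorbed by the slack $3-\sqrt{6}$) matches the paper.

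However, there is a genuine gap in the event (ii) bound. You take $\xx\in\T_{X,\eA}$ and lower-bound $|\T(\xx)|$ by (roughly) $2^{n[H(Y|X)-\sqrt{2\eA}\log|\Y|-\lambda_{\X\Y}]}$. Since $|\im\A|\leq|\Y|^{l_{\A}}=2^{n[H(Y|X)-\eA]}$, this gives $|\im\A|/|\T(\xx)|\leq 2^{n[\sqrt{2\eA}\log|\Y|+\lambda_{\X\Y}-\eA]}$, which vanishes only if $\sqrt{2\eA}\log|\Y|<\eA$, i.e.\ $\eA>2\log^{2}|\Y|$ — a lower bound on $\eA$ that contradicts the intended regime $\eA\to0$ and is not implied by the hypothesis $\eA+2\zeta_{\Y}(3\eA)<\eB$, which constrains the gap to $\eB$ but places no lower bound on $\eA$. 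The paper avoids this by introducing an internal typicality parameter $\gamma$ for $\xx$ with $\gamma+\sqrt{2\gamma}\log|\X|<\eA$ (the $|\Z|\equiv1$ case of (\ref{eq:wz-gamma})) and a $\gamma'$ so small that $\eta_{\Y|\X}(\gamma'|\gamma)\leq\eA-\gamma$, whence $|\T(\xx)|\geq|\T_{Y|X,\gamma'}(\xx)|\geq 2^{n\gamma}|\im\A|$ and the middle term of Lemma~\ref{thm:joint-typical} is $\leq 2^{-n\gamma}[\beta_A+1]\to0$. Your remark that ``the second copy of $\zeta_{\Y}(3\eA)$ absorbs\ldots the $\sqrt{2\eA}\log|\Y|$-type slack of the encoder step'' does not repair this: the deficit shows up multiplicatively in the ratio $|\im\A|/|\T(\xx)|$ inside the event (ii) bound, and cannot be offset against the exponent governing the separate event (iii). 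Choosing $\gamma<\eA$ as above also keeps the distortion estimate intact, since then $D(\nu_{\xx\yy}\|\mu_{XY})<\gamma+2\eA<3\eA$, so $\T_{XY,3\eA}$ still covers the joint type. (A minor side issue: $\Rate(B)$ for the lossy code is defined in the paper as $l_{\B}\log|\Y|/n$, which equals $I(X;Y)+\eB$ exactly; using $\tfrac1n\log|\im\B|$ gives only an upper bound.)
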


Next,
we consider  the Wyner-Ziv problem introduced in \cite{WZ76}
(illustrated in Fig.\ \ref{fig:wz}).
Let $\rho:\X\times\W\to[0,\infty)$ be the distortion measure
satisfying
\[
\rho_{\max}\equiv \max_{x,w}\rho(x,w)<\infty.
\]
We define $\rho_n(\xx,\ww)$ as
\begin{align*}
 \rho_n(\xx,\ww)&\equiv\sum_{i=1}^n\rho(x_i,w_i)
\end{align*}
for each $\xx\equiv(x_1,\ldots,x_n)$ and $\ww\equiv(w_1,\ldots,w_n)$.
For a probability distribution $\mu_{XZ}$,
the rate-distortion function $R_{X|Z}(D)$ is given by
\[
 R_{X|Z}(D) =
 \min_{\substack{
 \mu_{Y|X},f:
 \\
 E_{XYZ}\lrB{\rho(X,f(Y,Z))}\leq D
 }}
 \lrB{I(X;Y)-I(Y;Z)},
\]
where the minimum is taken over
all conditional probability distributions $\mu_{Y|X}$
and functions $f:\Y\times\Z\to\W$ and the joint distribution
$\mu_{XYZ}$ of $(X,Y,Z)$ is given by
\begin{equation}
 \mu_{XYZ}(\xx,\yy,\zz)\equiv \mu_{XZ}(\xx,\zz)\mu_{Y|X}(\yy|\xx).
	\label{eq:markov-wz}
\end{equation}

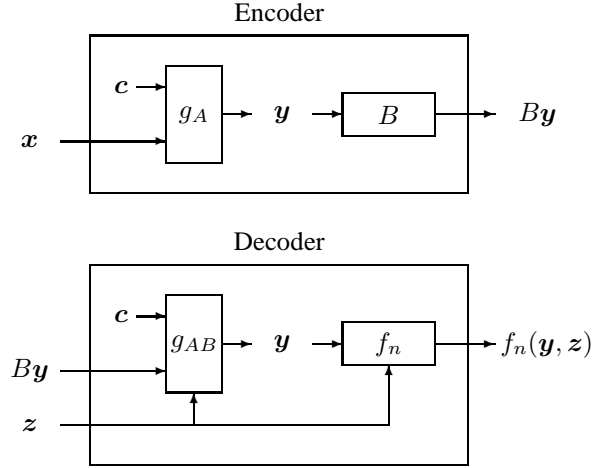
\begin{figure}[t]
\begin{center}
\unitlength 0.4mm
\begin{picture}(176,70)(0,0)
\put(82,60){\makebox(0,0){Encoder}}
\put(30,35){\makebox(0,0){$\cc$}}
\put(35,35){\vector(1,0){10}}
\put(0,17){\makebox(0,0){$\xx$}}
\put(10,17){\vector(1,0){35}}
\put(45,10){\framebox(18,32){$g_A$}}
\put(63,26){\vector(1,0){10}}
\put(83,26){\makebox(0,0){$\yy$}}
\put(93,26){\vector(1,0){10}}
\put(103,19){\framebox(30,14){$B$}}
\put(133,26){\vector(1,0){20}}
\put(167,26){\makebox(0,0){$B\yy$}}
\put(20,0){\framebox(124,52){}}
\end{picture}
\\
\begin{picture}(176,90)(0,6)
\put(82,80){\makebox(0,0){Decoder}}
\put(30,55){\makebox(0,0){$\cc$}}
\put(35,55){\vector(1,0){10}}
\put(0,37){\makebox(0,0){$B\yy$}}
\put(10,37){\vector(1,0){35}}
\put(45,30){\framebox(18,32){$g_{AB}$}}
\put(63,46){\vector(1,0){10}}
\put(83,46){\makebox(0,0){$\yy$}}
\put(93,46){\vector(1,0){10}}
\put(103,39){\framebox(30,14){$f_n$}}
\put(133,46){\vector(1,0){20}}
\put(170,46){\makebox(0,0){$f_n(\yy,\zz)$}}
\put(20,6){\framebox(124,66){}}
\put(0,19){\makebox(0,0){$\zz$}}
\put(10,19){\line(1,0){108}}
\put(118,19){\vector(0,1){20}}
\put(54,19){\vector(0,1){11}}
\end{picture}
\end{center}
\caption{Construction of Wyner-Ziv Source Code}
\label{fig:wz-code}
\end{figure}

In the following, we assume that $\mu_{Y|X}$ is fixed.
We fix functions
\begin{align*}
 A&:\Y^n\to\Y^{l_{\A}}
 \\
 B&:\Y^n\to\Y^{l_{\B}}
\end{align*}
and a vector $\cc\in\Y^{l_{\A}}$
available to construct an encoder and a decoder, where
\begin{align*}
 l_{\A}
 &\equiv \frac{n[H(Y|X)-\eA]}{\log|\Y|}
 \\
 l_{\B}
 &\equiv
 \frac{n[H(Y|Z)-H(Y|X)+\eB]}{\log|\Y|}
 \\
 &=\frac{n[I(X;Y)-I(Y;Z)+\eB]}{\log|\Y|}.
\end{align*}
We define the encoders and the decoder (illustrated in Fig.\ \ref{fig:wz-code})
\begin{align*}
 \Encoder&:\X^n\to\Y^{l_{\B}}
 \\
 \Decoder&:\Y^{l_{\B}}\times\Z^n\to\W^n
\end{align*}
as
\begin{align*}
 \Encoder(\xx)
 &\equiv Bg_A(\cc|\xx)
 \\
 \Decoder(\bb|\zz)
 &\equiv f_n(g_{AB}(\cc,\bb|\zz),\zz)
\end{align*}
where
\begin{align*}
 g_{A}(\cc|\xx)
 &\equiv\arg\max_{\yy'\in\C_A(\cc)}\mu_{Y|X}(\yy'|\xx)
 \\
 g_{AB}(\cc,\bb|\zz)
 &\equiv\arg\max_{\yy'\in\C_{AB}(\cc,\bb)}\mu_{Y|Z}(\yy'|\zz)
\end{align*}
and we define $f_n(\yy,\zz)\equiv(w_1,\ldots,w_n)$ by
\begin{align*}
 w_i&\equiv f(y_i,z_i)
\end{align*}
for each $\yy\equiv(y_1,\ldots,y_n)$ and
$\zz\equiv(z_1,\ldots,z_n)$.

The rate $\Rate(B)$ of this code is given by
\[
 \Rate(B)\equiv\frac{l_{\B}\log|\Y|}n.
\]

In the following, we provide an
intuitive interpretation of the construction of the code,
which is illustrated in Fig.~\ref{fig:wz-code}.
Assume that $\cc$ is shared by the encoder and the decoder.
For $\cc$ and $\xx$,
the function $g_A$ generates
$\yy$ such that $A\yy=\cc$ and $(\xx,\yy)$ is a jointly typical sequence.
The rate of $\cc$
should be less than $H(Y|X)$ to satisfy the saturation property.
Then the encoder obtains the codeword $B\yy$.
The decoder
obtains the reproduction $\yy$
by using $g_{AB}$ from $\cc$, the codeword $B\yy$, and the side
information $\zz$
if the rate of $\cc$ and $B\yy$ is greater than $H(Y|Z)$
to satisfy the collision-resistant property.
Then we can set the encoding rate close to
$H(Y|Z)-H(Y|X)=I(X;Y)-I(Y;Z)$.
Since $(\xx,\yy,\zz)$ is jointly typical,
$\rho_n(\xx,f(\yy,\zz))$ is close to the distortion criterion.

We have the following theorem.
It should be noted that
a source is allowed to be non-binary and unbiased,
side information is allowd to be asymmetric,
and the distortion measure $\rho$ is arbitrary.
\begin{thm}
 \label{thm:wz}
 For given $\eA,\eB>0$ satisfying
 \begin{equation}
	\eA+2\zeta_{\Y\Z}(3\eA)<\eB,
	 \label{eq:wz-e}
 \end{equation}
 assume that $(\A,p_A)$ and $(\B,p_B)$ have hash property.
 Let $(X,Z)$ be a pair of stationary memoryless sources.
 Then for all sufficiently large $n$
 there are functions (sparse matrices) $A\in\A$, $B\in\B$, and a vector
 $\cc\in\im\A$ such that
 \begin{align*}
	\Rate(B)
	&=I(X;Y)-I(Y;Z)+\eB
	\\
	\frac{E_{XZ}\lrB{\rho_n(X^n,\Decoder(\Encoder(X^n),Z^n))}}n
	&
	\leq
	E_{XYZ}\lrB{\rho(X,f(Y,Z))}
	+3|\X||\Y||\Z|\rho_{\max}\sqrt{\eA}.
 \end{align*}
 By  assuming that  $\mu_{Y|X}$ and  $f$ attain
 the Wyner-Ziv bound and letting $\eA,\eB\to 0$,
 the rate-distortion pair of the proposed code
 is close to this bound.
\end{thm}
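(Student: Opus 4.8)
The plan is to follow the pattern used for the lossy source-coding theorem (Theorem~\ref{thm:lossy}), which Theorem~\ref{thm:wz} generalizes. The rate is immediate: $\Rate(B)=l_{\B}\log|\Y|/n=I(X;Y)-I(Y;Z)+\eB$ by the definition of $l_{\B}$, so the entire argument is about the distortion. Fix a small auxiliary typicality parameter $\gamma>0$ (to be chosen last, small relative to $\eA$) and treat $A$, $B$, $\cc$ as generated by the sparse-matrix ensemble, independently of the source; I would bound $E_{ABC}E_{XZ}[\rho_n(X^n,\Decoder(\Encoder(X^n),Z^n))]/n$ by splitting the $(\xx,\zz)$-average over a ``good'' event consisting of: (i) $(\xx,\zz)\in\T_{XZ,\gamma}$; (ii) the encoder output $\yy:=g_A(\cc|\xx)$ lies in $\T_{Y|XZ,2\eA}(\xx,\zz)$; and (iii) the decoder recovers it, $g_{AB}(\cc,B\yy|\zz)=\yy$. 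On the good event $(\xx,\yy,\zz)$ is jointly typical for $\mu_{XYZ}$ with parameter at most $2\eA+\gamma\le 3\eA$, so by continuity of the empirical average $\rho_n(\xx,f_n(\yy,\zz))/n$ differs from $E_{XYZ}[\rho(X,f(Y,Z))]$ by at most $|\X||\Y||\Z|\rho_{\max}\sqrt{6\eA}$, which is strictly below the claimed slack $3|\X||\Y||\Z|\rho_{\max}\sqrt{\eA}$; off the good event the per-symbol distortion is at most $\rho_{\max}$. Hence it suffices to show that, after averaging over $(A,B,\cc)$, the probability of the complement of the good event is $o(1)$, which is covered by the gap between $\sqrt6$ and $3$.

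The key observation for step (ii) is that the Markov structure (\ref{eq:markov-wz}) forces $\mu_{Y|XZ}=\mu_{Y|X}$, hence $H(Y|XZ)=H(Y|X)$; therefore $g_A(\cc|\xx)=\arg\max_{\yy'\in\C_A(\cc)}\mu_{Y|X}(\yy'|\xx)$ coincides with the maximum-likelihood coding function of Lemma~\ref{thm:joint-typical} applied with side variable $(X,Z)$, and $l_{\A}=n[H(Y|X)-\eA]/\log|\Y|=n[H(Y|XZ)-\eA]/\log|\Y|$ matches that lemma's hypothesis with $\e=\eA$. For $(\xx,\zz)\in\T_{XZ,\gamma}$ the set $\T_{Y|XZ,2\eA}(\xx,\zz)$ is nonempty and the admissible set $\T(\xx,\zz)$ can be taken with $|\T(\xx,\zz)|\ge 2^{n[H(Y|X)-O(\sqrt{\gamma})]}$ (the lower bound is governed by $\gamma$, not $\eA$, since one good conditional shell already suffices), so $|\im\A|/|\T(\xx,\zz)|\to0$; together with $\alpha_A\to1$, $\beta_A\to0$ and $2^{-n\eA}|\Y|^{l_{\A}}/|\im\A|\to0$ (by (H2), (H3), (H1)), Lemma~\ref{thm:joint-typical} makes the $(A,\cc)$-probability that (ii) fails tend to $0$; averaging over $\xx\sim\mu_X$ and using $\mu_X(\T_{X,\gamma}^c)\to0$ disposes of (i) and (ii). Moreover, on (i) and (ii) the divergence chain rule gives $D(\nu_{\xx\yy\zz}\|\mu_{XYZ})<2\eA+\gamma\le3\eA$, so by data processing $\yy\in\T_{Y|Z,3\eA}(\zz)$ and, after the standard continuity-of-entropy estimates, $\mu_{Y|Z}(\yy|\zz)\ge 2^{-n[H(Y|Z)+2\zeta_{\Y\Z}(3\eA)+O(\sqrt{\gamma})]}$.

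For step (iii), fix $(\xx,\zz)$ as above and let $\G(\zz)$ be the set of $\yy'$ with $\mu_{Y|Z}(\yy'|\zz)$ at least that fixed threshold; then $\G(\zz)$ contains every competitor that could beat $\yy$ in $g_{AB}(\cc,B\yy|\zz)=\arg\max_{\yy'\in\C_{AB}(\cc,B\yy)}\mu_{Y|Z}(\yy'|\zz)$, and $|\G(\zz)|\le 2^{n[H(Y|Z)+2\zeta_{\Y\Z}(3\eA)+O(\sqrt{\gamma})]}$. Applying Lemma~\ref{lem:ABCnoempty} with $\uu_{A,\cc}:=g_A(\cc|\xx)$, the $(A,B,\cc)$-probability that $[\G(\zz)\setminus\{\yy\}]\cap\C_{AB}(\cc,B\yy)\neq\emptyset$ is at most $|\G(\zz)|\alpha_B/(|\im\A||\im\B|)+\beta_B$; since $l_{\A}+l_{\B}=n[H(Y|Z)+\eB-\eA]/\log|\Y|$ gives $|\im\A||\im\B|=\Theta(2^{n[H(Y|Z)+\eB-\eA]})$, the first term is $\Theta(2^{-n[\eB-\eA-2\zeta_{\Y\Z}(3\eA)-O(\sqrt{\gamma})]})$, which vanishes precisely because of hypothesis (\ref{eq:wz-e}) once $\gamma$ is chosen small, while $\beta_B\to0$ by (H3); when that intersection is empty and $\yy\in\T_{Y|Z,3\eA}(\zz)$, the decoder's maximiser must be $\yy$, giving (iii). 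Combining the three estimates, exchanging the $(A,B,\cc)$- and $(\xx,\zz)$-expectations, and using that the complement probabilities are $o(1)$ yields $E_{ABC}E_{XZ}[\rho_n]/n\le E_{XYZ}[\rho(X,f(Y,Z))]+3|\X||\Y||\Z|\rho_{\max}\sqrt{\eA}$ for all sufficiently large $n$, so some fixed $A\in\A$, $B\in\B$, $\cc\in\im\A$ attain this together with the deterministic rate. I expect the main obstacle to be precisely this chain of typicality estimates: one must make the fixed threshold defining $\G(\zz)$ small enough to dominate every value $\mu_{Y|Z}(g_A(\cc|\xx)\mid\zz)$ can take (so that $\G(\zz)$ is independent of $\yy$, as Lemma~\ref{lem:ABCnoempty} demands) yet bound $|\G(\zz)|$ sharply enough that the surplus rate $\eB-\eA$ beats the slack $2\zeta_{\Y\Z}(3\eA)$---which is exactly the content of (\ref{eq:wz-e})---and verify that $\gamma$ can be taken small enough, relative to $\eA$, that $|\T(\xx,\zz)|$ strictly dominates $|\im\A|$ in step (ii).
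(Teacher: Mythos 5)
Your proposal follows the paper's own proof essentially step by step: the same three-event decomposition (typicality of $(\xx,\zz)$, encoder success via Lemma~\ref{thm:joint-typical} using $\mu_{Y|XZ}=\mu_{Y|X}$, decoder success via Lemma~\ref{lem:ABCnoempty} with $\uu_{A,\cc}=g_A(\cc|\xx)$ and the threshold set $\G(\zz)$), and the same closing distortion estimate via the $\T_{XYZ,3\eA}$ continuity bound. The only cosmetic difference is that you carry explicit $O(\sqrt{\gamma})$ slack terms and absorb them at the end, whereas the paper fixes $\gamma$ up front to satisfy $\gamma+\sqrt{2\gamma}\log|\X||\Z|<\eA$ so that they vanish from the key inequalities.
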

The proof is given in Section \ref{sec:proof-wz}.
It should be noted that Theorem~\ref{thm:lossy}
is a special case of the Wyner-Ziv problem
with  $|\Z|\equiv 1$, $\W\equiv\Y$, and $f(y,z)\equiv y$.

\begin{rem}
In  \cite{MY03}\cite{M04}\cite{MW06a}\cite{M06}\cite{GV07},
the lossy source code is proposed using sparse matrices
for the  binary alphabet and Hamming distance.
In their constructed encoder proposed in
 \cite{M04}\cite{MW06a}\cite{M06}\cite{GV07},
they obtain a codeword vector called the `middle layer' (see \cite{MW06a})
by using a matrix.
In their constructed decoder,
they operate another matrix on the codeword vector.
In our construction of the decoder,
we obtain the reproduction $\yy$
by using a sparse matrix $A$ and $g_{A}$
and compress $\yy$ with another matrix $B$.
It should be noted that
the dimension of $\yy$ is different from that of the middle layer
and we need ML decoder $g_{AB}$
in the construction of the decoder,
because $\yy$ is compressed by using $B$.
In \cite{MW06b},
the code for the Wyner-Ziv problem is proposed and there are similar
differences.
Our approach is based on the code presented in \cite{MM07}\cite{SWLOSSY}
and similar to the code presented in
 \cite{WAD03aew}\cite{WAD03ieice}\cite{ZSE02}.
\end{rem}

\subsection{One-helps-one Problem}

In this section, we consider the One-helps-one problem
illustrated in Fig.~\ref{fig:psi}.
The achievable rate region for this problem is given by
a set of encoding rate pair $(\Rate_X,\Rate_Y)$
satisfying
	\begin{align*}
	 &R_X\geq H(X|Z)
	 \\
	 &R_Y\geq I(Y;Z),
	\end{align*}
where the joint distribution $\mu_{XYZ}$ is given by
\begin{equation}
 \mu_{XYZ}(x,y,z)=\mu_{XY}(x,y)\mu_{Z|Y}(z|y).
	\label{eq:markov-psi}
\end{equation}

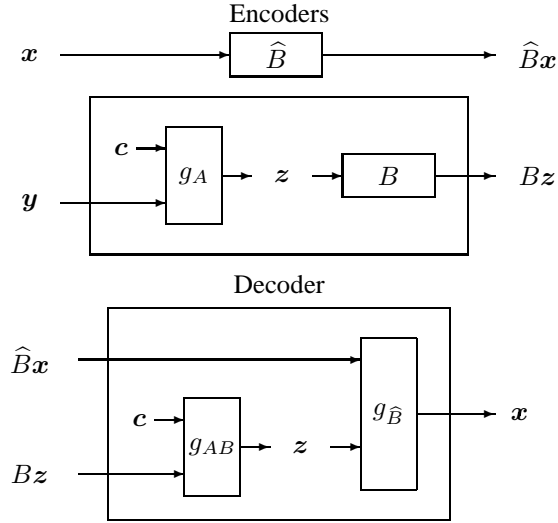
\begin{figure}[t]
\begin{center}
\unitlength 0.4mm
\begin{picture}(176,80)(0,0)
\put(82,80){\makebox(0,0){Encoders}}
\put(0,66){\makebox(0,0){$\xx$}}
\put(10,66){\vector(1,0){56}}
\put(66,59){\framebox(30,14){$\hB$}}
\put(96,66){\vector(1,0){57}}
\put(167,66){\makebox(0,0){$\hB\xx$}}
\put(30,35){\makebox(0,0){$\cc$}}
\put(35,35){\vector(1,0){10}}
\put(0,17){\makebox(0,0){$\yy$}}
\put(10,17){\vector(1,0){35}}
\put(45,10){\framebox(18,32){$g_A$}}
\put(63,26){\vector(1,0){10}}
\put(83,26){\makebox(0,0){$\zz$}}
\put(93,26){\vector(1,0){10}}
\put(103,19){\framebox(30,14){$B$}}
\put(133,26){\vector(1,0){20}}
\put(167,26){\makebox(0,0){$B\zz$}}
\put(20,0){\framebox(124,52){}}
\end{picture}
\\
\begin{picture}(176,90)(0,0)
\put(82,80){\makebox(0,0){Decoder}}
\put(36,35){\makebox(0,0){$\cc$}}
\put(41,35){\vector(1,0){10}}
\put(0,17){\makebox(0,0){$B\zz$}}
\put(16,17){\vector(1,0){35}}
\put(51,10){\framebox(18,32){$g_{AB}$}}
\put(69,26){\vector(1,0){10}}
\put(89,26){\makebox(0,0){$\zz$}}
\put(99,26){\vector(1,0){10}}
\put(26,2){\framebox(112,70){}}
\put(0,55){\makebox(0,0){$\hB\xx$}}
\put(16,55){\vector(1,0){93}}
\put(109,12){\framebox(18,50){$g_{\hB}$}}
\put(127,37){\vector(1,0){26}}
\put(161,37){\framebox(0,0){$\xx$}}
\end{picture}
\end{center}
\caption{Construction of One-helps-one Source Code}
\label{fig:psi-code}
\end{figure}

In the following, we
construct a code by combining a Slepian-Wolf code
and a lossy source code.
We assume that $\mu_{Z|Y}$ is fixed.
We fix functions
\begin{align*}
 \hB&:\X^n\to\X^{l_{\hcB}}
 \\
 A&:\Z^n\to\Z^{l_{\A}}
 \\
 B&:\Z^n\to\Z^{l_{\B}}
\end{align*}
and a vector $\cc\in\Z^{l_{\A}}$
available to construct an encoder and a decoder, where
\begin{align*}
 l_{\hcB}&\equiv
 \frac{n[H(X|Z)+\ehB]}
 {\log|\X|}
 \\
 l_{\A}&\equiv \frac{n[H(Z|Y)-\eA]}{\log|\Z|}
 \\
 l_{\B}&\equiv \frac{n[I(Y;Z)+\eB]}{\log|\Z|}.
\end{align*}
We define the encoders and the decoder (illustrated in
Fig.\ \ref{fig:psi-code})
\begin{align*}
 \Encoder_X&:\X^n\to\X^{l_{\hcB}}
 \\
 \Encoder_Y&:\Y^n\to\Z^{l_{\B}}
 \\
 \Decoder&:\X^{l_{\hcB}}\times\Z^{l_{\B}}\to\X^n
\end{align*}
as
\begin{align*}
 \Encoder_X(\xx)&\equiv \hB\xx
 \\
 \Encoder_Y(\yy)&\equiv Bg_A(\cc|\yy)
 \\
 \Decoder(\bb_X,\bb_Y)
 &\equiv g_{\hB}(\bb_X, g_{AB}(\cc,\bb_Y)),
\end{align*}
where
\begin{align*}
 g_{A}(\cc|\yy)
 &\equiv\arg\max_{\zz'\in\C_A(\cc)}\mu_{Z|Y}(\zz'|\yy)
 \\
 g_{AB}(\cc,\bb_Y)
 &\equiv\arg\max_{\zz'\in\C_{AB}(\cc,\bb_Y)}\mu_{Z}(\zz')
 \\
 g_{\hB}(\bb_X|\zz)
 &\equiv\arg\max_{\xx'\in\C_{\hB}(\bb_X)}\mu_{X|Z}(\xx'|\zz).
\end{align*}

The pair of encoding rates $(\Rate_X,\Rate_Y)$ is given by
\begin{align*}
 \Rate_X\equiv\frac{l_{\hB}\log|\X|}n
 \\
 \Rate_Y\equiv\frac{l_B\log|\Z|}n
\end{align*}
and the decoding error probability $\Error_{XY}(A,B,\hB,\cc)$ is given by
\begin{align*}
 \Error_{XY}(A,B,\hB,\cc)
 &
 \equiv
 \mu_{XY}\lrsb{\lrb{
 (\xx,\yy):
 \Decoder(\Encoder_X(\xx),\Encoder_Y(\yy))\neq \xx
 }}.
\end{align*}

We have the following theorem.
\begin{thm}
 \label{thm:psi}
 For given $\eA,\eB,\ehB>0$ satisfying
 \begin{gather}
	\eB>\eA+\zeta_{\Z}(3\eA)
	 \label{eq:psi-e1}
	 \\
	\ehB>2\zeta_{\X\Z}(3\eA),
	 \label{eq:psi-e2}
 \end{gather}
 assume that $(\A,p_A)$, $(\B,p_B)$, and $(\hcB,p_{\hB})$
 have hash property.
 Let $(X,Y)$ be a pair of stationary memoryless sources.
 Then, for any $\delta>0$ and and all sufficiently large $n$,
 there are functions (sparse matrices) $A\in\A$, $B\in\B$, $\hB\in\hcB$,
 and a vector $\cc\in\im\A$ such that
 \begin{gather*}
	\Rate_X = H(X|Z)+\ehB
	\\
	\Rate_Y = I(X;Z)+\eB
	\\
	\Error_{XY}(A,B,\hB,\cc)
	\leq
	\delta.
 \end{gather*}
\end{thm}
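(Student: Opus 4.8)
The plan is to analyse the expected error $E\lrB{\Error_{XY}(A,B,\hB,\cc)}$ over mutually independent $A\sim p_A$, $B\sim p_B$, $\hB\sim p_{\hB}$ and a $\cc$ uniform on $\im\A$; since $\Rate_X$ and $\Rate_Y$ are fixed by the definitions of $l_{\hcB},l_{\A},l_{\B}$, it then suffices to exhibit one quadruple with error at most $\delta$. The code is a concatenation of a lossy code for the middle layer $\zz$ (as in Theorem~\ref{thm:wz}) and a Slepian--Wolf-type code for $\xx$ given the recovered $\zz$ (as in Theorem~\ref{thm:sw}), so the proof concatenates the two analyses. Abbreviate $\zz\equiv g_A(\cc|\yy)$ (the layer produced by $\Encoder_Y$), $\hat\zz\equiv g_{AB}(\cc,B\zz)$ (its reconstruction inside $\Decoder$), $\hat\xx\equiv g_{\hB}(\hB\xx|\hat\zz)$ (the final estimate), and fix small $\gamma_0,\gamma_2>0$ with $\gamma_0+\gamma_2\le\eA$. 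The failure $\lrb{\hat\xx\neq\xx}$ is contained in the union of: (a) $\yy\notin\T_{Y,\gamma_0}$; (b) $\yy\in\T_{Y,\gamma_0}$ but $\zz\notin\T(\yy)$, with $\T(\yy)\subset\T_{Z|Y,2\eA}(\yy)$ the set furnished by Lemma~\ref{thm:joint-typical} applied with $U=Z$, $V=Y$, $\e=\eA$ (consistent with $l_{\A}=n[H(Z|Y)-\eA]/\log|\Z|$); (c) (a) and (b) fail but $\xx\notin\T_{X|YZ,\gamma_2}(\yy,\zz)$; (d) (a)--(c) fail — whence, by the chain rule for divergence together with the Markov identity $\mu_{X|YZ}=\mu_{X|Y}$, $D(\nu_{\xx\yy\zz}\|\mu_{XYZ})<\gamma_0+2\eA+\gamma_2\le3\eA$ — but $\hat\zz\neq\zz$; (e) (a)--(d) fail and $\hat\xx\neq\xx$. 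I would bound $E\lrB{\Error_{XY}}$ by the sum of the $\mu_{XY}$-probabilities of (a)--(e), averaged over the ensembles.

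Event (a) contributes $\mu_Y(\T_{Y,\gamma_0}^c)\to0$ by typicality. For (b), summing over $\yy\in\T_{Y,\gamma_0}$ and using Lemma~\ref{thm:joint-typical} gives $\alpha_A-1+|\im\A|[\beta_A+1]/\min_{\yy\in\T_{Y,\gamma_0}}|\T(\yy)|+2^{-n\eA}|\Z|^{l_{\A}}/|\im\A|$, which vanishes: $\alpha_A\to1$, $\beta_A\to0$, $\min_{\yy\in\T_{Y,\gamma_0}}|\T(\yy)|$ is (a polynomial away from) $2^{nH(Z|Y)}\gg|\im\A|\approx2^{n[H(Z|Y)-\eA]}$ once $\gamma_0$ is small, and $|\Z|^{l_{\A}}/|\im\A|$ is subexponential by (\ref{eq:imA}). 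For (c), condition on $(\yy,\zz)$: since the source is independent of $(A,\cc)$ one has $\xx\mid(\yy,\zz)\sim\prod_i\mu_{X|Y}(\cdot|y_i)=\prod_i\mu_{X|YZ}(\cdot|y_i,z_i)$ by the Markov chain $X-Y-Z$, so ordinary conditional typicality gives $\mu_{X|Y}(\T_{X|YZ,\gamma_2}(\yy,\zz)^c|\yy)\le(n+1)^{|\X||\Y||\Z|}2^{-n\gamma_2}$ for \emph{every} $(\yy,\zz)$; thus (c) contributes $o(1)$.

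For (d), fix a jointly typical $(\xx,\yy,\zz)$; since $D(\nu_{\zz}\|\mu_{Z})<3\eA$ we have, up to a polynomial, $\mu_Z(\zz)\ge2^{-n[H(Z)+\zeta_{\Z}(3\eA)]}$, so $\lrb{\zz':\mu_Z(\zz')\ge\mu_Z(\zz),\,\zz'\neq\zz}$ lies in the \emph{fixed} set $\G^*\equiv\lrb{\zz':\mu_Z(\zz')\ge2^{-n[H(Z)+\zeta_{\Z}(3\eA)]}}$, with $|\G^*|\le2^{n[H(Z)+\zeta_{\Z}(3\eA)]}$. Lemma~\ref{lem:ABCnoempty} with $\uu_{A,\cc}\equiv\zz=g_A(\cc|\yy)$ and $\G\equiv\G^*$ then bounds $p_{ABC}(\hat\zz\neq\zz)$ by $|\G^*|\alpha_B/(|\im\A||\im\B|)+\beta_B$; since $|\im\A||\im\B|$ is $2^{n[H(Z|Y)-\eA+I(Y;Z)+\eB]}$ up to a constant factor and $H(Z|Y)+I(Y;Z)=H(Z)$, this is of order $2^{n[\zeta_{\Z}(3\eA)-\eB+\eA]}\alpha_B+\beta_B$ and tends to $0$ exactly under (\ref{eq:psi-e1}). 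For (e), fix the same jointly typical $(\xx,\yy,\zz)$, so $\hat\zz=\zz$; since $D(\nu_{\xx\zz}\|\mu_{XZ})<3\eA$, the corresponding $\zeta$-estimate gives, up to a polynomial, $\mu_{X|Z}(\xx|\zz)\ge2^{-n[H(X|Z)+2\zeta_{\X\Z}(3\eA)]}$, so $\lrb{\xx':\mu_{X|Z}(\xx'|\zz)\ge\mu_{X|Z}(\xx|\zz)}$ has at most $2^{n[H(X|Z)+2\zeta_{\X\Z}(3\eA)]}$ elements; Lemma~\ref{lem:Anotempty} applied to $(\hcB,p_{\hB})$ — which is independent of $(A,B,\cc)$ and of the source — bounds $p_{\hB}(\hat\xx\neq\xx)$ by $2^{n[2\zeta_{\X\Z}(3\eA)-\ehB]}\alpha_{\hB}+\beta_{\hB}$, which tends to $0$ under (\ref{eq:psi-e2}). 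Adding the five contributions yields $E\lrB{\Error_{XY}(A,B,\hB,\cc)}\to0$, and the theorem follows.

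The step needing the most care is (d): the encoder and the decoder use the \emph{same} random $\cc$ (through $g_A(\cc|\yy)$ and $g_{AB}(\cc,B\zz)$), so $\hat\zz\neq\zz$ cannot be treated with $\cc$ as fresh randomness; Lemma~\ref{lem:ABCnoempty}, which permits $\uu_{A,\cc}$ to depend on $A$ and $\cc$, is built for precisely this and is the reason no hash property of $\A\times\B$ is required. Conversely, what might look like the real obstacle — that the middle layer $\zz=g_A(\cc|\yy)$ is chosen from $\yy$ alone (and in a worst-case way within its coset), yet must be jointly typical with the source pair $(\xx,\yy)$ and not merely conditionally typical given $\yy$ — is not an obstacle here: because $X-Y-Z$ is a Markov chain, $\mu_{X|YZ}=\mu_{X|Y}$, so conditionally on $(\yy,\zz)$ the word $\xx$ is still a product sample from the correct conditional law, and no combinatorial Markov lemma is needed. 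What remains is the routine accounting of the type-counting factors $(n+1)^{|\cdot|}$ and the $\eta$-corrections (\ref{eq:def-eta})--(\ref{eq:def-etac}) suppressed above, and the integer rounding of $l_{\hcB},l_{\A},l_{\B}$ absorbed into $\ehB,\eA,\eB$.
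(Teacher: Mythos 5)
Your proposal is correct and uses the same three key lemmas in the same roles as the paper: Lemma~\ref{thm:joint-typical} for the saturating property of $(\A,p_A)$ at the middle layer, Lemma~\ref{lem:ABCnoempty} for the collision-resistance of $(\B,p_B)$ in reconstructing that layer (and, as you correctly observe, this is precisely what lets the encoder and decoder share the same $\cc$ without any hash property being assumed of $\A\times\B$), and Lemma~\ref{lem:Anotempty} for the collision-resistance of $(\hcB,p_{\hB})$ in the final Slepian--Wolf step. The genuine difference is in the typicality bookkeeping. The paper's events are $(\xx,\yy)\in\T_{XY,\gamma}$ and $g_A(\cc|\yy)\in\T_{Z|XY,2\eA}(\xx,\yy)$, and the saturating lemma is applied with conditioning $V=(X,Y)$ after observing that the Markov structure (\ref{eq:markov-psi}) makes $\arg\max_{\zz'\in\C_A(\cc)}\mu_{Z|Y}(\zz'|\yy)$ coincide with $\arg\max_{\zz'\in\C_A(\cc)}\mu_{Z|XY}(\zz'|\xx,\yy)$, with $H(Z|Y)=H(Z|XY)$ guaranteeing the rate $l_{\A}$ matches. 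You instead apply the saturating lemma with $V=Y$ alone --- exactly what $g_A$ conditions on --- and add the separate event $\xx\in\T_{X|YZ,\gamma_2}(\yy,\zz)$, which you dispose of by noting that $\zz=g_A(\cc|\yy)$ is deterministic given $(A,\cc,\yy)$, so conditionally on $(\yy,\zz)$ the sequence $\xx$ is still an i.i.d.\ draw from $\mu_{X|Y}(\cdot\,|\,y_i)=\mu_{X|YZ}(\cdot\,|\,y_i,z_i)$, and Lemma~\ref{lem:typical-prob} applies. Both routes then reach $(\xx,\yy,\zz)\in\T_{XYZ,3\eA}$ via the divergence chain rule, after which the rate arithmetic producing (\ref{eq:psi-e1}) and (\ref{eq:psi-e2}) is identical. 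Your version lets the Markov chain enter through an identity of conditional laws rather than through an argmax-equivalence argument, which is slightly more transparent; the paper's version keeps the whole argument in the ``fix $(\xx,\yy)$, average over $(A,B,\hB,\cc)$'' frame, in uniformity with its other proofs. One small cosmetic note: the paper's display (\ref{eq:psi-error3}) writes the bound with $\alpha_{AB},\beta_{AB}$, but the bound from Lemma~\ref{lem:ABCnoempty} involves only $\alpha_B,\beta_B$; your accounting is the more careful one and is what the theorem's hypotheses actually reflect.
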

The proof is given in Section \ref{sec:proof-psi}.

\section{Proof of Lemmas and Theorems}
\label{sec:proof}
In the proof, we use the method of types,
which is given in  Appendix.
Throughout this section,
we assume that
the probability distributions of $p_C$, $p_{\hC}$, $p_M$ are uniform
and the random variables $A$, $B$, $\hA$, $\hB$, $C$, $\hC$ and
$M$ are mutually independent.

\subsection{Proof of Lemmas \ref{lem:Anotempty}--\ref{lem:md-typical}}
\label{sec:proof-hash}

We prepare the following 
two lemmas,
which come from the fact that
$p_C$ is the uniform distribution on $\im\A$
and random variables $A$ and $C$ are mutually independent.
\begin{lem}
 \label{lem:E}
Let $p_{A}$ be the distribution on the set of functions
and $p_{C}$ be the uniform distribution on $\im\A$.
We assume that a joint distribution $p_{AC}$ satisfies
\[
 p_{AC}(A,\cc)=p_A(A)p_C(\cc)
\]
for any $A$ and $\cc\in\im\A$.
Then 
\begin{equation}
 \label{eq:EC}
	\sum_{c}p_C(\cc)\chi(A\uu=\cc)
	=\frac 1{|\im\A|}
\end{equation}
for any $A$ and $\uu\in\U^n$,
\begin{equation}
 \label{eq:EAC}
	\sum_{A,\cc}p_{AC}(A,\cc)
	\chi(A\uu=\cc)=\frac 1{|\im\A|}
\end{equation}
for any $\uu\in\U^n$, and
\begin{align}
 \begin{split}
	\sum_{A,\cc}p_{AC}(A,\cc)
	\left|\G\cap\C_A(\cc)\right|
	&
	=
	\sum_{\uu\in\G}
	\sum_{A,\cc}p_{AC}(A,\cc)\chi(A\uu=\cc)
	\\
	&
	=\frac {|\G|}{|\im\A|}
 \end{split}
 \label{eq:EACG}
 \\
 p_{AC}\lrsb{\lrb{
 (A,\cc):
 \G\cap\C_A(\cc)\neq \emptyset
 }}
 &\leq
 \frac{|\G|}{|\im\A|}
 \label{eq:largeprob}
\end{align}
for any $\G\subset\U^n$.
\end{lem}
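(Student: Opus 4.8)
The plan is to prove the four displayed identities in order, using only the definition $\im\A=\bigcup_{A\in\A}\{A\uu:\uu\in\U^n\}$, the assumption that $p_C$ is the uniform distribution on $\im\A$, and the product form $p_{AC}(A,\cc)=p_A(A)p_C(\cc)$. The one elementary observation that makes everything work is that $A\uu\in\im\A$ for every $A\in\A$ and every $\uu\in\U^n$, so that $A\uu$ always lies in the support of $p_C$.

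First I would establish (\ref{eq:EC}): for fixed $A$ and $\uu$, the function $\cc\mapsto\chi(A\uu=\cc)$ is nonzero only at $\cc=A\uu$, and since $A\uu\in\im\A$ this gives $\sum_{\cc}p_C(\cc)\chi(A\uu=\cc)=p_C(A\uu)=1/|\im\A|$. Then (\ref{eq:EAC}) follows immediately by factoring $p_{AC}=p_Ap_C$, applying (\ref{eq:EC}) to the inner sum over $\cc$, and noting that the resulting value $1/|\im\A|$ does not depend on $A$, so the outer sum against $p_A$ contributes only a factor $1$.

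For (\ref{eq:EACG}) I would write $\left|\G\cap\C_A(\cc)\right|=\sum_{\uu\in\G}\chi(A\uu=\cc)$ using $\C_A(\cc)=\{\uu:A\uu=\cc\}$, interchange the (finite) sums so that $\sum_{\uu\in\G}$ is outermost, and apply (\ref{eq:EAC}) term by term, obtaining $\sum_{\uu\in\G}1/|\im\A|=|\G|/|\im\A|$. Finally, (\ref{eq:largeprob}) is a Markov-type bound: since $\chi(\G\cap\C_A(\cc)\neq\emptyset)\leq\left|\G\cap\C_A(\cc)\right|$, taking the expectation with respect to $p_{AC}$ and invoking (\ref{eq:EACG}) yields the bound $|\G|/|\im\A|$. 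There is no genuine difficulty in this lemma; the only points needing a word of justification are that $A\uu$ always belongs to $\im\A$ (hence to the support of $p_C$) and that all the sums are finite, which legitimizes the interchange of summation order used for (\ref{eq:EACG}).
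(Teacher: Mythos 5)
Your proof is correct and follows essentially the same route as the paper: (\ref{eq:EC}) from the definition of $p_C$, (\ref{eq:EAC}) by factoring $p_{AC}=p_Ap_C$, (\ref{eq:EACG}) by writing $|\G\cap\C_A(\cc)|=\sum_{\uu\in\G}\chi(A\uu=\cc)$ and interchanging sums, and (\ref{eq:largeprob}) by the bound $\chi(\G\cap\C_A(\cc)\neq\emptyset)\leq|\G\cap\C_A(\cc)|$, which is exactly the union bound the paper uses in expanded form. Your explicit remark that $A\uu\in\im\A$, hence lies in the support of $p_C$, makes the step $p_C(A\uu)=1/|\im\A|$ slightly more transparent than the paper's wording, but the argument is the same.
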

\begin{proof}
 First, we prove (\ref{eq:EC}).
 Since $A\uu$ is determined uniquely, we have
\[
 \sum_{\cc}\chi(A\uu=\cc)=1.
\]
Then we have
\begin{align*}
 \sum_{\cc}p_{C}(\cc)\chi(A\uu=\cc)
 &=
 \sum_{\cc}\frac{\chi(A\uu=\cc)}{|\im\A|}
 \\
 &=\frac 1{|\im\A|}.
\end{align*}

Next, we prove (\ref{eq:EAC}).
From (\ref{eq:EC}), we have
\begin{align*}
 \sum_{A,\cc}p_{AC}(A,\cc)
 \chi(A\uu=\cc)
 &=
 \sum_{A}p_{A}(A)\sum_{\cc}p_{\cc}(\cc)
 \chi(A\uu=\cc)
 \\
 &=
 \sum_{A}\frac{p_A(A)}{|\im\A|}
 \\
 &=
 \frac 1{|\im\A|}.
\end{align*}

Next, we prove (\ref{eq:EACG}).
From (\ref{eq:EAC}), we have
\begin{align*}
 \sum_{A,\cc}p_{AC}(A,\cc)
 \left|\G\cap\C_A(\cc)\right|
 &=
 \sum_{A,\cc}p_{AC}(A,\cc)
 \sum_{\uu\in\G}\chi(A\uu=\cc)
 \\
 &=
 \sum_{\uu\in\G}
 \sum_{A,\cc}p_{AC}(A,\cc)
 \chi(A\uu=\cc)
 \\
 &=
 \sum_{\uu\in\G}
 \frac 1{|\im\A|}
 \\
 &=\frac {|\G|}{|\im\A|}.
\end{align*}

Finally, we prove (\ref{eq:largeprob}).
From (\ref{eq:EACG}), we have
\begin{align*}
 p_{AC}\lrsb{\lrb{
 (A,\cc):
 \G\cap\C_A(\cc)\neq \emptyset
 }}
 &=
 p_{AC}\lrsb{\lrb{
 (A,\cc):
 \exists \uu\in\G\cap C_A(\cc)
 }}
 \\
 &
 \leq
 \sum_{\uu\in\G}
 p_{AC}\lrsb{\lrb{
 (A,\cc):
 \uu\in C_A(\cc)
 }}
 \\
 &=
 \sum_{\uu\in\G}
 \sum_{A,\cc}p_{AC}(A,\cc)\chi(A\uu=\cc)
 \\
 &
 =\frac{|\G|}{|\im\A|}.
\end{align*}
\end{proof}

{\it Proof of Lemma \ref{lem:Anotempty}:}
Since $(\A,p_A)$ has an $(\aalpha_A,\bbeta_A)$-hash property,
we have
\begin{align*}
 p_A\lrsb{\lrb{
 A:
 \lrB{\G\setminus\{\uu\}}\cap\C_A(A\uu)\neq \emptyset
 }}
 &\leq
 \sum_{\uu'\in\G\setminus\{\uu\}}
 p_A\lrsb{\lrb{
 A: A\uu = A\uu'
 }}
 \\
 &\leq |\{\uu\}\cap\lrB{\G\setminus\{\uu\}}|
 +\frac{\left|\G\setminus\{\uu\}\right|\alpha_A}{|\im\A|}
 + \min\{|\{\uu\}|,\left|\G\setminus\{\uu\}\right|\}\beta_A
 \\
 &\leq
 \frac{|\G|\alpha_A}{|\im\A|} + \beta_A.
\end{align*}
\hfill\QED

{\it Proof of Lemma \ref{lem:noempty}:}
First, since $(\A,p_A)$ has an $(\aalpha_A,\bbeta_A)$-hash property, we have
\begin{align}
 \sum_{\uu,\uu'\in\T}
 \sum_{A,\cc}p_{AC}(A,\cc)
 \chi(A\uu=\cc)\chi(A\uu'=\cc)
 &=
 \sum_{\uu,\uu'\in\T}
 \sum_{A,\cc}p_{AC}(A,\cc)
 \chi(A\uu=\cc)\chi(A\uu = A\uu')
 \notag
 \\
 &=
 \sum_{\uu,\uu'\in\T}
 \sum_{A}p_A(A)\chi(A\uu=A\uu')
 \sum_{\cc}p_C(\cc)\chi(A\uu=\cc)
 \notag
 \\
 &=
 \frac 1{|\im\A|}
 \sum_{\uu,\uu'\in\T}
 \sum_{A}p_A(A)\chi(A\uu=A\uu')
 \notag
 \\
 &\leq
 \frac{|\T|}{|\im\A|}
 +\frac{|\T|^2\alpha_A}{|\im\A|^2}
 +\frac{|\T|\beta_A}{|\im\A|},
 \label{eq:variance}
\end{align}
where the third equality comes from (\ref{eq:EC}).

Next, we have
 \begin{align}
	&
	\sum_{A,\cc}p_{AC}(A,\cc)
	\lrB{
	\sum_{\uu\in\T}
	\chi(A\uu=\cc)-\frac{|\T|}{|\im\A|}
	}^2
	\notag
	\\*
	&=
	\sum_{A,\cc}p_{AC}(A,\cc)
	\lrB{
	\sum_{\uu\in\T}
	\chi(A\uu=\cc)
	}^2
	-\frac {2|\T|}{|\im\A|}
	\sum_{A,\cc}p_{AC}(A,\cc)\sum_{\uu\in\T}\chi(A\uu=\cc)
	+
	\frac {|\T|^2}{|\im\A|^2}
	\notag
	\\
	&=
	\sum_{\uu,\uu'\in\T}
	\sum_{A,\cc}p_{AC}(A,\cc)
	\chi(A\uu=\cc)\chi(A\uu'=\cc)
	-\frac {2|\T|}{|\im\A|}
	\sum_{\uu\in\T}\sum_{A,\cc}p_{AC}(A,\cc)\chi(A\uu=\cc)
	+
	\frac {|\T|^2}{|\im\A|^2}
	\notag
	\\
	&\leq
	\frac{|\T|^2\lrB{\alpha_A-1}}{{|\im\A|}^2}
	+\frac{|\T|\lrB{\beta_A+1}}{|\im\A|},
	\label{eq:markov}
 \end{align}
where the last inequality comes from (\ref{eq:EACG}) and (\ref{eq:variance}).

Finally, from the fact that $\T\neq\emptyset$, we have
\begin{align*}
 p_{AC}\lrsb{\lrb{(A,\cc):
 \T\cap\C_A(\cc)=\emptyset
 }}
 &=
 p_{AC}\lrsb{\lrb{(A,\cc):
 \forall \uu\in\T,
 A\uu\neq\cc
 }}
 \\
 &=
 p_{AC}\lrsb{\lrb{(A,\cc):
 \sum_{\uu\in\T}
 \chi(A\uu=\cc)=0
 }}
 \\
 &\leq
 p_{AC}\lrsb{\lrb{
 (A,\cc):
 \left|
 \sum_{\uu\in\T}
 \chi(A\uu=\cc)-\frac{|\T|}{|\im\A|}
 \right|
 \geq  \frac{|\T|}{|\im\A|}
 }}
 \\
 &\leq
 \frac{
  \sum_{A,\cc}p_{AC}(A,\cc)
 \lrB{
 \sum_{\uu\in\T}
	\chi(A\uu=\cc)- \frac{|\T|}{|\im\A|}
 }^2
 }
 {\frac{|\T|^2}{|\im\A|^{2}}}
 \\
 &\leq
 \frac{
 \frac{|\T|^2\lrB{\alpha_A-1}}{|\im\A|^2}
 +\frac{|\T|\lrB{\beta_A+1}}{|\im\A|}
 }
 {\frac{|\T|^2}{|\im\A|^{2}}}
 \\
 &=
 \alpha_A-1+\frac{|\im\A|\lrB{\beta_A+1}}{|\T|},
\end{align*}
where the second inequality comes from the Markov inequality
and the third inequality comes from (\ref{eq:markov}).
\hfill\QED

{\it Proof of Lemma \ref{lem:ACnotempty}:}
Since $(\A,p_A)$ has an $(\aalpha_A,\bbeta_A)$-hash property,
we have
\begin{align*}
 p_{AC}\lrsb{\lrb{
 (A,\cc):
 \begin{aligned}
	&\G\cap\C_A(\cc)\neq \emptyset
	\\
	&\uu\in\C_A(\cc)
 \end{aligned}
 }}
 &=
 p_{AC}\lrsb{\lrb{
 (A,\cc):
 \begin{aligned}
	&\G\cap\C_A(A\uu)\neq \emptyset
	\\
	&\uu\in\C_A(\cc)
 \end{aligned}
 }}
 \\
 &
 =
 \sum_A p_A(A)
 \chi(\G\cap\C_A(A\uu)\neq \emptyset)
 \sum_{\cc} p_C(\cc)
 \chi(A\uu=\cc)
 \\
 &
 =
 \frac{
 p_A\lrsb{\lrb{
 A: \G\cap\C_A(A\uu)\neq \emptyset
 }}
 }{|\im\A|}
 \\
 &\leq
 \frac{|\G|\alpha_A}{|\im\A|^2} + \frac{\beta_A}{|\im\A|},
\end{align*}
where the second equality
comes from the fact that
random variables A and C are independent,
the third equality comes from (\ref{eq:EC}),
and the inequality comes form
Lemma~\ref{lem:Anotempty}.
\hfill\QED

{\it Proof of Lemma \ref{lem:ABCnoempty}:}
By applying 
Lemma \ref{lem:Anotempty}
to the set $\lrB{\G\setminus\{\uu_{A,\cc}\}}\cap\C_A(\cc)$, we have
\begin{align*}
 p_{ABC}\lrsb{\lrb{
 (A,B,\cc): 
 \lrB{\G\setminus\{\uu_{A,\cc}\}}\cap\C_{AB}(\cc,B\uu_{A,\cc})\neq\emptyset
 }}
 &\leq
 \sum_{A,\cc}p_{AC}(A,\cc)
 \lrB{
 \frac{\left|\lrB{\G\setminus\{\uu_{A,\cc}\}}\cap\C_A(\cc)\right|\alpha_B}
 {|\im\B|}
 +\beta_B
 }
 \\
 &\leq
 \sum_{A,\cc}p_{AC}(A,\cc)
 \lrB{
 \frac{\left|\G\cap\C_A(\cc)\right|\alpha_B}{|\im\B|}
 +\beta_B
 }
 \\
 &=
 \frac{|\G|\alpha_B}{|\im\A||\im\B|}+\beta_B,
\end{align*}
where the last equality comes from (\ref{eq:EACG}).
\hfill\QED

{\it Proof of Lemma \ref{thm:joint-typical}:}
Let 
\begin{align*}
 \G(\vv)&\equiv\{\uu: \mu_{U|V}(\uu|\vv)> 2^{-n[H(U|V)-2\e]}\}.
\end{align*}
If $\G(\vv)\cap\C_A(\cc)=\emptyset$ and $\T(\vv)\cap\C_A(\cc)\neq\emptyset$,
then there is $\uu\in\T(\vv)\cap\C_A(\cc)$
and $g_{A}(\cc|\vv)$ satisfies
\begin{align*}
 \mu_{U|V}(g_{A}(\cc|\vv)|\vv)\leq 2^{-n[H(U|V)-2\e]}.
\end{align*}
Since $\uu\in\C_A(\cc)$, we have
\begin{align*}
 \mu_{U|V}(g_A(\cc|\vv)|\vv)
 &\geq
 \mu_{U|V}(\uu|\vv).
\end{align*}
This implies that $g_{A}(\cc|\vv)\in\T(\vv)$ from the assumption of $\T(\vv)$.
From Lemma \ref{lem:noempty} and (\ref{eq:largeprob}), we have
\begin{align*}
 p_{AC}\lrsb{\lrb{(A,\cc):
 g_{A}(\cc|\vv)\notin\T(\vv)
 }}
 &\leq
 1-
 p_{AC}\lrsb{\lrb{(A,\cc):
 g_{A}(\cc|\vv)\in\T(\vv)
 }}
 \\
 &\leq
 1-
 p_{AC}\lrsb{\lrb{(A,\cc):
 \begin{aligned}
	&\T(\vv)\cap\C_A(\cc)\neq\emptyset
	\\
	&\G(\vv)\cap\C_A(\cc)=\emptyset
 \end{aligned}
 }}
 \\
 &\leq
 p_{AC}\lrsb{\lrb{(A,\cc):
 \T(\vv)\cap\C_A(\cc)=\emptyset
 }}
 +
 p_{AC}\lrsb{\lrb{(A,\cc):
	\G(\vv)\cap\C_A(\cc)\neq\emptyset
 }}
 \\
 &
 \leq
 \alpha_A-1
 +\frac{|\im\A|\lrB{\beta_A+1}}{|\T(\vv)|}
 +\frac{|\G(\vv)|}{|\im\A|}
 \\
 &
 \leq
 \alpha_A-1
 +\frac{|\im\A|\lrB{\beta_A+1}}{|\T(\vv)|}
 +\frac{2^{-n\e}|\U|^{l_{\A}}}{|\im\A|},
\end{align*}
where the last inequality comes from the fact that
\begin{align*}
 |\G(\vv)|\leq 2^{n[H(U|V)-2\e]}.
\end{align*}
\hfill\QED

{\it Proof of Lemma \ref{lem:md-typical}:}
When $\T(\vv)\cap\C_A(\cc)\neq\emptyset$,
we can always find the member of $\T(\vv)$ by using $\hg_A$.
From Lemma \ref{lem:noempty}, we have
\begin{align*}
 p_{AC}\lrsb{\lrb{(A,\cc):
 \hg_{A}(\cc|\vv)\notin\T(\vv)
 }}
 &\leq
 p_{AC}\lrsb{\lrb{(A,\cc):
 \T(\vv)\cap\C_A(\cc)=\emptyset
 }}
 \\
 &
 \leq
 \alpha_A-1
 +\frac{|\im\A|\lrB{\beta_A+1}}{|\T(\vv)|}.
\end{align*}
\hfill\QED

\subsection{Proof of Theorem~\ref{thm:hash-linA} and Lemmas
	\ref{thm:hash-linApB} and \ref{thm:hash-linAB}}
\label{sec:proof-linear}

For a type $\bt$, let $\C_{\bt}$ be defined as
\begin{gather*}
 \C_{\bt} \equiv \lrb{\uu\in\U^n :\ \bt(\uu)=\bt}.
\end{gather*}
We assume that $p_A\lrsb{\lrb{A: A\uu=\zero}}$
depends on $\uu$ only through the type $\bt(\uu)$.
For a given $\uu\in\C_{\bt}$,
we define 
\begin{align*}
 u_{A,\bt}
 &\equiv u_A\lrsb{\lrb{A: A\uu=\zero}}
 \\
 p_{A,\bt}
 &\equiv p_A\lrsb{\lrb{A: A\uu=\zero}},
\end{align*}
where $u_{A}$ denotes the uniform distribution
on the set of all $l_{\A}\times n$ matrices
and we omit $\uu$ from the left hand side
because the probabilities
$u_A\lrsb{\lrb{A: A\uu=\zero}}$
and $p_A\lrsb{\lrb{A: A\uu=\zero}}$
depend on $\uu\in\C_{\bt}$ only through the type $\bt$.

We use the following lemma in the proof.
\begin{lem}
\begin{align}
 \begin{split}
	\alpha_A(n)
	&=
	|\im\A|\max_{\bt\in \hcH}p_{A,\bt}	
 \end{split}
 \label{eq:alpha2}
 \\
 \beta_A(n)
 &=
 \sum_{\bt\in \cH\setminus\hcH}|\C_{\bt}|p_{A,\bt},
 \label{eq:beta2}
\end{align}
 where $\cH$ is a set of all types of length $n$ except the type
 of the zero vector.
\end{lem}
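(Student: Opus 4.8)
The plan is to reduce both identities to evaluating the quantities $S(p_A,\bt)$ and $S(u_A,\bt)$ appearing in (\ref{eq:alpha-linear}) and (\ref{eq:beta-linear}) in terms of $|\C_{\bt}|$ and the type-indexed probabilities $p_{A,\bt}$ and $u_{A,\bt}$.

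First I would interchange the order of summation in the definition of $S(p_A,\bt)$:
\begin{align*}
 S(p_A,\bt)
 &= \sum_A p_A(A)\sum_{\uu\in\C_{\bt}}\chi(A\uu=\zero)
 = \sum_{\uu\in\C_{\bt}}\sum_A p_A(A)\chi(A\uu=\zero)
 = \sum_{\uu\in\C_{\bt}}p_A\lrsb{\lrb{A: A\uu=\zero}}.
\end{align*}
Since $\bt\in\cH$ is not the type of $\zero$, every $\uu\in\C_{\bt}$ is nonzero, so by the standing assumption $p_A\lrsb{\lrb{A: A\uu=\zero}}$ depends on $\uu\in\C_{\bt}$ only through $\bt$ and equals $p_{A,\bt}$; hence $S(p_A,\bt)=|\C_{\bt}|p_{A,\bt}$. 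Substituting into (\ref{eq:beta-linear}) gives (\ref{eq:beta2}) at once.

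The same manipulation yields $S(u_A,\bt)=|\C_{\bt}|u_{A,\bt}$, so the only substantive point is to compute $u_{A,\bt}$. Here I would use that under $u_A$ the rows of the matrix are independent and uniform on $\U^n$, and that for a fixed nonzero $\uu\in\U^n$ the map $\ba\mapsto\ba\uu$ is a nonzero linear functional on the $\U$-vector space $\U^n$, hence surjective onto $\U$ with fibres all of size $|\U|^{n-1}$; therefore each coordinate of $A\uu$ is uniform on $\U$, and, since the rows are independent, the coordinates are independent, so $A\uu$ is uniform on $\U^{l_{\A}}$. Consequently $u_{A,\bt}=u_A\lrsb{\lrb{A: A\uu=\zero}}=|\U|^{-l_{\A}}$ and $S(u_A,\bt)=|\C_{\bt}|/|\U|^{l_{\A}}$.

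Combining the two evaluations, $S(p_A,\bt)/S(u_A,\bt)=|\U|^{l_{\A}}p_{A,\bt}$, and the factor $|\U|^{l_{\A}}$ cancels the coefficient $|\im\A|/|\U|^{l_{\A}}$ in (\ref{eq:alpha-linear}), leaving $\alpha_A(n)=|\im\A|\max_{\bt\in\hcH}p_{A,\bt}$, which is (\ref{eq:alpha2}). I do not expect any genuine obstacle in this argument; the one thing to keep an eye on is that $\C_{\bt}$ contains no zero vector --- guaranteed precisely because $\cH$ omits the type of $\zero$ --- since that is exactly what makes $A\uu$ uniform under $u_A$ and what makes $p_{A,\bt}$ and $u_{A,\bt}$ well defined in the first place.
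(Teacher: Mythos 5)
Your proof is correct and follows essentially the same route as the paper's: interchange the sums defining $S(p_A,\bt)$ and $S(u_A,\bt)$ to get $|\C_{\bt}|p_{A,\bt}$ and $|\C_{\bt}|u_{A,\bt}$, compute $u_{A,\bt}=|\U|^{-l_{\A}}$, and substitute into the definitions of $\alpha_A$ and $\beta_A$. The paper justifies $u_{A,\bt}=|\U|^{-l_{\A}}$ by directly counting $|\U|^{[n-1]l_{\A}}$ matrices in the kernel condition, whereas you phrase it via surjectivity of the linear functional and independence of rows, but that is the same counting dressed differently.
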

\begin{proof}
Since we can find
$|\U|^{[n-1]l_{\A}}$ matrices $A$
to satisfy $A\uu=\zero$ for $\uu\in\C_{\bt}$, we have
\begin{align*}
 u_{A,\bt}
 &= 
 \frac {|\U|^{[n-1]l_{\A}}}{|\U|^{nl_{\A}}}
 \\
 &= |\U|^{-l_{\A}}.
\end{align*}
We have
\begin{align*}
 S(p_A,\bt)
 &=
 \sum_{A}p_A(A)
 \sum_{\substack{\uu\in\C_{\bt}\\ A\uu=\zero}}
 1
 \\*
 &=
 \sum_{\uu\in\C_{\bt}}
 \sum_{A:A\uu=\zero}
 p_A(A)
 \\
 &=
 |\C_{\bt}|p_{A,\bt}.
\end{align*}
Similarly, we have
\begin{align*}
 S(u_{A},\bt)
 &=
 |\C_{\bt}|u_{A,\bt}.
\end{align*}
The lemma can be shown immediately from
(\ref{eq:alpha-linear}),
(\ref{eq:beta-linear}),
and the above equalities.
\end{proof}

{\it Proof of Theorem~\ref{thm:hash-linA}:}
Without loss of generality, we can assume that $|\T|\leq |\T'|$.
We have
\begin{align*}
 \sum_{\substack{
 \uu\in\T
 \\
 \uu'\in\T'
 }}
 p_{A}\lrsb{\lrb{A: A\uu=A\uu'}}
 &=
 \sum_{\substack{
 \uu\in\T
 \\
 \uu'\in\T'
 }}
 p_{A}\lrsb{\lrb{A: A[\uu-\uu']=\zero}}
 \\
 &\leq
 \sum_{\uu\in\T\cap\T'}p_{A}\lrsb{\lrb{A: A\zero=\zero}}
 +
 \sum_{\bt\in\cH}
 \sum_{\substack{
 \uu\in\T
 \\
 \uu'\in\T'
 \\
 \bt(\uu-\uu')=\bt
 }}
 p_{A,\bt}
 \\
 &\leq
 \sum_{\uu\in\T\cap\T'}1
 +
 \sum_{\bt\in\hcH}
 \sum_{\substack{
 \uu\in\T
 \\
 \uu'\in\T'
 \\
 \bt(\uu-\uu')=\bt
 }}
 p_{A,\bt}
 +
 \sum_{\bt\in\cH\setminus\hcH}\sum_{\uu\in\T}|\C_{\bt}|p_{A,\bt}
 \\
 &\leq
 |\T\cap\T'|
 +
 \sum_{\bt\in\hcH}
 \sum_{\substack{
 \uu\in\T
 \\
 \uu'\in\T'
 \\
 \bt(\uu-\uu')\in\hcH
 }}
 \frac{\alpha_A(n)}{|\im\A|}
 + |\T|\sum_{\bt\in\cH\setminus\hcH}S(p_A,\bt)
 \\
 &\leq
 |\T\cap\T'|
 +
 \frac{|\T||\T'|\alpha_A(n)}
 {|\im\A|}
 +|\T|\beta_A(n)
 \\
 &=
 |\T\cap\T'|
 +
 \frac{|\T||\T'|\alpha_A(n)}{|\im\A|}
 +\min\{|\T|,|\T'|\}\beta_A(n),
\end{align*}
where the third inequality comes from (\ref{eq:alpha2})
and the last equality comes from the assumption $|\T|\leq|\T'|$.
Since $(\aalpha_A,\bbeta_A)$ satisfies (\ref{eq:alpha}) and
(\ref{eq:beta}),
we have the fact that $(\A,p_A)$ has an $(\aalpha_A,\bbeta_A)$-hash property.
\hfill\QED

{\it Proof of Lemma~\ref{thm:hash-linApB}:}
Without loss of generality, we can assume that
$|\T|\leq |\T'|$ and $\beta_A(n)\leq\beta_B(n)$.
Similar to the proof of Theorem~\ref{thm:hash-linA}, we have
\begin{align*}
 &\sum_{\substack{
 \uu\in\T
 \\
 \uu'\in\T'
 }}
 p_{AB}\lrsb{\lrb{(A,B): (A\uu,B\uu) = (A\uu',B\uu')}}
 \\*
 &=
 \sum_{\uu\in\T\cap\T'}
 p_{A}\lrsb{\lrb{A: A\zero=\zero}}p_{B}\lrsb{\lrb{B: B\zero=\zero}}
 +
 \sum_{\bt\in\cH}
 \sum_{\substack{
 \uu\in\T
 \\
 \uu'\in\T'
 \\
 \bt(\uu-\uu')=\bt
 }}
 p_{A,\bt}p_{B,\bt}
 \\
 &\leq
 |\T\cap\T'|
 +
 \sum_{\bt\in\hcH}
 \sum_{\substack{
 \uu\in\T
 \\
 \uu'\in\T'
 \\
 t(\uu-\uu')\in\hcH
 }}
 \frac{\alpha_A(n)\alpha_B(n)}{|\im\A||\im\B|}
 +
 |\T|\sum_{\bt\in\cH\setminus\hcH}S(p_A,t)
 \\
 &\leq
 |\T\cap\T'|
 +
 \frac{|\T||\T'|\alpha_A(n)\alpha_B(n)}{|\im\A||\im\B|}
 +|\T|\beta_A(n)
 \\
 &=
 |\T\cap\T'|
 +
 \frac{|\T||\T'|\alpha_{AB}(n)}{|\im\A||\im\B|}
 +\min\{|\T|,|\T'|\}\beta_{AB}(n),
\end{align*}
where the first inequality comes from the fact that $p_{B,\bt}\leq 1$.
Since $(\alpha_{AB}(n),\beta_{AB}(n))$ satisfies
(\ref{eq:alpha}) and (\ref{eq:beta}),
$(\A\times\B,p_{AB})$ has an $(\aalpha_{AB},\bbeta_{AB})$-hash property.
\hfill\QED

{\it Proof of Lemma~\ref{thm:hash-linAB}:}
Without loss of generality, we can assume that
$|\T|\leq |\T'|$.
Let $\cH_{\U}$ and $\cH_{\V}$ be
defined similarly to the definition
of $\cH$,
and
$\hcH_{\U}$ and $\hcH_{\V}$ be
defined similarly to the definition
of $\hcH$.
Similar to the proof of Theorem~\ref{thm:hash-linA}, we have
\begin{align*}
 &\sum_{\substack{
 (\uu,\vv)\in\T
 \\
 (\uu',\vv')\in\T'
 }}
 p_{AB}\lrsb{\lrb{
 (A,B): (A\uu,B\vv) = (A\uu',B\vv')
 }}
 \\*
 &=
 \sum_{(\uu,\vv)\in\T\cap\T'}
 p_{A}\lrsb{\lrb{A: A\zero =\zero}}p_{B}\lrsb{\lrb{B: B\zero=\zero}}
 +
 \sum_{\substack{
 \bt_{\U}\in\cH_{\U}
 \\
 \bt_{\V}\in\cH_{\V}
 }}
 \sum_{\substack{
 (\uu,\vv)\in\T
 \\
 (\uu',\vv')\in\T'
 \\
 \bt(\uu-\uu')=\bt_{\U}
 \\
 \bt(\vv-\vv')=\bt_{\V}
 }}
 p_{A,\bt_{\U}}p_{B,\bt_{\V}}
 \\
 &
 \leq
 \sum_{(\uu,\vv)\in\T\cap\T'}1
 +
 \sum_{\substack{
 \bt_{\U}\in\hcH_{\U}
 \\
 \bt_{\V}\in\hcH_{\V}
 }}
 \sum_{\substack{
 (\uu,\vv)\in\T
 \\
 (\uu',\vv')\in\T'
 \\
 \bt(\uu-\uu')=\bt_{\U}
 \\
 \bt(\vv-\vv')=\bt_{\V}
 }}
 \frac{\alpha_A(n)\alpha_B(n)}{|\im\A||\im\B|}
 +
 \sum_{\bt_{\U}\in\hcH_{\U}}
 \sum_{\bt_{\V}\in\cH_{\V}\setminus\hcH_{\V}}
 \sum_{(\uu,\vv)\in\T}
 \sum_{\substack{
 (\uu',\vv')\in\T'
 \\
 \bt(\uu-\uu')=\bt_{\U}
 \\
 \bt(\vv-\vv')=\bt_{\V}
 }}
 \frac{\alpha_A(n)p_{B,\bt_{\V}}}{|\im\A|}
 \\*
 &\quad
 +
 \sum_{\bt_{\U}\in\cH_{\U}\setminus\hcH_{\U}}
 \sum_{\bt_{\V}\in\cH_{\V}}
 \sum_{(\uu,\vv)\in\T}
 \sum_{\substack{
 (\uu',\vv')\in\T'
 \\
 \bt(\uu-\uu')=\bt_{\U}
 \\
 \bt(\vv-\vv')=\bt_{\V}
 }}
 \frac{\alpha_B(n)p_{A,\bt_{\U}}}{|\im\B|}
 +
 \sum_{\bt_{\U}\in\cH_{\U}\setminus\hcH_{\U}}
 \sum_{\bt_{\V}\in\cH_{\V}\setminus\hcH_{\V}}
 \sum_{(\uu,\vv)\in\T}
 \sum_{\substack{
 (\uu',\vv')\in\T'
 \\
 \bt(\uu-\uu')=\bt_{\U}
 \\
 \bt(\vv-\vv')=\bt_{\V}
 }}
 p_{A,\bt_{\U}}
 p_{B,\bt_{\V}}
 \\
 &
 \leq
 \sum_{(\uu,\vv)\in\T\cap\T'}1
 +
 \sum_{\substack{
 \bt_{\U}\in\hcH_{\U}
 \\
 \bt_{\V}\in\hcH_{\V}
 }}
 \sum_{\substack{
 (\uu,\vv)\in\T
 \\
 (\uu',\vv')\in\T'
 \\
 \bt(\uu-\uu')=\bt_{\U}
 \\
 \bt(\vv-\vv')=\bt_{\V}
 }}
 \frac{\alpha_A(n)\alpha_B(n)}{|\im\A||\im\B|}
 +
 \sum_{(\uu,\vv)\in\T}
 \frac{\alpha_A(n)}
 {|\im\A|}
 \sum_{\bt_{\V}\in\cH_{\V}\setminus\hcH_{\V}}
 |\C_{\bt_{\V}}|p_{B,\bt_{\V}}
 \\
 &\quad
 +
 \sum_{(\uu,\vv)\in\T}
 \frac{\alpha_B(n)}
 {|\im\B|}
 \sum_{\bt_{\U}\in\cH_{\U}\setminus\hcH_{\U}}
 |\C_{\bt_{\U}}|
 p_{A,\bt_{\U}}
 +
 \sum_{(\uu,\vv)\in\T}
 \sum_{\bt_{\U}\in\cH_{\U}\setminus\hcH_{\U}}
 |\C_{\bt_{\U}}|
 p_{A,\bt_{\U}}
 \sum_{\bt_{\V}\in\cH_{\V}\setminus\hcH_{\V}}
 |\C_{\bt_{\V}}|
 p_{B,\bt_{\V}}
 \\
 &\leq
 |\T\cap\T'|
 +
 \frac{|\T||\T'|\alpha_A(n)\alpha_B(n)}{|\im\A||\im\B|}
 +|\T|
 \lrB{
 \frac{\alpha_A(n)\beta_B(n)}{|\im\A|}+\frac{\alpha_B(n)\beta_A(n)}{|\im\B|}
 +\beta_A(n)\beta_B(n)
 }
 \\
 &=
 |\T\cap\T'|
 +
 \frac{|\T||\T'|\alpha_{AB}(n)}{|\im\A||\im\B|}
 +\min\{|\T|,|\T'|\}\beta'_{AB}(n),
\end{align*}
where the first inequality comes from the fact that $p_{A,\bt_{\U}}\leq 1$
and  $p_{B,\bt_{\V}}\leq 1$ .
Since $(\aalpha_{AB},\bbeta'_{AB})$ satisfies
(\ref{eq:alpha}) and (\ref{eq:beta}),
$(\A\times\B,p_{AB})$ has an $(\aalpha_{AB},\bbeta'_{AB})$-hash property.
\hfill\QED

\subsection{Proof of Theorem \ref{thm:qMacKay}}
\label{sec:alphabeta}

Throughout this section,
let $\U\equiv\GFq$, $l\equiv nR$,
and $p_A$ be an ensemble of $l\times n$ sparse matrices
as specified in Section~\ref{sec:linear},
where we omit dependence on the ensemble of $l$.
It should be noted that $l\to \infty$ by letting $n\to\infty$.

First, we prepare lemmas that provide the analytic expression
of $p_{A,\bt}$.
\begin{lem}
\label{lem:random-walk}
We consider a random-walk on $\GF(q^l)$ defined as the following.
Let $\cc_n\in\GFql$ be the position after $n$ steps.
At each unit step, the position is renewed in the following rule.
\begin{enumerate}
\item Choose $(i,u)\in\{1,\ldots,l\}\times\GFq$ uniformly at random.
\item Add $u$ to the $i$-th element of $\cc_n$.
\end{enumerate}
Then, the probability $P_n(\cc)$ 
of the position $\cc$ after $n$ steps
starting from the zero vector is described by
\begin{align}
 \begin{split}
 P_n(\cc)
 &=
 \frac 1{q^l}
 \sum_{k=0}^l
 \lrB{1-\frac {qk}{[q-1]l}}^n
 \sum_{k'=0}^{w(\cc)}
 \binom{w(\cc)}{k'}\binom{l-w(\cc)}{k-k'}
 (-1)^{k'}(q-1)^{k-k'}.
	\end{split}
 \label{eq:Pn}
\end{align}
\end{lem}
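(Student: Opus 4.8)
The plan is to read $P_n$ as the $n$-fold convolution power of the single-step distribution of the walk and to diagonalize convolution by the discrete Fourier transform on the finite abelian group underlying $\GFql$. Fix a $\GFq$-basis of $\GFql$ and identify $\GFql$ with $\U^l$, $\U\equiv\GFq$, through it; then the walk lives on the additive group $\U^l$, each step adds a uniformly chosen nonzero $u\in\U$ to a uniformly chosen coordinate $i\in\{1,\dots,l\}$, and $w(\cc)$ is the number of nonzero coordinates of $\cc\in\U^l$. Let $\mu$ denote this single-step distribution, i.e.\ the uniform distribution on the $[q-1]l$ vectors $u\boldsymbol{e}_i$ with $1\le i\le l$ and $u\neq 0$. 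Since the walk starts at $\zero$, the position after $n$ steps is a sum of $n$ i.i.d.\ steps, so $P_n=\mu^{*n}$; and since $u\mapsto -u$ permutes $\U\setminus\{0\}$, the distribution $\mu$ is symmetric, hence its Fourier transform is real-valued.

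Next I would set up characters. Fix a nontrivial additive character $\psi$ of $\U$; then $\sum_{u\in\U}\psi(au)$ equals $q$ when $a=0$ and $0$ otherwise, and the $q^l$ characters of $\U^l$ are the maps $\psi_{\ba}:\xx\mapsto\psi(a_1x_1+\cdots+a_lx_l)$ indexed by $\ba\in\U^l$. By the convolution theorem and Fourier inversion on $\U^l$,
\begin{equation*}
 P_n(\cc)=\frac1{q^l}\sum_{\ba\in\U^l}\lrB{\widehat\mu(\psi_{\ba})}^n\psi_{\ba}(\cc),
 \qquad\text{where}\qquad
 \widehat\mu(\psi_{\ba})\equiv\sum_{\uu\in\U^l}\mu(\uu)\overline{\psi_{\ba}(\uu)}.
\end{equation*}

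The remaining work is two short character-sum evaluations. First, $\psi_{\ba}(u\boldsymbol{e}_i)=\psi(a_iu)$, and $\sum_{u\neq 0}\overline{\psi(a_iu)}$ equals $q-1$ when $a_i=0$ and $-1$ when $a_i\neq0$; hence
\begin{equation*}
 \widehat\mu(\psi_{\ba})=\frac1{[q-1]l}\sum_{i=1}^l\sum_{u\neq 0}\overline{\psi(a_iu)}
 =\frac{[q-1][l-w(\ba)]-w(\ba)}{[q-1]l}=1-\frac{qw(\ba)}{[q-1]l},
\end{equation*}
so $\widehat\mu(\psi_{\ba})$ depends on $\ba$ only through $k\equiv w(\ba)$, which lets me group the $q^l$ frequencies into the $l+1$ weight classes and pull $\lrB{1-qk/([q-1]l)}^n$ outside the sum. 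Second, for a fixed $k$ I would evaluate $\sum_{\ba:\,w(\ba)=k}\psi_{\ba}(\cc)$ by first choosing the support $S$ of $\ba$ (a $k$-subset of $\{1,\dots,l\}$) and then summing over the nonzero entries $a_i$, $i\in S$: the sum factors as $\prod_{i\in S}\sum_{a\neq 0}\psi(ac_i)$, and each factor is $q-1$ when $c_i=0$ and $-1$ when $c_i\neq0$. Classifying $S$ by the number $k'$ of its elements lying in the support of $\cc$, and counting the $\binom{w(\cc)}{k'}\binom{l-w(\cc)}{k-k'}$ such subsets, gives
\begin{equation*}
 \sum_{\ba:\,w(\ba)=k}\psi_{\ba}(\cc)=\sum_{k'=0}^{w(\cc)}\binom{w(\cc)}{k'}\binom{l-w(\cc)}{k-k'}(-1)^{k'}(q-1)^{k-k'},
\end{equation*}
with the convention that a binomial coefficient is $0$ when its lower argument is negative or exceeds its upper argument. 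Substituting this into the grouped Fourier sum gives (\ref{eq:Pn}).

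I do not expect a genuine obstacle here: once one sees that the walk on the abelian group $\U^l$ is a convolution power, the statement is forced by Fourier analysis, and the only group-theoretic input is orthogonality of the additive characters of $\GFq$. The points that need care are the two character-sum computations — in particular verifying that $\widehat\mu(\psi_{\ba})$ is a function of $w(\ba)$ alone (this is what collapses the $q^l$ Fourier modes to the $l+1$ terms of (\ref{eq:Pn})) and keeping the support decomposition of $\ba$ consistent with the zero/nonzero pattern of $\cc$. As a check, for $q=2$, where a step flips a uniformly random coordinate, this recovers the eigenvalue $1-2k/l$ of the single-coordinate-flip walk on $\{0,1\}^l$.
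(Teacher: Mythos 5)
Your proposal is correct and follows essentially the same route as the paper: the paper likewise identifies $P_n$ as the $n$-fold convolution power $P_1^{*n}$ of the single-step distribution and invokes the discrete Fourier transform on the additive group of $\GFql$, merely writing $\dft P_n=[\dft P_1]^n$ and $P_n=\dfti[[\dft P_1]^n]$ without carrying out the two character-sum evaluations. You have simply supplied those omitted computations (the eigenvalue $1-qk/([q-1]l)$ and the grouped inverse transform), and correctly read the single-step distribution as uniform on the $[q-1]l$ vectors $u\boldsymbol{e}_i$ with $u\neq 0$, which is what the paper's $P_1(\cc)=1/([q-1]l)$ and the matrix-generation procedure both require.
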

\begin{proof}
Let $\hcC\subset\GFql$ be defined as
\[
 \hcC\equiv\lrb{
 (\underbrace{0,\ldots,0}_{[j-1]},c,
 \underbrace{0,\ldots,0}_{[l-j]})
 :
 \begin{aligned}
	&j\in\{1,\ldots,l\}
	\\
	&c\in\GFq
 \end{aligned}
 }.
\]
Then the transition rule of this random walk is equivalent to
the following.
\begin{enumerate}
 \item Choose $\hcc\in\hcC$ uniformly at random.
 \item Add $\cc$ to $\cc_n$, that is,
			\[
			 \cc_{n+1}\equiv\cc_n+\hcc.
			\]
\end{enumerate}
We have the following recursion formula for $P_n(\cc)$.
\begin{gather*}
 P_1(\cc)
 =
 \begin{cases}
	\frac 1{[q-1]l},&\text{if}\ \cc\in\hcC,
	\\
	0,&\text{otherwise}.
 \end{cases}
 \\
 P_{n+1}(\cc)
 =\sum_{\cc'\in\GFql}P_n(\cc')P_1(\cc-\cc')
 =[P_n\ast P_1](\cc),
\end{gather*}
where $P_n*P_{1}$ denotes the convolution.
We have (\ref{eq:Pn}) by using the following formulas
\begin{gather*}
 \dft P_n = [\dft P_{n-1}][\dft P_1] = \cdots = [\dft P_1]^n
 \\
 P_n = \dfti \dft P_n = \dfti[[\dft P_1]^n],
\end{gather*}
where $\dft$ is the discrete Fourier transform and $\dfti$ is its inverse.
\end{proof}

\begin{lem}
\label{lem:weight}
The probability
$p_A\lrsb{\lrb{A: A\uu = \zero}}$
depends on $\uu$ only through the type $\bt(\uu)$,
that is,
if $w(\bt)=w(\bt')$ then $p_{A,\bt}=p_{A,\bt'}$.
Furthermore,
 \begin{align*}
 p_{A,\bt}
 &=
 \frac 1{q^{l}}
 \sum_{k=0}^{l}
 \lrB{1-\frac {qk}{[q-1]l}}^{w(\bt)\tau}
 \binom{l}{k}
 (q-1)^{k}.
 \end{align*}
\end{lem}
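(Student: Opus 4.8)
The plan is to reduce the evaluation of $p_{A,\bt}=p_A(\{A : A\uu=\zero\})$, for $\uu\in\C_{\bt}$, to the random walk of Lemma~\ref{lem:random-walk}, and then to specialize formula (\ref{eq:Pn}) to $\cc=\zero$.

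First I would write $A\uu=\sum_{i=1}^{n}u_iA_i$, where $A_i\in\U^{l}$ is the $i$-th column of $A$. By the construction of the ensemble, $A_i=\sum_{t=1}^{\tau}a_{i,t}e_{j_{i,t}}$, where $e_j\in\U^{l}$ denotes the $j$-th unit vector and the pairs $(j_{i,t},a_{i,t})$ for $1\le i\le n$, $1\le t\le\tau$ are mutually independent and uniformly distributed on $\{1,\ldots,l\}\times[\GFq\setminus\{0\}]$ (this representation correctly accounts for several throws hitting the same coordinate of a column, since they simply add). Hence
\[
 A\uu=\sum_{i\,:\,u_i\neq 0}\ \sum_{t=1}^{\tau}(u_ia_{i,t})e_{j_{i,t}}.
\]
For each index $i$ with $u_i\neq 0$, multiplication by $u_i$ is a bijection of $\GFq\setminus\{0\}$, so $u_ia_{i,t}$ is again uniformly distributed on $\GFq\setminus\{0\}$ and independent of $j_{i,t}$. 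Therefore $A\uu$ is a sum of $w(\bt)\tau$ mutually independent random vectors, each uniformly distributed on $\{ae_j : a\in\GFq\setminus\{0\},\ 1\le j\le l\}$, because $\uu\in\C_{\bt}$ has exactly $w(\bt)$ nonzero coordinates. In particular the distribution of $A\uu$ — hence the value $p_A(\{A:A\uu=\zero\})$ — depends on $\uu$ only through $w(\bt(\uu))$, which establishes the first assertion (indeed the stronger statement that $p_{A,\bt}$ depends only on $w(\bt)$).

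Next I would note that the distribution just obtained is precisely that of $\cc_{w(\bt)\tau}$, the position after $w(\bt)\tau$ steps of the random walk of Lemma~\ref{lem:random-walk} started at $\zero$, since one step of that walk adds a uniformly chosen one-coordinate vector $ae_j$ with $a\neq 0$. Consequently $p_{A,\bt}=P_{w(\bt)\tau}(\zero)$. Substituting $\cc=\zero$, so that $w(\cc)=0$, into (\ref{eq:Pn}) collapses the inner sum over $k'$ to its $k'=0$ term, $\binom{0}{0}\binom{l}{k}(-1)^{0}(q-1)^{k}=\binom{l}{k}(q-1)^{k}$, and leaves
\[
 p_{A,\bt}=P_{w(\bt)\tau}(\zero)=\frac 1{q^{l}}\sum_{k=0}^{l}\lrB{1-\frac{qk}{[q-1]l}}^{w(\bt)\tau}\binom{l}{k}(q-1)^{k},
\]
which is the claimed identity.

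I expect the only genuinely delicate point to be the reduction in the second paragraph: one must check carefully that rescaling the $i$-th column by the nonzero scalar $u_i$ leaves the per-throw increment $u_ia_{i,t}e_{j_{i,t}}$ uniform on the nonzero one-coordinate vectors, and that these $w(\bt)\tau$ increments are mutually independent, so that $A\uu$ truly realizes $w(\bt)\tau$ steps of the walk. The partial or complete cancellations that can occur when several throws land on the same coordinate of a column need no separate treatment, being automatically absorbed by the group $\U^{l}$ on which the walk lives. The remaining steps — invoking Lemma~\ref{lem:random-walk} and specializing (\ref{eq:Pn}) at $\cc=\zero$ — are routine.
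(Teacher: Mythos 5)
Your argument is correct and follows the paper's approach: identify $A\uu$ (for $\uu\in\C_{\bt}$) with the position of the random walk of Lemma~\ref{lem:random-walk} after $w(\bt)\tau$ steps, then specialize (\ref{eq:Pn}) at $\cc=\zero$. Your in-line derivation that nonzero column scaling preserves the uniform per-throw distribution makes explicit what the paper delegates to a one-to-one-correspondence argument (citing \cite{EM05}) reducing $\uu$ to its support indicator $\uu^*$.
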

\begin{proof}
For $\uu\equiv(u_1,\ldots,u_n)$,
we define $\uu^*\equiv(u^*_1,\ldots,u^*_n)$ as
\[
 u^*_i
 \equiv
 \begin{cases}
	1,&\text{if}\ u_i\neq 0
	\\
	0,&\text{if}\ u_i= 0.
 \end{cases}
\]
Similarly as in the proof of \cite[Lemma 1]{EM05},
we can prove that
two sets $\{A: A\uu=\zero\}$ 
and $\{A:A\uu^*=\zero\}$ are in one-to-one correspondence.
Then we have
\[
 p_A\lrsb{\lrb{A: A\uu=\zero}}
 =p_A\lrsb{\lrb{A: A\uu^*=\zero}},
\]
that is,
$p_A\lrsb{\lrb{A: A\uu=\zero}}$
depends on $\uu$ only through $w(\bt)$.

Since $p_A\lrsb{\lrb{A: A\uu^*=\zero}}$ is equal to the probability
that the position 
of the random walk defined in Lemma \ref{lem:random-walk}
starts from the zero vector and returns to the zero vector
after $w(\uu^*)\tau$ steps, we have
\begin{align*}
 p_{A,\bt}
 &=
 P_{w(\bt)\tau}(\zero)
 \\
 &=
 \frac 1{q^{l}}
 \sum_{k=0}^{l}
 \lrB{1-\frac {qk}{[q-1]l}}^{w(\bt)\tau}
 \binom{l}{k}
 (q-1)^{k}.
 \end{align*}
\end{proof}

Next, we prove the following lemma.
\begin{lem}
If the column weight $\tau$ is even, then
\[
 \im\A
 =
 \begin{cases}
	\{\uu\in\U^l: w(\uu)\ \text{is even}\},&\text{if}\ q=2
	\\
	\U^l,&\text{if}\ q>2,
 \end{cases}
\]
 which implies
\[
 \frac{|\im\A|}{|\U|^{l}}
 =
 \begin{cases}
	2,&\text{if}\ q=2
	\\
	1,&\text{if}\ q>2.
 \end{cases}
\]
 \label{lem:imApUl}
\end{lem}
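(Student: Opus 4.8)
\emph{Proof plan.} The plan is to describe $\im\A$ directly from the generation procedure. Recall $\im\A=\bigcup_{A}\{A\uu:\uu\in\U^n\}$, the union being over all $l\times n$ matrices $A$ with $p_A(A)>0$, and that $A\uu=\sum_{i=1}^n u_i\ba_i$ is a linear combination over $\GFq$ of the columns $\ba_1,\ldots,\ba_n$ of $A$. By the generation procedure each column has the form $\ba_i=\sum_{k=1}^{\tau}a_ke_{j_k}$ with the pairs $(j_k,a_k)\in\{1,\ldots,l\}\times[\GFq\setminus\{0\}]$ arbitrary (repetitions allowed), where $e_j$ is the $j$th unit vector; moreover distinct columns are generated independently, so every tuple of such columns occurs with positive probability. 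Hence $\im\A$ equals the union, over all choices of $n$ valid columns, of the $\GFq$-linear span of those columns, and the task reduces to identifying which vectors of $\U^l$ can be assembled this way. I assume $\tau\ge 2$ (the case $\tau=0$ being degenerate and excluded in the construction).

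\emph{Case $q=2$.} First I would show $\im\A\subseteq\{\cc:w(\cc)\text{ is even}\}$: a column is obtained from $\zero$ by $\tau$ coordinate flips, so its Hamming weight is $\equiv\tau\equiv0\pmod2$, and since the even-weight vectors form a subgroup of $(\GF(2)^l,+)$, every sum of columns $A\uu$ has even weight. For the reverse inclusion I would observe that, for any $s\ne t$, the weight-two vector $e_s+e_t$ is itself a valid single column: throw once to row $s$, once to row $t$, and direct the remaining $\tau-2$ (an even number) throws all to row $s$, which flips row $s$ an additional even number of times and leaves $\ba=e_s+e_t$. Since such vectors span the even-weight subspace over $\GFq$, an arbitrary even-weight $\cc$ with $w(\cc)=2m$ is a sum of $m$ of them; taking $A$ to have these $m$ vectors among its columns (possible whenever $m\le n$) realizes $\cc=A\uu$. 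Thus $\im\A=\{\cc\in\U^l:w(\cc)\text{ is even}\}$.

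\emph{Case $q>2$.} Here I would prove $\im\A=\U^l$ by showing every $ce_i$ with $c\in\GFq\setminus\{0\}$ is a valid single column: pick $a_1=c-b$, $a_2=b$ with $b\in\GFq\setminus\{0,c\}$ (nonempty since $q\ge3$), so that $a_1,a_2\ne0$ and $a_1+a_2=c$, split the remaining $\tau-2$ throws into pairs $(d,-d)$ summing to $0$, and direct all $\tau$ throws to row $i$; the resulting column is $\lrsb{\sum_k a_k}e_i=ce_i$. Consequently, for any $\cc=(c_1,\ldots,c_l)$ a matrix whose columns include $\{c_ie_i:c_i\ne0\}$ has $\cc$ in its column span, so $\cc\in\im\A$, whence $\im\A=\U^l$. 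The cardinalities then follow by counting: when $q=2$ there are exactly $|\U|^l/2$ even-weight vectors, so $|\im\A|=|\U|^l/2$ (equivalently $|\U|^l/|\im\A|=2$), and when $q>2$, $|\im\A|=|\U|^l$.

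\emph{Main obstacle.} The argument is elementary; the only points that need care are making the ``throw-counting'' precise in the reverse inclusions — using evenness of $\tau$ so that the surplus $\tau-2$ throws cancel in pairs (or flip a fixed row an even number of times) without disturbing the target entries — and checking that the number of single columns one must combine to hit a given target never exceeds $n$, which holds since that number is at most $l$ and $l\le n$ throughout the coding applications.
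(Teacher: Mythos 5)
Your proof is correct, and it is actually more careful than the paper's own argument while following the same constructive spirit (describe $\im\A$ by exhibiting the columns that can be generated and their spans). Two differences are worth highlighting. For $q=2$, the paper proves only the inclusion $\im\A\subseteq\{\cc : w(\cc)\ \text{even}\}$ (each column has even weight, so every $A\uu$ does) and then infers $|\im\A|=|\U|^l/2$ from the fact that exactly half of $\U^l$ has even weight; the reverse inclusion is left implicit. You close this gap by exhibiting, for each $s\neq t$, the weight-two column $e_s+e_t$ (one throw to $s$, one to $t$, and $\tau-2$ throws to $s$ that cancel in pairs), and then spanning any even-weight target with at most $l/2\le n$ such columns. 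For $q>2$, the paper's construction sets a column to $c_i e_i$ by choosing the throw value $a\equiv c_i/2$, which silently assumes $2$ is invertible in $\GFq$ — this fails in fields of characteristic $2$ such as $\GF(4)$ or $\GF(8)$. Your decomposition $c=(c-b)+b$ with $b\in\GFq\setminus\{0,c\}$ (nonempty since $q\ge 3$), padded by $(\tau-2)/2$ cancelling pairs $(d,-d)$, sidesteps that issue and works for every field with $q\ge 3$, including characteristic $2$. The only assumptions you lean on beyond the paper's statement are $\tau\ge 2$ and $l\le n$, both of which hold in the ensembles the paper actually uses ($\tau$ grows as $\log n$, and $l=nR$ with $R<1$), so the argument is sound as written.
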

\begin{proof}
 Let $a_{i,j}$ be the $(i,j)$ element of $A$.

 First, we assume that $q=2$.
 Then it is sufficient to prove that
 $w(A\uu)$ is even for any possible $A$ and $\uu\in\U^l$
 because 
 \begin{align*}
	\sum_{\substack{
	\cc:\\
	w(\cc)\ \text{is even}
	}}
	1
	-
	\sum_{\substack{
	\cc:\\
	w(\cc)\ \text{is odd}
	}}
	1
	&=
	\sum_{w=0}^l
	\binom{n}{w}[-1]^w
	\\
	&=0
 \end{align*}
 which implies that $|\im\A|=|\U|^l/2$.
 Without loss of generality, we can assume that
 $w(\uu)=w$ and $\uu=(1,\ldots,1,0,\ldots,0)$.
 Let $\ba_i\equiv (a_{i,1},\ldots,a_{i,w(\uu)})$.
 Since every column vecotor has an even weight,
 we have the fact that
 $\sum_{i=1}^{w(\uu)}w(\ba_i)$ is even.
 In addition, we have
 \begin{align*}
	\sum_{\substack{
	i:
	\\
	w(\ba_i)\ \text{is odd}
	}}
	w(\ba_i)
	&=
	\sum_{i=1}^{w(\uu)}
	w(\ba_i)
	-
	\sum_{\substack{
	i:
	\\
	w(\ba_i)\ \text{is even}
	}}
	w(\ba_i).
 \end{align*}
 This implies that the number of odd-weight vectors $\ba_i$
 is even because
 the right hand side of the above equality is even.
 Since $w(A\uu)$ is a number of odd-weight vectors $\ba_i$,
 we have the fact that $w(A\uu)$ is even for any $A$ and $\uu\in\U^l$.

 Next, we assume that $q>2$.
 It is sufficient to prove that,
 for any $\cc=(c_1,\ldots,c_l)\in\U^l$,
 there is $A$ generated by the scheme and $\uu\in\U^l$
 such that $A\uu=\cc$.
 This fact implies that $\im\A=\U^l$.
 Let $\uu=(1,\ldots,1)$.
 It is possible to generate $A$ satisfying
 \[
 a_{i,j}=
 \begin{cases}
	2a,&\text{if}\ i=j
	\\
	0,&\text{if}\ i\neq j,
 \end{cases}
 \]
 where $a\in\GFq$ is arbitrary.
 Since $q>3$, we have $A\uu=\cc$ by letting $a\equiv c_i/2$.
\end{proof}

Finally we prove that
$(\aalpha_A,\bbeta_A)$ satisfies (\ref{eq:alpha}) and (\ref{eq:beta}).
We define the function $h$ as
\[
 h(\theta)\equiv -\theta\log_e(\theta)-[1-\theta]\log_e(1-\theta),
\]
where $e$ is the base of the natural logarithm.
We use the following lemmas to derive the asymptotic behavior
of $(\aalpha_A,\bbeta_A)$.

\begin{lem}
\label{lem:theta}
Let $a$ be a real number. Then
\[
	\max_{0\leq\theta\leq 1}\lrB{h(\theta)+a\theta}
 \leq \log_e\lrsb{1+e^a}.
\]

If $a\leq -\log_e(l-1)$, then
 \[
	\max_{1/l\leq\theta\leq 1}\lrB{h(\theta)+a\theta}
 \leq h\lrsb{\frac 1l}+\frac a l.
 \]
\end{lem}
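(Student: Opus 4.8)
The plan is to treat both inequalities as one-variable optimization problems for the function $g(\theta)\equiv h(\theta)+a\theta$ on a subinterval of $[0,1]$, exploiting the strict concavity of $h$. First I would record the derivatives of $h$: a direct computation gives $h'(\theta)=\log_e\frac{1-\theta}{\theta}$ and $h''(\theta)=-\frac1\theta-\frac1{1-\theta}<0$ for $\theta\in(0,1)$, so $g$ is strictly concave on $(0,1)$ and therefore attains its maximum over any subinterval either at the unique stationary point (when that point lies in the interval) or at an endpoint.

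For the first bound, setting $g'(\theta)=\log_e\frac{1-\theta}{\theta}+a=0$ yields the stationary point $\theta^*\equiv\frac{e^a}{1+e^a}$, which lies in $(0,1)$ for every real $a$, hence is the global maximizer. Substituting it back, using $\log_e\theta^*=a-\log_e(1+e^a)$ and $\log_e(1-\theta^*)=-\log_e(1+e^a)$, the terms linear in $a$ cancel and one gets $g(\theta^*)=\log_e(1+e^a)$; since the endpoint values $g(0)=0$ and $g(1)=a$ are both no larger than $\log_e(1+e^a)$, the first inequality follows (in fact with equality).

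For the second bound I would observe that the hypothesis $a\leq-\log_e(l-1)$ is equivalent to $e^{-a}\geq l-1$, so $\theta^*=\frac1{1+e^{-a}}\leq\frac1l$. Because $g'$ is strictly decreasing and vanishes at $\theta^*$, we have $g'(\theta)\leq 0$ for $\theta\geq\theta^*$, so $g$ is non-increasing on the interval $[1/l,1]$, which is contained in $[\theta^*,1]$. Hence $\max_{1/l\leq\theta\leq 1}g(\theta)=g(1/l)=h(1/l)+\frac al$, which is exactly the asserted bound.

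The argument is elementary, so there is no serious obstacle; the only points needing care are the algebraic verification that the affine-in-$a$ terms cancel when $\theta^*$ is plugged into $g$, which is what produces the clean value $\log_e(1+e^a)$, and the exact translation of the hypothesis on $a$ into $\theta^*\leq 1/l$, which is precisely what forces the constrained maximizer to the left endpoint $1/l$. One should also note the implicit standing assumption $l\geq 2$ (so that $1/l\leq 1$ and $\log_e(l-1)\geq 0$ are meaningful); in the degenerate case $l=1$ the constraint interval $[1/l,1]$ is the single point $\{1\}$ and both sides reduce to $a$, so the statement holds trivially.
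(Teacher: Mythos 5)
Your proof is correct. The paper states Lemma~\ref{lem:theta} without giving any proof at all, so there is nothing in the paper to compare against; your argument supplies the missing justification. The approach you take -- compute $h'(\theta)=\log_e\frac{1-\theta}{\theta}$, observe that $g(\theta)=h(\theta)+a\theta$ is strictly concave with unconstrained maximizer $\theta^*=\frac{e^a}{1+e^a}$, evaluate $g(\theta^*)=\log_e(1+e^a)$ for the first bound, and for the second bound translate the hypothesis $a\leq-\log_e(l-1)$ into $\theta^*\leq 1/l$ so that $g$ is non-increasing on $[1/l,1]$ and the constrained maximum is attained at the left endpoint -- is the standard and essentially unique natural route. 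Your remark on the degenerate case $l=1$ (where $\log_e(l-1)=-\infty$ makes the hypothesis vacuous and the interval collapses to $\{1\}$) is a reasonable caveat, though in the paper $l=nR\to\infty$, so only $l\geq 2$ ever occurs in applications.
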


\begin{lem}
 \label{lem:h}
 \[
	lh\lrsb{\frac 1l}\leq 1+\log_e l.
 \]
\end{lem}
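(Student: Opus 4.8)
The plan is to prove the bound by expanding $lh(1/l)$ explicitly from the definition $h(\theta)=-\theta\log_e\theta-(1-\theta)\log_e(1-\theta)$ and then applying one elementary inequality. First I would dispose of the trivial case $l=1$ (there $h(1)=0\le 1=1+\log_e 1$). For $l\ge 2$, substituting $\theta=1/l$ gives $h(1/l)=-\tfrac1l\log_e\tfrac1l-(1-\tfrac1l)\log_e(1-\tfrac1l)=\tfrac1l\log_e l+\tfrac{l-1}l\log_e\tfrac l{l-1}$, and hence $lh(1/l)=\log_e l+(l-1)\log_e\tfrac l{l-1}$. The only point needing care is the sign: the term $-l(1-\tfrac1l)\log_e(1-\tfrac1l)$ is positive because $\log_e(1-\tfrac1l)<0$, and it equals $(l-1)\log_e\tfrac l{l-1}$.

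Next I would bound the second summand using $\log_e(1+x)\le x$ (valid for $x>-1$): since $\tfrac l{l-1}=1+\tfrac1{l-1}$, we get $(l-1)\log_e\tfrac l{l-1}=(l-1)\log_e\!\big(1+\tfrac1{l-1}\big)\le(l-1)\cdot\tfrac1{l-1}=1$. Combining this with the expansion above yields $lh(1/l)\le\log_e l+1$, as claimed. There is no genuine obstacle here beyond carrying out the one-line algebra correctly; note that the bound is essentially tight, the additive $1$ arising exactly from the first-order estimate of $\log_e\!\big(1+\tfrac1{l-1}\big)$.
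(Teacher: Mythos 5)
Your proof is correct. Note that the paper actually states Lemma~\ref{lem:h} without proof, so there is no paper argument to compare against; your derivation—expand $lh(1/l)=\log_e l+(l-1)\log_e\frac{l}{l-1}$ and bound the second term by $1$ via $\log_e(1+x)\le x$—is the standard, essentially canonical route, and the $l=1$ case is handled cleanly by the convention $0\log_e 0=0$. No gaps.
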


\begin{lem}
\label{lem:k2}
 \begin{equation*}
	\sum_{k=1}^{l-1}
	\left|1-\frac {2k}{l}\right|^{w\tau}
	\binom{l}{k}
	\leq
	2
	\sum_{k=1}^{\lrfloor{\frac l2}}
	\exp\lrsb{-\frac {2kw\tau}{l}}
	\binom{l}{k}
 \end{equation*}
\end{lem}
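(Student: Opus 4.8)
The plan is to prove Lemma~\ref{lem:k2} by a symmetrization step followed by the elementary bound $1-x\le e^{-x}$. Set $a_k\equiv\left|1-\frac{2k}{l}\right|^{w\tau}\binom{l}{k}$ for $1\le k\le l-1$, so that the left-hand side is $\sum_{k=1}^{l-1}a_k$. The first observation is that $a_k=a_{l-k}$ and $a_k\ge 0$: indeed $\binom{l}{k}=\binom{l}{l-k}$, while $\left|1-\frac{2(l-k)}{l}\right|=\left|\frac{2k}{l}-1\right|=\left|1-\frac{2k}{l}\right|$.

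Using this symmetry I would pair the index $k$ with $l-k$. When $l$ is odd, the indices $1,\dots,l-1$ split into the pairs $\{k,l-k\}$ with $1\le k\le\lrfloor{\frac l2}$, each pair contributing $2a_k$, so $\sum_{k=1}^{l-1}a_k=2\sum_{k=1}^{\lrfloor{\frac l2}}a_k$. When $l$ is even, the same pairing covers $1\le k\le\frac l2-1$ and leaves the single middle index $k=\frac l2$; since $a_{l/2}\ge 0$ we get $\sum_{k=1}^{l-1}a_k=2\sum_{k=1}^{\frac l2-1}a_k+a_{l/2}\le 2\sum_{k=1}^{\frac l2}a_k=2\sum_{k=1}^{\lrfloor{\frac l2}}a_k$. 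In either case
\[
 \sum_{k=1}^{l-1}\left|1-\frac{2k}{l}\right|^{w\tau}\binom{l}{k}
 \le 2\sum_{k=1}^{\lrfloor{\frac l2}}\left|1-\frac{2k}{l}\right|^{w\tau}\binom{l}{k}.
\]

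It then remains to bound each summand on the right. For $1\le k\le\lrfloor{\frac l2}$ we have $0\le\frac{2k}{l}\le 1$, hence $\left|1-\frac{2k}{l}\right|=1-\frac{2k}{l}\ge 0$, and the inequality $1-x\le e^{-x}$ applied with $x=\frac{2k}{l}$, together with $w\tau\ge 0$, gives $\left(1-\frac{2k}{l}\right)^{w\tau}\le e^{-\frac{2kw\tau}{l}}=\exp\lrsb{-\frac{2kw\tau}{l}}$. Substituting into the displayed bound yields the claim.

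I do not expect a genuine obstacle here; the only points needing a little care are the treatment of the middle index $k=\frac l2$ when $l$ is even (handled above by $a_{l/2}\ge 0$, which makes the argument work even without assuming $w\tau>0$), and keeping the base in $\exp(\cdot)$ consistent with the natural-logarithm convention used for $h$, so that $1-x\le e^{-x}$ is precisely the inequality invoked.
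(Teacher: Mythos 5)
Your proof is correct and follows essentially the same route as the paper: symmetrize using $\binom{l}{k}=\binom{l}{l-k}$ together with $\left|1-\tfrac{2k}{l}\right|=\left|1-\tfrac{2(l-k)}{l}\right|$ to reduce to indices $k\le\lrfloor{\tfrac l2}$, then bound each term with $1-x\le e^{-x}$. The only small refinement is your treatment of the even-$l$ middle index via $a_{l/2}\ge 0$, which turns the first step into an inequality and avoids the paper's tacit use of $a_{l/2}=0$ (which requires $w\tau>0$).
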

\begin{proof}
Since
\[
\left|1-\frac{2k}l\right|^{w\tau}\binom{l}{k}
=
\left|1-\frac{2[l-k]}l\right|^{w\tau}\binom{l}{1-k},
\]
then we have
\begin{align*}
	\sum_{k=1}^{l-1}
	\left|1-\frac {2k}{l}\right|^{w\tau}
	\binom{l}{k}
 &=
	2
	\sum_{k=1}^{\lrfloor{\frac l2}}
	\left|1-\frac {2k}{l}\right|^{w\tau}
	\binom{l}{k}
 \\
 &\leq
 2
 \sum_{k=1}^{\lrfloor{\frac l2}}
 \exp\lrsb{-\frac {2kw\tau}{l}}
 \binom{l}{k},
\end{align*}
where the inequality comes from the fact that $2k/l\leq 1$.
\end{proof}

\begin{lem}
\label{lem:k}
 \begin{equation}
	\sum_{k=1}^{l}
	\left|1-\frac {qk}{[q-1]l}\right|^{w\tau}
	\binom{l}{k}
	[q-1]^{k}
	\leq
	\sum_{k=1}^{\lrfloor{\frac{[q-1]l}q}}
	\exp\lrsb{-\frac {qkw\tau}{[q-1]l}}
	\binom{l}{k}
	[q-1]^{k}
	+
	\sum_{k=\lrceil{\frac{[q-1]l}q}}^l
	\binom{l}{k}[q-1]^{k-w\tau}.
	\label{eq:k}
 \end{equation}
\end{lem}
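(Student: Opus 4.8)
\textbf{Proof plan for Lemma~\ref{lem:k}.} The plan is to split the left-hand sum at the index $k_0\equiv[q-1]l/q$, which is exactly the point where the quantity $qk/([q-1]l)$ crosses the value $1$. First I would treat the ``small'' range $1\leq k\leq\lrfloor{\frac{[q-1]l}q}$. There $x\equiv qk/([q-1]l)\in(0,1]$, so $1-x\geq 0$, and the elementary inequality $1-x\leq e^{-x}$ gives
\[
 \left|1-\frac{qk}{[q-1]l}\right|^{w\tau}
 =\lrsb{1-\frac{qk}{[q-1]l}}^{w\tau}
 \leq\exp\lrsb{-\frac{qkw\tau}{[q-1]l}}.
\]
Multiplying by the nonnegative factor $\binom{l}{k}[q-1]^{k}$ and summing over this range produces the first sum on the right-hand side.

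Second I would treat the ``large'' range $\lrceil{\frac{[q-1]l}q}\leq k\leq l$. Here $qk/([q-1]l)\geq 1$, hence $|1-qk/([q-1]l)|=qk/([q-1]l)-1$, which is increasing in $k$ and is therefore maximized at $k=l$, where it equals $q/(q-1)-1=1/(q-1)$. Consequently $|1-qk/([q-1]l)|^{w\tau}\leq(q-1)^{-w\tau}$ throughout this range, so each term is bounded by $\binom{l}{k}[q-1]^{k}(q-1)^{-w\tau}=\binom{l}{k}[q-1]^{k-w\tau}$, whose sum over the range is the second sum on the right-hand side. Adding the two bounds yields the claimed inequality.

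The only point requiring a word of care — and it is the closest thing to an obstacle here, though an entirely routine one — is the boundary index: when $[q-1]l/q$ is an integer, the term $k=[q-1]l/q$ is counted once on the left but appears in \emph{both} sums on the right (in the first with coefficient $e^{-w\tau}$ and in the second with coefficient $(q-1)^{-w\tau}$, and in fact the true value of $|1-qk/([q-1]l)|^{w\tau}$ at this index is $0$). Since every summand is nonnegative, this double counting only enlarges the right-hand side, so the stated inequality still holds; in all other cases $\lrfloor{\frac{[q-1]l}q}<\lrceil{\frac{[q-1]l}q}$ and the two index ranges partition $\{1,\dots,l\}$ exactly. No further estimates are needed.
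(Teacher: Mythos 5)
Your proof is correct and follows essentially the same split-at-$[q-1]l/q$ argument as the paper, using $1-x\le e^{-x}$ on the small-$k$ range and the bound $|1-qk/([q-1]l)|\le 1/(q-1)$ on the large-$k$ range (the paper shows the latter by the algebraic identity $|1-qk/([q-1]l)|=(q[k-l]+l)/([q-1]l)\le l/([q-1]l)$, which is the same bound you obtain via monotonicity). Your remark on the possible double-counting at the boundary index is a valid observation that the paper leaves implicit; since all summands are nonnegative, the inequality holds either way.
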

\begin{proof}
We can show the lemma from the fact that
\[
 \frac{qk}{[q-1]l}\leq 1
\]
when $k\leq [q-1]l/q$
and
\begin{align*}
 \left|1-\frac{qk}{[q-1]l}\right|
 &=
 \frac{q[k-l]+l}{[q-1]l}
 \\
 &\leq
 \frac{l}{[q-1]l}
 \\
 &=
 [q-1]^{-1}
\end{align*}
when $[q-1]l/q<k\leq l$.
\end{proof}

Let $\tau$ be the parameter given in the procedure used for generating
a sparse matrix.
We assume that $\tau$ and  $\xi$ satisfy
 \begin{gather}
	\tau \equiv 2\lrceil{\log_e \frac{l^2}R}
	\label{eq:tau}
	\\
	\frac{h\lrsb{\xi R}}R+\xi \log_e(q-1)< \frac 13.
	\label{eq:eta}
 \end{gather}
Then we have
 \begin{align}
	\xi\tau&\geq 3\log_e l
	\label{eq:eta-tau}
 \end{align}
for all sufficiently large $l$.

Now we are in position to prove the following two lemmas
which provides the proof of Theorem \ref{thm:qMacKay}.

\begin{lem}
 \[
 \limn\alpha_A(n)=1.
 \]
 \label{lem:alpha-lim}
\end{lem}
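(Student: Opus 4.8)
The plan is to expand $\alpha_A(n)$ via the closed form of $p_{A,\bt}$ and to show that everything beyond a single constant term vanishes, uniformly over $\hcH$. By (\ref{eq:alpha2}), $\alpha_A(n)=|\im\A|\max_{\bt\in\hcH}p_{A,\bt}$, and Lemma~\ref{lem:weight} gives $q^lp_{A,\bt}=\sum_{k=0}^{l}\left[1-\frac{qk}{[q-1]l}\right]^{w(\bt)\tau}\binom{l}{k}[q-1]^k$; since $\tau$, hence $w(\bt)\tau$, is even, the bracket may be replaced throughout by its absolute value. The $k=0$ term equals exactly $1$, and when $q=2$ the $k=l$ term also equals $1$. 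Using $|\im\A|/q^l=1$ for $q>2$ and $|\im\A|/q^l=\tfrac12$ for $q=2$ (Lemma~\ref{lem:imApUl}), this rearranges to
\[
 \alpha_A(n)=1+
 \begin{cases}
 \displaystyle\max_{\bt\in\hcH}\ \sum_{k=1}^{l}\left|1-\frac{qk}{[q-1]l}\right|^{w(\bt)\tau}\binom{l}{k}[q-1]^k,&q>2,\\[2.5ex]
 \displaystyle\frac12\max_{\bt\in\hcH}\ \sum_{k=1}^{l-1}\left|1-\frac{2k}{l}\right|^{w(\bt)\tau}\binom{l}{k},&q=2,
 \end{cases}
\]
so it suffices to prove the displayed maxima tend to $0$ as $n\to\infty$ (equivalently $l\to\infty$). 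Each summand depends on $\bt$ only through $w(\bt)$ and, since all bases lie in $[0,1]$ for $1\le k\le l$, is non-increasing in $w(\bt)$; hence the maximum over $\bt\in\hcH$, i.e.\ over $w(\bt)>\xi l$, is at most the same sum with the exponent $w(\bt)\tau$ replaced by the real number $\xi l\tau$. This reduction to the ``worst type'' is the first step.

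Next I would estimate the resulting sum by combining the binomial--entropy bound $\binom{l}{k}\le e^{lh(k/l)}$ with the geometric decay. For $q=2$, Lemma~\ref{lem:k2} (applied with its $w$ taken to be $\xi l$) turns the sum into $2\sum_{k=1}^{\lrfloor{\frac l2}}e^{-2k\xi\tau}\binom{l}{k}\le 2\sum_{k=1}^{\lrfloor{\frac l2}}e^{l[h(\theta)-2\xi\tau\theta]}$ with $\theta=k/l$. By (\ref{eq:eta-tau}) we have $\xi\tau\ge 3\log_e l$, so $-2\xi\tau\le-\log_e(l-1)$ for all large $l$, and the second clause of Lemma~\ref{lem:theta} yields $h(\theta)-2\xi\tau\theta\le h(1/l)-2\xi\tau/l$ for every $\theta\in[1/l,1]$. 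Feeding in Lemma~\ref{lem:h} ($lh(1/l)\le 1+\log_e l$) and that there are at most $l$ terms, the sum is at most $2l\,e^{lh(1/l)-2\xi\tau}\le 2e\,l^2e^{-2\xi\tau}\le 2e\,l^{-4}\to 0$.

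For $q>2$ I would split the sum at $k=[q-1]l/q$ by Lemma~\ref{lem:k}. The low-$k$ part $\sum_{k\le[q-1]l/q}e^{-\frac{q}{q-1}k\xi\tau}\binom{l}{k}[q-1]^k$ is handled exactly as above, now using $\binom{l}{k}[q-1]^k\le e^{l[h(\theta)+\theta\log_e(q-1)]}$ and the second clause of Lemma~\ref{lem:theta} with $a=\log_e(q-1)-\frac{q}{q-1}\xi\tau$, which satisfies $a\le-\log_e(l-1)$ for large $l$ because $\frac{q}{q-1}\xi\tau\ge\xi\tau\ge 3\log_e l$; this gives a bound of order $l^2e^{-\frac{q}{q-1}\xi\tau}\le l^{-1}$ up to a constant. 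The high-$k$ part $\sum_{k\ge[q-1]l/q}\binom{l}{k}[q-1]^{k-\xi l\tau}=[q-1]^{-\xi l\tau}\sum_{k\ge[q-1]l/q}\binom{l}{k}[q-1]^k$ is at most $[q-1]^{-\xi l\tau}q^l=e^{l[\log_e q-\xi\tau\log_e(q-1)]}$ by the binomial theorem; since $q\ge 3$ forces $\log_e(q-1)>0$ while $\xi\tau\to\infty$, this is eventually at most $e^{-l}\to 0$. Hence both pieces vanish and $\alpha_A(n)\to 1$ in every case.

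The routine but somewhat delicate core of the argument is the exponential balancing in the low-$k$ regime: one needs the decay rate $\xi\tau$ (times a constant) to beat the entropy exponent $lh(\theta)$ \emph{uniformly} in $\theta\in[1/l,1]$, which is precisely what the second clause of Lemma~\ref{lem:theta}, driven by the choice of $\tau$ and $\xi$ in (\ref{eq:tau})--(\ref{eq:eta}) that produces (\ref{eq:eta-tau}), is engineered to deliver. The only genuinely separate consideration is the large-$k$ tail when $q>2$, where $\left|1-\frac{qk}{[q-1]l}\right|$ no longer decays and one must retreat to the crude bound $q^l[q-1]^{-\xi l\tau}$ isolated in Lemma~\ref{lem:k}; everything else is bookkeeping --- peeling off the $k=0$ (and, for $q=2$, the $k=l$) term to expose the constant $1$, and using monotonicity in $w(\bt)$ to reduce to the worst type.
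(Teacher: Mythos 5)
Your proof is correct and follows essentially the same route as the paper's: peel off the $k=0$ term (and, for $q=2$, the $k=l$ term) to expose the constant $1$, reduce to the worst weight $w>\xi l$, and show the remainder vanishes by combining Lemmas~\ref{lem:theta}, \ref{lem:h}, \ref{lem:k2}, \ref{lem:k} with the entropy bound on binomial coefficients and the choice of $\tau,\xi$ in (\ref{eq:tau})--(\ref{eq:eta}). The one cosmetic simplification is your high-$k$ tail for $q>2$, where you bound $\sum_{k\ge(q-1)l/q}\binom lk(q-1)^k$ directly by $q^l$ via the binomial theorem rather than the paper's $l\binom{l}{\lceil(q-1)l/q\rceil}(q-1)^l$; the resulting estimate lands in the same place.
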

\begin{proof}
In the following, we first show that
\begin{equation}
 \lim_{l\to\infty}\sum_{k=1}^{\lrfloor{\frac{[q-1]l}q}}
 \exp\lrsb{-\frac {qkw\tau}{[q-1]l}}\binom{l}{k}[q-1]^{k}
 =0
 \label{eq:pw-small}
\end{equation}
for all $q\geq 2$ and $w>\xi l$.
By assuming $w>\xi l$,
we have
\begin{align*}
 \sum_{k=1}^{\lrfloor{\frac{[q-1]l}q}}
 \exp\lrsb{-\frac {qkw\tau}{[q-1]l}}
 \binom{l}{k}[q-1]^{k}
 &\leq
 l
 \max_{1/l\leq\theta\leq 1}
 \exp\lrsb{-w\tau\theta}
 \exp\lrsb{lh(\theta)}
 [q-1]^{l\theta}
 \notag
 \\*
 &\leq
 l
 \max_{1/l\leq\theta\leq 1}
 \exp\lrsb{
 -\xi l\tau\theta+lh(\theta)
 +l\theta\log_e(q-1)
 }
 \notag
 \\*
 &\leq
 l
 \max_{1/l\leq\theta\leq 1}
 \exp\lrsb{
 l
 \lrB{
 h(\theta)
 +\lrB{
 \log_e(q-1)-\xi\tau
 }\theta
 }}
 \notag
 \\
 &\leq
 l
 \exp\lrsb{
 l
 \lrB{
 h\lrsb{\frac 1l}+\frac{\log_e(q-1)-\xi\tau}l
 }}
 \notag
 \\
 &\leq
 \exp\lrsb{
 1+\log_e l+\log_e(q-1)-\xi\tau+\log_e l
 }
 \notag
 \\
 &\leq
 \exp\lrsb{
 -\xi\tau+2\log_e l+\log_e [q-1]e
 },
\end{align*}
where the fifth inequality comes from 
(\ref{eq:eta-tau}) and Lemma \ref{lem:theta},
and the sixth inequality comes from Lemma \ref{lem:h}.
Hence we have (\ref{eq:pw-small})
for all $q\geq 2$ and $w>\xi l$.

Next, we show the lemma by assumimg that $q=2$.
From Lemma \ref{lem:k2}, (\ref{eq:pw-small}),
and the fact that $w\tau$ is even, we have
\begin{align*}
 \lim_{l\to\infty}
 \max_{w> \xi l}
 \sum_{k=1}^{l-1}
 \lrB{1-\frac {2k}{l}}^{w\tau}
 \binom{l}{k}
 &
 =
 \lim_{l\to\infty}
 \max_{w> \xi l}
 \sum_{k=1}^{l-1}
 \left|1-\frac {2k}{l}\right|^{w\tau}
 \binom{l}{k}
 \\
 &
 \leq
 2\lim_{l\to\infty}
 \max_{w> \xi l}
 \sum_{k=1}^{\lrfloor{\frac l2}}
 \exp\lrsb{-\frac {2kw\tau}{l}}
 \binom{l}{k}
 \\
 &=0.
\end{align*}
From (\ref{eq:alpha2}) and Lemma \ref{lem:imApUl}, we have
\begin{align*}
 \limn\alpha_A(n)
 &=
 \limn\frac 12\max_{w>l\xi}
 \sum_{k=0}^{l}
 \lrB{1-\frac {2k}{l}}^{w\tau}
 \binom{l}{k}
 \\
 &=
 1+
 \frac 12\limn\max_{w> l\xi}
 \sum_{k=1}^{l-1}
 \lrB{1-\frac {2k}{l}}^{w\tau}
 \binom{l}{k}
 \\
 &=
 1.
\end{align*}

Finally, we show the lemma by assuming that $q>2$.
From (\ref{eq:pw-small}),
the first term on the right hand side of (\ref{eq:k})
vanishes by letting $l\to\infty$.
Since $[q-1]l/q\geq 1/2$, then
the second term on the right hand side of (\ref{eq:k})
is evaluated by
\begin{align*}
 \sum_{k=\lrceil{\frac{[q-1]l}q}}^l
 \binom{l}{k}[q-1]^{k-w\tau}
 &\leq
 l\binom{l}{\lrceil{[q-1]l/q}}[q-1]^{l-w\tau}
 \\
 &\leq
 l\exp\lrsb{lh\lrsb{\frac{q-1}q}}[q-1]^{l-w\tau}
 \\
 &\leq
 l\exp\lrsb{l\log_e eq -w\tau\log_e(q-1)}
  \\
 &<
 \exp\lrsb{-l[\xi\tau\log_e(q-1)-\log_e eq-\log_e l]},
\end{align*}
where the third inequality comes from $h(\theta)\leq 1$.
From $q>2$ and (\ref{eq:eta-tau}), the second term on the right hand side of
(\ref{eq:k}) vanishes by letting $l\to\infty$.
From the above two observations
and the fact that $w\tau$ is even, we have
\begin{align*}
 \lim_{l\to\infty}
	\max_{w> \xi l}
	\sum_{k=1}^{l}
	\lrB{1-\frac {qk}{[q-1]l}}^{w\tau}
	\binom{l}{k}
 [q-1]^{k}
 &
 =
 \lim_{l\to\infty}
	\max_{w> \xi l}
	\sum_{k=1}^{l}
 \left|1-\frac {qk}{[q-1]l}\right|^{w\tau}
	\binom{l}{k}
 [q-1]^{k}
 \\
 &=0.
 \label{eq:alpha-max}
\end{align*}
From (\ref{eq:alpha2}) and Lemma \ref{lem:imApUl}, we have
\begin{align*}
 \limn\alpha_A(n)
 &=
 \limn\max_{w>l\xi}
 \sum_{k=0}^{l}
 \lrB{1-\frac {qk}{[q-1]l}}^{w\tau}
 \binom{l}{k}
 [q-1]^{k}
 \\
 &=
 1+
 \limn\max_{w> l\xi}
 \sum_{k=1}^{l}
 \lrB{1-\frac {qk}{[q-1]l}}^{w\tau}
 \binom{l}{k}
 [q-1]^{k}
 \\
 &=
 1.
\end{align*}
\end{proof}

\begin{lem}
\[
\limn \beta_A(n)=0.
\]
 \label{lem:beta-lim}
\end{lem}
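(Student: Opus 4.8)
The plan is to reduce $\beta_A(n)$ to a sum over weights and bound it term by term, paralleling the proof of Lemma~\ref{lem:alpha-lim}. First, by (\ref{eq:beta2}) and (\ref{eq:hcH}) we have $\beta_A(n)=\sum_{\bt\in\cH\setminus\hcH}|\C_{\bt}|p_{A,\bt}$, and $\cH\setminus\hcH$ consists of the types with $1\le w(\bt)\le\lfloor\xi l\rfloor$. Since by Lemma~\ref{lem:weight} the quantity $p_{A,\bt}$ depends on $\bt$ only through $w(\bt)$ and equals $P_{w(\bt)\tau}(\zero)$, and since $\sum_{\bt:\,w(\bt)=w}|\C_{\bt}|=\binom{n}{w}(q-1)^w$ is the number of length-$n$ sequences with $w$ nonzero entries,
\[
 \beta_A(n)=\sum_{w=1}^{\lfloor\xi l\rfloor}N_w\,P_{w\tau}(\zero),\qquad N_w\equiv\binom{n}{w}(q-1)^w.
\]
Then, as in Lemma~\ref{lem:alpha-lim}, I would bound $P_{w\tau}(\zero)$ by Lemma~\ref{lem:weight} together with Lemma~\ref{lem:k} (for $q>2$) or Lemma~\ref{lem:k2} (for $q=2$); isolating the $k=0$ term (and, for $q=2$, also the $k=l$ term) and enlarging the truncated binomial sums to full ones gives, with $x_w\equiv\frac{qw\tau}{(q-1)l}$ and for $q>2$,
\[
 P_{w\tau}(\zero)\leq\frac1{q^l}+\lrsb{\frac{1+(q-1)e^{-x_w}}{q}}^{l}+(q-1)^{-w\tau},
\]
and the same bound with the first two terms doubled and the last term absent for $q=2$ (here $(1-\tfrac{2k}{l})^{w\tau}=|1-\tfrac{2k}{l}|^{w\tau}$ because $\tau$, hence $w\tau$, is even). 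Multiplying by $N_w$ and summing gives $\beta_A(n)\le T_1+T_2+T_3$.

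The term $T_1=q^{-l}\sum_{w\le\lfloor\xi l\rfloor}N_w$ is handled as the corresponding term in Lemma~\ref{lem:alpha-lim}: for $\xi$ small $N_w$ is unimodal with maximum at $w=\lfloor\xi l\rfloor$, so $\sum_{w\le\lfloor\xi l\rfloor}N_w\le\xi l\,e^{nh(\xi R)}(q-1)^{\xi l}$ and $T_1\le\xi l\exp(nR[\tfrac{h(\xi R)}{R}+\xi\log_e(q-1)-\log_e q])$, whose exponent is negative by (\ref{eq:eta}) (which bounds the first two bracketed terms by $\tfrac13<\log_e q$). The term $T_3=\sum_{w\le\lfloor\xi l\rfloor}N_w(q-1)^{-w\tau}$ (present only for $q>2$) satisfies $N_w(q-1)^{-w\tau}\le[n(q-1)^{1-\tau}]^w$, and $n(q-1)^{1-\tau}\to0$ since $\tau\to\infty$, so $T_3\to0$. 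Both of these are routine.

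The main work is $T_2=\sum_{w\le\lfloor\xi l\rfloor}N_w\lrsb{\tfrac{1+(q-1)e^{-x_w}}{q}}^{l}$, and the delicate point is where to split the sum. Using $\lrsb{\tfrac{1+(q-1)e^{-x}}{q}}^{l}=\lrsb{1-\tfrac{(q-1)(1-e^{-x})}{q}}^{l}\le e^{-\frac{l(q-1)(1-e^{-x})}{q}}$, I would take the split weight $w_0\equiv\lceil\tfrac{(q-1)l}{q\tau}\rceil$, so that $x_{w_0}\ge1$ while $x_w\le1+o(1)$ for $w\le w_0$. On $1\le w\le w_0$, the inequality $1-e^{-x}\ge x/2$ (valid for $x\le\tfrac32$) yields a per-term bound $e^{-w\tau/2}$, and
\[
 \sum_{w\ge1}N_w\,e^{-w\tau/2}\le\sum_{w\ge1}[n(q-1)e^{-\tau/2}]^{w}\longrightarrow0,
\]
since $\tau\ge2\log_e(l^2/R)$ forces $n(q-1)e^{-\tau/2}\le(q-1)/(nR)\to0$; this is the one place where the logarithmic growth $\tau=\Theta(\log n)$ is essential. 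On $w_0<w\le\lfloor\xi l\rfloor$, monotonicity of $x\mapsto\tfrac{1+(q-1)e^{-x}}{q}$ gives $\lrsb{\tfrac{1+(q-1)e^{-x_w}}{q}}^{l}\le\lrsb{\tfrac{1+(q-1)e^{-1}}{q}}^{l}=e^{-l\log_e\frac{eq}{e+q-1}}$, so bounding each of the at most $\xi l$ terms by $N_{\lfloor\xi l\rfloor}\le e^{nh(\xi R)}(q-1)^{\xi l}$ gives a bound $\xi l\exp(nR[\tfrac{h(\xi R)}{R}+\xi\log_e(q-1)-\log_e\tfrac{eq}{e+q-1}])$, whose exponent is negative because $\log_e\tfrac{eq}{e+q-1}\ge\log_e\tfrac{2e}{e+1}>\tfrac13$ for every $q\ge2$ while (\ref{eq:eta}) bounds the remaining terms by $\tfrac13$. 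The $q=2$ case runs verbatim with Lemma~\ref{lem:k2} in place of Lemma~\ref{lem:k}. The expected obstacle is exactly making these two halves of $T_2$ close at once: $w_0$ must be proportional to $l/\tau$ rather than a fixed fraction of $l$, so that the geometric series over $w\le w_0$ converges using only $\tau=\Theta(\log n)$, while $e^{-x_{w_0}}$ stays a small enough constant for the exponential rate over $w_0<w\le\xi l$ to strictly beat $h(\xi R)/R$, which is precisely what (\ref{eq:eta}) supplies.
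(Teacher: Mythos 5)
Your proof is correct and takes essentially the same route as the paper's: reduce $\beta_A(n)$ to a sum over weights, close the inner $k$-sum to $\left[\frac{1+(q-1)e^{-x_w}}{q}\right]^l$ via Lemma~\ref{lem:k}/\ref{lem:k2}, split the $w$-sum at a threshold proportional to $l/\tau$, handle the low-weight part with the linearization $1-e^{-x}\geq x/2$ (yielding a geometric series that vanishes because $\tau=\Theta(\log n)$), and beat the exponential growth of $N_w$ on the high-weight part using (\ref{eq:eta}). The only cosmetic differences are your (redundant) separation of the $k=0$ term into $T_1$, a split point of $\lceil(q-1)l/(q\tau)\rceil$ instead of the paper's $\lfloor l/(2\tau)\rfloor$, and the constant $\log_e\frac{2e}{e+1}$ in place of the paper's $1/3$ on the high-weight side.
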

\begin{proof}
Let $\C_w\equiv\{\xx: w(\xx)=w\}$. Then we have
\begin{align}
 |\C_w|
 &=\binom{n}{w}[q-1]^{w}
 \label{eq:cteq}
 \\
 &\leq
 \exp\lrsb{nh\lrsb{\frac {w}n}+w\log_e (q-1)}.
 \label{eq:ct}
\end{align}

In the following, we first show that
\begin{equation}
 \lim_{l\to\infty}
	\sum_{w=1}^{\xi l}
	\frac {|\C_w|}{q^{l}}
 \sum_{k=0}^{l}
 \exp\lrsb{-\frac {qkw\tau}{[q-1]l}}
 \binom{l}{k}
 [q-1]^{k}
 =0.
\label{eq:cwpwlim}
\end{equation}
We have
\begin{align}
 &\sum_{w=1}^{\xi l}
 \frac {|\C_w|}{q^{l}}
 \sum_{k=0}^{l}
 \exp\lrsb{-\frac {qkw\tau}{[q-1]l}}
 \binom{l}{k}
 [q-1]^{k}
 \notag
 \\*
 &=
 \sum_{w=1}^{\xi l}
 \frac {|\C_w|}{q^{l}}
 \lrB{1+[q-1]\exp\lrsb{-\frac{qw\tau}{[q-1]l}}}^l
 \notag
 \\
 &=
 \sum_{w=1}^{\lrfloor{\frac{l}{2\tau}}}
 |\C_w|
 \lrB{\frac{1+[q-1]\exp\lrsb{-\frac{qw\tau}{[q-1]l}}}q}^l
 +
 \sum_{w=\lrceil{\frac{l}{2\tau}}}^{\xi l}
 |\C_w|
 \lrB{\frac{1+[q-1]\exp\lrsb{-\frac{qw\tau}{[q-1]l}}}q}^l.
 \label{eq:cwpw0}
\end{align}
The first term on the right hand side of (\ref{eq:cwpw0}) is evaluated
by
\begin{align}
 \sum_{w=1}^{\lrfloor{\frac{l}{2\tau}}}
 |\C_w|
 \lrB{\frac{1+[q-1]\exp\lrsb{-\frac{qw\tau}{[q-1]l}}}q}^l
 &\leq
 \sum_{w=1}^{\lrfloor{\frac{l}{2\tau}}}
 |\C_w|
 \lrB{\frac{1+[q-1]\lrB{1-\frac{qw\tau}{2[q-1]l}}}q}^l
	\notag
 \\*
 &=
 \sum_{w=1}^{\lrfloor{\frac{l}{2\tau}}}
 \binom{n}{w}[q-1]^{w}
 \lrB{1-\frac{w\tau}{2l}}^l
	\notag
 \\
 &\leq
 \sum_{w=1}^{\lrfloor{\frac{l}{2\tau}}}
 n^{w}q^{w}
 \exp\lrsb{-\frac{w\tau}2}
	\notag
 \\
 &\leq
 \sum_{w=1}^{\lrfloor{\frac{l}{2\tau}}}
 nq\exp\lrsb{-\frac{\tau}2}
	\notag
 \\
 &\leq
 \frac{nql}{2\tau}\exp\lrsb{-\frac{\tau}2}
	\notag
 \\
 &=
 \frac{q}{2\tau}\exp\lrsb{\log_e\frac{l^{2}}R-\frac{\tau}2}
	\notag
 \\
 &\leq
 \frac{q}{4\log_e\frac{l^2}R}.
 \label{eq:beta11}
\end{align}
The first inequality comes from the fact that
$\exp(-x)\leq 1-x/2$ for $0\leq x\leq 1/2$.
The first equality comes from (\ref{eq:cteq}).
The second inequality comes from the fact that $[1+x]^l\leq\exp(lx)$.
The third inequality comes from the fact that
$n^{w}q^{w}\exp\lrsb{-\frac{w\tau}2}$ is a non-increasing function
of $w$. The fifth inequality comes from (\ref{eq:tau}).
From (\ref{eq:beta11}),
the first term on the right hand side of (\ref{eq:cwpw0}) vanishes
by letting $l\to\infty$.
The second term of (\ref{eq:cwpw0}) is evaluated by
\begin{align}
 \sum_{w=\lrfloor{\frac{l}{2\tau}}}^{\xi l}
 |\C_w|
 \lrB{\frac{1+[q-1]\exp\lrsb{-\frac{qw\tau}{[q-1]l}}}q}^l
 &\leq
 \sum_{w=\lrfloor{\frac{l}{2\tau}}}^{\xi l}
 |\C_w|
 \lrB{\frac{1+[q-1]\exp\lrsb{-\frac{q}{2[q-1]}}}q}^l
 \notag
 \\
 &\leq
 \sum_{w=\lrfloor{\frac{l}{2\tau}}}^{\xi l}
 |\C_w|
 \exp\lrsb{-\frac l3}
 \notag
 \\
 &\leq
 \xi l\exp\lrsb{
 l\lrB{\frac{h\lrsb{\xi R}}R+\xi\log_e (q-1)-\frac 13}
 },
 \label{eq:beta12}
\end{align}
where the second inequality comes from the fact that
\[
 \frac{1+[q-1]\exp\lrsb{-\frac{q}{2[q-1]}}}q
 \leq e^{-\frac 13}
\]
and the third inequality comes from (\ref{eq:ct}).
From (\ref{eq:eta}) and (\ref{eq:beta12}),
the second term on the right hand side of (\ref{eq:cwpw0})
vanishes by letting $l\to\infty$.
From the above two observations,
we have (\ref{eq:cwpwlim}).

Next, we show the lemma by assuming that $q=2$.
From (\ref{eq:beta2}), the fact that $w\tau$ is even,
and Lemma \ref{lem:k2}, we have
\begin{align*}
 \beta_A(n)
 &=
 \sum_{w=1}^{\xi l}
 \frac {|\C_w|}{2^{l}}
 \sum_{k=0}^{l}
 \lrB{1-\frac {2k}{l}}^{w\tau}
 \binom{l}{k}
 \notag
 \\
 &\leq
 2\sum_{w=1}^{\xi l}
 \frac {|\C_w|}{2^{l}}
 \sum_{k=0}^{\lrfloor{\frac l2}}
 \exp\lrsb{-\frac {2kw\tau}{l}}
 \binom{l}{k}
 \notag
 \\
 &\leq
 2\sum_{w=1}^{\xi l}
 \frac {|\C_w|}{2^{l}}
 \sum_{k=0}^{l}
 \exp\lrsb{-\frac {2kw\tau}{l}}
 \binom{l}{k}
\end{align*}
From (\ref{eq:cwpwlim}), we have the lemma for $q=2$

Finally, we show the lemma by assuming that $q>2$.
From (\ref{eq:beta2}), and Lemmas \ref{lem:weight} and \ref{lem:k}, we have
\begin{align}
 \beta_A(n)
 &=
 \sum_{w=1}^{\xi l}
 \frac {|\C_w|}{q^{l}}
 \sum_{k=0}^{l}
 \lrB{1-\frac {qk}{[q-1]l}}^{w\tau}
 \binom{l}{k}
 [q-1]^{k}
 \notag
 \\
 &\leq
 \sum_{w=1}^{\xi l}
 \frac {|\C_w|}{q^{l}}
 \sum_{k=0}^{l}
 \exp\lrsb{-\frac {qkw\tau}{[q-1]l}}
 \binom{l}{k}
 [q-1]^{k}
 +
 \sum_{w=1}^{\xi l}
 \frac {|\C_w|}{q^{l}}
 \sum_{k=0}^{l}
 \binom{l}{k}
 [q-1]^{k-w\tau}
 \notag
 \\
 &=
 \sum_{w=1}^{\xi l}
 \frac {|\C_w|}{q^{l}}
 \sum_{k=0}^{l}
 \exp\lrsb{-\frac {qkw\tau}{[q-1]l}}
 \binom{l}{k}
 [q-1]^{k}
 +
 \sum_{w=1}^{\xi l}
 |\C_w|[q-1]^{-w\tau}.
 \label{eq:beta1}
\end{align}
From (\ref{eq:cwpwlim}),
the first term on the right hand side of
(\ref{eq:beta1}) vanishes by letting $l\to\infty$.
From (\ref{eq:ct}), the second term on the right hand side of
(\ref{eq:beta1}) is evaluated by
\begin{align*}
 \sum_{w=1}^{\xi l}
 |\C_w|[q-1]^{-w\tau}
 &\leq
 \sum_{w=1}^{\xi l}
 \exp\lrsb{nh\lrsb{\frac{w}n}+w[1-\tau]\log_e(q-1)}
 \\
 &\leq
 \xi l
 \exp\lrsb{
 n\max_{1/n\leq \theta\leq 1}
 \lrB{h(\theta)+n[1-\tau]\log_e (q-1)\theta}
 }
 \\
 &\leq
 \xi l
 \exp\lrsb{
 nh\lrsb{\frac 1n}+[1-\tau]\log_e(q-1)
 }
 \\
 &\leq
 \exp\lrsb{
 1+\log_e n+[1-\tau]\log_e(q-1)+\log_e \xi l,
 }
\end{align*}
where the third inequality comes from Lemma \ref{lem:theta} and
the fact that
\[
 [1-\tau]\log_e(q-1)<-\log_e(n-1)
\]
for all sufficiently large $n$ and $q>2$.
The fourth inequality comes from Lemma \ref{lem:h}.
From (\ref{eq:tau}), we have
\[
 1+\log_e n+[1-\tau]\log_e(q-1)+\log_e \xi l\to -\infty
\]
by letting  $n\to\infty$.
Then the third term on the right hand side of (\ref{eq:beta1}) vanishes
by letting $n\to\infty$.
Hence we have the lemma for $q>2$.
\end{proof}

\subsection{Proof of Theorem \ref{thm:sw}}
\label{sec:proof-sw}

We define the set $\T$ as
\[
 \T\equiv\lrb{(\xx,\yy):
 \begin{aligned}
	&-\frac 1n\log\mu_{X|Y}(\xx|\yy)\leq H(X|Y)+\gamma
	\\
	&-\frac 1n\log\mu_{Y|X}(\yy|\xx)\leq H(Y|X)+\gamma
	\\
	&-\frac 1n\log\mu_{XY}(\xx,\yy)\leq H(XY)+\gamma
 \end{aligned}
 }.
\]
It should be noted that
the above definition can be replaced by that defined in \cite{SWLDPC}.
This implies that the theorem is valid for general correlated sources.

Let $(\xx,\yy)$ be the output of correlated sources.
We define
\begin{align}
 &\bullet (\xx,\yy)\notin\T
 \tag{SW1}
 \\
 &\bullet \exists \xx'\neq\xx\ \text{s.t.}
 \ \xx'\in\C_A(A\xx),
 \mu_{X,Y}(\xx',\yy)\geq\mu_{X,Y}(\xx,\yy)
 \tag{SW2}
 \\
 &\bullet \exists \yy'\neq\yy\ \text{s.t.}
 \ \yy'\in\C_B(B\yy),
 \mu_{X,Y}(\xx,\yy')\geq\mu_{X,Y}(\xx,\yy)
 \tag{SW3}
 \\
 &\bullet \exists (\xx',\yy')\neq(\xx,\yy)\ \text{s.t.}
 \ \xx'\in\C_A(A\xx),\ \yy'\in\C_B(B\yy),
 \mu_{X,Y}(\xx',\yy')\geq\mu_{X,Y}(\xx,\yy).
 \tag{SW4}
\end{align}
Since a decoding error occurs when
at least one of the conditions (SW1)--(SW4) is satisfied,
the error probability is upper bounded by
\begin{align}
 \Error_{XY}(A,B)
 &\leq
 \mu_{XY}(\E_1^c)+\mu_{XY}(\E_1^c\cap\E_2)
 +\mu_{XY}(\E_1^c\cap\E_3)+\mu_{XY}(\E_1^c\cap\E_4),
 \label{eq:sw0}
\end{align}
where we define
\[
 \E_i\equiv\{(\xx,\yy): \text{(SW$i$)}\}.
\]

First, we evaluate $E_{AB}\lrB{\mu_{XY}(\E_1)}$.
From Lemma \ref{lem:typical-prob}, we have
\begin{equation}
 E_{AB}\lrB{\mu_{XY}(\E_1)}
	\leq
	\frac{\delta}4
	\label{eq:sw1}
\end{equation}
for all sufficiently large $n$.

Next, we evaluate $E_{AB}\lrB{\mu_{XY}(\E_1^c\cap\E_2)}$
and $E_{AB}\lrB{\mu_{XY}(\E_1^c\cap\E_3)}$.
Since
\[
\mu_{X|Y}(\xx'|\yy)\geq\mu_{X|Y}(\xx|\yy)\geq 2^{-n[H(X|Y)+\gamma]}
\]
When (SW1) and (SW2),
we have
\[
\lrB{\G(\yy)\setminus\{\xx\}}\cap\C_{A}(A\xx)\neq\emptyset,
\]
where
\[
 \G(\yy)\equiv\lrb{
 \xx: \mu_{X|Y}(\xx|\yy)\geq 2^{-n[H(X|Y)+\gamma]}
 }.
\]
From Lemma \ref{lem:Anotempty}, we have
\begin{align}
 E_{AB}\lrB{\mu_{XY}(\E_1^c\cap\E_2)}
 &=
 \sum_{(\xx,\yy)\in\T}
 \mu_{XY}(\xx,\yy)
 p_{AB}\lrsb{\lrb{
 (A,B): \text{(SW2)}
 }}
 \notag
 \\
 &\leq
 \sum_{(\xx,\yy)\in\T}
 \mu_{XY}(\xx,\yy)
 p_{A}\lrsb{\lrb{
 A: \lrB{\G(\yy)\setminus\{\xx\}}\cap\C_{A}(A\xx)\neq\emptyset
 }}
 \notag
 \\
 &\leq
 \sum_{(\xx,\yy)\in\T}
 \mu_{XY}(\xx,\yy)
 \lrB{
 \frac{|\G(\yy)|\alpha_A}{|\im\A|}
 +\beta_A
 }
 \notag
 \\
 &\leq
 \sum_{(\xx,\yy)\in\T}
 \mu_{XY}(\xx,\yy)
 \lrB{
 \frac{2^{n[H(X|Y)+\gamma]}\alpha_A}{|\im\A|}
 +\beta_A
 }
 \notag
 \\
 &\leq
 2^{n[H(X|Y)+\gamma]}|\X|^{-l_{\A}}
 \frac{|\X|^{l_{\A}}\alpha_A}{|\im\A|}
 +\beta_A
 \notag
 \\
 &\leq\frac{\delta}4
 \label{eq:sw2}
\end{align}
for all sufficiently large $n$ by taking an appropriate $\gamma>0$,
where the last inequality comes from (\ref{eq:swx})
and an $(\aalpha_A,\bbeta_A)$-hash property of $(\A,p_A)$.
Similarly, we have
\begin{align}
 E_{AB}\lrB{\mu_{XY}(\E_1^c\cap\E_3)}
 &
 \leq
 2^{n[H(Y|X)+\gamma]}|\Y|^{-l_{\B}}
 \frac{|\Y|^{l_{\B}}\alpha_B}{|\im\B|}
 +\beta_B
 \notag
 \\
 &\leq
 \frac {\delta}4
 \label{eq:sw3}
\end{align}
for all sufficiently large $n$ by taking an appropriate $\gamma>0$,
where the last inequality comes from (\ref{eq:swy})
and an $(\aalpha_B,\bbeta_B)$-hash property of $(\B,p_B)$.

Next, we evaluate $E_{AB}\lrB{\mu_{XY}(\E_1^c\cap\E_4)}$.
When (SW1) and (SW4),
we have
\begin{align*}
 &\lrB{\G\setminus\{(\xx,\yy)\}}\cap\lrB{\C_{A}(A\xx)\times\C_B(B\yy)}
 \neq\emptyset,
\end{align*}
where
\begin{align*}
 \G&\equiv\lrb{
 (\xx,\yy): \mu_{XY}(\xx,\yy)\geq 2^{-n[H(XY)+\gamma]}
 }.
\end{align*}
Applying Lemma \ref{lem:Anotempty}
to the joint ensemble $(\A\times\B,p_{AB})$
of a set of functions
$AB:\X^n\times\Y^n\to\X^{l_{\A}}\times\Y^{l_{\B}}$,
we have
\begin{align}
 E_{AB}\lrB{\mu_{XY}(\E_1^c\cap\E_4)}
 &=
 \sum_{(\xx,\yy)\in\T}\mu_{XY}(\xx,\yy)
 p_{AB}\lrsb{\lrb{
 (A,B): \text{(SW4)}
 }}
 \notag
 \\
 &\leq
 \sum_{(\xx,\yy)\in\T}\mu_{XY}(\xx,\yy)
 p_{AB}\lrsb{\lrb{
 (A,B):
 \lrB{\G\setminus\{(\xx,\yy)\}}
 \cap\lrB{\C_{A}(A\xx)\times\C_B(B\yy)}\neq\emptyset
 }}
 \notag
 \\
 &\leq
 \sum_{(\xx,\yy)\in\T}\mu_{XY}(\xx,\yy)
 \lrB{
 \frac{|\G|\alpha_{AB}}{|\im\A||\im\B|}
 +\beta'_{AB}
 }
 \notag
 \\
 &\leq
 2^{n[H(X,Y)+\gamma]}|\X|^{-l_{\A}}|\Y|^{-l_{\B}}
 \frac{|\X|^{l_{\A}}|\Y|^{l_{\B}}\alpha_{AB}}{|\im\A||\im\B|}
 +\beta'_{AB}
 \notag
 \\
 &\leq
 \frac{\delta}4
 \label{eq:sw4}
\end{align}
for all sufficiently large $n$ by taking an appropriate $\gamma>0$,
where the last inequality comes from (\ref{eq:swxy})
and an $(\aalpha_{AB},\bbeta_{AB})$-hash property of
$(\A\times\B,p_A\times p_B)$.

Finally, from (\ref{eq:sw0})--(\ref{eq:sw4}),
for all $\delta>0$ and for all sufficiently large $n$
there are $A$ and $B$ such that
\[
\Error_{XY}(A,B)<\delta.
\]
\hfill\QED

\subsection{Proof of Theorem \ref{thm:gp}}
\label{sec:proof-gp}

For $\bbeta_A$ satisfying $\limn\beta_A(n)=0$,
let $\kkappa\equiv\{\kappa(n)\}_{n=1}^{\infty}$ be a sequence satisfying
\begin{gather}
 \limn\kappa(n)=\infty
 \label{eq:gp-k1}
 \\
 \limn \kappa(n)\beta_A(n)=0
 \label{eq:gp-k2}
 \\
 \limn\frac{\log\kappa(n)}n=0.
 \label{eq:gp-k3}
\end{gather}
For example, there is such  $\kkappa$ by letting
\begin{equation*}
 \kappa(n)\equiv
	\begin{cases}
	 n^{\xi}
	 &\text{if}\ \beta_A(n)=o\lrsb{n^{-\xi}}
	 \\
	 \frac 1{\sqrt{\beta_A(n)}},
	 &\text{otherwise}
	\end{cases}
\end{equation*}
for every $n$.
If $\beta_A(n)$ is not $o\lrsb{n^{-\xi}}$,
there is $\kappa'>0$ such that
$\beta_A(n)n^{\xi}>\kappa'$
 and
\begin{align*}
 \frac{\log\kappa(n)}n
 &=
 \frac{\log\frac 1{\beta_A(n)}}{2n}
 \\
 &\leq \frac{\log\frac{n^{\xi}}{\kappa'}}{2n}
 \\
 &=\frac{\xi\log n-\log\kappa'}{2n}
\end{align*}
for all sufficiently large $n$.
This implies that $\kkappa$ satisfies (\ref{eq:gp-k3}).
In the following, $\kappa$ denotes $\kappa(n)$.

Let $\e\equiv\eB-\eA$.
Then, from (\ref{eq:gp-k3}),
there are $\gamma$ and $\gamma'$ such that
\begin{gather}
 0<\gamma<\sqrt{2\gamma}\log|\Z|<\e
 \label{eq:gp-gamma}
 \\
 0<\gamma'\leq 2\e
 \label{eq:gp-gammap}
 \\
 \eta_{\W|\Z}(\gamma'|\gamma)\leq\e-\frac{\log\kappa}{2n}
 \label{eq:gp-etazw}
 \\
 \eta_{\X|\Z\W}(\gamma'|\gamma+2\e)\leq\ehA-\frac{\log\kappa}{2n}.
 \label{eq:gp-etaxzw}
\end{gather}
for all sufficiently large $n$.
It should be noted here that
there is such $\gamma$ and $\gamma'$ by assuming (\ref{eq:gp-e1}).

Let $\zz$ be the output of channel side information,
and $\xx$ and $\yy$ be an input and an output of the channel, respectively,
and $\mm$ be a message.
From (\ref{eq:gp-gammap}), we have
$\T_{W|Z,\gamma'}(\zz)\neq\emptyset$ for all $\zz$ and sufficiently
large $n$.
Then we have
\begin{align*}
 |\T_{W|Z,2\e}(\zz)|
 &\geq
 |\T_{W|Z,\gamma'}(\zz)|
 \\
 &\geq 2^{n[H(W|Z)-\eta_{\W|\Z}(\gamma'|\gamma)]}
 \\
 &\geq
 \sqrt{\kappa}2^{n[H(W|Z)-\eA+\eB]}
 \\
 &=
 \sqrt{\kappa}|\W|^{l_{\A}+l_{\B}}
 \\
 &\geq
 \sqrt{\kappa}|\im\A||\im\B|
\end{align*}
for all $\zz\in\T_{Z,\gamma}$ and sufficiently large $n$,
where the first inequality comes from (\ref{eq:gp-gammap}),
the second inequality comes from Lemma \ref{lem:typical-number},
the third inequality comes from (\ref{eq:gp-etazw}),
and the fourth inequality comes from the fact that
$\im\A\subset\W^{l_{\A}}$ and $\im\B\subset\W^{l_{\B}}$.
This implies that
for all $\zz\in\T_{Z,\gamma}$
there is $\T_{W|Z}(\zz)\subset\T_{W|Z,2\e}(\zz)$ such that
\begin{align}
 \sqrt{\kappa}
 &\leq
 \frac{|\T_{W|Z}(\zz)|}{|\im\A||\im\B|}
 \leq 
 2\sqrt{\kappa}
 \label{eq:TZW}
\end{align}
for all $\zz\in\T_{Z,\gamma}$ and sufficiently large $n$.
We assume that $\T_{W|Z}(\zz)$ satisfies the assumption described in
Lemma~\ref{thm:joint-typical}.
Similarly, from (\ref{eq:gp-etaxzw}),
we obtain $\T_{X|ZW}(\zz,\ww)\subset\T_{X|ZW,2\ehA}(\zz,\ww)$ such that
\begin{align}
 \sqrt{\kappa}
 \leq
 \frac{|\T_{X|ZW}(\zz,\ww)|}{|\im\hcA|}
 \leq 
 2\sqrt{\kappa}
 \label{eq:TXZW}
\end{align}
for all $(\zz,\ww)\in\T_{ZW,2\e}$ and sufficiently large $n$.

We define
\begin{align}
 &\bullet \zz\in\T_{Z,\gamma}
 \tag{GP1}
 \\
 &\bullet g_{AB}(\cc,\mm|\zz)\in\T_{W|Z}(\zz)
 \tag{GP2}
 \\
 &\bullet g_{\hA}(\hcc|\zz,g_{AB}(\cc,\mm|\zz))
 \in\T_{X|ZW}(\zz,g_{AB}(\cc,\mm|\zz))
 \tag{GP3}
 \\
 &\bullet
 \yy\in\T_{Y|XZ,\gamma}(g_{\hA}(\hcc|\zz,g_{AB}(\cc,\mm|\zz)),\zz)
 \tag{GP4}
 \\
 &\bullet g_{A}(\cc|\yy)=g_{AB}(\cc,\mm|\zz).
 \tag{GP5}
\end{align}
Under condition (GP5), we have
 \[
 \Decoder(\yy)=Bg_{A}(\cc|\yy)=Bg_{AB}(\cc,\mm|\zz)=\mm,
 \]
which implies that the decoding succeeds.
Then the error probability is upper bounded by
\begin{align}
 &
 \Error_{Y|XZ}(A,B,\hA,\cc,\hcc)
 \notag
 \\*
 &=
 \sum_{\mm,\yy}p_M(\mm)\mu_{Z}(\zz)
 \mu_{Y|XZ}(\yy|g_{\hA}(\hcc|\zz,g_{AB}(\cc,\mm|\zz)),\zz)
 \notag
 \chi(g_{A}(\cc|\yy)\neq g_{AB}(\cc,\mm|\zz))
 \notag
 \\
 &\leq
 p_{MYZ}(\cS_1^c)
 + p_{MYZ}(\cS_1\cap\cS_2^c)
 + p_{MYZ}(\cS_1\cap\cS_2\cap\cS_3^c)
 + p_{MYZ}(\cS_4^c)
 + p_{MYZ}(\cS_1\cap\cS_2\cap\cS_3\cap\cS_4\cap\cS_5^c),
 \label{eq:gp0}
\end{align}
where
\begin{align*}
 \cS_i
 &\equiv
 \lrb{
 (\mm,\yy,\zz): \text{(GP$i$)}
 }.
\end{align*}
Let $\delta$ be an arbitrary positive number.

First we evaluate
$E_{AB\hA C\hC}\lrB{p_{MYZ}(\cS_1^c)}$ and
$E_{AB\hA C\hC}\lrB{p_{MYZ}(\cS_4^c)}$.
From Lemma \ref{lem:typical-prob}, we have
\begin{align}
 E_{AB\hA C\hC}\lrB{p_{MYZ}(\cS_1^c)}
 \leq \frac{\delta}5
 \label{eq:gp-error1}
 \\
 E_{AB\hA C\hC}\lrB{p_{MYZ}(\cS_4^c)}
 \leq \frac{\delta}5
 \label{eq:gp-error4}
\end{align}
for sufficiently large $n$.

Next we evaluate
$E_{ABC}[p_{MYZ}(\cS_1\cap\cS_2^c)]$ and
$E_{\hA\hC}[p_{MYZ}(\cS_1\cap\cS_2\cap\cS_3^c)]$.
From Lemma \ref{thm:joint-typical}, we have
\begin{align}
 E_{ABC}\lrB{p_{MYZ}(\cS_1\cap\cS_2^c)}
 &=
 \sum_{\zz\in\T_{Z,\gamma}}
 \mu_Z(\zz)
 p_{ABCM}\lrsb{\lrb{
 (A,B,\cc,\mm):
 g_{AB}(\cc,\mm|\zz)\notin\T_{W|Z}(\zz)
 }}
 \notag
 \\
 &\leq
 \sum_{\zz\in\T_{Z,\gamma}}p_Z(\zz)
 \lrB{
 \alpha_{AB}-1
 +\frac{|\im\A||\im\B|\lrB{\beta_{AB}+1}}
 {|\T_{W|Z}(\zz)|}
 +\frac{2^{-n\e}|\W|^{l_{\A}+l_{\B}}}
 {|\im\A||\im\B|}
 }
 \notag
 \\
 &\leq
 \alpha_{AB}-1
 +\frac{\beta_{AB}+1}{\sqrt{\kappa}}
 +\frac{2^{-n\e}|\W|^{l_{\A}+l_{\B}}}
 {|\im\A||\im\B|}
 \notag
 \\
 &\leq
 \frac{\delta}5
 \label{eq:gp-error2}
\end{align}
for all sufficiently large $n$,
where the second inequality comes from (\ref{eq:TZW}),
and the last inequality comes from (\ref{eq:gp-k1})
and the properties of $(\aalpha_{AB},\bbeta_{AB})$
and $|\W|^{l_{\A}}/|\im\A|$.
Similarly, by using (\ref{eq:TXZW}),
we have
\begin{align}
 E_{\hA\hC}\lrB{p_{MYZ}(\cS_1\cap\cS_2\cap\cS_3^c)}
 &\leq
 \alpha_{\hA}-1+\frac{\beta_{\hA}+1}{\sqrt{\kappa}}
 +\frac{2^{-n\ehA}|\X|^{l_{\hcA}}}
 {|\im\hcA|}
 \notag
 \\
 &\leq
 \frac{\delta}5
 \label{eq:gp-error3}
\end{align}
for all sufficiently large $n$.

Next,
we evaluate
$E_{AB\hA C\hC}[p_{MYZ}(\cS_1\cap\cS_2\cap\cS_3\cap\cS_4\cap\cS_5^c)]$.
In the following, we assume that
\begin{align*}
 &\bullet \zz\in\T_{Z,\gamma}
 \\
 &\bullet \ww\in\T_{W|Z}(\zz)\subset\T_{W|Z,2\e}(\zz)
 \\
 &\bullet \xx\in\T_{X|ZW}(\zz,\ww)\subset\T_{X|ZW,2\ehA}(\zz,\ww)
 \\
 &\bullet \yy\in\T_{Y|XZ,\gamma}(\xx,\zz)=\T_{Y|XZW,\gamma}(\xx,\zz,\ww)
 \\
 &\bullet g_{A}(\cc|\yy)\neq \ww,
\end{align*}
where
the relation $\T_{Y|XZ,\gamma}(\xx,\zz)=\T_{Y|XZW,\gamma}(\xx,\zz,\ww)$
comes from (\ref{eq:markov-gp}).
From Lemma~\ref{lem:typical-trans},
(\ref{eq:gp-e1}), and (\ref{eq:gp-gamma}), we have
$(\xx,\yy,\zz,\ww)\in\T_{XYXZW,6\ehA}$.
Then there is  $\ww'\in\C_{A}(\cc)$
such that $\ww'\neq \ww$ and
\begin{align*}
 \mu_{W|Y}(\ww'|\yy)
 &\geq
 \mu_{W|Y}(\ww|\yy)
 \\
 &=
 \frac{\mu_{WY}(\ww,\yy)}{\mu_{Y}(\yy)}
 \\
 &\geq
 \frac{2^{-n[H(W,Y)+\zeta_{\Y\W}(6\ehA)]}}
 {2^{-n[H(Y)-\zeta_{\Y}(6\ehA)]}}
 \\
 &\geq
 2^{-n[H(W|Y)+2\zeta_{\Y\W}(6\ehA)]},
\end{align*}
where the second inequality comes from Lemma~\ref{lem:typical-prob}.
This implies that
\[
 \lrB{\G(\yy)\setminus\{\ww\}}\cap\C_A(\cc)\neq\emptyset,
\]
where
\[
 \G(\yy)\equiv
 \lrb{
 \ww':
 \mu_{W|Y}(\ww'|\yy)\geq
 2^{-n[H(W|Y)+2\zeta_{\Y\W}(6\ehA)]}
 }.
\]
Then, we have
\begin{align}
 &E_{AB\hA C\hC}\lrB{
 p_{MYZ}(\cS_1\cap\cS_2\cap\cS_3\cap\cS_4\cap\cS_5^c)
 }
 \notag
 \\*
 &\leq
 E_{AB\hA CM\hC}\left[
 \sum_{\zz\in\T_{Z,\gamma}}\mu_Z(\zz)
 \sum_{\ww\in\T_{W|Z}(\zz)}
 \chi(g_{AB}(\cc,\mm|\zz)=\ww)
 \right.
 \notag
 \\*
 &\qquad\qquad\qquad\quad
 \left.
 \sum_{\xx\in\T_{X|ZW}(\zz,\ww)}
 \chi(g_{\hA}(\hcc|\zz,\ww)=\xx)
 \sum_{\yy\in\T_{Y|XZ,\gamma}(\xx,\zz)}
 \mu_{Y|XZ}(\yy|\xx,\zz)\chi(g_{A}(\cc|\yy)\neq \ww)
 \right]
 \notag
 \\
 &\leq
 E_{AB\hA CM\hC}\left[
 \sum_{\zz\in\T_{Z,\gamma}}\mu_Z(\zz)
 \sum_{\ww\in\T_{W|Z}(\zz)}
 \chi(A\ww=\cc)\chi(B\ww=\mm)
 \right.
 \notag
 \\*
 &\qquad\qquad\qquad\quad
 \left.
 \sum_{\xx\in\T_{X|ZW}(\zz,\ww)}
 \chi(\hA\xx=\hcc)
 \sum_{\yy\in\T_{Y|XZ,\gamma}(\xx,\zz)}
 \mu_{Y|XZ}(\yy|\xx,\zz)
 \vphantom{\sum_{\zz\in\T_{Z,\gamma}}\mu_Z(\zz)}
 \chi(g_{A}(\cc|\yy)\neq \ww)
 \right]
 \notag
 \\
 &=
 \sum_{\zz\in\T_{Z,\gamma}}\mu_Z(\zz)
 \sum_{\ww\in\T_{W|Z}(\zz)}
 \sum_{\xx\in\T_{X|ZW}(\zz,\ww)}
 \sum_{\yy\in\T_{Y|XZ,\gamma}(\xx,\zz)}
 \mu_{Y|XZ}(\yy|\xx,\zz)
 \notag
 \\*
 &\qquad\cdot
 E_{AC}\left[
 \chi(g_{A}(\cc|\yy)\neq \ww)
 \chi(A\ww=\cc)
 E_{B\hA M\hC}\lrB{
 \chi(B\ww=\mm)
 \chi(\hA\xx=\hcc)
 }
 \right]
 \notag
 \\
 &\leq
 \frac 1{|\im\B||\im\hcA|}
 \sum_{\zz\in\T_{Z,\gamma}}\mu_Z(\zz)
 \sum_{\ww\in\T_{W|Z}(\zz)}
 \sum_{\xx\in\T_{X|ZW}(\zz,\ww)}
 \sum_{\yy\in\T_{Y|XZ,\gamma}(\xx,\zz)}
 \mu_{Y|XZ}(\yy|\xx,\zz)
 \notag
 \\*
 &\qquad\cdot
 p_{AC}\lrsb{\lrb{
 (A,\cc):
 \begin{aligned}
	&\lrB{\G(\yy)\setminus\{\ww\}}\cap\C_A(\cc)\neq\emptyset
	\\
	&
	\ww\in\C_A(\cc)
 \end{aligned}
 }}
 \notag
 \\
 &\leq
 \frac 1{|\im\B||\im\hcA|}
 \sum_{\zz\in\T_{Z,\gamma}}\mu_Z(\zz)
 \sum_{\ww\in\T_{W|Z}(\zz)}
 \sum_{\xx\in\T_{X|ZW}(\zz,\ww)}
 \sum_{\yy\in\T_{Y|XZ,\gamma}(\xx,\zz)}
 \mu_{Y|XZ}(\yy|\xx,\zz)
 \notag
 \\*
 &\qquad\cdot
 \lrB{
 \frac{2^{n[H(W|Y)+2\zeta_{\Y\W}(6\ehA)]}\alpha_A}
 {|\im\A|^2}
 +\frac{\beta_A}{|\im\A|}
 }
 \notag
 \\
 &\leq
 \lrB{
 \frac{2^{n[H(W|Y)+2\zeta_{\Y\W}(6\ehA)]}\alpha_A}
 {|\im\A|}
 +\beta_A
 }
 \sum_{\zz\in\T_{Z,\gamma}}\mu_Z(\zz)
 \sum_{\ww\in\T_{W|Z}(\zz)}
 \frac 1{|\im\A||\im\B|}
 \sum_{\xx\in\T_{X|ZW}(\zz,\ww)}
 \frac 1{|\im\hcA|}
 \notag
 \\
 &\leq
 \frac{
 4\kappa|\W|^{l_{\A}}
 2^{-n[\eA-2\zeta_{\Y\W}(6\ehA)]}\alpha_A
 }
 {|\im\A|}
 +4\kappa\beta_A
 \notag
 \\
 &\leq
 \frac{\delta}5.
 \label{eq:gp-error5}
\end{align}
where the third inequality comes from (\ref{lem:E}),
the fourth inequality comes from
Lemma~\ref{lem:ACnotempty}
and the fact that
\[
 |\G(\yy)|\leq 	2^{n[H(W|Y)+2\zeta_{\Y\W}(6\eA)]},
\]
the sixth inequality comes from (\ref{eq:TZW}) and (\ref{eq:TXZW}),
and the last inequality comes from (\ref{eq:gp-e2}), (\ref{eq:gp-k2}),
and the properties of $(\aalpha_A,\bbeta_A)$ and $|\W|^{l_{\A}}/|\im\A|$.

Finally, from (\ref{eq:gp0})--(\ref{eq:gp-error5}),
we have the fact that
for all $\delta>0$ and sufficiently large $n$
there are $A\in\A$, $B\in\B$, $\hA\in\hcA$, $\cc\in\im\A$,
and  $\hcc$ such that
\begin{align*}
 \Error_{Y|XZ}(A,B,\hA,\cc,\hcc)
 \leq
 \delta.
\end{align*}
\hfill\QED

\subsection{Proof of Theorem \ref{thm:wz}}
\label{sec:proof-wz}

We define
\begin{align}
 &\bullet (\xx,\zz)\in\T_{XZ,\gamma}
 \tag{WZ1}
 \\
 &\bullet g_{A}(\cc|\xx)\in\T_{Y|XZ,2\eA}(\xx,\zz)
 \tag{WZ2}
 \\
 &\bullet g_{AB}(\cc,Bg_A(\cc|\xx)|\zz)=g_A(\cc|\xx)
 \tag{WZ3}
\end{align}
and assume that $\gamma>0$ satisfies
\begin{equation}
 \gamma+\sqrt{2\gamma}\log|\X||\Z|<\eA.
	\label{eq:wz-gamma}
\end{equation}

We prove the following lemma.
\begin{lem}
 \label{lem:error-wz}
 For any $(\xx,\zz)$ satisfying (WZ1)
 \begin{align*}
	p_{ABC}\lrsb{\lrb{
	(A,B,\cc): \text{(WZ2), not (WZ3)}
	}}
	&\leq
	\frac{2^{-n[\eB-\eA-2\zeta_{\Y\Z}(3\eA)]}|\Y|^{l_{\A}+l_{\B}}\alpha_{B}}
	{|\im\A||\im\B|}
	+ \beta_{B}.
 \end{align*}
\end{lem}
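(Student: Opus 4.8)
\emph{Proof plan for Lemma~\ref{lem:error-wz}.} The plan is to show that, conditioned on (WZ1), the event ``(WZ2) holds but (WZ3) fails'' forces a second sequence in the joint coset $\C_{AB}(\cc,Bg_A(\cc|\xx))$ whose $\mu_{Y|Z}(\cdot|\zz)$-probability is at least $2^{-n[H(Y|Z)+2\zeta_{\Y\Z}(3\eA)]}$, and then to bound the probability of such a collision by Lemma~\ref{lem:ABCnoempty}.

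First I would fix $(\xx,\zz)$ satisfying (WZ1) and abbreviate $\yy\equiv g_A(\cc|\xx)$, so that $\yy\in\C_A(\cc)$ and hence $\yy\in\C_{AB}(\cc,B\yy)$. Assuming (WZ2), i.e.\ $\yy\in\T_{Y|XZ,2\eA}(\xx,\zz)$, and combining it with (WZ1), the Markov structure (\ref{eq:markov-wz}), and the bound (\ref{eq:wz-gamma}) on $\gamma$, an application of Lemma~\ref{lem:typical-trans} yields $(\xx,\yy,\zz)\in\T_{XYZ,3\eA}$. From Lemma~\ref{lem:typical-prob} one then obtains
\begin{equation*}
 \mu_{Y|Z}(\yy|\zz)=\frac{\mu_{YZ}(\yy,\zz)}{\mu_Z(\zz)}\geq\frac{2^{-n[H(Y,Z)+\zeta_{\Y\Z}(3\eA)]}}{2^{-n[H(Z)-\zeta_{\Z}(3\eA)]}}\geq 2^{-n[H(Y|Z)+2\zeta_{\Y\Z}(3\eA)]},
\end{equation*}
where the last step uses $\zeta_{\Z}(3\eA)\leq\zeta_{\Y\Z}(3\eA)$ and $H(Y,Z)-H(Z)=H(Y|Z)$.

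Next, if (WZ3) fails then $g_{AB}(\cc,B\yy|\zz)\neq\yy$; since $\yy$ is feasible in the maximisation defining $g_{AB}$, the maximiser $\yy'\equiv g_{AB}(\cc,B\yy|\zz)$ satisfies $\yy'\in\C_{AB}(\cc,B\yy)$, $\yy'\neq\yy$, and $\mu_{Y|Z}(\yy'|\zz)\geq\mu_{Y|Z}(\yy|\zz)\geq 2^{-n[H(Y|Z)+2\zeta_{\Y\Z}(3\eA)]}$. Hence $\yy'\in\G(\zz)$ where $\G(\zz)\equiv\lrb{\yy'': \mu_{Y|Z}(\yy''|\zz)\geq 2^{-n[H(Y|Z)+2\zeta_{\Y\Z}(3\eA)]}}$, so that ``(WZ2) and not (WZ3)'' implies $[\G(\zz)\setminus\{\yy\}]\cap\C_{AB}(\cc,B\yy)\neq\emptyset$. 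Applying Lemma~\ref{lem:ABCnoempty} with $\uu_{A,\cc}\equiv g_A(\cc|\xx)$ (which depends only on $A$ and $\cc$, as $\xx$ is fixed) and $\G\equiv\G(\zz)$, together with the bound $|\G(\zz)|\leq 2^{n[H(Y|Z)+2\zeta_{\Y\Z}(3\eA)]}$ (because $\mu_{Y|Z}(\cdot|\zz)$ is a probability distribution on $\Y^n$), gives
\begin{align*}
 p_{ABC}\lrsb{\lrb{(A,B,\cc): \text{(WZ2), not (WZ3)}}}
 &\leq\frac{|\G(\zz)|\alpha_B}{|\im\A||\im\B|}+\beta_B
 \\
 &\leq\frac{2^{n[H(Y|Z)+2\zeta_{\Y\Z}(3\eA)]}\alpha_B}{|\im\A||\im\B|}+\beta_B.
\end{align*}
Finally, substituting $|\Y|^{l_{\A}+l_{\B}}=2^{n[H(Y|Z)-\eA+\eB]}$ rewrites the first term as $2^{-n[\eB-\eA-2\zeta_{\Y\Z}(3\eA)]}|\Y|^{l_{\A}+l_{\B}}\alpha_B/(|\im\A||\im\B|)$, which is exactly the asserted bound.

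I expect the delicate point to be the first reduction, namely obtaining $(\xx,\yy,\zz)\in\T_{XYZ,3\eA}$ from (WZ1) and (WZ2): this needs the precise form of the typicality-transfer lemma in the Appendix, the divergence chain rule together with the Markov relation encoded in (\ref{eq:markov-wz}), and careful parameter bookkeeping (verifying that the $\gamma$ chosen through (\ref{eq:wz-gamma}) and the conditional parameter $2\eA$ combine to at most $3\eA$). Once that is in place, identifying the collision event $[\G(\zz)\setminus\{\yy\}]\cap\C_{AB}(\cc,B\yy)\neq\emptyset$ and invoking Lemma~\ref{lem:ABCnoempty} is routine.
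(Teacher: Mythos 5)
Your proposal is correct and follows essentially the same route as the paper's proof: use the typicality-transfer lemma to place $(\xx,g_A(\cc|\xx),\zz)$ in $\T_{XYZ,3\eA}$, lower-bound $\mu_{Y|Z}(g_A(\cc|\xx)|\zz)$, observe that failure of (WZ3) produces a distinct competitor $\yy'\in\C_{AB}(\cc,Bg_A(\cc|\xx))\cap\G(\zz)$, and invoke Lemma~\ref{lem:ABCnoempty} with $\uu_{A,\cc}\equiv g_A(\cc|\xx)$. One small citation slip: the displayed lower bound on $\mu_{Y|Z}(\yy|\zz)$ comes from Lemma~\ref{lem:typical-aep} (the typical-sequence probability estimates), not Lemma~\ref{lem:typical-prob} (which bounds the probability of the atypical set).
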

\begin{proof}
If $(\xx,\zz,A,B,\cc)$ satisfies (WZ1) and (WZ2) but not (WZ3),
there is $\yy'\in\C_{AB}(\cc,Bg_{A}(\cc|\xx))$
such that $\yy'\neq g_A(\cc|\xx)$.
Then, from (\ref{eq:wz-gamma})
and Lemmas \ref{lem:typical-trans} and \ref{lem:typical-aep},
we have $(\xx,g_{A}(\cc|\xx),\zz)\in\T_{YXZ,3\eA}$ and
\begin{align*}
 \mu_{Y|Z}(\yy'|\zz)
 &\geq
 \mu_{Y|Z}(g_{A}(\cc|\xx)|\zz)
 \\*
 &=
 \frac{\mu_{YZ}(g_{A}(\cc|\xx),\zz)}
 {\mu_{Z}(\zz)}
 \\
 &\geq
 \frac{2^{-n[H(Y,Z)+\zeta_{\Y\Z}(3\eA)]}}
 {2^{-n[H(Z)-\zeta_{\Z}(3\eA)]}}
 \\
 &=
 2^{-n[H(Y|Z)+2\zeta_{\Y\Z}(3\eA)]}.
\end{align*}
This implies that
\[
 \lrB{\G\setminus\{g_A(\cc|\xx)\}}\cap\C_A(\cc,Bg_A(\cc|\xx))
 \neq\emptyset,
\]
where
\[
 \G\equiv
 \lrb{
 \yy': \mu_{Y|Z}(\yy'|\zz)\geq
 2^{-n[H(Y|Z)+2\zeta_{\Y\Z}(3\eA)]}
 }.
\]

Let $\yy_{A,\cc}\equiv g_{A}(\cc|\xx)$.
From Lemma~\ref{lem:ABCnoempty}, we have
 \begin{align*}
	p_{ABC}\lrsb{\lrb{
	(A,B,\cc): \text{(WZ2), not (WZ3)}
	}}
	&\leq
	p_{ABC}\lrsb{\lrb{
	(A,B,\cc):
	\lrB{\G\setminus\{\yy_{A,\cc}\}}\cap\C_{AB}(\cc,B\yy_{A,\cc})
	\neq\emptyset
	}}
	\\
	&\leq
	\frac{|\G|\alpha_B}{|\im\A||\im\B|}+\beta_B
	\\
	&\leq
	\frac{2^{n[H(Y|Z)+2\zeta_{\Y\Z}(3\eA)]}\alpha_B}
	{|\im\A||\im\B|}
	+\beta_{B}
	\\
	&=
	\frac{2^{-n[\eB-\eA-2\zeta_{\Y\Z}(3\eA)]}|\Y|^{l_{\A}+l_{\B}}\alpha_{B}}
	{|\im\A||\im\B|}
	+ \beta_{B},
 \end{align*}
where the second inequality comes from Lemma \ref{lem:ABCnoempty}
and the third inequality comes from the fact that
\begin{align*}
 |\G|
 &\leq
 2^{n[H(Y|Z)+2\zeta_{\Y\Z}(3\eA)]}.
\end{align*}
\end{proof}

{\it Proof of Theorem \ref{thm:wz}:}
Let $\Error_{XZ}(A,B,\cc)$ be defined as
\begin{align*}
 \Error_{XZ}(A,B,\cc)
 &\equiv\mu_{XZ}\lrsb{\lrb{
 (\xx,\zz): (\xx,\Decoder(\Encoder(\xx),\zz),\zz)\notin\T_{XYZ,3\eA}
 }}.
\end{align*}
Since
\begin{align*}
 (\xx,\Decoder(\Encoder(\xx),\zz),\zz)
 &=(\xx,g_{AB}(\cc,Bg_A(\cc|\xx)),\zz)
 \\
 &=
 (\xx,g_{A}(\cc|\xx),\zz)
 \\
 &\in
 \T_{XYZ,3\eA}.
\end{align*}
under conditions (WZ1)--(WZ3), then we have
\begin{align}
 \begin{split}
 \Error_{XZ}(A,B,\cc)
 &\leq
 \mu_{XZ}(\cS_1^c)+\mu_{XZ}(\cS_1\cap\cS_2^c)
 +\mu_{XZ}(\cS_1\cap\cS_2\cap\cS_3^c),
 \end{split}
 \label{eq:wz0}
\end{align}
where
\begin{align*}
 \cS_i
 &\equiv
 \lrb{
 (\xx,\zz): \text{(WZ$i$)}
 }.
\end{align*}
Let $\delta>0$ be an arbitrary positive number.

First, we evaluate $E_{ABC}\lrB{\mu_{XZ}(\cS_1^c)}$.
From Lemma \ref{lem:typical-prob}, we have
\begin{align}
 E_{ABC}\lrB{\mu_{XZ}(\cS_1^c)}
 &
 \leq
 \frac{\delta}3
 \label{eq:wz-error1}
\end{align}
for all sufficiently large $n$.

Next, we evaluate $E_{ABC}\lrB{\mu_{XZ}(\cS_1\cap\cS_2^c)}$.
From (\ref{eq:markov-wz}), we have
\begin{align*}
 \mu_{XYZ}(\xx,\yy,\zz)
 &=
 \mu_{XZ}(\xx,\zz)\mu_{Y|X}(\yy|\xx)
 \\
 &=
 \frac{\mu_{XZ}(\xx,\zz)\mu_{XY}(\xx,\yy)}
 {\mu_{X}(\xx)}
 \\
 &=\mu_{XY}(\xx,\yy)\mu_{Z|X}(\zz|\xx)
\end{align*}
and
\begin{align*}
 \arg\max_{\yy'\in\C_B(\cc)}\mu_{XY}(\xx,\yy')
 &=\arg\max_{\yy'\in\C_B(\cc)}\mu_{XY}(\xx,\yy')\mu_{Z|X}(\zz|\xx)
 \\
 &=\arg\max_{\yy'\in\C_B(\cc)}\mu_{XYZ}(\xx,\yy',\zz).
\end{align*}
This implies that
ML coding by using  $\mu_{XY}$
is equivalent that using $\mu_{XYZ}$.
Since $\gamma>0$ satisfies
(\ref{eq:wz-gamma}),
we have the fact that
there is $\gamma'>0$ such that
\[
 \eta_{\Y|\X\Z}(\gamma'|\gamma)\leq\eA-\gamma
\]
for all sufficiently large $n$.
We have $\T_{Y|XZ,\gamma'}(\xx,\zz)\neq\emptyset$
for all $(\xx,\zz)\in\T_{XZ,\gamma}$ and sufficiently large $n$.
Then, from Lemma~\ref{lem:typical-number}, we have
\begin{align*}
 |\T_{Y|XZ,2\eA}(\xx,\zz)|
 &\geq
 |\T_{Y|XZ,\gamma'}(\xx,\zz)|
 \\
 &\geq 2^{n[H(Y|XZ)-\eta_{\Y|\X\Z}(\gamma'|\gamma)]}
 \\
 &\geq 2^{n[H(Y|XZ)-\eA+\gamma]}
 \\
 &\geq
 2^{n\gamma}|\Z|^{l_{\A}}
 \\
 &\geq
 2^{n\gamma}|\im\A|
\end{align*}
for all $(\xx,\zz)\in\T_{XZ,\gamma}$ and sufficiently large $n$.
This implies that
there is $\T(\xx,\zz)\subset\T_{Y|XZ,2\eA}$ such that
\begin{align}
 |\T(\xx,\zz)|
 \geq
 2^{n\gamma}{|\im\A|}.
 \label{eq:wz-T}
\end{align}
We assume that $\T(\xx,\zz)$ satisfies the assumption described in
Lemma~\ref{thm:joint-typical}.
Then from Lemma \ref{thm:joint-typical}, we have
\begin{align}
 E_{ABC}\lrB{\mu_{XZ}(\cS_1\cap\cS_2^c)}
 &=
 \sum_{(\xx,\zz)\in\T_{XZ,\gamma}}\mu_{XZ}(\xx,\zz)
 p_{AC}\lrsb{\lrb{
 (A,\cc): g_{A}(\cc|\xx)\notin\T_{Y|XZ,2\eA}(\xx,\zz)
 }}
 \notag
 \\
 &\leq
 \sum_{(\xx,\zz)\in\T_{XZ,\gamma}}\mu_{XZ}(\xx,\zz)
 \left[
 \alpha_A-1
 +\frac{|\im\A|\lrB{\beta_A+1}}{|\T(\xx,\zz)|}
 +\frac{2^{-n\eA}|\Y|^{l_{\A}}}{|\im\A|}
 \right]
 \notag
 \\
 &\leq
 \sum_{(\xx,\zz)\in\T_{XZ,\gamma}}\mu_{XZ}(\xx,\zz)
 \left[
 \alpha_A-1
 +2^{-n\gamma}\lrB{\beta_A+1}
 +\frac{2^{-n\eA}|\Y|^{l_{\A}}}{|\im\A|}
 \right]
 \notag
 \\
 &\leq
 \alpha_A-1
 +2^{-n\gamma}\lrB{\beta_A+1}
 +\frac{2^{-n\eA}|\Y|^{l_{\A}}}{|\im\A|}
 \notag
 \\
 &\leq
 \frac {\delta}3
 \label{eq:wz-error2}
\end{align}
for all sufficiently large $n$,
where the third inequality comes form (\ref{eq:wz-T}),
and the last inequality comes from the properties 
of $(\aalpha_A,\bbeta_A)$ and $|\Y|^{l_{\A}}/|\im\A|$.

Finally, we evaluate $E_{ABC}\lrB{\mu_{XZ}(\cS_1\cap\cS_2\cap\cS_3^c)}$.
From Lemma \ref{lem:error-wz}, we have
 \begin{align}
	E_{ABC}\lrB{
	\mu_{XZ}(\cS_1\cap\cS_2\cap\cS_3^c)
	}
	&\leq
	\sum_{(\xx,\zz)\in\T_{XZ,\gamma}}\mu_{XZ}(\xx,\zz)
	p_{ABC}\lrsb{\lrb{
	(A,B,\cc):
	\text{(WZ2), not (WZ3)}
	}}
	\notag
	\\
	&\leq
	\frac{2^{-n[\eB-\eA-2\zeta_{\Y\Z}(3\eA)]}|\Y|^{l_{\A}+l_{\B}}\alpha_{B}}
	{|\im\A||\im\B|}
	+ \beta_{B}
	\notag
	\\
	&\leq
	\frac{\delta}3
	\label{eq:wz-error3}
 \end{align}
for sufficiently large $n$, where the last inequality comes from
(\ref{eq:wz-e}) and the properties of $(\aalpha_B,\bbeta_B)$,
$|\Y|^{l_{\A}}/|\im\A|$ and  $|\Y|^{l_{\B}}/|\im\B|$.

From (\ref{eq:wz0})--(\ref{eq:wz-error3}),
we have the fact that for any $\delta>0$ and for all sufficiently large $n$
there are $A$, $B$, and $\cc$ such that
\[
\Error_{XY}(A,B,\cc)\leq \delta.
\]
From Lemma~\ref{lem:type}, we have
 \begin{align*}
	\frac{\rho_n(\xx,f_n(\yy,\zz))}n
	&
	=
	\sum_{(x,y,z)\in\X\times\Y\times\Z}
	\nu_{\xx\yy\zz}(x,y,z)\rho(x,f(y,z))
	\\
	&
	\leq
	\sum_{(x,y,z)\in\X\times\Y\times\Z}
	\lrB{
	\mu_{XYZ}(x,y,z) + \sqrt{6\eA}
	}
	\rho(x,f(y,z))
	\\
	&
	\leq
	\sum_{(x,y,z)\in\X\times\Y\times\Z}
	\mu_{XYZ}(x,y,z)\rho(x,f(y,z))
	+ |\X||\Y||\Z|\rho_{\max}\sqrt{6\eA}
	\\
	&
	=
	E_{XYZ}[\rho(X,f(Y,Z))] + |\X||\Y||\Z|\rho_{\max}\sqrt{6\eA}
 \end{align*}
for	$(\xx,\yy,\zz)\in\T_{XYZ,3\eA}$.
Then we have
\begin{align*}
 \frac{E_{XZ}\lrB{\rho_n(X^n,f_n(\Decoder(\Encoder(X^n),Z^n),Z^n))}}n
 &\leq
 E_{XYZ}\lrB{\rho(X,f(Y,Z))}
 +|\X||\Y||\Z|\rho_{\max}\sqrt{6\eA}
 +\delta\rho_{\max}
 \\
 &\leq
 E_{XYZ}\lrB{\rho(X,f(Y,Z))}
 +3|\X||\Y||\Z|\rho_{\max}\sqrt{\eA}
\end{align*}
for all sufficiently large $n$
by letting
\[
 \delta\leq [3-\sqrt{6}]|\X||\Y||\Z|\sqrt{\eA}.
\]
\hfill\QED

\subsection{Proof of Theorem \ref{thm:psi}}
\label{sec:proof-psi}

We define
\begin{align}
 &\bullet (\xx,\yy)\in\T_{XY,\gamma}
 \tag{OHO1}
 \\
 &\bullet g_{A}(\cc|\yy)\in\T_{Z|XY,2\eA}(\xx,\yy)
 \tag{OHO2}
 \\
 &\bullet g_{AB}(\cc,Bg_A(\cc|\yy))=g_A(\cc|\yy)
 \tag{OHO3}
 \\
 &\bullet g_{\hB}(\hB\xx|g_{AB}(\cc,Bg_A(\cc|\yy)))=\xx
 \tag{OHO4}
\end{align}
and assume that $\gamma>0$ satisfies
\begin{equation}
 \gamma+\sqrt{2\gamma}\log|\X||\Y|<\eA.
	\label{eq:psi-gamma}
\end{equation}

We prove the following lemma.
\begin{lem}
\label{lem:error-psi}
 For any $(\xx,\yy)$ satisfying (OHO1),
 \begin{align*}
	p_{AB\hB C}\lrsb{\lrb{
	(A,B,\hB,\cc): \text{(OHO2),(OHO3), not (OHO4)}
	}}
	&\leq
	\frac{2^{-n[\ehB-2\zeta_{\X\Z}(3\eA)]}
	|\X|^{l_{\hcB}}\alpha_{\hB}}
	{|\im\hcB|}
	+\beta_{\hB}.
 \end{align*}
\end{lem}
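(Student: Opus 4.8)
\emph{Proof proposal.} The plan is to follow the proof of Lemma~\ref{lem:error-wz} almost verbatim, with the pair $(\xx,\yy)$ playing the role of the fixed jointly typical sequence, the quantization $\zz\equiv g_A(\cc|\yy)$ playing the role of the intermediate sequence, and the independent function $\hB$ on $\X^n$ supplying the collision‑resistance that in the Wyner--Ziv case came from the joint ensemble.

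First I would fix $(\xx,\yy)$ satisfying (OHO1) and a tuple $(A,B,\hB,\cc)$ satisfying (OHO2) and (OHO3) but not (OHO4), and set $\zz\equiv g_A(\cc|\yy)$. The key simplification is that under (OHO3) the argument of $g_{\hB}$ inside (OHO4) collapses: $g_{AB}(\cc,Bg_A(\cc|\yy))=g_A(\cc|\yy)=\zz$, so ``not (OHO4)'' is simply $g_{\hB}(\hB\xx|\zz)\neq\xx$. Since $\xx\in\C_{\hB}(\hB\xx)$ and $g_{\hB}(\hB\xx|\zz)$ maximizes $\mu_{X|Z}(\cdot|\zz)$ over that coset, the sequence $\xx'\equiv g_{\hB}(\hB\xx|\zz)$ satisfies $\xx'\neq\xx$, $\xx'\in\C_{\hB}(\hB\xx)$ and $\mu_{X|Z}(\xx'|\zz)\geq\mu_{X|Z}(\xx|\zz)$. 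Next, from (OHO1), (OHO2), the Markov relation $\mu_{Z|XY}=\mu_{Z|Y}$ implied by (\ref{eq:markov-psi}), and the hypothesis (\ref{eq:psi-gamma}), I would run the same chain of typicality and AEP estimates as in the proof of Lemma~\ref{lem:error-wz} (Lemma~\ref{lem:typical-trans} to get $(\xx,\yy,\zz)\in\T_{XYZ,3\eA}$, Lemma~\ref{lem:typical-prob} applied to the marginals $(\xx,\zz)$ and $\zz$, and $\zeta_{\Z}\leq\zeta_{\X\Z}$) to obtain $\mu_{X|Z}(\xx|\zz)\geq 2^{-n[H(X|Z)+2\zeta_{\X\Z}(3\eA)]}$. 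Hence $\mu_{X|Z}(\xx'|\zz)\geq 2^{-n[H(X|Z)+2\zeta_{\X\Z}(3\eA)]}$, so $\xx'\in\G(\zz)$ where $\G(\zz)\equiv\{\xx'':\mu_{X|Z}(\xx''|\zz)\geq 2^{-n[H(X|Z)+2\zeta_{\X\Z}(3\eA)]}\}$, and therefore $[\G(\zz)\setminus\{\xx\}]\cap\C_{\hB}(\hB\xx)\neq\emptyset$; moreover $|\G(\zz)|\leq 2^{n[H(X|Z)+2\zeta_{\X\Z}(3\eA)]}$ for \emph{every} $\zz$, since $\mu_{X|Z}(\cdot|\zz)$ is a probability distribution.

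It then remains to take probabilities. Writing $\zz_{A,\cc}\equiv g_A(\cc|\yy)$, which depends only on $A$ and $\cc$, the above shows that the event ``(OHO2), (OHO3), not (OHO4)'' is contained in $\{(A,B,\hB,\cc):[\G(\zz_{A,\cc})\setminus\{\xx\}]\cap\C_{\hB}(\hB\xx)\neq\emptyset\}$. Since this event does not involve $B$ and $\hB$ is independent of $(A,C)$, I would condition on $(A,\cc)$ and apply Lemma~\ref{lem:Anotempty} to $(\hcB,p_{\hB})$ with the set $\G(\zz_{A,\cc})$ and the point $\xx$, getting $p_{\hB}(\{\hB:[\G(\zz_{A,\cc})\setminus\{\xx\}]\cap\C_{\hB}(\hB\xx)\neq\emptyset\})\leq|\G(\zz_{A,\cc})|\alpha_{\hB}/|\im\hcB|+\beta_{\hB}$, and then average over $(A,\cc)$ (and $B$). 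Bounding $|\G(\zz_{A,\cc})|\leq 2^{n[H(X|Z)+2\zeta_{\X\Z}(3\eA)]}$ and using $|\X|^{l_{\hcB}}=2^{l_{\hcB}\log|\X|}=2^{n[H(X|Z)+\ehB]}$, so that $2^{n[H(X|Z)+2\zeta_{\X\Z}(3\eA)]}=2^{-n[\ehB-2\zeta_{\X\Z}(3\eA)]}|\X|^{l_{\hcB}}$, yields exactly the stated inequality.

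I do not expect a genuine obstacle here; the only points requiring care are (i) noticing that (OHO3) makes the whole event depend on $(A,\cc)$ and $\hB$ alone, so no coupling with $B$ needs to be tracked, and (ii) the dependence of $\G$ on the random pair $(A,\cc)$, which is handled by conditioning on $(A,\cc)$ before invoking Lemma~\ref{lem:Anotempty}, exactly as in the proof of Lemma~\ref{lem:ABCnoempty} --- the size bound $|\G(\zz)|\leq 2^{n[H(X|Z)+2\zeta_{\X\Z}(3\eA)]}$ being uniform in $\zz$ makes the final averaging trivial.
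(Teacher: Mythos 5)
Your proof is correct and follows essentially the same route as the paper's: use (OHO3) to collapse $g_{AB}(\cc,Bg_A(\cc|\yy))$ to $\zz_{A,\cc}=g_A(\cc|\yy)$, use typicality (Lemmas~\ref{lem:typical-trans} and~\ref{lem:typical-aep}) to place the decoder output in $\G(\zz_{A,\cc})\equiv\{\xx'':\mu_{X|Z}(\xx''|\zz_{A,\cc})\geq 2^{-n[H(X|Z)+2\zeta_{\X\Z}(3\eA)]}\}$, then condition on $(A,\cc)$ and invoke Lemma~\ref{lem:Anotempty} for $(\hcB,p_{\hB})$ with the uniform bound $|\G(\zz)|\leq 2^{n[H(X|Z)+2\zeta_{\X\Z}(3\eA)]}$. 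Your formulation in which the fixed point $\xx$ (rather than the $\hB$-dependent decoder output) is removed from $\G$ before applying Lemma~\ref{lem:Anotempty}, and the observation that after (OHO3) the event involves only $(A,\cc,\hB)$, are a slightly cleaner rendering of what the paper's proof does.
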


\begin{proof}
We define
\begin{align*}
 \xx_{A,B,\hB,\cc}&\equiv g_{\hB}(\hB\xx|g_{AB}(\cc,Bg_A(\cc,\xx)))
 \\
 \zz_{A,B,\cc}&\equiv g_{AB}(\cc,Bg_A(\cc,\xx)).
\end{align*}
Assume that conditions (OHO1)-(OHO3) are satisfied but (OHO4) is not.
From Lemma~\ref{lem:typical-trans} and (\ref{eq:psi-gamma}), we have
$(\xx,\yy,g_{A}(\cc,\yy))\in\T_{XYZ,3\eA}$
and there is $\xx'\in\C_{\hB}(\hB\xx)$
such that $\xx'\neq g_{\hB}(\hB\xx,g_{AB}(\cc,Bg_A(\cc|\xx)))$.
From Lemma~\ref{lem:typical-prob}, we have
\begin{align*}
 \mu_{X|Z}(\xx'|\zz_{A,B,\cc})
 &\geq
 \mu_{X|Z}(\xx_{A,B,\hB,\cc}|\zz_{A,B,\cc})
 \\*
 &=
 \frac{\mu_{XZ}(\xx_{A,B,\hB,\cc},\zz_{A,B,\cc})}
 {\mu_Z(\zz)}
 \\
 &\geq
 \frac{2^{-n[H(XZ)+\zeta_{\X\Z}(3\eA)]}}
 {2^{-n[H(Z)-\zeta_{\Z}(3\eA)]}}
 \\
 &=
 2^{-n[H(X|Z)+2\zeta_{\X\Z}(3\eA)]}.
\end{align*}
This implies that
\[
 \lrB{\G(\zz_{A,B,\cc})\setminus\{\xx_{A,B,\hB,\cc}\}}
 \cap\C_{\hB}(\hB\xx)
 \neq\emptyset,
\]
where
\begin{align*}
 \G(\zz)
 &\equiv
 \lrb{
 \xx':
 \mu_{X|Z}(\xx'|\zz)
 \geq
 2^{-n[H(X|Z)+2\zeta_{\X\Z}(3\eA)]}
 }.
\end{align*}

From Lemma~\ref{lem:Anotempty}, we have
\begin{align*}
 &p_{AB\hB C}\lrsb{\lrb{
 (A,B,\hB,\cc): \text{(OHO2),(OHO3), not (OHO4)}
 }}
 \\*
 &\leq
 p_{AB\hB C}\lrsb{\lrb{
 (A,B,\hB,\cc):
 \lrB{\G(\zz_{A,B,\cc})\setminus\{\xx_{A,B,\hB,\cc}\}}
 \cap\C_{\hB}(\hB\xx_{A,B,\hB,\cc})
 \neq \emptyset
 }}
 \\
 &=
 \sum_{A,B,\cc}p_{A,B,\cc}(A,B,\cc)
 p_{\hB}\lrsb{\lrb{
 \hB:
 \lrB{\G(\zz_{A,B,\cc})\setminus\{\xx_{A,B,\hB,\cc}\}}
 \cap\C_{\hB}(\hB\xx_{A,B,\hB,\cc})
 \neq \emptyset
 }}
 \\
 &\leq
 \sum_{A,B,\cc}p_{A,B,\cc}(A,B,\cc)
 \lrB{
 \frac{\left|
 \G(\zz_{A,B,\cc})\setminus\{\xx_{A,B,\hB,\cc}\}
 \right|\alpha_{\hB}}
 {|\im\hcB|}
 +\beta_{\hB}
 }
 \\
 &\leq
 \frac{2^{n[H(X|Z)+2\zeta_{\X\Z}(3\eA)]}\alpha_{\hB}}
 {|\im\hcB|}
 +\beta_{\hB}
 \\
 &=
 \frac{2^{-n[\ehB-2\zeta_{\X\Z}(3\eA)]}
 |\X|^{l_{\hcB}}\alpha_{\hB}}
 {|\im\hcB|}
 +\beta_{\hB},
\end{align*}
where the last inequality comes from the fact that
\begin{align*}
 |\G(\zz_{A,B,\cc})|
 &\leq
 2^{n[H(X|Z)+2\zeta_{\X\Z}(3\eA)]},
\end{align*}
for all $A$, $B$ and $\cc$.
\end{proof}

{\it Proof of Theorem \ref{thm:psi}:}
Under the conditions (OHO1)--(OHO4), we have
\begin{align*}
 \Decoder(\Encoder_X(\xx),\Encoder_Y(\yy))
 &=g_{\hB}(\hB\xx,g_{AB}(\cc,Bg_A(\cc|\yy)))
 \\
 &=\xx.
\end{align*}
Then the decoding error probability
is upper bounded by
\begin{align}
 \Error_{XY}(A,B,\hB,\cc)
 \leq
 \mu_{XY}(\cS_1^c)+\mu_{XY}(\cS_2^c)
 +\mu_{XY}(\cS_1\cap\cS_2\cap\cS_3^c)
 +\mu_{XY}(\cS_1\cap\cS_2\cap\cS_3\cap\cS_4^c)
 \label{eq:psi0}
\end{align}
where we define
\begin{align*}
 \cS_i
 &\equiv
 \lrb{
 (\vv,\xx): \text{(OHO$i$)}
 }.
\end{align*}

From (\ref{eq:markov-psi}), we have
\begin{align*}
 \mu_{XYZ}(\xx,\yy,\zz)
 &=
 \mu_{XY}(\xx,\yy)\mu_{Z|Y}(\zz|\yy)
 \\
 &=
 \frac{\mu_{XY}(\xx,\yy)\mu_{YZ}(\yy,\zz)}
 {\mu_{Y}(\yy)}
 \\
 &=\mu_{X|Y}(\xx|\yy)\mu_{YZ}(\yy,\zz)
\end{align*}
and
\begin{align*}
 \arg\max_{\zz'\in\C_A(\cc)}\mu_{Z|Y}(\zz'|\yy)
 &=\arg\max_{\zz'\in\C_A(\cc)}\mu_{YZ}(\yy,\zz')
 \\
 &=\arg\max_{\zz'\in\C_A(\cc)}\mu_{YZ}(\yy,\zz')\mu_{X|Y}(\xx|\yy)
 \\
 &=\arg\max_{\zz'\in\C_A(\cc)}\mu_{XYZ}(\xx,\yy,\zz')
 \\
 &=\arg\max_{\zz'\in\C_A(\cc)}\mu_{Z|XY}(\zz'|\xx,\yy)
\end{align*}
This implies that
ML coding by using  $\mu_{Z|Y}$
is equivalent to that using $\mu_{Z|XY}$.
By 
applying a similar argument to that in the proof of
Theorem~\ref{thm:wz},
we have
\begin{align}
 E_{AB\hB C}\lrB{\mu_{XY}(\cS_1^c)}
 &\leq
 \frac{\delta}4
 \label{eq:psi-error1}
 \\
 E_{AB\hB C}\lrB{\mu_{XY}(\cS_2^c)}
 &\leq
	\alpha_A-1
	+2^{-n\gamma}\lrB{\beta_A+1}
 +\frac{2^{-n\eA}|\Z|^{l_{\A}}}{|\im\A|}
 \notag
 \\
 &\leq
 \frac{\delta}4
 \label{eq:psi-error2}
 \\
 E_{AB\hB C}\lrB{\mu_{XY}(\cS_1\cap\cS_2\cap\cS_3^c)}
 &\leq
 \frac{2^{-n[\eB-\eA-\zeta_{\Z}(3\eA)]}|\Z|^{l_{\A}+l_{\B}}\alpha_{AB}}
 {|\im\A||\im\B|}
 + \beta_{AB}
 \notag
 \\
 &\leq
 \frac{\delta}4
 \label{eq:psi-error3}
\end{align}
for all sufficiently large $n$ by
assuming (\ref{eq:psi-e1}) and (\ref{eq:psi-gamma}).
Furthermore, from Lemma \ref{lem:error-psi}, we have
 \begin{align}
	&
	E_{AB\hB C}\lrB{
	\mu_{XY}(\cS_1\cap\cS_2\cap\cS_3\cap\cS_4^c)
	}
	\notag
	\\*
	&\leq
	\sum_{(\xx,\yy)\in\T_{XY,\gamma}}\mu_{XY}(\xx,\yy)
	p_{AB\hB C}\lrsb{\lrb{
	(A,B,\cc): \text{(OHO2),(OHO3), not (OHO4)}
	}}
	\notag
	\\
	&\leq
	\sum_{(\xx,\yy)\in\T_{XY,\gamma}}\mu_{XY}(\xx,\yy)
	\lrB{
	\frac{2^{-n[\ehB-2\zeta_{\X\Z}(3\eA)]}
	|\X|^{l_{\hcB}}\alpha_{\hB}}
	{|\im\hcB|}
	+\beta_{\hB}
	}
	\notag
	\\
	&\leq
	\frac{2^{-n[\ehB-2\zeta_{\X\Z}(3\eA)]}
	|\X|^{l_{\hcB}}\alpha_{\hB}}
	{|\im\hcB|}
	+\beta_{\hB}
	\notag
	\\
	&\leq
	\frac{\delta}4
	\label{eq:psi-error4}
 \end{align}
for all sufficiently large $n$,
where the last inequality comes from
(\ref{eq:psi-e2}) and the properties of
$(\aalpha_{\hB},\bbeta_{\hB})$ and $|\X|^{l_{\hcB}}/|\im\hB|$.

From (\ref{eq:psi0})--(\ref{eq:psi-error4}),
we have the fact that
for all $\delta>0$ and sufficiently large $n$
there are $A$, $B$, $\hB$, and $\cc$ such that
\[
\Error_{XY}(A,B,\hB,C)\leq \delta.
\]
\hfill\QED

\section{Conclusion}
In this paper we introduced the notion of the hash property of an
ensemble of
functions and proved that
an ensemble of $q$-ary sparse matrices satisfies the hash property.
Based on this property, we proved
the achievability of the coding theorems
for the Slepian-Wolf problem, the Gel'fand-Pinsker problem,
the Wyner-Ziv problem, and the One-helps-one problem.
This implies that the rate of codes using sparse matrices combined with ML
coding can achieve the optimal rate.
We believe that
the hash property is essential for coding problems
and our theory can also be applied
to other ensembles of functions suitable for
efficient coding algorithms.
In other words,
it is enough to prove the hash property of a new ensemble
to obtain several coding theorems.
It is a future challenge to derive the performance of codes
 when ML coding is replaced by one of these efficient algorithms given in
\cite{GDL}\cite{KFL01}\cite{FWK05}.
It is also a future challenge to apply the hash property
to other coding problems.
For example, there are studies of the fixed-rate universal source coding
and the fixed-rate universal channel coding \cite{HASH-UNIV},
and the wiretap channel coding and the secret key agreement \cite{HASH-CRYPT}.

\appendix

\subsection*{Method of Types}
\label{sec:type-theory}

We use the following lemmas for a set of typical sequences.
It should be noted that
our definition of a set of typical sequences
is introduced in \cite{HK89}\cite{UYE}
and differs from that defined in \cite{CK}\cite{CT}\cite{HK02}\cite{Y06}.

\begin{lem}[{\cite[Lemma 2.6]{CK}}]
 \label{lem:exprob}
\begin{align*}
 \frac 1n\log \frac 1{\mu_{U}(\uu)}
 &= H(\nu_{\uu})+D(\nu_{\uu}\|\mu_{U})
 \\
 \frac 1n\log\frac 1{\mu_{U|V}(\uu|\vv)}
 &= H(\nu_{\uu|\vv}|\nu_{\vv})
 +D(\nu_{\uu|\vv}\|\mu_{U|V}|\nu_{\vv}).
\end{align*}
\end{lem}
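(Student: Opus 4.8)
The plan is to verify both identities by writing the sequence probability as a product over coordinates, regrouping the factors according to the empirical distribution, and then recognising the resulting sum as an entropy plus a divergence.

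First I would treat the unconditional identity. Since the source is memoryless, $\mu_{U}(\uu)=\prod_{i=1}^{n}\mu_{U}(u_{i})$, and collecting the coordinates by their common value --- the symbol $u\in\U$ occurring exactly $n\nu_{\uu}(u)$ times in $\uu$ --- gives $\mu_{U}(\uu)=\prod_{u\in\U}\mu_{U}(u)^{\,n\nu_{\uu}(u)}$. Taking $-\frac1n\log$ of both sides yields
\[
 \frac1n\log\frac1{\mu_{U}(\uu)}=\sum_{u\in\U}\nu_{\uu}(u)\log\frac1{\mu_{U}(u)}.
\]
Inserting the factor $\nu_{\uu}(u)/\nu_{\uu}(u)$ inside the logarithm and splitting the sum then gives
\[
 \sum_{u\in\U}\nu_{\uu}(u)\log\frac1{\nu_{\uu}(u)}+\sum_{u\in\U}\nu_{\uu}(u)\log\frac{\nu_{\uu}(u)}{\mu_{U}(u)}=H(\nu_{\uu})+D(\nu_{\uu}\|\mu_{U}),
\]
which is the first claim.

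The conditional identity is entirely analogous, now using the joint type. The memoryless channel gives $\mu_{U|V}(\uu|\vv)=\prod_{i=1}^{n}\mu_{U|V}(u_{i}|v_{i})=\prod_{u,v}\mu_{U|V}(u|v)^{\,n\nu_{\uu\vv}(u,v)}$, hence $\frac1n\log\frac1{\mu_{U|V}(\uu|\vv)}=\sum_{u,v}\nu_{\uu\vv}(u,v)\log\frac1{\mu_{U|V}(u|v)}$. Substituting $\nu_{\uu\vv}(u,v)=\nu_{\uu|\vv}(u|v)\,\nu_{\vv}(v)$ and again inserting $\pm\log\nu_{\uu|\vv}(u|v)$ separates the sum into $\sum_{v}\nu_{\vv}(v)\sum_{u}\nu_{\uu|\vv}(u|v)\log\frac1{\nu_{\uu|\vv}(u|v)}$ and $\sum_{v}\nu_{\vv}(v)\sum_{u}\nu_{\uu|\vv}(u|v)\log\frac{\nu_{\uu|\vv}(u|v)}{\mu_{U|V}(u|v)}$, which are exactly $H(\nu_{\uu|\vv}|\nu_{\vv})$ and $D(\nu_{\uu|\vv}\|\mu_{U|V}|\nu_{\vv})$ by the definitions of conditional entropy and conditional divergence recalled earlier.

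There is no genuinely hard step here; the only points requiring a little care are the boundary conventions for the logarithms. If $\nu_{\uu}(u)=0$ the corresponding summand is taken to be $0$ as usual; if instead $\mu_{U}(u)=0$ for some $u$ with $\nu_{\uu}(u)>0$, then $\mu_{U}(\uu)=0$ and both sides equal $+\infty$, so the identity persists, and likewise in the conditional case. The factorisation $\nu_{\uu\vv}=\nu_{\uu|\vv}\,\nu_{\vv}$ is immediate from the definition $\nu_{\uu|\vv}(u|v)\equiv\nu_{\uu\vv}(u,v)/\nu_{\vv}(v)$, so nothing beyond elementary algebra is needed. Since the statement is quoted as \cite[Lemma~2.6]{CK}, one could alternatively simply cite it; the sketch above is the short self-contained argument.
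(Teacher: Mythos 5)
Your proof is correct and is the standard method-of-types computation: factor the product probability, regroup by symbol (respectively by symbol pair), and split the resulting single-letter sum by inserting $\pm\log\nu$. The paper itself offers no proof for this lemma, citing \cite[Lemma~2.6]{CK} directly, and your self-contained derivation is exactly the argument underlying that citation, so there is no divergence of approach to report.
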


\begin{lem}[{\cite[Theorem 2.5]{UYE}}]
 \label{lem:typical-trans}
 If $\vv\in\T_{V,\gamma}$ and $\uu\in\T_{U|V,\gamma'}(\vv)$,
 then $(\uu,\vv)\in\T_{UV,\gamma+\gamma'}$.
 If $(\uu,\vv)\in\T_{UV,\gamma}$, then $\uu\in\T_{U,\gamma}$.
\end{lem}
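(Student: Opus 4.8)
The plan is to obtain both implications as immediate consequences of two elementary facts about information divergence applied to the empirical distributions $\nu_{\uu\vv}$, $\nu_{\uu}$, $\nu_{\vv}$, $\nu_{\uu|\vv}$: the chain rule and the non-negativity of conditional divergence. The central identity is the chain rule
\[
 D(\nu_{\uu\vv}\|\mu_{UV}) = D(\nu_{\vv}\|\mu_{V}) + D(\nu_{\uu|\vv}\|\mu_{U|V}|\nu_{\vv}),
\]
which I would verify by a direct computation: substituting $\nu_{\uu\vv}(u,v)=\nu_{\vv}(v)\nu_{\uu|\vv}(u|v)$ and $\mu_{UV}(u,v)=\mu_{V}(v)\mu_{U|V}(u|v)$ into the definition of $D(\cdot\|\cdot)$, splitting the logarithm, and using $\sum_{u}\nu_{\uu|\vv}(u|v)=1$ to collapse the first piece to $D(\nu_{\vv}\|\mu_{V})$ and recognizing the second piece as $D(\nu_{\uu|\vv}\|\mu_{U|V}|\nu_{\vv})$ directly from the definition of conditional divergence. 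By symmetry the same identity holds with the roles of $U$ and $V$ exchanged, i.e. $D(\nu_{\uu\vv}\|\mu_{UV}) = D(\nu_{\uu}\|\mu_{U}) + D(\nu_{\vv|\uu}\|\mu_{V|U}|\nu_{\uu})$.

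For the first assertion, if $\vv\in\T_{V,\gamma}$ and $\uu\in\T_{U|V,\gamma'}(\vv)$ then $D(\nu_{\vv}\|\mu_{V})<\gamma$ and $D(\nu_{\uu|\vv}\|\mu_{U|V}|\nu_{\vv})<\gamma'$, so the chain rule yields $D(\nu_{\uu\vv}\|\mu_{UV})<\gamma+\gamma'$, which is exactly $(\uu,\vv)\in\T_{UV,\gamma+\gamma'}$. For the second assertion I would use the exchanged chain rule together with $D(\nu_{\vv|\uu}\|\mu_{V|U}|\nu_{\uu})\ge 0$ to get $D(\nu_{\uu}\|\mu_{U})\le D(\nu_{\uu\vv}\|\mu_{UV})$; hence $(\uu,\vv)\in\T_{UV,\gamma}$ forces $D(\nu_{\uu}\|\mu_{U})<\gamma$, i.e. $\uu\in\T_{U,\gamma}$. (Equivalently, $D(\nu_{\uu}\|\mu_{U})\le D(\nu_{\uu\vv}\|\mu_{UV})$ is monotonicity of divergence under marginalization, which also follows from the log-sum inequality: for each fixed $u$, $\nu_{\uu}(u)\log\frac{\nu_{\uu}(u)}{\mu_{U}(u)}\le\sum_{v}\nu_{\uu\vv}(u,v)\log\frac{\nu_{\uu\vv}(u,v)}{\mu_{UV}(u,v)}$, and one sums over $u$.)

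I expect no real obstacle here; both ingredients are textbook facts. The only points needing a little care are bookkeeping conventions: that $\mu_{U}$ and $\mu_{V}$ are taken to be the marginals of $\mu_{UV}$, as fixed earlier, so the factorizations above are legitimate; and that symbols $v$ with $\nu_{\vv}(v)=0$ contribute nothing to any sum (the quotient defining $\nu_{\uu|\vv}(u|v)$ is vacuous there since it is weighted by $\nu_{\vv}(v)=0$ in every divergence expression), while for symbols with $\nu_{\vv}(v)>0$ one may assume $\mu_{V}(v)>0$, since otherwise $D(\nu_{\vv}\|\mu_{V})=\infty$ and the hypothesis $\vv\in\T_{V,\gamma}$ already fails. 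With these conventions in place the two computations are routine and the lemma follows.
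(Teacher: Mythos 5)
Your proof is correct and follows essentially the same route as the paper: the chain-rule decomposition $D(\nu_{\uu\vv}\|\mu_{UV}) = D(\nu_{\vv}\|\mu_{V}) + D(\nu_{\uu|\vv}\|\mu_{U|V}|\nu_{\vv})$ for the first implication, and that decomposition plus non-negativity of conditional divergence for the second. In fact you are slightly more careful than the paper's stated proof of the second assertion, which quotes the inequality $D(\nu_{\vv}\|\mu_V)\le D(\nu_{\uu\vv}\|\mu_{UV})$ (which would directly yield $\vv\in\T_{V,\gamma}$ rather than $\uu\in\T_{U,\gamma}$); you correctly invoke the symmetric chain rule with $U$ and $V$ exchanged to obtain $D(\nu_{\uu}\|\mu_{U})\le D(\nu_{\uu\vv}\|\mu_{UV})$, which is what the lemma actually requires.
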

\begin{proof}
The first statement can be proved from the fact that
\begin{equation}
 \label{eq:divergence}
 \begin{split}
	D(\nu_{\uu\vv}\parallel \mu_{UV})
	=
	D(\nu_{\vv}\parallel \mu_V)
	+D(\nu_{\uu|\vv}\parallel \mu_{U|V} | \nu_{\vv}).
 \end{split}
\end{equation}
The second statement can be proved from the fact that
\[
 D(\nu_{\vv}\parallel \mu_V)\leq
 D(\nu_{\uu,\vv}\parallel \mu_{UV}),
\]
which is derived from
(\ref{eq:divergence})
and the non-negativity of the divergence.
\end{proof}

\begin{lem}[{\cite[Theorem 2.6]{UYE}}]
 \label{lem:type}
 If $\uu\in\T_{U,\gamma}$, then
 \begin{gather*}
	\left|
	\nu_{\uu}(u)-\mu_U(u)
	\right|
	\leq \sqrt{2\gamma},\quad\text{for all $u\in\U$},
	\\
	\nu_{\uu}(u)=0,\quad\text{if $\mu_{U}(u)=0$}.
 \end{gather*}
\end{lem}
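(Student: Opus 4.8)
The plan is to derive both assertions directly from the defining inequality $D(\nu_{\uu}\|\mu_U)<\gamma$ of the typical set $\T_{U,\gamma}$, using only standard properties of informational divergence; no machinery beyond Pinsker's inequality is needed.

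First I would dispose of the second assertion. If $\mu_U(u)=0$ for some $u\in\U$ while $\nu_{\uu}(u)>0$, then the $u$-th term $\nu_{\uu}(u)\log\frac{\nu_{\uu}(u)}{\mu_U(u)}$ of the sum defining $D(\nu_{\uu}\|\mu_U)$ equals $+\infty$, and since all remaining terms are bounded below, $D(\nu_{\uu}\|\mu_U)=+\infty\geq\gamma$, contradicting $\uu\in\T_{U,\gamma}$. Hence $\mu_U(u)=0$ forces $\nu_{\uu}(u)=0$; in particular $D(\nu_{\uu}\|\mu_U)$ is finite and $\nu_{\uu}$ is absolutely continuous with respect to $\mu_U$.

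For the first assertion I would invoke Pinsker's inequality. Writing $\|\nu_{\uu}-\mu_U\|_1\equiv\sum_{v\in\U}|\nu_{\uu}(v)-\mu_U(v)|$, Pinsker's inequality in the base-$2$ normalization gives $D(\nu_{\uu}\|\mu_U)\geq\frac{1}{2\ln 2}\|\nu_{\uu}-\mu_U\|_1^2$, hence $\|\nu_{\uu}-\mu_U\|_1\leq\sqrt{2\ln 2\cdot D(\nu_{\uu}\|\mu_U)}<\sqrt{2\ln 2\cdot\gamma}<\sqrt{2\gamma}$, where the last step uses $\ln 2<1$. Since $|\nu_{\uu}(u)-\mu_U(u)|\leq\|\nu_{\uu}-\mu_U\|_1$ for every single $u\in\U$, the claimed bound $|\nu_{\uu}(u)-\mu_U(u)|\leq\sqrt{2\gamma}$ follows. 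As an alternative route one can group the alphabet into $\{u\}$ and $\U\setminus\{u\}$, apply the log-sum inequality to see that grouping only decreases the divergence, so that $D(\nu_{\uu}\|\mu_U)$ dominates the binary divergence $d(\nu_{\uu}(u)\|\mu_U(u))$, and then apply the two-point form of Pinsker; this yields an even smaller constant.

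This is a textbook-level argument, so there is no genuine obstacle; the only thing to watch is the base-$2$ logarithm bookkeeping, namely keeping track of the $\ln 2$ factor so that the right-hand side is genuinely $\sqrt{2\gamma}$ and not a larger quantity, together with the harmless boundary cases $\mu_U(u)\in\{0,1\}$ — the value $0$ being precisely the second assertion and the value $1$ being covered by the same inequality.
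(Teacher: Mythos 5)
Your proof is correct and follows essentially the same route as the paper's: the paper also invokes the Pinsker-type inequality $\sum_{u}|\nu_{\uu}(u)-\mu_U(u)|\leq\sqrt{2D(\nu_{\uu}\|\mu_U)/\log_2 e}$ (citing \cite[Lemma 12.6.1]{CT}), bounds the $L^1$ distance by $\sqrt{2\gamma\ln 2}<\sqrt{2\gamma}$, and reads off the pointwise bound. Your explicit handling of the case $\mu_U(u)=0$ via the infinite-divergence observation is a small but welcome addition that the paper leaves implicit.
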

\begin{proof}
The lemma can be proved directly from the fact that
\[
 \sum_{u\in\U}|\nu(u)-\mu_U(u)|\leq\sqrt{\frac{2D(\nu\parallel
 \mu_U)}{\log_2 e}},
\]
where $e$ is the base of the natural logarithm (see~\cite[Lemma 12.6.1]{CT}).
\end{proof}

\begin{lem}[{\cite[Theorem 2.7]{UYE}}]
 \label{lem:typical-aep}
 Let $0<\gamma\leq 1/8$.
 Then,
 \begin{align}
	\begin{split}
	\left|
	\frac 1{n}\log_2\frac 1{\mu_{U}(\uu)} - H(U)
	\right|
	&\leq
	\zeta_{\U}(\gamma)
	\end{split}
	\label{eq:typical-aep}
 \end{align}
 for all $\uu\in\T_{U,\gamma}$, and
 \begin{align}
	\begin{split}
	\left|
	\frac 1{n}\log_2\frac 1{\mu_{U|V}(\uu|\vv)} - H(U|V)
	\right|
	&\leq
	\zeta_{\U|\V}(\gamma'|\gamma)
	\end{split}
	\label{eq:joint-typical-aep}
 \end{align}
 for $\vv\in\T_{V,\gamma}$ and $\uu\in\T_{U|V,\gamma'}(\vv)$,
 where $\zeta_{\U}(\gamma)$ and $\zeta_{\U|\V}(\gamma'|\gamma)$
 are defined in (\ref{eq:zeta}) and (\ref{eq:zetac}), respectively.
\end{lem}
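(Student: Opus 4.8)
The plan is to deduce both displays from the two identities in Lemma~\ref{lem:exprob}, combined with Pinsker's inequality and the uniform continuity of entropy in the $L^1$ distance; the hypothesis $\gamma\leq 1/8$ enters precisely to make the continuity bound applicable.

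First I would handle (\ref{eq:typical-aep}). By the first identity of Lemma~\ref{lem:exprob},
\[
 \frac 1n\log_2\frac 1{\mu_U(\uu)}-H(U)
 =\lrB{H(\nu_{\uu})-H(\mu_U)}+D(\nu_{\uu}\|\mu_U),
\]
and since $\uu\in\T_{U,\gamma}$ the divergence term lies in $[0,\gamma)$. Exactly as in the proof of Lemma~\ref{lem:type}, Pinsker's inequality gives $\sum_{u}|\nu_{\uu}(u)-\mu_U(u)|\leq\sqrt{2\gamma}$, and since $\gamma\leq 1/8$ forces $\sqrt{2\gamma}\leq 1/2$, the standard uniform-continuity bound for entropy \cite[Lemma 2.7]{CK} yields $|H(\nu_{\uu})-H(\mu_U)|\leq-\sqrt{2\gamma}\log\frac{\sqrt{2\gamma}}{|\U|}$; here one uses that $t\mapsto-t\log\frac t{|\U|}$ is nondecreasing on $(0,1/2]$ (its derivative is $\log\frac{|\U|}{t}-\log e\geq 0$ there), so it may be evaluated at the upper bound $\sqrt{2\gamma}$. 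Adding the two contributions, the displayed difference is at most $\gamma-\sqrt{2\gamma}\log\frac{\sqrt{2\gamma}}{|\U|}=\zeta_{\U}(\gamma)$ and at least $\sqrt{2\gamma}\log\frac{\sqrt{2\gamma}}{|\U|}\geq-\zeta_{\U}(\gamma)$ (recall $\log\frac{\sqrt{2\gamma}}{|\U|}\leq 0$, so the continuity term is a nonnegative quantity), which is (\ref{eq:typical-aep}) after comparison with (\ref{eq:zeta}).

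For (\ref{eq:joint-typical-aep}) I would use the second identity of Lemma~\ref{lem:exprob} and decompose the conditional-entropy difference as $H(\nu_{\uu|\vv}|\nu_{\vv})-H(\mu_{U|V}|\mu_V)=[H(\nu_{\uu|\vv}|\nu_{\vv})-H(\mu_{U|V}|\nu_{\vv})]+[H(\mu_{U|V}|\nu_{\vv})-H(\mu_{U|V}|\mu_V)]$. The second bracket equals $\sum_v[\nu_{\vv}(v)-\mu_V(v)]h_v$, where $h_v$ is the entropy of $\mu_{U|V}(\cdot|v)$ and $0\leq h_v\leq\log|\U|$, so it is at most $\log|\U|\cdot\sum_v|\nu_{\vv}(v)-\mu_V(v)|\leq\sqrt{2\gamma}\log|\U|$ in absolute value, using $\vv\in\T_{V,\gamma}$ and Pinsker. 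For the first bracket I would introduce the law $\widetilde\mu(u,v)\equiv\mu_{U|V}(u|v)\nu_{\vv}(v)$ on $\U\times\V$; since $\widetilde\mu$ and $\nu_{\uu\vv}$ have the same $V$-marginal $\nu_{\vv}$, the chain rule gives $H(\nu_{\uu\vv})-H(\widetilde\mu)=H(\nu_{\uu|\vv}|\nu_{\vv})-H(\mu_{U|V}|\nu_{\vv})$ and $D(\nu_{\uu\vv}\|\widetilde\mu)=D(\nu_{\uu|\vv}\|\mu_{U|V}|\nu_{\vv})<\gamma'$, so Pinsker gives $L^1$-distance at most $\sqrt{2\gamma'}$ and \cite[Lemma 2.7]{CK}, applied on an alphabet of size $|\U||\V|$, bounds the first bracket by $-\sqrt{2\gamma'}\log\frac{\sqrt{2\gamma'}}{|\U||\V|}$. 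Combining these with the conditional divergence term $D(\nu_{\uu|\vv}\|\mu_{U|V}|\nu_{\vv})\in[0,\gamma')$ and comparing with (\ref{eq:zetac}) gives (\ref{eq:joint-typical-aep}).

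The step I expect to be the main obstacle is ensuring the entropy-continuity inequality \cite[Lemma 2.7]{CK} is legitimately applicable: it requires the $L^1$ distance to be at most $1/2$, which is exactly why $\gamma\leq 1/8$ is imposed, and in the conditional case one must likewise restrict to the regime $\sqrt{2\gamma'}\leq 1/2$. Beyond that, one has to check that replacing the true $L^1$ distances by their Pinsker upper bounds only weakens the bounds (monotonicity of $t\mapsto-t\log\frac tk$ on $(0,1/2]$, valid since $k\geq 2>e/2$), and to keep careful track of the signs of the negative quantities $\log\frac{\sqrt{2\gamma}}{|\U|}$ and $\log\frac{\sqrt{2\gamma'}}{|\U||\V|}$ when passing to absolute values. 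The remaining manipulations are routine and recover \cite[Theorem 2.7]{UYE}.
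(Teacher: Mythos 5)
Your proposal is correct and follows essentially the same route as the paper's proof: both start from the two identities of Lemma~\ref{lem:exprob}, split the (conditional-)entropy discrepancy into three pieces, bound the divergence term by $\gamma$ (resp.\ $\gamma'$), use \cite[Lemma~2.7]{CK} for the unconditional-entropy piece, and use Pinsker (Lemma~\ref{lem:type}) with the trivial bound $0\leq h_v\leq\log|\U|$ for the conditional piece. Your device of passing to the auxiliary joint law $\widetilde\mu(u,v)=\mu_{U|V}(u|v)\nu_{\vv}(v)$ is exactly the bookkeeping the paper leaves implicit when it invokes \cite[Lemma~2.7]{CK} for the middle term, and your remark that a bound $\gamma'\leq 1/8$ is implicitly needed for the conditional display is a fair observation about an unstated hypothesis rather than a gap in your argument.
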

\begin{proof}
From Lemma \ref{lem:exprob}, we have
\begin{align*}
 \left|
	\frac 1{n}\log_2\frac 1{\mu_V(\vv)} - H(V)
	\right|
 \leq
 D(\nu_{\vv}\parallel \mu_V)+|H(\nu_{\vv})-H(V)|.
\end{align*}
We have (\ref{eq:typical-aep}) from \cite[Lemma 2.7]{CK}.

From Lemmas \ref{lem:exprob} and \ref{lem:type}, we have
\begin{align*}
 &
 \left|
 \frac 1{n}\log_2\frac 1{\mu_{U|V}(\uu|\vv)} - H(U|V)
 \right|
 \\*
 &
 \leq
 D(\nu_{\uu|\vv}\parallel \mu_{U|V}|\nu_{\vv})
 +|H(\nu_{\uu|\vv}|\nu_{\vv})-H(\mu_{U|V}|\nu_{\vv})|
 +|H(\mu_{U|V}|\nu_{\vv})-H(U|V)|,
\end{align*}
and
\begin{align*}
 |H(\mu_{U|V}|\nu_{\vv})-H(U|V)|
 &\leq
 \sqrt{2\gamma}\log_2|\U|,
\end{align*}
respectively.
We have (\ref{eq:joint-typical-aep}) from the above inequalities
 and \cite[Lemma 2.7]{CK}.
\end{proof}

\begin{lem}[{\cite[Theorem 2.8]{UYE}}]
 \label{lem:typical-prob}
 For any $\gamma>0$,and $\vv\in\V^n$,
 \begin{align*}
	\mu_U([\T_{U,\gamma}]^c)
	&\leq
	2^{-n[\gamma-\lambda_{\U}]}
	\\
	\mu_{U|V}([\T_{U|V,\gamma}(\vv)]^c|\vv)
	&\leq
	2^{-n[\gamma-\lambda_{\U\V}]},
 \end{align*}
 where $\lambda_{\U}$ and $\lambda_{\U\V}$ are defined in (\ref{eq:lambda}).
\end{lem}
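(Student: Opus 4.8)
The two inequalities are the classical method-of-types estimate for the probability of an atypical set; they are quoted as \cite[Theorem 2.8]{UYE}, and the plan is to reprove them by partitioning the relevant sequence space into (conditional) type classes and summing the contributions of the atypical ones. The key point enabling this is that, with the $D$-based definition of typicality used here (following \cite{HK89}\cite{UYE}), the atypical set is \emph{exactly} a union of type classes.

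For the first bound I would write $[\T_{U,\gamma}]^c=\{\uu:D(\nu_{\uu}\|\mu_U)\geq\gamma\}$ as a disjoint union of the classes $\{\uu:\nu_{\uu}=P\}$ over all types $P$ of length $n$ on $\U$ with $D(P\|\mu_U)\geq\gamma$. For $\uu$ in one such class, Lemma~\ref{lem:exprob} gives $\mu_U(\uu)=2^{-n[H(\nu_{\uu})+D(\nu_{\uu}\|\mu_U)]}$, while the elementary type-size bound gives $|\{\uu:\nu_{\uu}=P\}|\leq 2^{nH(P)}$; multiplying, the $\mu_U$-mass of the class is at most $2^{-nD(P\|\mu_U)}\leq 2^{-n\gamma}$. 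Since there are at most $[n+1]^{|\U|}$ types of length $n$ on $\U$, summing over the atypical classes yields $\mu_U([\T_{U,\gamma}]^c)\leq[n+1]^{|\U|}2^{-n\gamma}=2^{-n[\gamma-\lambda_{\U}]}$ by the definition (\ref{eq:lambda}) of $\lambda_{\U}$.

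For the second bound I would run the same argument with $\vv$ held fixed, grouping $\uu$ by its conditional empirical distribution $\nu_{\uu|\vv}$, equivalently by the joint type $\nu_{\uu\vv}$ (which is determined by $\nu_{\uu|\vv}$ together with the fixed $\nu_{\vv}$). On the class with a given conditional type, Lemma~\ref{lem:exprob} gives $\mu_{U|V}(\uu|\vv)=2^{-n[H(\nu_{\uu|\vv}|\nu_{\vv})+D(\nu_{\uu|\vv}\|\mu_{U|V}|\nu_{\vv})]}$, and the number of $\uu$ with that conditional type given $\vv$ is at most $2^{nH(\nu_{\uu|\vv}|\nu_{\vv})}$, so the conditional mass of the class is at most $2^{-nD(\nu_{\uu|\vv}\|\mu_{U|V}|\nu_{\vv})}$, which on $[\T_{U|V,\gamma}(\vv)]^c$ is $\leq 2^{-n\gamma}$. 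Since there are at most $[n+1]^{|\U||\V|}$ joint types on $\U\times\V$, summing over the atypical classes gives $\mu_{U|V}([\T_{U|V,\gamma}(\vv)]^c|\vv)\leq[n+1]^{|\U||\V|}2^{-n\gamma}=2^{-n[\gamma-\lambda_{\U\V}]}$, using (\ref{eq:lambda}) and the subscript convention $\U\V=\U\times\V$.

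I do not expect a genuine obstacle: the only ingredients are the two textbook facts of the method of types invoked above — the bound on the number of (joint) types of length $n$ and the bound on the size of a (conditional) type class — plus the harmless observation that when $\gamma\leq\lambda_{\U}$ (resp.\ $\gamma\leq\lambda_{\U\V}$) the asserted bound exceeds $1$ and is therefore vacuously true. The only care needed is bookkeeping: making sure the atypical class is partitioned by types and that Lemma~\ref{lem:exprob} is applied with the correct entropy/divergence decomposition in the conditional case.
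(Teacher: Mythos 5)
Your proposal is correct and is essentially the same argument the paper invokes: the paper's proof simply cites \cite[Lemma 2.2]{CK} (the $[n+1]^{|\U|}$ bound on the number of types) and \cite[Lemma 2.6]{CK} (the exact probability of a sequence in terms of entropy and divergence, which also yields the type-class cardinality bound), and your write-up is exactly the standard unfolding of those two facts. The only step you treat informally — the conditional type-class size bound $|\{\uu:\nu_{\uu|\vv}=V\}|\leq 2^{nH(V|\nu_{\vv})}$ — is likewise standard and follows from the same normalization argument, so there is no gap.
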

\begin{proof}
The lemma can be proved from \cite[Lemma 2.2]{CK} and 
\cite[Lemma 2.6]{CK}.
\end{proof}

\begin{lem}[{\cite[Theorem 2.9]{UYE}}]
 \label{lem:typical-number}
 For any $\gamma>0$, $\gamma'>0$, and $\vv\in\T_{V,\gamma}$,
 \begin{align*}
	\left|
	\frac 1{n}\log_2 |\T_{U,\gamma}| - H(U)
	\right|
	&\leq
	\eta_{\U}(\gamma)
	\\
	\left|
	\frac 1{n}\log_2 |\T_{U|V,\gamma'}(\vv)| - H(U|V)
	\right|
	&\leq
	\eta_{\U|\V}(\gamma'|\gamma),
	\end{align*}
 where $\eta_{\U}(\gamma)$ and $\eta_{\U|\V}(\gamma'|\gamma)$
 are defined in (\ref{eq:def-eta}) and (\ref{eq:def-etac}), respectively.
\end{lem}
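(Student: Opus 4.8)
The plan is to prove both inequalities by the method of types, following the argument for the analogous bounds in \cite{CK} but tracking the explicit constants so that the error terms come out to be exactly $\eta_\U(\gamma)$ and $\eta_{\U|\V}(\gamma'|\gamma)$. I would treat the unconditional bound on $|\T_{U,\gamma}|$ first and then indicate the routine modifications for the conditional one. Throughout I assume $n$ is large enough that the relevant typical set is nonempty, which is all that the applications in Sections~\ref{sec:proof-gp}--\ref{sec:proof-wz} require.

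\emph{Upper bounds.} Write $\T_{U,\gamma}$ as the disjoint union of the type classes $\C_{\bt}$ over those length-$n$ types with $D(\nu\|\mu_U)<\gamma$ (here $\nu$ denotes the empirical distribution $\bt/n$). There are at most $[n+1]^{|\U|}$ types of length $n$, and each type class has size at most $2^{nH(\nu)}$. For a type with $D(\nu\|\mu_U)<\gamma$, the estimate used in the proof of Lemma~\ref{lem:type} (Pinsker's inequality) gives $\|\nu-\mu_U\|_1<\sqrt{2\gamma}$, and the uniform continuity of entropy (\cite[Lemma~2.7]{CK}, using that $t\mapsto-t\log\frac t{|\U|}$ is increasing on the relevant range) then yields $H(\nu)\leq H(U)-\sqrt{2\gamma}\log\frac{\sqrt{2\gamma}}{|\U|}$. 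Hence $|\T_{U,\gamma}|\leq[n+1]^{|\U|}2^{n[H(U)-\sqrt{2\gamma}\log(\sqrt{2\gamma}/|\U|)]}$, and taking $\frac1n\log_2$ gives exactly $H(U)+\eta_\U(\gamma)$. The conditional case is the same with $V$-shells: $\T_{U|V,\gamma'}(\vv)$ splits into the conditional type classes of $\uu$ given $\vv$, there are at most $[n+1]^{|\U||\V|}$ of them, each of size at most $2^{nH(W|\nu_\vv)}$; for $W$ with $D(W\|\mu_{U|V}|\nu_\vv)<\gamma'$ one gets $H(W|\nu_\vv)\leq H(\mu_{U|V}|\nu_\vv)-\sqrt{2\gamma'}\log\frac{\sqrt{2\gamma'}}{|\U||\V|}$ by applying the continuity bound on the joint alphabet $\U\times\V$, while $\vv\in\T_{V,\gamma}$ forces $\|\nu_\vv-\mu_V\|_1<\sqrt{2\gamma}$ and hence $|H(\mu_{U|V}|\nu_\vv)-H(U|V)|\leq\sqrt{2\gamma}\log|\U|$. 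Adding the $\frac{|\U||\V|\log[n+1]}n$ counting term reproduces $\eta_{\U|\V}(\gamma'|\gamma)$.

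\emph{Lower bounds.} Here I would exhibit a single admissible type. Let $\nu^\ast$ be the $n$-type obtained by rounding each $\mu_U(u)$ to a nearby multiple of $1/n$ on the support of $\mu_U$; then $D(\nu^\ast\|\mu_U)=O((\log n)/n)<\gamma$ and $|H(\nu^\ast)-H(U)|=O((\log n)/n)$ once $n$ exceeds a threshold depending only on $\gamma$ and $\mu_U$, so in particular $H(\nu^\ast)\geq H(U)+\sqrt{2\gamma}\log\frac{\sqrt{2\gamma}}{|\U|}$ (the right-hand side being $\leq H(U)$). Combined with the standard lower bound $|\C_{\bt^\ast}|\geq[n+1]^{-|\U|}2^{nH(\nu^\ast)}$ (see \cite{CK}) this gives $\frac1n\log_2|\T_{U,\gamma}|\geq H(\nu^\ast)-\frac{|\U|\log[n+1]}n\geq H(U)-\eta_\U(\gamma)$. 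The conditional lower bound is obtained the same way: for $\vv\in\T_{V,\gamma}$ pick a conditional type $W^\ast$ whose rows approximate the corresponding rows of $\mu_{U|V}$, so that $D(W^\ast\|\mu_{U|V}|\nu_\vv)<\gamma'$ and $H(W^\ast|\nu_\vv)\geq H(U|V)-\sqrt{2\gamma}\log|\U|+\sqrt{2\gamma'}\log\frac{\sqrt{2\gamma'}}{|\U||\V|}$ for $n$ large, and then use the conditional-type-class lower bound $[n+1]^{-|\U||\V|}2^{nH(W^\ast|\nu_\vv)}$.

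The step I expect to be delicate is the lower bound: one must verify that an $n$-type (resp.\ conditional type) that simultaneously lies in the divergence ball \emph{and} has entropy within the required additive slack of $H(U)$ (resp.\ $H(U|V)$) really exists for all sufficiently large $n$, and make the threshold on $n$ explicit enough to be usable. A secondary point is the conditional-entropy continuity estimate, which has to be reduced to the unconditional one on $\U\times\V$ and then combined with the separate $\sqrt{2\gamma}\log|\U|$ perturbation from replacing $\nu_\vv$ by $\mu_V$; the upper bounds, by contrast, drop out immediately once these continuity estimates are in hand.
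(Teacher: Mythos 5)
Your proof is correct and follows the same method-of-types argument (counting types, continuity of entropy via Pinsker, single admissible type for the lower bound) that \cite[Lemma~2.13]{CK} uses, which is exactly what the paper's proof points to. Your caveat about needing $n$ large enough for the lower bound (i.e.\ nonemptiness of the typical set, and $\sqrt{2\gamma}\leq 1/2$ for the entropy-continuity estimate) is a genuine imprecision in the lemma as stated, but it is harmless for all the applications in the paper, which already work only for sufficiently large $n$.
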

\begin{proof}
The lemma can be proved in the same way as the proof of \cite[Lemma
 2.13]{CK}.
\end{proof}

\section*{Acknowledgements}
This paper was written while one of authors J.\ M.\ was a visiting
researcher at ETH, Z\"urich.
He wishes to thank Prof.\ Maurer for arranging for his stay.
Constructive comments, suggestions,
and refferences by anonymous reviewers of IEEE Transactions on
Information Theory
have significantly improved the presentation of our results.

\end{document}